\pgfplotsset{compat=1.18} 
\newlength{\wdth}
\def\C{{\mathbb C}}
\def\N{{\mathbb N}}	
\def\R{{\mathbb R}}				 
\def\Z{{\mathbb Z}}				
\renewcommand{\S}{{\mathbb S}}
\def\bk{{\bf k}}
\def\bx{{\bf x}}
\def\by{{\bf y}}
\def\bp{\mathbf{p}}
\def\bprev{\cev{\mathbf{p}}}
\def\bpb{\overline{\mathbf{p}}}
\def\bk {\mathbf{k}}				
\def\bq {\mathbf{q}}				
\def\bs{\mathbf{s}}					
\def\cE{{\mathcal E}}
\def\cF{{\mathcal F}}
\def\cH{{\mathcal H}}
\def\cJ{{\mathcal J}}					 
\def\cK{{\mathcal K}}					
\def\cL{{\mathcal L}}
\def\cM{{\mathcal M}}
\def\cN{{\mathcal N}}
\def\cP{{\mathcal P}}
\def\cQ{{\mathcal Q}}
\def\cR{{\mathcal R}}
\def\cS{{\mathcal S}}
\def\cT{{\mathcal T}}
\def\cU{{\mathcal U}}
\def\cV{{\mathcal V}} 
\def\cW{{\mathcal W}}
\def\cX{{\mathcal X}} 
\def\cY{{\mathcal Y}}
\newcommand{\cev}[1]{\reflectbox{\ensuremath{\vec{\reflectbox{\ensuremath{#1}}}}}}
\renewcommand{\Re}{\operatorname{Re}}
\renewcommand{\Im}{\operatorname{Im}}			
\newcommand{\tr}{\operatorname{Tr}}   
\def\1{{\bf 1}}
\def\alg{{\mathfrak A}}
\newcommand{\scp}[2]{\langle#1,#2\rangle}
\def\jb#1{\langle#1\rangle}
\def\ad{a^\dagger}
\def\eqn*{\begin{align*}}			
	\def\eeqn*{\end{align*}}		
\def\eqn{\begin{eqnarray}}				  
	\def\eeeqn{\end{eqnarray}}
\def\prf{\begin{proof}}
	\def\endprf{\end{proof}} 
\def\namedlabel#1#2{\begingroup					
	#2%
	\def\@currentlabel{#2}%
	\phantomsection\label{#1}\endgroup
}
\theoremstyle{plain}
\newtheorem{theorem}{Theorem}[section]		
\newtheorem{definition}[theorem]{Definition}				
\newtheorem{proposition}[theorem]{Proposition}		   
\newtheorem{lemma}[theorem]{Lemma}
\newtheorem{corollary}[theorem]{Corollary}
\newtheorem{remark}[theorem]{Remark}
\newtheorem*{remark*}{Remark}
\numberwithin{equation}{section}
\def\weyl{\cW}
\def\weyld{\cW^\dagger}
\def\bog{\cT} 								
\def\bprop{\cU_{\mathrm{Bog}}} 		 	   
\def\fluc{\cU_{\mathrm{fluc}}}						  
\def\bogd{\cT^\dagger}
\def\bpropd{\bprop^\dagger}
\def\flucd{\fluc^\dagger}
\def\whfluc{\widehat{\cU}_N}						   
\def\Hfluc{\cH_{\mathrm{fluc}}}
\def\Hcubcon{\wick{\settowidth{\wdth}{$\Hcub$}\hspace{.33\wdth}\c1{\vphantom{\Hcub}}\hspace{.33\wdth}\c1{\vphantom{\Hcub}}\hspace{-.66\wdth}\Hcub}}
\def\Hquart{\cH_{\mathrm{quart}}}
\def\Hquartcon{\wick{\settowidth{\wdth}{$\Hquart$}\hspace{.33\wdth}\c1{\vphantom{\Hquart}}\hspace{.33\wdth}\c1{\vphantom{\Hquart}}\hspace{-.66\wdth}\Hquart}}
\def\Hquartcond#1{\wick{\settowidth{\wdth}{$\Hquart$}\hspace{.33\wdth}\c1{\vphantom{\Hquart}}\hspace{.33\wdth}\c1{\vphantom{\Hquart}}\hspace{-.66\wdth}\Hquart(#1)^{(d)}}}
\def\Hquartconod#1{\wick{\settowidth{\wdth}{$\Hquart$}\hspace{.33\wdth}\c1{\vphantom{\Hquart}}\hspace{.33\wdth}\c1{\vphantom{\Hquart}}\hspace{-.66\wdth}\Hquart(#1)^{(cor)}}}
\def\HHFB{\cH_{\mathrm{HFB}}}
\def\HHFBd{\HHFB^{\mathrm{(d)}}}
\def\HHFBod{\HHFB^{\mathrm{(cor)}}}
\def\Hcub{\cH_{\mathrm{cub}}}
\def\HBEC{\cH_{\mathrm{BEC}}}
\def\HBECt{\HBEC^{(\phi_t)}}
\def\HHFBodt{\HHFB^{(\mathrm{cor},\phi_t)}}
\def\Hcubt{\Hcub^{(\phi_t)}}
\def\nd#1{\|#1\|_X}
\def\wnd#1{\|#1\|_Y}													
\def\fd{\nd{f_0}}
\def\vd{\wnd{\hat{v}}}
\DeclareMathOperator{\osg}{sign}		 
\DeclareMathOperator\diag{diag}
\newcommand{\Rem}{\operatorname{Rem}}   
\def\vep{{\varepsilon}}
\def\vol{|\Lambda|}
\def\lattice{{\Lambda^*}}
\def\nb{\mathcal{N}_b}
\def\dx#1{\mathrm{d}#1\ }
\def\fbar{\widetilde{f_0}}
\def\hb{\widetilde{h}}					
\def\fplus{f_0^{(+)}}				     
\def\totst#1{\jb{#1}^{(\mathrm{tot})}}
\def\ftot{f^{(\mathrm{tot})}}
\def\freg{f^{(\mathrm{ex})}}
\def\Phitot{\Phi^{(\mathrm{tot})}}
\def\gtot{g^{(\mathrm{tot})}}
\def\stepo{1}
\def\stept{2}
\def\phio{\phi^{(\stepo)}}
\def\phiob{\overline{\phi}^{(\stepo)}}
\def\Sigo{\Sigma^{(\stepo)}}
\def\Sigob{\overline{\Sigma}^{(\stepo)}}
\def\sigo{\sigma^{(\stepo)}}
\def\sigob{\overline{\sigma}^{(\stepo)}}
\def\Gamo{\Gamma^{(\stepo)}}
\def\gamo{\gamma^{(\stepo)}}
\def\Omo{\Omega^{(\stepo)}}
\def\ehfb{\cE_{\mathrm{HFB}}}
\def\phit{\phi^{(\stept)}}
\def\phitb{\overline{\phi}^{(\stept)}}
\def\Sigt{\Sigma^{(\stept)}} 
\def\Sigtb{\overline{\Sigma}^{(\stept)}}
\def\sigt{\sigma^{(\stept)}} 
\def\sigtb{\overline{\sigma}^{(\stept)}} 
\def\Gamt{\Gamma^{(\stept)}}
\def\gamt{\gamma^{(\stept)}}
\def\Omt{\Omega^{(\stept)}}
\def\hfbspace{\cX}
\def\Gamj{\Gamma^{(j)}}
\def\Sigj{\Sigma^{(j)}}
\def\gtr{\Gamma^{T}}
\def\str{\Sigma^{T}}
\def\strb{\overline{\Sigma}^{T}}
\def\symp{\cR}
\def\hgam{h_\Gamma}
\def\hsig{h_\Sigma}
\def\hsymp{\cH^{(\Gamma,\Sigma)}}
\def\hsympt{\cH^{(\Gamma_t,\Sigma_t)}}
\def\ssymp{\cS}
\def\sympbog{\cV}
\def\sympbogd{\cV^\dagger}
\def\sympspace{\cY}
\newcommand{\bbf}[2]{\mathbf{w}^{(#2,#1)}}
\newcommand{\bbfb}[2]{\overline{\mathbf{w}}^{(#2,#1)}}
\newcommand{\mh}[1]{{\color{black} #1}}
\def\paragraph{\@startsection{paragraph}{4}%
  \z@\z@{-\fontdimen2\font}%
  {\normalfont\bfseries}}
\def\@tocline#1#2#3#4#5#6#7{\relax
  \ifnum #1>\c@tocdepth 
  \else
    \par \addpenalty\@secpenalty\addvspace{#2}%
    \begingroup \hyphenpenalty\@M
    \@ifempty{#4}{%
      \@tempdima\csname r@tocindent\number#1\endcsname\relax
    }{%
      \@tempdima#4\relax
    }%
    \parindent\z@ \leftskip#3\relax \advance\leftskip\@tempdima\relax
    \rightskip\@pnumwidth plus4em \parfillskip-\@pnumwidth
    #5\leavevmode\hskip-\@tempdima
      \ifcase #1
       \or\or \hskip 2em \or \hskip 3em \else \hskip 4em \fi%
      #6\nobreak\relax
    \hfill\hbox to\@pnumwidth{\@tocpagenum{#7}}\par
    \nobreak
    \endgroup
  \fi}
\begin{document}

	\title[Quantum fluctuations around BEC]
	{Derivation of renormalized Hartree-Fock-Bogoliubov and quantum Boltzmann equations in an interacting Bose gas}
	
	\author{Thomas Chen\orcidlink{0000-0003-2704-1454}}
	\address[T. Chen]{Department of Mathematics, University of Texas at Austin, Austin TX 78712, USA}
	\email{tc@math.utexas.edu} 
 
	\author{Michael Hott\orcidlink{0000-0003-4243-6585}}
	\address[M. Hott]{Department of Mathematics, University of Minnesota - Twin Cities, Minneapolis MN 55414, USA}
	\email{mhott@umn.edu} 


    \maketitle
    
	\begin{abstract}
		Our previous work \cite{chenhott} presented a rigorous derivation of quantum Boltzmann equations near a Bose-Einstein condensate (BEC). Here, we extend it with a complete characterization of the leading order fluctuation dynamics. For this purpose, we correct the latter via an appropriate Bogoliubov rotation, in partial analogy to the approach by Grillakis-Machedon et al. \cite{grmama}, in addition to the Weyl transformation applied in \cite{chenhott}. Based on the analysis of the third order expansion of the BEC wave function, and the second order expansions of the pair-correlations, we show that through a renormalization strategy, various contributions to the effective Hamiltonian can be iteratively eliminated by an appropriate choice of the Weyl and Bogoliubov transformations. This leads to a separation of renormalized Hartree-Fock-Bogoliubov (HFB) equations and quantum Boltzmann equations. A multitude of terms that were included in the error term in \cite{chenhott} are now identified as contributions to the HFB renormalization terms. Thereby, the error bound in the work at hand is improved significantly. To the given order, it is now sharp, and matches the order or magnitude expected from scaling considerations. Consequently, we extend the time of validity to $t\sim (\log N)^2$ compared to $t\sim (\log N/\log \log N)^2$ before. We expect our approach to be extensible to smaller orders in $\frac1N$.  
	\end{abstract}
	
	\bigskip

    \paragraph{Statements and Declarations} The authors do not declare financial or non-financial interests that are directly or indirectly related to the work submitted for publication.

    \bigskip
        
    \paragraph{Acknowledgements} M.H.'s research was supported by 'University of Texas at Austin Provost Graduate Excellence Fellowship' and NSF grants DMS-1716198 and DMS-2009800 through T.C. `T.C. gratefully acknowledges support by the NSF through the grant DMS-2009800, and the RTG Grant DMS-1840314 - {\em Analysis of PDE}. We thank Esteban C\'ardenas and Jacky Chong for helpful discussions.
	
	\tableofcontents

	\section{Introduction}

    \subsection{Summary of previous results}

    The Boltzmann (transport) equation describes the time-dependent behavior of the phase space probability distribution $f$ of fluids and gases. It takes the form
    \begin{equation}\label{eq-cBE}
        (\partial_t+p\cdot \nabla_x) \mh{f_t} \, = \, Q[\mh{f_t}] \, ,
    \end{equation}
    where, for a classical gas, 
    \begin{equation}\label{eq-cBE-op}
        Q^{\mathrm{(cl)}}[f](p) \, = \, \int_{\S^2}\dx{\omega}\int_{\R^3}\dx{p_*}b(\omega,|p-p_*|)(f(p')f(p_*')-f(p)f(p_*))\Big|_{\substack{p'=p+[\omega\cdot(p_*-p)]\omega\\p_*'=p_*-[\omega\cdot(p_*-p)]\omega}} \, .
    \end{equation}
    In order to derive this equation, Boltzmann imposed the \emph{Stosszahlansatz} or \emph{molecular chaos assumption}, which requires that \mh{all marginals of the joint distribution of the Bose gas particles factorize into products of one-particle marginals}. We will refer to this condition as \emph{propagation of \mh{factorization}}\footnote{\mh{In the context of the classical Boltzmann equation, it is referred to as \emph{propagation of chaos}. However, the notion for quantum particles differs from the classical concept, which is why we refrain from this nomenclature in the present context.}}. 

    \par To this day, the rigorous derivation of the Boltzmann equation, and possible corrections, for all physically relevant regimes remains a widely open problem. However, in some special cases, and for sufficiently short times, there has been progress both from a heuristic and mathematically rigorous perspective. Hilbert \cite{hilbert1902problemes} famously asked for a rigorous justification of Boltzmann's equations; the derivation of physical laws from a set of mathematical axioms is referred to as \emph{Hilbert's sixth problem}. A fundamental question that emerges in this context is how irreversibility of the (mesoscopic) Boltzmann equation, for which the entropy functional is non-decreasing, arises from the microscopic reversible many-body dynamics, see, e.g., \cite{lebowitz1993macroscopic}. 
    
    \par In the case of classical gases, starting from the classical Liouville equation for an interacting $N$-particle gas, Boltzmann's idea to show the propagation of factorization in the derivation of Boltzmann's equation has been made rigorous, albeit for times that allow for at most one collision. The first rigorous, though incomplete, results go back to Cercignani \cite{cerc} and Lanford \cite{lanford}, and were later completed \cite{uc,cerillpul,cegepe,gasrte}. A crucial insight in the derivation is that the joint distribution function only needs to factorize for particles that are \emph{about to collide}. The derivation of the classical Boltzmann equation remains an extraordinarily active area of research, see, e.g., \cite{ammipa,bogasrsi2023statistical,bogasrs21,den,gerasimenko2022propagation,pusi}. 
    \par For a quantum gas or fluid, Nordheim \cite{nordheim} proposed a quantum analogue of \eqref{eq-cBE}, for which the collision operator $Q$ is given by
    \begin{align}
        \begin{split}\label{eq-qBE}
        \mh{Q^{\mathrm{(lit)}}_4}[f](p) \,= 
        & \int \dx{\bp_4} \delta(p_1+p_2-p_3-p_4) \delta(E(p_1)+E(p_2)-E(p_3)-E(p_4))\\ 
        & |\cM_{22} (\bp_4)|^2(\delta(p-p_1)+\delta(p-p_2)-\delta(p-p_3)-\delta(p-p_4)) \\
		& \big((1\pm f(p_1))(1\pm f(p_2))f(p_3)f(p_4)-f(p_1)f(p_2)(1\pm f(p_3))(1\pm f(p_4))\big) \, ,
        \end{split}
    \end{align}
    where '+' refers to the case of bosons, and '-' to fermions. Boldface letters with subscripts will denote multivectors, such as $\bp_k=(p_1,p_2,\ldots,p_k)$, where $p_j\in\R^3$, and $k\in\N$. $\cM_{22} (\bp_4)$ is the scattering cross section relevant for this process. In the case of bosons, it has been shown \cite{esve2014,esve2015,lu14bos} that the solution to the quantum Boltzmann equation with collision operator \eqref{eq-qBE} develops a $\delta$-mass in finite time, corresponding to condensation in finite time, see also \cite{arno15,arno17,algatr} for related works. If we then decompose $f=\freg+n_c\delta$ into its regular and singular part, we have that \begin{equation}
        \mh{Q^{\mathrm{(lit)}}_4}[f] \, = \, n_c\mh{Q^{\mathrm{(lit)}}_3}[\freg] \, + \, \mh{Q^{\mathrm{(lit)}}_4}[\freg] \, - \, n_c\delta \int\dx{q}\mh{Q^{\mathrm{(lit)}}_3}[\freg](q) \, , 
    \end{equation}
    where 
    \begin{align}
        \begin{split}\label{eq-cqBE}
        \mh{Q^{\mathrm{(lit)}}_3}[f](p) \,= \, & \int \dx{\bp_3} \delta(p_1+p_2-p_3) \delta(E(p_1)+E(p_2)-E(p_3))\\ 
        & |\cM_{22} (\bp_3,0)|^2(\delta(p-p_1)+\delta(p-p_2)-\delta(p-p_3)) \\
		& \big((1+ f(p_1))(1+ f(p_2))f(p_3)-f(p_1)f(p_2)(1+f(p_3)) \big) \, .
        \end{split}
    \end{align}
    This leads to the coupled system
    \begin{equation}\label{eq-coupled-qBE}
        \begin{cases}
            \partial_t\mh{\freg_t} \, &= \, \mh{n_c(t)}\mh{Q^{\mathrm{(lit)}}_3}[\mh{\freg_t}] \, + \, \mh{Q^{\mathrm{(lit)}}_4}[\mh{\freg_t}] \\
            \partial_t \mh{n_c(t)} \, &= \, - \mh{n_c(t)} \int\dx{q}\mh{Q^{\mathrm{(lit)}}_3}[\mh{\freg_t}](q) \, .
        \end{cases} 
    \end{equation}
    If $n_c\gg1 $, $\mh{Q^{\mathrm{(lit)}}_3}[\freg]$ determines the leading order dynamics of $\freg$. In this work, we are interested in studying the emergence of \eqref{eq-cqBE} for an interacting quantum Bose gas with initial condensate density \mh{$n_c(0)=N\gg1$} and initial thermal excitation density $\freg_0\sim1$.
    \par In order to study the emergence of \eqref{eq-qBE}, we consider the Hamiltonian
    \begin{equation}\label{eq-gen-Ham}
       H_{N,\vep_1,\vep_2} \, := \, \sum_{j=1}^N\frac{\vep_1^2}2(-\Delta_{x_j}) \, + \, g_{N,\vep_1,\vep_2} \sum_{j<k}^N v\Big(\frac{x_j-x_k}{\vep_2}\Big) 
    \end{equation}
    acting on the bosonic Hilbert space $L^2(\Lambda)^{\otimes_s N}$, where $\Lambda\subseteq \R^3$ is a 3-torus. Given the Schr\"odinger equation
    \begin{equation}\label{eq-schroedinger}
        \mh{i\vep_1\partial_t \Psi_{N,\vep_1,\vep_2,t} \, = \, H_{N,\vep_1,\vep_2} \Psi_{N,\vep_1,\vep_2,t} \, ,}
    \end{equation}
    we are interested in the asymptotic behavior of the Wigner transforms
    \begin{equation}
        \mh{f_{N,\vep_1,\vep_2,t}^{(k)}(\bx_k,\bp_k) :=  \int_{\Lambda^N} \dx{\by_N} e^{i\bp_k\cdot \by_k} \overline{\Psi}_{N,\vep_1,\vep_2,t}(\bx_k+\frac{\vep_1}2\by_k,\by_{N-k})\Psi_{N,\vep_1,\vep_2,t}(\bx_k-\frac{\vep_1}2\by_k,\by_{N-k}),}
    \end{equation}
    where $\by_N=(\by_k,\by_{N-k})$. The propagation of chaos assumption in this case reads
    \begin{equation}
         f_{N,\vep_1,\vep_2,t}^{(k)} \, \approx \, \big(f_{N,\vep_1,\vep_2,t}^{(1)}\big)^{\otimes k} \, .
    \end{equation}
    at positive times $t>0$ for particles about to collide. The first mathematically rigorous results go back to Hugenholtz \cite{hu} and Ho-Landau \cite{hola}, where it was shown that the terms proportional to $g^2$ in the Duhamel expansion of $f_{N,\vep_1,\vep_2,t}^{(1)}$ give rise to a Boltzmann collision term, see also works by Benedetto et al. \cite{BCEP06,BCEP08}. Under the assumption of \mh{quantum propagation of factorization}, more precisely propagation of \emph{restricted quasifreeness}, see  Definition \ref{defi-quasifree} below, Erd\"os-Salmhofer-Yau \cite{ESY} showed that, for mesoscopic times $t\propto g^{-2}$, the second order Duhamel expansion of $f_{N,\vep_1,\vep_2,t}^{(1)}$ yields the quantum Boltzmann equation \eqref{eq-qBE}. Lukkarinen and Spohn \cite{LuSp1} later revisited this idea and stated conditions under which they derive a Boltzmann equation. However, they did not justify assumptions on the growth of certain moments for the evolution of $f^{(k)}_{N,t}$ that would correspond to a rigorous error control in the evolution.
    
    \par X. Chen and Guo \cite{xchguo} showed in the case $\Omega=\R^3$ that, if one assumes a sufficiently regular potential $v$, and if one assumes the convergence and sufficient regularity of the marginals $f_{N,\vep_1,\vep_2,t}^{(k)}$, then, in the mesoscopic weak-coupling regime, i.e., $\vep_1=\vep_2=N^{-1/3}$, $g_{N,\vep_1,\vep_2}=\sqrt{\vep_1}$, the limiting dynamics is given by a classical, \emph{quadratic} Boltzmann equation \eqref{eq-cBE-op}. 
    \mh{In a recent work \cite{xchenholmer2023Bol-derivation}, X. Chen and Holmer proved that, if the regularity $r$ of marginals of the many-body joint distribution is propagated and its non-negativity are propagated, there exists a critical threshold $r_c$ such that the following trichotomy holds: 1) If $r>r_c$, then the limiting dynamics of $f$ is trivial; 2) if $r<r_c$, the effective Boltzmann equation is ill-posed; 3) if $r=r_c$, the limiting Boltzmann dynamics is \emph{quadratic}, i.e., classical, see \eqref{eq-cBE-op}, instead of quantum, see \eqref{eq-qBE}. They study a factorization property that differs from that considered in the present work. Their factorization property, which is formulated in first quantization rather than second quantization, coincides with ours only in the asymptotic limit that the volume goes to infinity.} 
    \par \mh{The existence of both \emph{positive} and \emph{negative} \emph{conditional} results suggests that a more detailed analysis of the fluctuation dynamics is required, in order to understand the validity of (cubic or quartic) quantum Boltzmann dynamics.}

    \par In the case of fermions, C\'ardenas and one of the authors \cite{cardenas2023quantum} established the emergence of quantum Boltzmann fluctuations alongside bosonized self-interaction terms, beyond the Hartree-Fock approximation, while rigorously controling the error. Their result is unconditional and holds for sufficiently short times. The bosonized self-interaction terms have been analyzed in more detail in \cite{BNPSS2021,benedikter2023correlation,christiansen2022effective} in the stationary case, and \cite{BNPSS2021b} for the dynamical case. Their results show that these can be characterized as \emph{Random-Phase approximation}, as introduced by Bohm and Pines \cite{BohmPinesI,BohmPines1952II,bohm1953collective}.

    \par The \emph{mean-field regime} is determined by $\vep_1=\vep_2=1$ and $g=\frac{\lambda}N$, where $N$ is the number of particles. For bosons, a question of special interest is the persistence of a Bose-Einstein condensate, i.e., the $N\gg 1$ asymptotic behavior of the wave function $\Psi_{N,t}$ satisfying \eqref{eq-schroedinger} when, initially, $\Psi_{N,0}\approx \phi_0^{\otimes N}$. Going back to Ginibre-Velo \cite{GV} and Hepp \cite{he}, it has been shown that the wave function remains factorized, i.e., $\Psi_{N,t}\approx \phi_t^{\otimes N}$, where $\phi_t$ satisfies a \emph{nonlinear Hartree} (NLH) or \emph{Schr\"odinger} (NLS) equation, dependent on the scaling regime. In particular, this establishes persistence of the condensate for microscopic times $t=O(1)$. These works have been extended, and convergence has been shown in different topologies \cite{EY,esy2,frgrsc,grmama,KP,Liard,rosc} and for more singular scalings $\vep_2=N^{-\beta}$, $g=N^{3\beta-1}$, $\beta\in[0,1]$ \cite{brennecke2019gross,CPBBGKY,CHPS-1,xchho5,dietzelee2023,esy1,GM2,GM3,pickl1}. The case $\beta=1$ is of particular physical interest, as it describes rare but strong interactions, and is referred to as the \emph{Gross-Pitaevskii regime}. Another related problem is to study the asymptotic behavior of the ground state (energy) of \eqref{eq-gen-Ham}, see, e.g., \cite{basti2023second,brcasc,cena,lenaro,lenasc,lisesoyn,mitrouskas2023exponential,nam2023bogoliubov,nam2023exponential}.
    
    \par In order to study corrections to the leading order NLH/NLS BEC dynamics, it is necessary to include to study the dynamics of pair correlations, corresponding to thermal fluctuations. It has been shown \cite{grmama,GM1,GM2,MPP,NN2,pepiso} that these are governed by the nonlinear \emph{Hartree-Fock-Bogoliubov} (HFB) equations. Bach, Breteaux, Fr\"ohlich, Sigal and one of the authors \cite{BBCFS18,BBCFS22} have shown that the \emph{quasifree approximation} of the full dynamics, i.e., the restriction to states for which quantum propagation of factorization holds, is given by the HFB equations. 

    \par Another vibrant area of research is the (bosonic) ground state problem associated with \eqref{eq-gen-Ham}, see \cite{boccato2023bose,brscsc,caraci2023order,liseyn,lisesoyn,lewin2023positivedensity}. There, corrections to the leading order Gross-Pitaevskii energy are described by the \emph{Lee-Huang-Yang formula}, see, e.g., \cite{fournais_energy_2023,giuliani2009ground}, and higher oder corrections have also been described \cite{bopese,caraci2023order}. 
    
    \par In our previous work \cite{chenhott}, we studied the regime  $g=\frac{\lambda_N}N$ for some $\lambda_N\ll 1$ and $N\gg1$, on a 3-torus $\Lambda$, for mesoscopic times $t\propto \lambda_N^{-2}$. The Bose gas initially consisted of a translation invariant BEC of \emph{density} $N$, and translation invariant thermal fluctuations close to equilibrium of density $1$, described by a \emph{quasifree} state, see Definition \ref{defi-quasifree} below. \mh{In short, a quantum state is quasifree if its observable expectations satisfy a factorization property, that allows to reduce many-body expectations to single-particle expectations. The precise definition requires to properly introduce the mathematical framework needed to describe many-body dynamics, and we refer the reader to Section \ref{sec-results}.}
    \par Assuming $\hat{v}(0)=0$, we showed that, subleading to the HFB fluctuation dynamics, for kinetic times $t\propto\lambda^{-2}$, the density of thermal fluctuations obeys a (mollified) cubic Boltzmann equation with collision operator
    \begin{align}\label{eq-QBE-previous}
    \begin{split}
        \mh{Q^{\mathrm{(mol)}}_3}[f](p) \, = \, \mh{\frac1{N\vol}}&\sum_{\substack{p_j\in\lattice,\\j=1,2,3}}\delta_t(\Omega(p_1)+\Omega(p_2)-\Omega(p_3))\mh{\delta_{p_1+p_2,p_3}|\hat{v}(p_1)+\hat{v}(p_2)|^2}\\
        & \mh{\times \big(\delta_{p,p_1}+\delta_{p,p_2}-\delta_{p,p_3}\big)}\big(\tilde{f}(p_1)\tilde{f}(p_2)f(p_3)-f(p_1)f(p_2)\tilde{f}(p_3)\big) \, ,
        \end{split}
    \end{align}
    where $\lattice$ denotes the momentum (reciprocal) lattice \mh{and $\tilde{f}=1+f$}. These collisions correspond to those, where a BEC particle is either absorbed or emitted in the collision. Here $\delta_t$ is a mollification of a \mh{one-dimensional Dirac-$\delta$.} $\Omega(p)=\sqrt{E(p)(E(p)+2\lambda\hat{v}(p))}$ is the Bogoliubov dispersion, and we choose the BEC wavefunction to be constant equal to \mh{$\vol^{-\frac12}$}. Given a time scale shorter than $O(\vol^{\frac13})$, where we assumed $\Lambda$ to be a \mh{cubic} torus in three dimensions, we showed that the discrete Boltzmann operator in \eqref{eq-QBE-previous} can be approximated by a continuous Boltzmann operator, where the summation \mh{$\frac1{\vol}\sum_{p\in\lattice}$ is replaced by Riemann integrals $(2\pi)^3\int_{\R^3}\dx{p}$, $\delta_t$ by a Dirac-$\delta$ and $\vol\delta_{p,0}$ by a Dirac-$\delta$ $\delta(p)$}. For longer times, interference leads to additional resonance terms. Crucially to our work, we used the fact that the HFB dynamics preserves quasifreeness, and showed that the full dynamics approximately preserves restricted quasifreeness for times $t\propto\lambda^{-2}\propto \big(\frac{\log N}{\log \log N}\big)^\alpha$, $\alpha>0$. Our result was a first unconditional result, \mh{in the sense that propagation of quasifreeness for the considered time is proved}, and it provided a rigorous error control for sufficiently short times.

    \subsection{New contributions}
    
    In the present work, we revisit this problem from a different perspective. We show that corrections to the pair-absorption rate and to the BEC wave function can be absorbed into the HFB equations. This, in turn, leads to a renormalization of the HFB equations describing the coupled dynamics of the BEC wave function with the leading order thermal pair-excitations. After renormalizing, the corrections to the BEC and thermal fluctuation dynamics are given by pure Boltzmann collision terms. More precisely, let $(\phi_{N,t},\gamma_{N,t},\sigma_{N,t})$ denote the renormalized HFB fields, and $\Omega_{N,t}$ be the corresponding renormalized Bogoliubov dispersion. Let $\mh{\ftot_{N,t}}$ denote the total density of the Bose gas, $\mh{\Phitot_{N,t}}$ denote the full BEC wavefunction, and $\mh{\gtot_{N,t}}$ the full pair-absorption rate\mh{, each evolving according to the many-body dynamics.} These fields can be expanded in the form
    \mh{
    \begin{align}\label{eq-cor-exp}
        \begin{pmatrix}
        \Phitot_{N,t}\\
         \ftot_{N,t} \, - \, |\Phitot_{N,t}|^2\delta\\
         \gtot_{N,t} \, - \, (\Phitot_{N,t})^2\delta
        \end{pmatrix}
          \, =& \, \begin{pmatrix}
              \sqrt{N\vol}\phi_{N,t}\\
              \gamma_{N,t}\\
              \sigma_{N,t}
          \end{pmatrix} \, + \, \begin{pmatrix}
              a_{N,t} & 0 \\
              0 & A_{N,t}
          \end{pmatrix}\begin{pmatrix}
              \Phi_{N,t}\\
              f_{N,t}-|\Phi_{N,t}|^2\delta\\
              g_{N,t}-\Phi_{N,t}^2\delta
          \end{pmatrix} \, ,
    \end{align}
    }
    where \mh{$a_{N,t}\in\C$, $A_{N,t}\in\C^{2\times 2}$ depend} on the HFB fields \mh{$(\phi_{N,t},\gamma_{N,t},\sigma_{N,t},\Omega_{N,t})$}, and \mh{$\gamma_{N,t}$} and \mh{$\sigma_{N,t}$} are related by \mh{$|\sigma_{N,t}|^2=\gamma_{N,t}(1+\gamma_{N,t})$}. 
    
    \par \eqref{eq-cor-exp} can be interpreted as a gradual centering process: 
    \begin{enumerate}
        \item We collect all condensate terms to given order in the perturbation expansion and add them to the contribution coming from \mh{$\phi_{N,t}$}, and center the correlation functions \mh{$f_{N,t}$, $g_{N,t}$} w.r.t. \mh{$\phi_{N,t}$}. Then \mh{$\Phi_{N,t}$} denotes the correction terms to given order.
        \item Simultaneously, we subtract the leading order pair-correlations \mh{$(\gamma_{N,t},\sigma_{N,t})$} from the centered pair correlations. Then \mh{$f_{N,t},g_{N,t}$} denote the pair correlation corrections to given order. 
    \end{enumerate}
    We show that, \emph{if} \mh{$(\phi_{N,t},\gamma_{N,t},\sigma_{N,t},\Omega_{N,t})$} satisfy the renormalized HFB equations
    \mh{
    \begin{align}
    \begin{cases}\label{eq-HFB-ren-intro}
	i\partial_t \phi_{N,t}  \, &= \,  \frac{\lambda}{N} \big[ \big(\Gamma_{N,t}*(\hat{v}+\hat{v}(0))\big) (0) \phi_{N,t} \, + \, (\Sigma_{N,t}*\hat{v}) (0) \overline{\phi}_{N,t}\big] \, - \, 2\lambda\vol\hat{v}(0) |\phi_{N,t}|^2 \phi_{N,t} \, ,  \\
	\partial_t \gamma_{N,t} \, &= \,\frac{2\lambda}N \Im\big((\Sigma_{N,t}*\hat{v} ) \overline{\sigma}_{N,t}\big) \, , \\
	i\partial_t \sigma_{N,t} \, &= \, 2\big(E+\frac{\lambda}{N}\Gamma_{N,t}*(\hat{v}+\hat{v}(0))\big)\sigma_{N,t} \, + \, \frac{\lambda}{N} \big( \Sigma_{N,t}* \hat{v}\big) (1+2\gamma_{N,t}) \, ,\\
    \Omega_{N,t} \, &= \, E+\frac{\lambda}N\big(\Gamma_{N,t}*(\hat{v}+\hat{v}(0))\big) \, + \, \frac{\lambda}N\frac{\Re((\overline{\Sigma}_{N,t}*\hat{v})\sigma_{N,t})}{1+\gamma_{N,t}} \, ,
	\end{cases}
    \end{align}
    }
    where 
    \begin{align}\label{def-tot-HFB-intro}
        \begin{pmatrix}
         \mh{\Gamma_{N,t}} \\
         \mh{\Sigma_{N,t}}
        \end{pmatrix}(p) \, = \, N\vol\delta(p)\begin{pmatrix}
            |\mh{\phi_{N,t}}|^2\\
            \mh{\phi_{N,t}}^2
        \end{pmatrix} \, + \, \big((1+f_0(p)+f_0(-p))\begin{pmatrix}
            \mh{\gamma_{N,t}}\\\mh{\sigma_{N,t}}
        \end{pmatrix}(p) \, + \, \begin{pmatrix}
            \frac{f_0(p)+f_0(-p)}2 \\ 0
        \end{pmatrix} \, ,
    \end{align} 
    and $E(p)=\frac12|p|^2$, \emph{then} \mh{$(\Phi_{N,t},f_{N,t},g_{N,t})$} satisfy
    \begin{align}
                \mh{\Phi_{N,t}} - \Phi_0 \, & = \, \frac1{N^{\frac32}}\int_0^t\dx{s}\big(Q_3^{(\Phi)}[\mh{f_{N,\cdot}}](s)+Q_{3,3}^{(\Phi)}[\mh{f_{N,\cdot}}](s)\big) \, + \, O\Big(\frac{e^{c\vol\lambda t}}{N^2}\Big) \, , \label{eq-Phi-QBE-intro}\\
                \mh{f_{N,t}}-f_0 \, &= \, \frac1N\int_0^t\dx{s} Q_3[\mh{f_{N,\cdot}}](s) \, + \, O\Big(\frac{e^{c\vol\lambda t}}{N^{\frac32}}\Big) \, , \label{eq-f-QBE-intro}\\
                \mh{g_{N,t}}-g_0 \, &= \, \frac1N\int_0^t\dx{s} Q_3^{(g)}[\mh{f_{N,\cdot}}](s)\, + \, O\Big(\frac{e^{c\vol\lambda t}}{N^{\frac32}}\Big) \, ,\label{eq-g-QBE-intro}
            \end{align}
            where each $Q_j^{(k)}$ denotes a collection of cubic Boltzmann collision terms, for which the collision kernels depend on the renormalized HFB fields \mh{$(\phi_{N,t},\gamma_{N,t},\sigma_{N,t},\Omega_{N,t})$}. Crucially, the errors provided here are \emph{sharp} in orders of $N$.
            
            \par In particular, we show that, to leading orders, the BEC wave function \mh{$\Phitot_{N,t}$} and the pair correlations \mh{$(\ftot_{N,t},\gtot_{N,t})$} can each be decomposed into a part that satisfies renormalized HFB equations and a quantum Boltzmann correction part. This separation confirms the phenomenological paradigm that the HFB and QBE dynamics evolve according to different time scales; the HFB dynamics leads to fast oscillations, while the QBE determines the slow long-time dynamics. As in our previous work, our result is unconditional and rigorous, however, we extend the validity to times of order $t\propto \lambda^{-2}\sim (\log N)^2$, in contrast to \mh{$t\propto\lambda^{-2}\sim (\log N/\log\log N)^2$} before.

            \par The term $N^{-3/2}\int_0^t\dx{s} Q_3^{(\Phi)}[\mh{f_{N,\cdot}}](s)$ in the evolution of the BEC wave function justifies the Boltzmann term in the evolution of the condensate density, see \eqref{eq-coupled-qBE}. However, the additional $N^{-3/2}\int_0^t\dx{s} Q_{3,3}^{(\Phi)}[\mh{f_{N,\cdot}}](s)$ has previously not been included, despite of being of the same order of magnitude as the cubic Boltzmann term, see also \cite{potr20,retr,ZNG}. 
        
            \par Bo\ss mann et al. \cite{BPaPS,BoßmannPetrat2023WEef,BPePS} have computed a full expansion of the correction dynamics in terms of the coupling constant. In comparison, in our current and past work \cite{chenhott}, we have been able to characterize terms in the expansion as stemming either from a HFB contribution, or a quantum Boltzmann correction.
            
            \par We are able to rigorously control the error for times $t\propto
            \lambda^{-2}\sim(\log N)^2$. A more detailed calculation shows that, at order $\frac1{N^2}$, terms emerge that are neither of HFB nor of quantum Boltzmann type. 

    \section{Statement of results\label{sec-results}}

    \par \mh{We now introduce the second-quantization formalism underlying the results of this paper.}
    
	\par Let $\Lambda=[-L/2,L/2]^3/\sim$ denote a cubic torus of length $L$, and let $\lattice=(\frac{2\pi}L\Z)^3$ denote its reciprocal space. Let
    \begin{align}
        \cF \, := \, \C\oplus\bigoplus_{n\in\N} L^2(\Lambda)^{\otimes_{s}n}
    \end{align}
    denote the \emph{bosonic Fock space}, endowed with the inner product
    \begin{align}
        \scp{\Phi}{\Psi}_{\cF} \, = \, \sum_{n\in\N_0}\scp{\Psi^{(n)}}{\Phi^{(n)}}_{L^2(\Lambda^n)} \, .
    \end{align}
    For $\psi\in L^2(\Lambda)$, let
    \begin{align}
        (a(\psi)\Psi)^{(n-1)}(\bx_{n-1}) \, = \, \sqrt{n}\int_\Lambda\dx{x}\overline{\psi}(x)\Psi^{(n)}(x,\bx_{n-1})
    \end{align}
    denote the annihilation operator, and
    \begin{align}
        (\ad(\psi)\Psi)^{(n+1)} \, &:= \, \sqrt{n+1}P_{L^2(\Lambda)^{\otimes_{s} (n+1)}}\psi\otimes \Psi^{(n)} \\
        &= \, \frac{\sqrt{n+1}}{(n+1)!}\sum_{\pi\in \cS_{n+1}}\psi(x_{\pi(1)})\Psi^{(n)}(x_{\pi(2)},\ldots,x_{\pi(n+1)}) \, ,
    \end{align}
    the creation operator. In addition, we introduce the momentum-space annihilation/creation operators $a_p^{\#}:=a^{\#}(e^{ip\cdot})$. They satisfy the CCR
    \begin{align}\label{eq-CCR}
        [a_p,a_q] \, = \, [\ad_p,\ad_q] \, = \, 0 \, , \quad [a_p,\ad_q] \, = \, \vol \delta_{p,q} \, =: \, \delta_{\lattice}(p-q) \, .
    \end{align}
    In the following, we will omit the subscript $\lattice$, when referring to a momentum-$\delta$. For brevity, we will use the notation
    \begin{equation}\label{def-mom-int}
        \int_{\lattice}\dx{p} h(p) \, := \, \frac1{\vol}\sum_{p\in\lattice}h(p) \, .
    \end{equation}
    Again, we will omit the subscript $\lattice$, unless it is ambiguous. Our convention for the Fourier transform is
    \begin{align}\label{def-FT}
        \hat{h}(p) \, := \, \int_{\Lambda}\dx{x}e^{ip\cdot x}h(x) \, .
    \end{align}
    \par We consider the mean-field Hamiltonian
	\begin{align}\label{def-hamiltonian} 
	\cH_N \, := \,   \int \dx{p} E(p)\ad_pa_p +\frac\lambda{2N}\int\dx{\bp_4}\delta(p_1+p_2-p_3-p_4) \hat{v}(p_1-p_3) \, \ad_{p_1}\ad_{p_2}a_{p_3}a_{p_4} \, ,
	\end{align}
    where $E(p):=\frac12|p|^2$ denotes the free dispersion, and $v\geq0$ a pair potential such that $\hat{v}\geq0$. 
    \par We are interested in determining the leading order dynamics for a Bose gas governed by the Hamiltonian $\cH_N$. As an initial state, we will choose a Bose-Einstein condensate (BEC) of density $N$, surrounded by a thermal excitations that are described by a \emph{quasifree} state with density $O(1)$. 

    \subsection{Initial state} 
    
    Let 
	\begin{align}\label{def-weyl}
	\weyl[f] \, := \, \exp\big(\ad(f)-a(f)\big)
	\end{align} 
	denote the Weyl \mh{operator}, and define \mh{Bogoliubov operator}
	\begin{align}\label{def-bogrot}
	\bog[k] \, := \, \exp\Big(\frac12\int \dx{p} \big(k(p)\ad_p\ad_{-p}-\overline{k}(p)a_pa_{-p}\big)\Big) 
	\end{align}
    in the translation invariant case. Without loss of generality, we assume that $k$ is even. We then obtain the Weyl transform
	\begin{align}
		\weyld[f]a_p\weyl[f] \, = \,  a_p \, + \, \hat{f}(p) \, , \label{eq-weyl}
    \end{align}
    and the Bogoliubov transform
    \begin{align}
		\bogd[k]a_p\bog[k] \, = \,  \cosh\big(|k(p)|\big)a_p \, + \, \sinh\big(|k(p)|\big)\frac{k(p)}{|k(p)|}\ad_{-p} \, , \label{eq-bog-rot-raw}
	\end{align}
	see Lemma \ref{lem-bogoliubov}. \mh{As a result, the creation and annihilation operators defined by \eqref{eq-weyl} and \eqref{eq-bog-rot-raw} describe thermal fluctuations outside the BEC.}
    \par Let $\alg$ denotes the CCR algebra generated by the Weyl operators $\weyl[\psi]$, $\psi\in L^2(\Lambda)$, see \cite[section 5.2.3]{BR2}.
	
	\begin{definition}[Quasifree state]\label{defi-quasifree}
		Let $\jb{\cdot}$ be a state and 
		\begin{align}
		\jb{A}^{\mathrm{(cen)}} \, := \, \jb{(\weyl[\jb{a})]A\weyld[\jb{a}])}
		\end{align} 
		denote its centering. We say $\jb{\cdot}$ is \emph{quasifree} iff for all $n\in\N$
		\begin{equation} \label{eq-wick-thm}
			\begin{cases}
				\jb{a^{\#_1} a^{\#_2}\ldots a^{\#_{2n}}}^{\mathrm{(cen)}} \, &= \,  
				\wick{
					\c1{a}^{\#_1} \c2{a}^{\#_2} \c2{.}\c2{.}\c1{.} \c2{a}^{\#_{2n}}
				}\\
				& \quad +\mbox{all\ pair\ contractions}\\ 
				\jb{a^{\#_1} a^{\#_2}\ldots a^{\#_{2n-1}}}^{\mathrm{(cen)}}
    \, &= \, 0 
			\end{cases} \, ,			
		\end{equation}
		where $\wick{\c a^{\#_1} \c a^{\#_2}}:=\jb{a^{\#_1}a^{\#_2}}^{\mathrm{(cen)}}$. \eqref{eq-wick-thm} is referred to as {\normalfont Wick's Theorem}.
        \par A state is \emph{restricted quasifree} if \eqref{eq-wick-thm} holds for $n\leq n_0$ for some $n_0\in\N$.
	\end{definition}
\mh{
    \begin{remark}
        $\wick{\c a^{\#_1} \c a^{\#_2}}:=\jb{a^{\#_1}a^{\#_2}}^{\mathrm{(cen)}}$ accounts for a \emph{contraction}. The notation makes it easier to keep track of different terms in the Duhamel expansion.  
    \end{remark}
}
	\begin{definition}[Number conserving state]
		A state $\jb{\cdot}$ is called {\normalfont number conserving} iff $\jb{[A,\nb]} = 0$ for every observable $A\in\alg$.
	\end{definition}

    \begin{definition}[Translation invariance]
		A state $\jb{\cdot}$ is called {\normalfont translation invariant} iff \begin{equation}
		  \jb{\prod_{j=1}^na_{p_j}^{(\sigma_j)}} = \frac{\delta(\sum_{j=1}^n\sigma_jp_j)}\vol \jb{\prod_{j=1}^na_{p_j}^{(\sigma_j)}}    
		\end{equation}
        for all $p_1,\ldots,p_n$ and $\sigma_j=\pm1$.
	\end{definition}
	
	Let $\jb{\cdot}_0$ be a number conserving, quasifree, and translation invariant. More precisely, let
	\begin{align}\label{def-K}
	\cK \, := \, \int_{\lattice} \dx{p} K(p) \ad_pa_p 
	\end{align}
	be such that $K(p)\geq\kappa_0$ for some $\kappa_0>0$, and let
    \begin{align}\label{def-jb0}
        \jb{A}_0 \, := \, \frac{\tr(e^{-\cK}A)}{\tr(e^{-\cK})} \, .
    \end{align}
    Observe that, by being number conserving, $\jb{\cdot}_0$ is already centered. $\jb{\cdot}_0$ determines the thermal excitations beyond the HFB fluctuations.
	
	\par The full state describing the Bose gas is then given by
	\begin{align} \label{def-rho0}
		\mh{\totst{A}_{N,0}} \, := \, \jb{e^{-\cK}\bogd[k_0]\weyld[\sqrt{N\vol}\phi_0] A \weyl[\sqrt{N\vol}\phi_0]\bog[k_0]}_0
	\end{align} 
	for all $A\in\alg$. Note that \mh{$\totst{\cdot}_{N,0}$} is quasifree. \mh{Our initial state is chosen close to a Gibbs state, see, e.g., \cite[Theorem 2.3]{lenaseso}.} The initial value problem (IVP) associated with the Hamiltonian $\cH_N$ and the initial state $\totst{\cdot}_0$ is then given by the Liouville-von Neumann equation
    \mh{
	\begin{align}\label{eq-schroedinger-non-rel}
	i\partial_t \totst{A}_{N,t} \, = \, \totst{[A,\cH_N]}_{N,t}
	\end{align}
    }
	for all observables $A\in\alg$. Below, we impose assumptions on $v$ ensuring that $\cH_N$ is self-adjoint and that it induces a unitary evolution $e^{-it\cH_N}$. Notice that $\totst{\cdot}_t$ is not quasifree. However, we will show that for short enough times, $\totst{\cdot}_t$ is approximately quasifree. Consequently, the evolution of any expectation $\totst{A}_t$ is fully characterized by the first and second moments $\totst{a_0}_t$, and $\totst{\ad_p a_p}_t$, $\totst{a_pa_{-p}}_t$, respectively. 
    \par Of particular interest in the present work is the evolution of the density 
    \mh{
    \begin{align}\label{def-ftot}
        \ftot_{N,t}(p) \, := \, \frac{\totst{\ad_pa_p}_{N,t}}\vol \, .
    \end{align}
    }
    In order to study \mh{$\ftot_{N,t}$}, we will decompose the full dynamics into a BEC and a thermal fluctuation part. For the latter, we will further decompose the dynamics of the fluctuation particles into a quasifree part, which we will refer to as \emph{Hartree-Fock-Bogoliubov fluctuations}, and a collision part, which we will identify as the \emph{quantum Boltzmann fluctuations}.
    \par Our approach includes the time-behavior of leading order pair correlations via Bogoliubov rotations as employed by Grillakis-Machedon et al. \cite{grmama,GM1,GM2} to derive the leading order HFB dynamics. However, we extend the latter by a quasifree, number conserving, centered (translation invariant) state. It is the presence of these additional excitation states that allow for a (cubic) Boltzmann equation to arise.  

    \par \mh{Subsequently, we will suppress the dependence of states and fields on the BEC density $N$ from the notations for $(\Phitot,\ftot,\gtot)$ and the relative fields $(\Phi,f,g)$.}
    
    \subsection{Fluctuation dynamics} 
    \mh{In order to extract the evolution of thermal fluctuations surrounding the BEC, we pass to the relative evolution relative the BEC. To this end, we define the Bogoliubov propagation}
	\begin{align}\label{def-bogprop}
	\begin{cases}
		i\partial_t \bprop(t) &= \int \dx{p} \Omega_t(p)a_p^\dagger a_p \bprop(t)\, ,\\
		\bprop(0) &= \1 \, ,
	\end{cases}
	\end{align} 
	where $\Omega_t$ is an even function. We have that
	\begin{align}
	\bpropd(t)a_p\bprop(t) \, = \, e^{-i\int_0^t \dx{s} \Omega_s(p)}a_p \,. \label{eq-bog-prop}
	\end{align} 
	\mh{Recalling the Weyl transform and the Bogoliubov transform obtained from conjugation by $\weyl$ and $\bog$, respectively,} we introduce the \mh{evolution operator for the} \emph{fluctuation dynamics} defined by
	\begin{align}\label{def-fluc}
	\fluc(t) \, := \, e^{iS_t}\bpropd(t)\bogd[k_t]\weyld[\sqrt{N\vol}\phi_t]e^{-it\cH_N}\weyl[\sqrt{N\vol}\phi_0]\bog[k_0] \, ,
	\end{align} 	
	where $(\phi_t,k_t)$ and $S_t$ will be determined below. We will choose $\phi_t$ and $k_t$ in such a way that the leading order and next-to-leading order contributions of the full dynamics $e^{-i\cH_Nt}$ are determined by $(\phi_t,k_t)$.
 
    \par We assume that $\phi_t(x)\equiv \phi_t$ is translation invariant. We show in Lemma \ref{lem-hfluc} that the fluctuation dynamics satisfies
	\begin{align} \label{eq-fluc-dyn}
	\begin{cases}
		i\partial_t\fluc(t) & =  \Hfluc(t)\fluc(t) \, , \\
		\fluc(0) &= \1 \, ,
	\end{cases}		
	\end{align} 
	where\mh{, for an appropriate choice of $S_t$, $\Hfluc(t)$ takes the form}
	\begin{align}
		\Hfluc(t) \, = \, \HBEC(t)  \, + \, \HHFB(t) \, + \, \Hcub(t) \, + \, \Hquart(t) \, . \label{eq-Hfluc-decomp-0}
	\end{align} 
	Each of the terms in $\Hfluc(t)$ is a normal-ordered polynomial in $a$ and $\ad$, and their explicit expressions are given in Lemma \ref{lem-hfluc}. Here a monomial in $a$ and $\ad$ is {\it normal-ordered} iff all creation operators $\ad$ are on the left of all annihilation operators $a$. We choose $S_t$ in such a way that it absorbs all scalar terms. $\HBEC(t)$ denotes the BEC Hamiltonian and it is linear in $a^{\#}$, $\HHFB$ is the HFB Hamiltonian, which is quadratic in $a^{\#}$, $\Hcub(t)$ is cubic in $a^{\#}$ and accounts for cubic scattering processes, where one of the particles is being absorbed into or emitted from the BEC, and $\Hquart(t)$ is quartic in $a^{\#}$ and describes pair interactions.

	 Let 
	\begin{align}
		u_t(p) := & \cosh(|k_t(p)|) \, , \label{def-up} \\
		v_t(p) := & \sinh(|k_t(p)|)\frac{k_t(p)}{|k_t(p)|} \, . \label{def-vp}
	\end{align} 
    In particular, we can rewrite \eqref{eq-bog-rot-raw} as 
    \begin{align}\label{eq-bog-rot}
       \bogd[k_t]a_p\bog[k_t] \, = \, u_t(p)a_p \, +\, v_t(p)\ad_{-p} \, . 
    \end{align}
	Note that we have $u_t(p)^2-|v_t(p)|^2=1$. In addition, we introduce the scalar fields
	\begin{align}
		\gamma_t(p) :=&  |v_t(p)|^2 \, , \label{def-gamma} \\
		\sigma_t(p) :=&  u_t(p)v_t(p) \, . \label{def-sigma}
	\end{align} 
	Observe that, from this definition, we have the relation
	\begin{align}
	|\sigma_t(p)|^2 \, = \, u_t(p)^2 |v_t(p)|^2 \, = \, (1+\gamma_t(p))\gamma_t(p) \, . \label{eq-gamma-sigma-rel-0}
	\end{align} 
	Then the expressions for each of the normal-ordered terms in $\Hfluc(t)$ in \eqref{eq-Hfluc-decomp-0} are given in Lemma \ref{lem-hfluc} in the Appendix.
	\par We introduce the time-evolved \textit{relative} state
    \begin{equation}\label{def-rel-dynamics}
        \jb{A}_t \, := \, \jb{\flucd(t)A\fluc(t)}_0 \, .
    \end{equation} 
    It satisfies
    \begin{equation}\label{eq-rel-Liouville}
        i\partial_t\jb{A}_t \, = \, \jb{[A,\Hfluc(t)]}_t \, .
    \end{equation}
    We introduce the relative moments
    \begin{align}\label{def-rel-moments}
        \Phi_t \,:= \, \frac{\jb{a_0}_t}\vol \, , \quad f_t(p) \, := \, \frac{\jb{\ad_p a_p}_t}\vol \, , \quad g_t(p) \, := \, \frac{\jb{a_pa_{-p}}_t}\vol \, . 
    \end{align}
    Our choice of $\jb{\cdot}_0$ implies $\Phi_0=g_0=0$. Then we can rewrite the total density $\ftot$, see \eqref{def-ftot} for its definition and section \eqref{sec-dens-exp} for the derivation, as
    \begin{align}
        \ftot_t(p) \, =& \, \delta(p)\Big[N\vol|\phi_t|^2 \, + \, \Big(\sqrt{N\vol}e^{i\int_0^t \dx{s} \Omega_s(0)}\big(\phi_tu_t(0)+\overline{\phi}_tv_t(0)\big)\overline{\Phi}_t \, + \, \mathrm{h.c.}\Big)\Big] \\
        & \, + \, \gamma_t(p) \, + \, \big(1+\gamma_t(p)\big)f_t(p) \, + \, \gamma_t(p)f_t(-p) \, + \, \Big(e^{2i\int_0^t \dx{s} \Omega_s(p)}\sigma_t(p) \overline{g}_t(p)+\mathrm{h.c.}\Big) \, .
    \end{align}
    Our goal is to show that if $(\phi_t,\gamma_t,\sigma_t,\Omega_t)$ satisfy the \emph{renormalized} Hartree-Fock-Bogoliubov (HFB) equations, then the dynamics of $(\Phi_t,f_t,g_t)$ each are determined, to leading order, by a Boltzmann equation. More precisely, we will show that, to leading order, the dynamics of $f$ is given by a cubic Boltzmann equation, while the dynamics of $\Phi$ and $g$, to leading order, are driven by $f$ via a collision term, see \eqref{eq-Phi-QBE-intro}--\eqref{eq-g-QBE-intro}.

    \subsection{Main result}

    \mh{As a final step, we establish well-posedness of the leading order HFB dynamics, describing the BEC and the leading order thermal fluctuations. This is required for the statement of our main result.}
    \subsubsection{Renormalized HFB equations}
    
    Let
    \begin{align}
	\fplus(p) \, := \, \frac12\big(f_0(p)+f_0(-p)\big) \label{def-fplus}
	\end{align} 
	denote the even symmetrization of $f_0$. We introduce the (second order) renormalized HFB fields
    \begin{align}
	\Gamt :=&  (1+2\fplus)\gamma \, + \, \fplus \, + \, N\vol|\phi|^2\delta \, , \label{def-Gamt-intro} \\
	\Sigt :=&  (1+2\fplus)\sigma \, + \, N\vol\phi^2\delta \, . \label{def-Sigt-intro}
	\end{align}
    Then the renormalized HFB equations read
    \begin{align}
    \begin{cases}\label{eq-HFB-ren-2-intro}
	i\partial_t \phit_t  \, &= \,  \frac{\lambda}{N} \Big( \big(\Gamt_t*(\hat{v}+\hat{v}(0))\big) (0) \phit_t \, + \, (\Sigt_t*\hat{v}) (0) \phitb_t\Big) \\
	& \qquad - \, 2\lambda\vol\hat{v}(0) |\phit_t|^2 \phit_t \, ,  \\
	\partial_t \gamt_t \, &= \,\frac{2\lambda}N \Im\big((\Sigt_t*\hat{v} ) \sigtb_t\big) \, , \\
	i\partial_t \sigt_t \, &= \, 2\big(E+\frac{\lambda}{N}\Gamt_t*(\hat{v}+\hat{v}(0))\big)\sigt_t \, + \, \frac{\lambda}{N} \big( \Sigt_t* \hat{v}\big) (1+2\gamt_t) \, ,
	\end{cases}
    \end{align}
    and the corresponding Bogoliubov dispersion is given by
    \begin{align}\label{def-ren-Bog-disp-2}
        \Omt_t \, := \, E+\frac{\lambda}N\big(\Gamt_t*(\hat{v}+\hat{v}(0))\big) \, + \, \frac{\lambda}N\frac{\Re\big(\Sigtb*\hat{v})\sigt_t\big)}{1+\gamt_t} \, .
    \end{align}

    \begin{remark}\label{rem-mod-bog-disp}
        Observe that $\Omt_t$ is a \emph{modified} Bogoliubov dispersion. Instead, the \emph{regular} Bogoliubov dispersion, to leading order, is given by $\Omega_{\mathrm{Bog}}=\sqrt{E(E+2\lambda \hat{v})}$, where $E(p)=|p|^2/2$ is the free dispersion, see above and, e.g., \cite{chenhott,pepiso}.
    \end{remark}
    
    Omitting superscripts, we note that the HFB equations \eqref{eq-HFB-ren-2-intro} can be rewritten as
    \begin{align}\label{eq-HFB-system-intro}
	\begin{cases}
		i\partial_t \phi_t  \, &= \, \frac{\lambda}{N} \Big( \big(\Gamma_t*(\hat{v}+\hat{v}(0))\big) (0) \phi_t + (\Sigma_t*\hat{v}) (0) \overline{\phi}_t\Big)- 2\lambda\vol\hat{v}(0) |\phi_t|^2 \phi_t \, ,  \\
	\partial_t \Gamma_t \, &= \,  \frac{2\lambda}N \Im\big((\Sigma_t*\hat{v} ) \overline{\Sigma}_t\big) \, , \\
	i\partial_t \Sigma_t \, &= \, 2\big(E+\frac{\lambda}{N}\Gamma_t*(\hat{v}+\hat{v}(0))\big)\Sigma_t \, + \, \frac{\lambda}{N} \big( \Sigma_t* \hat{v}\big) (1+2\Gamma_t) \\
    &\quad \quad - \, 4N\lambda\vol^2\hat{v}(0) |\phi_t|^2 \phi_t^2\delta \, .
	\end{cases}	
 	\end{align}
    \mh{For any $1\leq a< \infty$, we introduce the rescaled $L^a(\lattice)$-norms
    \begin{align}
        \|f\|_{L^a(\lattice)} \, := \, \|f\|_a \, := \, \vol^{-\frac1a}\|f\|_{\ell^a(\lattice)}, \quad \|f\|_{L^\infty(\lattice)} \, := \, \|f\|_\infty:=\|f\|_{\ell^\infty(\lattice)} \, . \label{eq-Lp-norm}
    \end{align}
    }
    For any weight $\widetilde{w}:\lattice\to \R^+$, define the weighted $L^r$ space
	\begin{align}
		\mh{L^r_{\widetilde{w}}(\lattice) \, := \, \{f:\lattice \to \R \mid \|\widetilde{w}^{\frac1r}f\|_{L^r(\lattice)}<\infty\}\, , \label{def-energy-Lp}}
	\end{align} 
	endowed with the norm
	\begin{align}
		\mh{\|f\|_{L^r_{\widetilde{w}}} \, := \, \|\widetilde{w}^{\frac1r}f\|_r \, .}
	\end{align} 
    \mh{
    Based on the norms \eqref{eq-Lp-norm}, we abbreviate
    \begin{align}
        \nd{f} \, &:= \, \|f\|_1 \, + \, \|f\|_\infty \, ,\\
        \wnd{f} \, &:= \, \|f\|_{L^1_{\sqrt{1+E}}} \, + \, \|f\|_\infty \, .
    \end{align}
    }
	For all $j\in\N_0$, we introduce the function spaces
	\begin{align}
		\hfbspace^j := & \C\times\big(L^1_{(1+E)^j}(\lattice)\cap L^\infty(\lattice)\big)\times \big(L^2_{(1+E)^j}(\lattice)\cap L^\infty(\lattice)\big) \, , \label{def-hfbspace}
	\end{align} 
	endowed with the norm
	\begin{align}
		\|(\phi,\Gamma,\Sigma)\|_{\hfbspace^j} \, := \, |\phi| \, + \, \|\Gamma\|_{L^1_{(1+E)^j}} \, + \, \|\Gamma\|_\infty \, + \, \|\Sigma\|_{L^2_{(1+E)^j}} \, + \, \|\Sigma\|_\infty \, .
	\end{align} 
	In addition, we define
	\begin{align}
		\hfbspace^{-j} \, := \, (\hfbspace^j)' \, ,
	\end{align}  
	in the sense of Banach space duals.

	\begin{definition}[Mild solution]
		We call $(\phi,\Gamma,\Sigma)$ a {\normalfont mild solution} of \eqref{eq-HFB-system-intro} with initial datum $(\phi_0,\Gamma_0,\Sigma_0) \in \hfbspace^1$ iff there exists $T>0$ such that $(\phi,\Gamma,\Sigma)\in C^0_t\big([0,T),\hfbspace^1\big)\cap C^1_t\big([0,T),\hfbspace^{-1}\big)$ satisfies
		\begin{align} 
			\phi_t \, = \, &  \phi_0 \, - \, i \int_0^t \dx{s} \Big[\frac{\lambda}{N} \Big( \big(\Gamma_s*(\hat{v}+\hat{v}(0))\big) (0) \phi_s \, + \, (\Sigma_s*\hat{v}) (0) \overline{\phi}_s\Big) \\
			& - \,  2\lambda\vol\hat{v}(0) |\phi_s|^2 \phi_s\Big] \, , \label{eq-phi-HFB-mild} \\
			\Gamma_t \, = \, &  \Gamma_0 \, - \, \frac{2\lambda}N \int_0^t \dx{s} \Im\Big(\big(\overline{\Sigma}_s*\hat{v} \big) \, \Sigma_s\Big) \, , \label{eq-gamma-HFB-mild}\\
			\Sigma_t \, = \, &  e^{-2iEt}\Sigma_0 \, - \, i\int_0^t \dx{s} e^{-2iE(t-s)}\Big[2\big(\frac{\lambda}{N}\Gamma_s*(\hat{v}+\hat{v}(0))\big)\Sigma_s \\
			& + \, \frac{\lambda}{N} \big( \Sigma_s* \hat{v}\big) (1+2\Gamma_s) \, - \, 4N\lambda\vol^2\hat{v}(0)|\phi_s|^2\phi_s^2 \, \delta\Big] \label{eq-sigma-HFB-mild}
	\end{align} 
	for all $t\in[0,T)$.
	\end{definition}

    For the next result, we introduce the truncated fields
    \begin{align} \label{def-gtr-str}
        \gtr \, := \, \Gamma-N\vol|\phi|^2\delta \, , \quad \str \, := \, \Sigma-N\vol \phi^2\delta \, .
    \end{align} 
    
    \begin{proposition}[Global well-posedness] \label{prop-HFB-gwp}
		Assume that $\hat{v}\in L^1_{\sqrt{1+E}}\cap L^\infty(\lattice)$, and that $v\geq0$. Let $(\phi_0,\Gamma_0,\Sigma_0)\in\hfbspace^1$, such that $\gtr_0\geq0$ and $|\str_0|^2\leq (\gtr_0+1)\gtr_0$. Let $(\phi,\Gamma,\Sigma)\in C^0_t\big([0,T_0),\hfbspace^1\big)\cap C^1_t\big([0,T_0),\hfbspace^{-1}\big)$ be the associated unique maximal mild solution of \eqref{eq-HFB-system-intro} with existence time $T_0>0$. Then $T_0=\infty$, and $\gtr_t\geq0$ and $|\str_ t|^2\leq (\gtr_t+1)\gtr_t$ for all $t\geq0$.
	\end{proposition}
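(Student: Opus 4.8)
The plan is to prove Proposition \ref{prop-HFB-gwp} by establishing local well-posedness via a fixed-point argument, then propagating the constraints $\gtr_t\geq 0$ and $|\str_t|^2\leq(\gtr_t+1)\gtr_t$ along the flow, and finally bootstrapping an a priori bound on $\|(\phi,\Gamma,\Sigma)\|_{\hfbspace^1}$ that rules out finite-time blowup, so that $T_0=\infty$. I would first reformulate \eqref{eq-phi-HFB-mild}--\eqref{eq-sigma-HFB-mild} as a Duhamel equation $u_t = \Phi[u]_t$ on $C^0_t([0,T],\hfbspace^1)$ where $u=(\phi,\Gamma,\Sigma)$. The nonlinearity is at worst cubic, and the only unbounded operator is the free Bogoliubov phase $e^{-2iEt}$ acting on $\Sigma$, which is an isometry on each $L^2_{(1+E)^j}(\lattice)$ and on $L^\infty(\lattice)$; all remaining terms involve convolutions with $\hat v$, which map $L^1_{\sqrt{1+E}}\cap L^\infty$ into itself by Young's inequality (using $\hat v\in L^1_{\sqrt{1+E}}\cap L^\infty$) and preserve the $\hfbspace^1$-structure. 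Hence $\Phi$ is locally Lipschitz on balls of $C^0_t([0,T],\hfbspace^1)$, and the Banach fixed point theorem gives a unique maximal mild solution with blowup alternative: either $T_0=\infty$ or $\|(\phi,\Gamma,\Sigma)\|_{\hfbspace^1}\to\infty$ as $t\uparrow T_0$.

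Next I would propagate the algebraic constraints. Since the delta-supported parts of $\Gamma,\Sigma$ are exactly $N\vol|\phi|^2\delta$ and $N\vol\phi^2\delta$ by \eqref{def-Gamt-intro}--\eqref{def-Sigt-intro} (consistency of the ansatz, which follows by inspection of the flow together with $\Phi_0=g_0=0$), the truncated fields $(\gtr,\str)$ satisfy a closed system obtained by subtracting the $\delta$-contributions, in which the $\phi$-dependence enters only through the potential terms. Reintroducing $\gamma=\gtr/(1+2\fplus)$-type variables is not needed; instead I would work directly with $\gtr,\str$. To propagate $\gtr_t\geq0$: from \eqref{eq-gamma-HFB-mild} one sees $\partial_t\gtr_t(p)$ is a pointwise (in $p$) ODE driven by $\Im((\overline{\str}_s*\hat v)\str_s)(p)$; a Grönwall/comparison argument at any $p$ where $\gtr_t(p)$ first touches $0$ shows it cannot cross, provided one simultaneously controls $\str$. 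The cleaner route is to observe that the quantity $D_t(p):=(\gtr_t(p)+1)\gtr_t(p)-|\str_t(p)|^2$ satisfies, after differentiating and using all three equations, a linear-in-$D$ differential inequality $|\partial_t D_t(p)|\leq C_t D_t(p)$ with $C_t$ controlled by $\|\str_t\|_\infty+\|\hat v\|_1$; since $D_0(p)\geq0$, Grönwall gives $D_t(p)\geq0$, hence $|\str_t|^2\leq(\gtr_t+1)\gtr_t$, and in particular $\gtr_t(\gtr_t+1)\geq0$ so $\gtr_t\geq0$ (as $\gtr_t\geq-1$ is preserved by continuity from $\gtr_0\geq0$).

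Finally, for global existence I would use the constraint to close an a priori estimate. The key gain is that $|\str_t|^2\leq(\gtr_t+1)\gtr_t$ bounds $\str$ by $\gtr$ pointwise, and the $\Gamma$-equation \eqref{eq-gamma-HFB-mild} is a \emph{transport-free} equation whose right side is $\frac{2\lambda}N\Im((\overline{\str}*\hat v)\str)$, which can be bounded in $L^1_{(1+E)}$ and $L^\infty$ by $C\|\hat v\|_{L^1_{\sqrt{1+E}}\cap L^\infty}\|\str\|$-type norms and hence, via the constraint, by a quantity controlled by $\|\gtr_t\|_{\hfbspace^1}$ itself. One also needs conservation (or at worst controlled growth) of the total mass $\|\gtr_t\|_1$ and a Bogoliubov-type energy; here I expect $\partial_t\|\gtr_t\|_1=0$ up to the $\phi$-coupling, and the $\phi$-equation preserves $|\phi_t|$ since its right side is $i$ times a real-linear combination returning a purely imaginary multiple of $\phi$ modulo the $\Sigma$-term — more precisely $\partial_t|\phi_t|^2$ is controlled by $\frac\lambda N|\Sigt_t*\hat v|(0)|\phi_t|^2$, giving exponential-in-$t$ but finite-in-finite-time growth. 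Assembling these, $\|(\phi_t,\Gamma_t,\Sigma_t)\|_{\hfbspace^1}$ obeys a Grönwall inequality with locally integrable coefficient, so it stays finite on every finite interval, contradicting the blowup alternative unless $T_0=\infty$. The main obstacle I anticipate is making the constraint-propagation differential inequality for $D_t(p)$ genuinely closed: differentiating $|\str_t|^2$ and $\gtr_t$ produces terms like $\Re(\overline{\str}(\str*\hat v)\gtr)$ that must be reorganized so that the $\hat v$-convolution is absorbed into a bounded multiplier and the residual is proportional to $D_t$ — this requires carefully exploiting the specific structure of \eqref{eq-HFB-system-intro} (in particular that the same combination $\Sigma_t*\hat v$ appears in both the $\Gamma$ and $\Sigma$ equations) rather than brute-force estimation, and handling the $\delta$-supported part of $\Sigma$ at $p=0$ separately since the constraint there involves $N\vol|\phi_t|^2$ factors that scale in $N$.
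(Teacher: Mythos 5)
Your proposal is correct in outline and follows the paper's overall architecture (contraction mapping with a blowup alternative, propagation of the constraints, then a priori bounds in $\hfbspace^1$ closing a Gr\"onwall argument), but the key step --- propagating $\gtr_t\geq0$ and $|\str_t|^2\leq(\gtr_t+1)\gtr_t$ --- is done by a genuinely different mechanism. The paper (Section \ref{sec-symplectic}, following \cite{BBCFS22}) packages the constraint as positivity of the $2\times2$ matrix $\symp_t$ with entries $\gtr_t,\str_t,\strb_t,1+\gtr_t$, shows $i\partial_t\symp_t=\ssymp\hsympt\symp_t-\symp_t\hsympt\ssymp$, and factorizes $\symp_t=\sympbogd_t\,\symp_0\,\sympbog_t$ via an auxiliary propagator (Lemma \ref{lem-sympbog-wp}); positivity is then preserved by congruence in one line, at the cost of proving well-posedness of the propagator equation. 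Your route via the discriminant $D_t(p):=(\gtr_t(p)+1)\gtr_t(p)-|\str_t(p)|^2$ is more elementary and, pleasantly, the ``main obstacle'' you anticipate dissolves: since the term $2\big(E+\frac{\lambda}{N}\Gamma_t*(\hat v+\hat v(0))\big)\Sigma_t$ is a \emph{real} multiple of $\Sigma_t$ and contributes nothing to $\partial_t|\Sigma_t|^2$, one finds $\partial_t|\Sigma_t(p)|^2=\frac{2\lambda}{N}(1+2\Gamma_t(p))\Im\big((\Sigma_t*\hat v)(p)\overline\Sigma_t(p)\big)=(1+2\Gamma_t(p))\partial_t\Gamma_t(p)=\partial_t\big[(1+\Gamma_t)\Gamma_t\big](p)$ for $p\neq0$, so $D_t(p)$ is \emph{exactly conserved} (this is the same cancellation the paper exploits in Lemma \ref{lem-HFB-equations}); no Gr\"onwall reorganization is needed, only the observation that on the lattice pointwise evaluation is continuous on $\hfbspace^1$ so the pointwise ODE manipulation is legitimate, plus a separate check at $p=0$ using the $\phi$-equation (cf.\ Lemma \ref{lem-HFB-gen}). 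Two refinements for your final step: $|\phi_t|$ is bounded directly by mass conservation together with $\gtr_t\geq0$ via $N\vol|\phi_t|^2\leq\|\Gamma_0\|_1$, which is cleaner than your exponential bound; and the hypothesis $v\geq0$ enters precisely to make the interaction energy $\scp{\Sigma_t*\hat v}{\Sigma_t}=\int dx\,|\widecheck{\Sigma}_t(x)|^2v(x)$ nonnegative, so that energy conservation yields $\int\dx{p}E(p)\Gamma_t(p)\leq\ehfb(\phi_0,\Gamma_0,\Sigma_0)$ and hence the weighted bound on $\|\Gamma_t\|_{L^1_{1+E}}$; if you skip the energy, you must instead run a second Gr\"onwall on the weighted norm. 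With these points made precise, your argument is a valid alternative proof.
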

    
    \subsubsection{Boltzmann collision kernels}

    \mh{Our main result, Theorem \ref{thm-main}, depends on quantum Boltzmann collision kernels, which we introduce here. These are expressed in terms of the HFB fields $(\phi,\gamma,\sigma,\Omega)$, controlled by Proposition \ref{prop-HFB-gwp}.}

    \par In Lemma \ref{lem-hfluc}, we compute the cubic collision kernels
	\begin{align} 
        &\begin{split}\label{def-bbf03}
	\MoveEqLeft\bbf{0}{3}_t(\bp_3) \, :=\\
	&\sqrt{\vol}\Big(\big( u_t(p_1)u_t(p_2)v_t(p_3)\phi_t + v_t(p_1)v_t(p_2)u_t(p_3)\overline{\phi}_t\big)\big(\hat{v}(p_1)+\hat{v}(p_2)\big)\\
	& + \, \big( v_t(p_1)u_t(p_2)u_t(p_3)\phi_t + u_t(p_1)v_t(p_2)v_t(p_3)\overline{\phi}_t\big)\big(\hat{v}(p_2)+\hat{v}(p_3)\big)\\
	& + \, \big( u_t(p_1)v_t(p_2)u_t(p_3)\phi_t + v_t(p_1)u_t(p_2)v_t(p_3)\overline{\phi}_t\big)\big(\hat{v}(p_1)+\hat{v}(p_3)\big) \Big) \, ,
    \end{split}\\
    &\begin{split}\label{def-bbf12}
	\MoveEqLeft \bbf{1}{2}_t(\bp_3) \, := \\
	&\sqrt{\vol}\Big(\big(u_t(p_1)u_t(p_2)u_t(p_3)\phi_t+v_t(p_1)v_t(p_2)\overline{v_t}(p_3)\overline{\phi}_t\big)\big(\hat{v}(p_1)+\hat{v}(p_2)\big)\\
	& + \big(v_t(p_1)u_t(p_2)\overline{v_t}(p_3)\phi_t+u_t(p_1)v_t(p_2)u_t(p_3)\overline{\phi}_t\big)\big(\hat{v}(p_2)+\hat{v}(p_3)\big)\\
	& + \big(u_t(p_1)v_t(p_2)\overline{v_t}(p_3)\phi_t+v_t(p_1)u_t(p_2)u_t(p_3)\overline{\phi}_t\big)\big(\hat{v}(p_1)+\hat{v}(p_3)\big) \Big) 
	\end{split}
    \end{align}
	in the expression for $\Hcub(t)$. Analogously, we also obtain the quartic collision kernels
    \begin{align}
        \MoveEqLeft \bbf{0}{4}_t(\bp_4) \, :=\label{def-bbf04}\\
			& \big(u_t(p_1)u_t(p_2)v_t(p_3)v_t(p_4)  +  v_t(p_1)v_t(p_2)u_t(p_3)u_t(p_4)\big)\big(\hat{v}(p_1+p_3)+\hat{v}(p_2+p_3)\big)\\
			& + \, \big(u_t(p_1)v_t(p_2)u_t(p_3)v_t(p_4)+ v_t(p_1)u_t(p_2)v_t(p_3)u_t(p_4)\big)\big(\hat{v}(p_1+p_2)+\hat{v}(p_2+p_3)\big)\\
			& + \, \big(u_t(p_1)v_t(p_2)v_t(p_3)u_t(p_4)+ v_t(p_1)u_t(p_2)u_t(p_3)v_t(p_4)\big)\big(\hat{v}(p_1+p_2)+\hat{v}(p_1+p_3)\big) \, ,\\
			\MoveEqLeft \bbf{1}{3}_t(\bp_4) \, := \\
			&\big(u_t(p_1)u_t(p_2)v_t(p_3)u_t(p_4)+ v_t(p_1)v_t(p_2)u_t(p_3)\overline{v_t}(p_4)\big)\big(\hat{v}(p_1+p_3)+\hat{v}(p_2+p_3)\big) \\
			& + \big(u_t(p_1)v_t(p_2)u_t(p_3)u_t(p_4)+ v_t(p_1)u_t(p_2)v_t(p_3)\overline{v_t}(p_4)\big)\big(\hat{v}(p_1+p_2)+\hat{v}(p_2+p_3)\big) \\
			& + \, \big(v_t(p_1)u_t(p_2)u_t(p_3)u_t(p_4)+ u_t(p_1)v_t(p_2)v_t(p_3)\overline{v_t}(p_4)\big)\big(\hat{v}(p_1+p_2)+\hat{v}(p_1+p_3)\big) \, ,  \\
			\MoveEqLeft \bbf{2}{2}_t(\bp_4) \, := \label{def-bbf22}\\
			& \big(u_t(p_1)u_t(p_2)u_t(p_3)u_t(p_4) \, + \, v_t(p_1)v_t(p_2)\overline{v_t}(p_3)\overline{v_t}(p_3)\big)\big(\hat{v}(p_1-p_3)+\hat{v}(p_2-p_3)\big)\\
			& + \, \big(u_t(p_1)v_t(p_2)\overline{v_t}(p_3)u_t(p_4)+v_t(p_1)u_t(p_2)u_t(p_3)\overline{v_t}(p_4)\big)\big(\hat{v}(p_1+p_2)+\hat{v}(p_2-p_3)\big)\\
			& + \, \big(v_t(p_1)u_t(p_2)\overline{v_t}(p_3)u_t(p_4)+u_t(p_1)v_t(p_2)u_t(p_3)\overline{v_t}(p_4)\big)\big(\hat{v}(p_1+p_2)+\hat{v}(p_1-p_3)\big) \, .
		\end{align}
    in the expression for $\Hquart(t)$. The Bogoliubov coefficients $u$ and $v$ are related to the HFB fields via
    \begin{equation}
			u_t(p) =  \sqrt{1+\gamma_t(p)} \, , \quad v_t(p) =  \frac{\sigma_t(p)}{\sqrt{1+\gamma_t(p)}} \, .
    \end{equation}
    Moreover, we abbreviate 
	\begin{align}
		\hb(p) \, &:= \, h(p)+1 \, , \label{def-hbar}\\
        \bpb \, &:= \, (p_1,p_2,-p_3) \, ,\label{def-bpb}\\
        \bprev_3 \, &:= \, (p_3,p_2,p_1) \, .\label{def-bprev}
	\end{align} 
    Then we introduce the cubic Boltzmann operators
    \begin{align}\label{def-f-Q3}
	\MoveEqLeft Q_3[h](t,p) \, := \\
	& 2\lambda^2 \Re \int_0^t \dx{s} \int \dx{\bp_3}  \Big(\frac1{2!}\big(\delta(p_1-p)+\delta(p_2-p)-\delta(p_3-p)\big) \\
	& \bbf{1}{2}_{s_1}(\bp_3)\bbfb{1}{2}_{s}(\bp_3)e^{i\int_{s}^{t} \dx{\tau} \big(\Omega_\tau(p_1)+\Omega_\tau(p_2)-\Omega_\tau(p_3)\big)}\delta(p_1+p_2-p_3)\\
	&\big(\hb_{s}(p_1)\hb_{s}(p_2)h_{s}(p_3)-h_{s}(p_1)h_{s}(p_2)\hb_{s}(p_3)\big) \\
	& + \, \frac1{3!}\big(\delta(p_1-p)+\delta(p_2-p)+\delta(p_3-p)\big)\\
	& \bbf{0}{3}_{s}(\bp_3)\bbfb{0}{3}_{s}(\bp_3)e^{i\int_{s}^{t} \dx{\tau} \big(\Omega_\tau(p_1)+\Omega_\tau(p_2)+\Omega_\tau(p_3)\big)}\delta(p_1+p_2+p_3) \\
	&\big(\hb_{s}(p_1)\hb_{s}(p_2)\hb_{s}(p_3)-h_{s}(p_1)h_{s}(p_2)h_{s}(p_3)\big)\Big) 
    \end{align}
    and
    \begin{align}\label{def-g-Q3}
            \MoveEqLeft Q_3^{(g)}[h](t)[J] \, :=\\
            &\lambda^2\int\dx{p}J(p)\int_0^t\dx{s} \int \dx{\bp_3}\Big[\delta(p_1+p_2-p_3)\Big(\delta(p-p_3)e^{2i\int_0^{t}\dx{\tau}\Omega_\tau(p_3)}\\
        & \quad e^{i\int_{s}^{s}\dx{\tau}(\Omega_\tau(p_1)+\Omega_\tau(p_2)-\Omega_\tau(p_3))} \bbf{0}{3}_{t}(\bpb_3) \bbfb{1}{2}_{s_2}(\bp_3) -2\delta(p-p_1)e^{-2i\int_0^{t}\dx{\tau}\Omega_\tau(p_1)}\\
        & \quad e^{-i\int_{s}^{t}\dx{\tau}(\Omega_\tau(p_1)+\Omega_\tau(p_2)-\Omega_\tau(p_3))}\bbfb{1}{2}_{t}(\bpb_3)\bbf{1}{2}_{s}(\bp_3)\Big)\\
        & \quad \big(h_{s}(p_1)h_{s}(p_2)\hb_{s}(p_3)-\hb_{s}(p_1)\hb_{s}(p_2)h_{s}(p_3)\big)\\
        & + \, \delta(p-p_3)e^{2i\int_0^{t}\dx{\tau}\Omega_\tau(p_3)}\delta(p_1+p_2+p_3)e^{-i\int_{s}^{t}\dx{\tau}(\Omega_\tau(p_1)+\Omega_\tau(p_2)+\Omega_\tau(p_3))} \\
        & \quad \bbfb{1}{2}_{t}(\bpb_3)\bbf{0}{3}_{s}(\bp_3)\big(\hb_{s}(p_1)\hb_{s}(p_2)\hb_{s}(p_3)-h_{s}(p_1)h_{s}(p_2)h_{s}(p_3)\big)\Big] 
    \end{align}
    and
    \begin{align}\label{def-Phi-Q3}
		    \MoveEqLeft Q_3^{(\Phi)}[h](t) \, := \\
            &\lambda^2\int_0^t\dx{s}e^{i\int_0^{s_1}\dx{\tau} \Omega_\tau(0)}\Big[\frac12 \delta(p_1+p_2-p_3)\Big(e^{i\int_{s}^{t}\dx{\tau}(\Omega_\tau(p_1)+\Omega_\tau(p_2)-\Omega_\tau(p_3))}\\
            & \quad \bbf{1}{3}_{t}(0,\bp_3)\bbfb{1}{2}_{s}(\bp_3) \, - \, e^{-i\int_{s}^{t}\dx{\tau}(\Omega_\tau(p_1)+\Omega_\tau(p_2)-\Omega_\tau(p_3))}\bbf{2}{2}_{t}(0,\bprev_3)\bbf{1}{2}_{s}(\bprev_3)\Big)\\
            & \qquad \big(h_{s}(p_1)h_{s}(p_2)\hb_{s}(p_3)-\hb_{s}(p_1)\hb_{s}(p_2)h_{s}(p_3)\big)\\
            & \quad + \, \frac1{3!}\delta(p_1+p_2+p_3)\Big(\bbf{0}{4}_{t}(0,\bp_3)\bbfb{0}{3}_{s}(\bp_3)e^{i\int_{s}^{t}\dx{\tau}(\Omega_\tau(p_1)+\Omega_\tau(p_2)+\Omega_\tau(p_3))}\\
            & \quad - \, \bbfb{1}{3}_{s_1}(\bp_3,0)\bbf{0}{3}_{s}(\bp_3) e^{-i\int_{s}^{t}\dx{\tau}(\Omega_\tau(p_1)+\Omega_\tau(p_2)+\Omega_\tau(p_3))}\Big)\\
            & \qquad \big(h_{s}(p_1)h_{s}(p_2)h_{s}(p_3)-\hb_{s}(p_1)\hb_{s}(p_2)\hb_{s}(p_3)\big)\Big] \, .
		\end{align}
        In the evolution of $\Phi$, we obtain the additional collision term
        \begin{align}\label{def-Phi-Q33}
            Q_{3,3}^{(\Phi)}[h](t) \, &:= \, \lambda e^{i\int_0^t\dx{\tau}\Omega_\tau(0)}\int\dx{p}\Big(\bbf{1}{2}_t(0,p,p)Q_3[h](t,p) \\
                & \quad\quad + \, \bbfb{1}{2}_t(p,p,0)Q_3^{(g)}[h](t,p) \, + \, \bbf{0}{3}_t(p,p,0) \overline{Q_3^{(g)}[h](t,p)}\Big) \, .
        \end{align}

        \subsubsection{Main theorem}

        \begin{theorem}\label{thm-main}
            Impose the same assumptions as in Proposition \ref{prop-HFB-gwp}, and let $\Omt$ be the Bogoliubov dispersion defined in \eqref{def-bog-dispersion}. In addition, let $\vol\geq 1$, $\lambda>0$, and $t>0$, and $N>0$, and assume that $\fd,\vd,\nd{\gamma_0}<\infty$. Then there exist constants $C>0$ dependent on $\fd,\nd{\gamma_0},\vd,\vol$, and $K>0$ dependent on $\fd,\nd{\gamma_0}$ s.t. we have for all $J\in L^2\cap L^\infty(\lattice)$ that
            \begin{align}
                \Big|\Phi_t \, - \, \frac1{N^{\frac32}}\int_0^t\dx{s}\big(Q_3^{(\Phi)}[f](s)+Q_{3,3}^{(\Phi)}[f](s)\big)\Big| \, &\leq Ce^{K\vd\vol\lambda t} \frac1{N^2} \, , \label{eq-phi-main}\\
                \Big|\int\dx{p}J(p)\big(f_t(p)-f_0(p)-\frac1N\int_0^t\dx{s}Q_3[f](s,p)\big)\Big| &\leq Ce^{K\vd\vol\lambda t} \frac{\|J\|_\infty}{N^{\frac32}} \, ,\label{eq-f-main}\\
                \Big|\int\dx{p}J(p)\big(g_t(p)-\frac1N\int_0^t\dx{s}Q_3^{(g)}[f](s,p)\big)\Big| &\leq C e^{K\vd\vol\lambda t} \frac{\|J\|_2+\|J\|_\infty}{N^{\frac32}}
                 \, . \label{eq-g-main}
            \end{align}
        \end{theorem}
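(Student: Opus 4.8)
I would organize the proof in three main steps, the last of which carries the analytic weight.

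\emph{Step 1 (reduction of the fluctuation Hamiltonian).} By Proposition~\ref{prop-HFB-gwp}, under the stated hypotheses the renormalized HFB system \eqref{eq-HFB-system-intro} has a unique global mild solution $(\phi_t,\Gamma_t,\Sigma_t)$ with $\gtr_t\geq0$ and $|\str_t|^2\leq(\gtr_t+1)\gtr_t$ preserved; hence the fields $(\phi_t,\gamma_t,\sigma_t,\Omega_t)$, the Bogoliubov coefficients $u_t=\sqrt{1+\gamma_t}$, $v_t=\sigma_t/\sqrt{1+\gamma_t}$, and all collision kernels $\bbf{j}{k}_t$ are well defined for every $t\ge0$, and a Grönwall estimate on \eqref{eq-HFB-system-intro} furnishes a priori bounds on these objects in $\nd{\cdot}$, $\wnd{\cdot}$ and $L^\infty(\lattice)$ growing at most like $e^{K\vd\vol\lambda t}$. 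With this choice of $(\phi_t,k_t,\Omega_t)$ and the scalar phase $S_t$, Lemma~\ref{lem-hfluc} shows that in the decomposition \eqref{eq-Hfluc-decomp-0} the linear part $\HBEC(t)$ and the quadratic part $\HHFB(t)$ are eliminated (the diagonal of the quadratic part having been removed through $\bprop$), so $\Hfluc(t)=\Hcub(t)+\Hquart(t)$. Consequently the relative moments $(\Phi_t,f_t,g_t)$ of \eqref{def-rel-moments}, which solve \eqref{eq-rel-Liouville} with $\Phi_0=g_0=0$, evolve only under the cubic and quartic interactions.

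\emph{Step 2 (Duhamel extraction of the Boltzmann terms).} I would apply the Duhamel formula to $\jb{A}_t$ for $A\in\{a_0,\ad_pa_p,a_pa_{-p}\}$, testing the latter two against $J$. The commutator with $\Hcub(s)$ lowers the operator degree by one and that with $\Hquart(s)$ raises it by one; iterating once more and reducing the resulting monomials by Wick's theorem against the quasifree part of $\jb{\cdot}_s$ produces the asserted leading terms. For $f$ and $g$ the double commutator with $\Hcub$ is of order $\lambda^2N^{-1}$, and, evaluated in the quasifree part of $\jb{\cdot}_s$, its contractions reassemble --- with the $\gamma_s,\sigma_s$-dependence carried by the kernels $\bbf{1}{2}_s,\bbf{0}{3}_s$, the intermediate second moments supplying the $f_s,\hb_s$-factors (pure-$g_s$ contributions being of higher order since $g_0=0$), and the renormalized dispersion $\Omega_t$ furnishing the oscillatory phases via $\bprop$ --- into precisely $\frac1N Q_3[f]$ and $\frac1N Q_3^{(g)}[f]$. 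For $\Phi$, since $\Phi_0=0$ and the genuinely $O(1)$ part of the quadratic term $[a_0,\Hcub]$ is already absorbed into the renormalization of $\phi_t$, the leading residual is a mixed $\Hcub$--$\Hquart$ contribution of order $\lambda^2N^{-3/2}$: its tree part gives $N^{-3/2}Q_3^{(\Phi)}[f]$, and the part in which the cubic vertex attaches onto an already-closed $Q_3$-loop gives the extra $N^{-3/2}Q_{3,3}^{(\Phi)}[f]$ of \eqref{def-Phi-Q33}. Everything else --- higher Dyson iterates, the $\Hquart$--$\Hquart$ and triple-vertex diagrams, and the non-quasifree part of $\jb{\cdot}_s$ --- goes into the error.

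\emph{Step 3 (propagation of quasifreeness and Grönwall closure).} The key technical input is a propagation-of-restricted-quasifreeness estimate: the deviation of the relevant moments of $\jb{\cdot}_t$ from their Wick expansions has size $O(e^{K\vd\vol\lambda t}N^{-p})$ for the appropriate powers $p$, proved by a Grönwall argument on the hierarchy of centered truncated moments in which each additional vertex costs a factor $N^{-1/2}$; it is here that the admissible time scale $t\sim\lambda^{-2}\sim(\log N)^2$ is dictated. Combined with boundedness of $Q_3,Q_3^{(g)},Q_3^{(\Phi)},Q_{3,3}^{(\Phi)}$ on $\nd{\cdot}$, resp.\ $\wnd{\cdot}$ (using Step~1 and $\vd<\infty$), and a bootstrap assumption $\nd{f_t-f_0}+|\Phi_t|+\wnd{g_t}\leq Ce^{K\vd\vol\lambda t}N^{-1}$, one sets up integral inequalities for the three error functionals on the left-hand sides of \eqref{eq-phi-main}--\eqref{eq-g-main} and closes them by Grönwall, recovering the bootstrap and the stated orders $N^{-2}$, $N^{-3/2}$, $N^{-3/2}$. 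The bounds are sharp because, once the cancellations forced by translation invariance, oddness of odd moments, and the HFB renormalization are exploited, the only terms surviving at the claimed error order are genuinely of that size.

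\emph{Expected main obstacle.} The core difficulty is the combinatorial control of the Dyson expansion of the cubic-plus-quartic $\Hfluc$ with \emph{sharp} tracking of powers of $N$: among the proliferating Wick diagrams one must cleanly separate those reassembling into the renormalized HFB fields (already hidden inside $(\phi_t,\gamma_t,\sigma_t,\Omega_t)$), those building the collision operators $Q_3^{(\cdot)}$, and the genuine remainders, while carrying the oscillatory phases $e^{i\int\Omega}$ through the whole expansion without a stationary-phase reduction (unavailable on the discrete torus) and using their cancellation only where it truly helps. The propagation-of-quasifreeness estimate that licenses replacing $\jb{\cdot}_s$ by its quasifree part up to $t\sim(\log N)^2$ is the other technically heavy ingredient, and is what ultimately caps the time of validity.
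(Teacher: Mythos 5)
Your overall architecture (global well-posedness and a priori bounds for the HFB fields, Duhamel expansion reduced by Wick's theorem against the quasifree initial state, tail control via number-operator moment bounds and Gr\"onwall) is the one used in the paper, but your Step 1 contains an error that the rest of the argument cannot survive. You claim that with the renormalized fields $(\phi_t,\gamma_t,\sigma_t,\Omega_t)$ of Proposition \ref{prop-HFB-gwp} one has $\HBEC(t)=\HHFB(t)=0$, hence $\Hfluc(t)=\Hcub(t)+\Hquart(t)$. That is the \emph{first-order} renormalization \eqref{eq-phi-ren-0}--\eqref{eq-gamma-sigma-ren-0}, and it is incompatible with the renormalized system \eqref{eq-HFB-ren-2-intro} you invoke: under \eqref{eq-HFB-ren-2} the correct identities are $\HBEC(t)=-\Hcubcon(t)\neq0$ and $\HHFB(t)=-\Hquartcon(t)\neq0$, i.e., the conditions \eqref{eq-ren-cond-2}, with explicit expressions in Lemma \ref{lem-HBEC-HHFB-with-HFB-eq} and size $\|\HBEC(t)P_n\|\lesssim \lambda N^{-1/2}\sqrt{n+\vol}$ by Proposition \ref{prop-hfluc-est}. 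Your Step 2 implicitly relies on this: the remark that the ``$O(1)$ part of $[a_0,\Hcub]$ is already absorbed into the renormalization of $\phi_t$'' is exactly the statement $\jb{[a_0,\HBEC(t)+\Hcub(t)]}_0=0$, which forces $\HBEC\neq0$. Steps 1 and 2 therefore contradict each other.

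This is not cosmetic. If one literally takes $\Hfluc=\Hcub+\Hquart$, then $\phi_t$ satisfies only the unrenormalized equation \eqref{eq-phi-HFB-0}, and the first-order Duhamel term for $\Phi$, namely $-i\int_0^t\jb{[a_0,\Hcub(s)]}_s\dx{s}$ whose quasifree part is $-i\int_0^t\jb{[a_0,\Hcubcon(s)]}_0\dx{s}$, is of size $\lambda t N^{-1/2}$ and destroys \eqref{eq-phi-main}; its square reappears in the second-order expansion of $f$ as the secular term \eqref{eq-f-hcub-con-0} of size $\lambda^2t^2N^{-1}$, which dominates the Boltzmann contribution $\lambda^2tN^{-1}$ and ruins the sharpness of \eqref{eq-f-main}; the analogous $\Hquartcon$ terms spoil \eqref{eq-g-main}. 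The essential device of the proof is to keep $\HBEC=-\Hcubcon$ and $\HHFB=-\Hquartcon$ inside $\Hfluc$ and to group them with $\Hcub$ and $\Hquart$ in the Duhamel expansion, as in \eqref{eq-main-error-terms}, so that these secular/self-energy contributions cancel identically and only the collision operators plus remainders of the claimed order survive. Once Step 1 is corrected in this way, your Steps 2 and 3 go through essentially as in the paper (which, for the tail, estimates the third-order Duhamel term directly via the bound on $(\nb+\vol)^{\ell/2}\fluc(t)(\nb+\vol)^{-\ell/2}(1+\nb/(N\vol))^{-1/2}$ from Proposition \ref{prop-fluct-dynam-bound} rather than through a Gr\"onwall hierarchy of truncated moments; that is a difference of packaging, not of substance).
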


        \begin{remark}
            The error bounds in Theorem \ref{thm-main} improve those obtained in \cite{chenhott} significantly. In the latter, the upper bounds were of order $O(\frac\lambda{N^{1/2}})$ for \eqref{eq-phi-main}, and $O(\frac\lambda{N})$ for \eqref{eq-f-main}, \eqref{eq-g-main}, respectively. The $t$-dependence of the error remained the same in each case. In Theorem \ref{thm-main}, the dependence of the error terms with respect to $N$ is sharp.
        \end{remark}

        For the following statement, we introduce the mesoscopic fields
        \begin{equation}\label{def-meso-fields}
            \Psi_T \, := \, \Phi_{T/\lambda^2}, \quad F_T \, := \, f_{T/\lambda^2}, \quad G_T \, := \, g_{T/\lambda^2} \, ,
        \end{equation}
        as well as the mesoscopic Boltzmann operators
        \begin{align}\label{eq-meso-boltz}
            \cQ^{(\Psi)}_k[F](S) \, &:= \, \lambda^{-2}Q^{(\Phi)}_k[f](S/\lambda^2), \quad \cQ_3[F](S) \, := \, \lambda^{-2}Q_3[f](S/\lambda^2), \\
            \cQ_3^{(G)}[F](S) \, &:= \, \lambda^{-2} Q_3^{(g)}[f](S/\lambda^2) \, .
        \end{align}
        
        \begin{corollary}\label{cor-validity}
            Under the same assumptions of Theorem \ref{thm-main}, for any $\delta\in(0,\frac12)$, there exists a constant $C_{\delta}>0$ dependent on $\fd$, $\nd{\gamma_0}$, $\vd$ and a constant $K_\delta$ dependent on $\fd$, $\nd{\gamma_0}$, $\vd$, $\vol$, such that for $t=\lambda^{-2}T$ and $\lambda=\frac{C_\delta}{\log N}$, we have that
            \begin{align}
                \Big|\Psi_{T} \, - \, \frac1{N^{\frac32}}\int_0^T\dx{S}\big(\cQ_3^{(\Psi)}[F](S)+\cQ_{3,3}^{(\Psi)}[F](S)\big)\Big| \, &\leq \frac{K_\delta}{N^{\frac32+\delta}} \, , \\
                \Big|\int\dx{p}J(p)\big(F_T(p)-F_0(p)-\frac1N\int_0^T\dx{S}\cQ_3[F](S,p)\big)\Big| &\leq \frac{K_\delta\|J\|_\infty}{N^{1+\delta}} \, ,\\
                \Big|\int\dx{p}J(p)\big(G_T(p)-\frac1N\int_0^T\dx{S}\cQ_3^{(G)}[F](S,p)\big)\Big| &\leq \frac{K_\delta(\|J\|_2+\|J\|_\infty)}{N^{1+\delta}}
                 \, . 
            \end{align}
        \end{corollary}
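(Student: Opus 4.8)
The plan is to deduce Corollary \ref{cor-validity} directly from Theorem \ref{thm-main} by a time rescaling together with the specific choice $\lambda=C_\delta/\log N$; no new analytic estimate is needed, only a change of variables and an elementary bound on the exponential error factor. First I would record that, under the hypotheses of the corollary (which are exactly those of Theorem \ref{thm-main}, together with $\lambda=C_\delta/\log N>0$ for $N\ge 2$ and $t=T/\lambda^2>0$), the three estimates \eqref{eq-phi-main}--\eqref{eq-g-main} are available as stated. With this choice the physical time is $t=T/\lambda^2\sim(\log N)^2$, which is precisely the improved time of validity advertised in the introduction.

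Next I would substitute $s=S/\lambda^2$ in every Duhamel integral occurring in Theorem \ref{thm-main}. By the definitions \eqref{def-meso-fields} of $(\Psi_T,F_T,G_T)$ and \eqref{eq-meso-boltz} of the mesoscopic collision operators, one has $\Phi_{T/\lambda^2}=\Psi_T$, $f_{T/\lambda^2}=F_T$ (so in particular $F_0=f_0$), $g_{T/\lambda^2}=G_T$, and
\begin{equation}
\frac1{N^{3/2}}\int_0^{T/\lambda^2}\dx{s}\big(Q_3^{(\Phi)}[f](s)+Q_{3,3}^{(\Phi)}[f](s)\big)=\frac1{N^{3/2}}\int_0^T\dx{S}\big(\cQ_3^{(\Psi)}[F](S)+\cQ_{3,3}^{(\Psi)}[F](S)\big),
\end{equation}
and likewise $\frac1N\int_0^{T/\lambda^2}\dx{s}\,Q_3[f](s,p)=\frac1N\int_0^T\dx{S}\,\cQ_3[F](S,p)$ and $\frac1N\int_0^{T/\lambda^2}\dx{s}\,Q_3^{(g)}[f](s,p)=\frac1N\int_0^T\dx{S}\,\cQ_3^{(G)}[F](S,p)$, the $\lambda^{-2}$ prefactor being exactly the one built into $\cQ_\bullet$. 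Thus the left-hand sides of \eqref{eq-phi-main}--\eqref{eq-g-main} become precisely the quantities in Corollary \ref{cor-validity}, while on the right-hand sides the error factor is $e^{K\vd\vol\lambda t}$ with $\lambda t=T/\lambda$.

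The only quantitative step is to absorb this exponential into the polynomial gain in $N$. Inserting $\lambda=C_\delta/\log N$ gives $\lambda t=(T/C_\delta)\log N$, hence
\begin{equation}
e^{K\vd\vol\lambda t}=N^{K\vd\vol T/C_\delta}.
\end{equation}
Choosing $C_\delta$ so large that $K\vd\vol T/C_\delta\le 1/2-\delta$ — for instance $C_\delta\ge 2K\vd\vol T/(1-2\delta)$, which depends on $\delta$, on $\vd$, $\vol$, $T$, and on the rate constant $K$ of Theorem \ref{thm-main} (hence on $\fd$ and $\nd{\gamma_0}$) — one gets $e^{K\vd\vol\lambda t}N^{-2}\le N^{-3/2-\delta}$ and $e^{K\vd\vol\lambda t}N^{-3/2}\le N^{-1-\delta}$. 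Inserting these into the rescaled forms of \eqref{eq-phi-main}--\eqref{eq-g-main} produces the three bounds of the corollary, with $K_\delta$ equal to the constant $C$ of Theorem \ref{thm-main}.

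I expect the ``main obstacle'' here to be organizational rather than mathematical: one must check carefully that the $\lambda^{-2}$ normalizations built into the mesoscopic operators \eqref{eq-meso-boltz} are exactly what the single outer rescaling $s\mapsto S/\lambda^2$ generates (keeping in mind that the microscopic kernels $Q_3,Q_3^{(g)},Q_3^{(\Phi)}$ already carry their own internal factor $\lambda^2$ and time integral, which are simply transported along), and that the advertised dependencies of $C_\delta$ and $K_\delta$ are stated consistently — in particular $C_\delta$ must be permitted to depend on $\vol$ and on the mesoscopic time horizon $T$, since the exponential rate in Theorem \ref{thm-main} carries the explicit factor $\vd\vol\lambda t$. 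No ingredient beyond Theorem \ref{thm-main} itself, and Proposition \ref{prop-HFB-gwp} guaranteeing global existence of the HFB fields entering the collision kernels, is required.
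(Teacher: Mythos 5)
Your proposal is correct and is exactly the intended derivation: the corollary follows from Theorem \ref{thm-main} by the substitution $s=S/\lambda^2$, which by \eqref{def-meso-fields} and \eqref{eq-meso-boltz} converts the microscopic Duhamel expressions into the mesoscopic ones, combined with $e^{K\vd\vol\lambda t}=N^{K\vd\vol T/C_\delta}$ and the choice $C_\delta\geq 2K\vd\vol T/(1-2\delta)$ so that the exponential is absorbed into the polynomial gain $N^{-1/2+\delta}$. Your observation that this forces $C_\delta$ to depend also on $\vol$ and on the mesoscopic horizon $T$, whereas the statement lists only $\fd$, $\nd{\gamma_0}$, $\vd$, is a fair point about the bookkeeping of constants and does not affect the validity of the argument.
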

        
        \begin{remark}
             Corollary \ref{cor-validity} improves our previous time window, see \cite{chenhott}, which was of order $t\sim\big(\frac{\log N}{\log \log N}\big)^2$. Here, we obtain $t\sim (\log N)^2$.
        \end{remark}

        \begin{remark}
            In \cite{chenhott}, we also studied the case $L=\lambda^{-2-}\gg \lambda^{-2}\sim t$. This is due to the reason, that for longer times/shorter system sizes, we observe superposition of waves, while for shorter times, we observe dispersion. In that case, the time window of validity was given by $t\sim \lambda^{-2}$, with $\lambda=O\big(\big(\frac{\log \log N}{\log N}\big)^{\frac27-}\big)$. For these times, we observed an \emph{elastic} QBE, for which the dispersion relation is given by the Bogoliubov dispersion $\Omega$. We do not provide such a result in the present work, due to the complicated phase structure in the quantum Boltzmann operators, involving the phases of the HFB fields $(\phi,\sigma)$ as well as the \emph{modified} Bogoliubov dispersion $\Omt$, see Remark \ref{rem-mod-bog-disp}.
        \end{remark}
        
    \subsection{Sketch of the proof}

    In Section \ref{sec-main-terms}, we first derive the expansions for the total BEC wave function and total pair correlations. We then compute the perturbation expansion of $f$, $g$ and $\Phi$. In that expansion, we collect terms corresponding to the BEC evolution up to order $\frac1{\sqrt{N}}$. \mh{In particular, they correspond to the first-order Duhamel expansion of $\Phi$ and involve a single commutator with $\Hfluc$.} These terms can be eliminated\mh{, to leading order,} by a suitable choice of the HFB field $\phi$ \mh{fixing the Weyl transform $\weyl[\sqrt{N\vol}\phi]$.} The pair-absorption terms in the evolution of $g$ up to order $\frac1N$ contain some terms that can be characterized as HFB terms. They arise in the second order Duhamel expansion, either due to a single or double commutator with $\Hfluc$. These terms, in turn, can be eliminated by a suitable choice of $(\gamma,\sigma)$, \mh{determining the Bogoliubov rotation $\bog[k_t]$, see \eqref{eq-bog-rot-raw} and \eqref{def-sigma}, \eqref{def-up}.} The remaining 'free' evolution terms can be eliminated by properly choosing the dispersion $\Omega$ in $\bprop(t)$. The \mh{entire} procedure amounts to the renormalization of $(\phi,\gamma,\sigma,\Omega)$. In Section \ref{sec-HFB-est}, we derive a priori bounds for the renormalized HFB fields $(\phi,\gamma,\sigma)$. Those allow us to control the tail and other lower-order terms in the Duhamel expansions of $(\Phi,f,g)$.

    \section{Derivation of leading order terms\label{sec-main-terms}}

    \mh{In this section, we sketch the ideas needed to separate the HFB evolution from the QBE evolution. Crucially, we will use the transformation behavior of $a^{\#}$ under conjugation with $\weyl$, $\bog$, and $\bprop$. In particular, these transformations preserve quasifreeness.}

    \subsection{Density expansion\label{sec-dens-exp}}

    \mh{In this step, we subtract the HFB dynamics, by way of the unitary transformations associated with $\weyl[\sqrt{N\vol}\phi_t]$, $\bog[k_t]$ and $\bprop(t)$. Notice that for now we are not prescribing the specific dynamics. As we will see below, the dynamics of $(\phi,\gamma,\sigma,\omega)$ will emerge from a constraint condition formulated in the leading order dynamics.}
    
    \par \eqref{eq-weyl} implies
    \begin{align}\label{eq-dens-weyl-trafo}
        & \weyld[\sqrt{N\vol}\phi_t]\ad_pa_p\weyl[\sqrt{N\vol}\phi_t]  \\
        &\quad = \, \ad_pa_p \, + \, \delta(p)\Big(\sqrt{N\vol}(\phi_t\ad_0 +\overline{\phi}_t a_0) \, + \, N\vol^2|\phi_t|^2 \Big) \, .
    \end{align}
    Analogously, \eqref{eq-bog-rot}, followed by \eqref{def-gamma}, \eqref{def-sigma}, yields
    \begin{align}\label{eq-dens-bog-trafo}
        \begin{split}
        &\bogd[k_t]\ad_pa_p\bog[k_t]\\
        &\quad = \, (u_t(p)\ad_p+\overline{v}_t(p)a_{-p})(u_t(p)a_p+v_t(p)\ad_{-p})  \\
        &\quad = \, u_t(p)^2\ad_pa_p + |v_t(p)|^2\ad_{-p}a_{-p} +(u_t(p)v_t(p)\ad_p\ad_{-p}+\mathrm{h.c.})+\vol|v_t(p)|^2\\
        &\quad = \, (1+\gamma_t(p))\ad_pa_p+\gamma_t(p)\ad_{-p}a_{-p}+(\sigma_t(p)\ad_p\ad_{-p}+\mathrm{h.c.})+\vol\gamma_t(p) \, ,
        \end{split}
    \end{align}
    and, similarly,
    \begin{align}\label{eq-BEC-dens-bog-trafo}
        \begin{split}
        \bogd[k_t](\phi_t\ad_0 +\overline{\phi}_t a_0)\bog[k_t] \, &= \, \phi_t(u_t(0)\ad_0+\overline{v}_t(0)a_0)+\overline{\phi}_t(u_t(0)a_0+v_t(0)\ad_0)\\
        &= \, \big(\phi_tu_t(0)+\overline{\phi}_tv_t(0)\big)\ad_0+\mathrm{h.c.}
        \end{split}
    \end{align}
    Next, \eqref{eq-bog-prop} implies
    \begin{align}
        \bpropd(t)a_0\bprop(t) \, &= \, e^{-i\int_0^t \dx{s} \Omega_s(0)}a_0 \, , \label{eq-BEC-bprop-trafo}\\
        \bpropd(t)\ad_p\ad_{-p}\bprop(t) \, &= \, e^{2i\int_0^t \dx{s} \Omega_s(p)}\ad_p\ad_{-p} \, , \label{eq-pair-ab-bprop-trafo}
    \end{align}
    while $[\ad_pa_p,\bprop(t)]=0$.
    
    Collecting \eqref{eq-dens-weyl-trafo}, \eqref{eq-dens-bog-trafo}, \eqref{eq-BEC-dens-bog-trafo}, \eqref{eq-BEC-bprop-trafo} and \eqref{eq-pair-ab-bprop-trafo}, and using the fact that $f_t$ is even, we can rewrite the total density in terms of the relative densities
    \begin{align}
        \ftot_t(p) \, =& \, \delta(p)\Big[N\vol|\phi_t|^2 \, + \, \Big(\sqrt{N\vol}e^{i\int_0^t \dx{s} \Omega_s(0)}\big(\phi_tu_t(0)+\overline{\phi}_tv_t(0)\big)\overline{\Phi}_t \, + \, \mathrm{h.c.}\Big)\Big] \\
        & \, + \, \gamma_t(p) \, + \, \big(1+\gamma_t(p)\big)f_t(p) \, + \, \gamma_t(p)f_t(-p) \, + \, \Big(e^{2i\int_0^t \dx{s} \Omega_s(p)}\sigma_t(p) \overline{g}_t(p)+\mathrm{h.c.}\Big) \, .
    \end{align}
    Similarly, we have that
    \begin{align}\label{eq-pair-abs-weyl-trafo}
        \weyld[\sqrt{N\vol}\phi_t]a_pa_{-p} \weyl[\sqrt{N\vol}\phi_t] \, = \, a_pa_{-p} \, + \, \delta(p)\Big(2\sqrt{N\vol}\phi_ta_0 \, + \, N\vol^2\phi_t^2\Big) \, ,
    \end{align}
    and that
    \begin{align}\label{eq-pair-abs-bog-trafo}
        \begin{split}
        \bogd[k_t]a_pa_{-p}\bog[k_t] \, =& \, (u_t(p)a_p+v_t(p)\ad_{-p})(u_t(p)a_{-p}+v_t(p)\ad_{p})\\
        =& \, \vol\sigma_t(p)  \, + \, \sigma_t(\ad_pa_p+\ad_{-p}a_{-p}) \, + \, (1+\gamma_t(p))a_pa_{-p}\\
        & \, + \, \frac{\sigma_t(p)^2}{1+\gamma_t(p)}\ad_p\ad_{-p} \, .
        \end{split}
    \end{align}
    Following analogous steps as above, we obtain
    \begin{align}
        \gtot_t(p) \, =& \, \delta(p)\Big[N\vol\phi_t^2 \, + \, 2\sqrt{N\vol}\phi_t(u_t(0)e^{-i\int_0^t \dx{s} \Omega_s(0)}\Phi_t+v_t(0)e^{i\int_0^t \dx{s} \Omega_s(0)}\overline{\Phi}_t)\Big]\\
        & \, + \, \sigma_t(p) \, + \, \sigma_t(f_t(p)+f_t(-p)) \, + \, (1+\gamma_t(p))e^{2i\int_0^t \dx{s} \Omega_s(p)}g_t(p)\\
        & \, + \, \frac{\sigma_t(p)^2e^{-2i\int_0^t \dx{s} \Omega_s(p)}\overline{g}_t(p)}{1+\gamma_t(p)} \, ,
    \end{align}
    and also
    \begin{align}
        \Phitot_t \, = \, \sqrt{N\vol}\phi_t \, + \, u_t(0)e^{-i\int_0^t\dx{\tau}\Omega_{\tau}(0)}\Phi_t \, + \,  v_t(0)e^{i\int_0^t\dx{\tau}\Omega_{\tau}(0)}\overline{\Phi} \, . 
    \end{align}
    
	\subsection{Perturbation expansion\label{sec-perturbation}}

    \mh{Notice that thus far, we have not specified the fields $(\phi,\gamma,\sigma,\Omega)$ involved in the subtracted dynamics in the previous subsection. As we will see, the dynamics will be fixed by eliminating non-QBE terms in the leading order of the fluctuation dynamics. Consequently, the remaining terms in the fluctuation dynamics will describe the QBE to leading order.}
    
    \par We are interested in the evolution of 
	\begin{align}
	f_t(p) \, = \,  \frac{\jb{\flucd(t)\ad_pa_p\fluc(t)}_0}\vol \, . 
	\end{align} 
	Note that due to conjugation with $\fluc(t)$, the phase-factor $e^{iS_t}$ in $\Hfluc(t)$ drops out. Using the Duhamel expansion, we obtain that
	\begin{align}
	\mh{f_t(p)-f_0(p)\,  =\,} &   -i \int_0^t \dx{s} \frac{\jb{[\ad_pa_p,\Hfluc(s)]}_0}\vol \\
	& - \, \int_{[0,t]^2} \dx{\bs_2} \mathds{1}_{s_1\geq s_2} \, \frac{\jb{[[\ad_pa_p,\Hfluc(s_1)],\Hfluc(s_2)]}_0}\vol \\
	& + \,  \int_{[0,t]^3}\dx{\bs_3}\frac{\jb{[[[\ad_pa_p,\Hfluc(s_1)],\Hfluc(s_2)],\Hfluc(s_3)]}_{s_3}}\vol \, .\label{eq-f-tail}
	\end{align} 
	Translation invariance and gauge invariance imply that
	\begin{align}\label{eq-f-transport-vanish}
	\jb{[\ad_pa_p,\Hfluc(s)]}_0 \, = \, \wick{[\c1 {\ad_p} \c2 a_p, \settowidth{\wdth}{$\Hfluc$}\hspace{.25\wdth}\c1{\vphantom{\Hcub}}\hspace{-.25\wdth}\c2\Hfluc (s)]}  \, \propto \, f_0(p)\fbar(p)- \fbar(p) f_0(p) \, = \, 0 \, . 
	\end{align} 	
	Next, we have, due to gauge invariance, that
	\begin{align}
	\MoveEqLeft \frac{\jb{[[\ad_pa_p,\Hfluc(s_1)],\Hfluc(s_2)]}_0}\vol\\
	=&  \frac{\jb{[[\ad_pa_p,\HBEC(s_1)+\Hcub(s_1)],\HBEC(s_2)+\Hcub(s_2)]}_0}\vol \\
	& + \, \frac{\jb{[[\ad_pa_p,\HHFB(s_1)+\Hquart(s_1)],\HHFB(s_2)+\Hquart(s_2)]}_0}\vol  \, . \label{eq-f-expansion-0}
	\end{align}

    \mh{Our goal is now to eliminate all terms in \eqref{eq-f-expansion-0} \emph{not} corresponding to the quantum Boltzmann dynamics, which we will obtain from the contraction
    \begin{align}
        \begin{split}
		\frac{\jb{\wick{[[\c1 \ad_p \c2 a_p, \settowidth{\wdth}{$\Hcub$}\hspace{.25\wdth}\c3{\vphantom{\Hcub}}\hspace{-.25\wdth}\c1 \Hcub\settowidth{\wdth}{$\Hcub$}\hspace{-.25\wdth}\c1{\vphantom{\Hcub}}\hspace{.25\wdth}(s_1)], \settowidth{\wdth}{$\Hcub$}\hspace{.25\wdth}\c1{\vphantom{\Hcub}}\hspace{-.25\wdth} \c2 \Hcub\settowidth{\wdth}{$\Hcub$}\hspace{-.25\wdth} \c3{\vphantom{\Hcub}}\hspace{.25\wdth}( s_2)]}}_0}\vol + \, \frac{\jb{\wick{[[\c2 \ad_p \c1 a_p, \settowidth{\wdth}{$\Hcub$}\hspace{.25\wdth}\c3{\vphantom{\Hcub}}\hspace{-.25\wdth}\c1 \Hcub\settowidth{\wdth}{$\Hcub$}\hspace{-.25\wdth}\c1{\vphantom{\Hcub}}\hspace{.25\wdth}(s_1)], \settowidth{\wdth}{$\Hcub$}\hspace{.25\wdth}\c1{\vphantom{\Hcub}}\hspace{-.25\wdth} \c2 \Hcub\settowidth{\wdth}{$\Hcub$}\hspace{-.25\wdth} \c3{\vphantom{\Hcub}}\hspace{.25\wdth}( s_2)]}}_0}\vol \, .
	\end{split}
    \end{align}
    We will next describe how to gradually eliminate all other terms.
    }

    \subsubsection{First order HFB renormalization} \mh{The following analysis is analogous to the approach studied in \cite{grmama} and establishes the same equations.}
    \par For general fields $(\phi,\gamma,\sigma)$, the leading order contributions in \eqref{eq-f-expansion-0} are generated by $\HBEC$ and $\HHFB$. Thus, a possible choice is to set 
	\begin{align}
		\HBEC(t) \, =& \, 0 \, , \label{eq-phi-ren-0}\\
		\HHFB(t) \, =& \, 0 \, . \label{eq-gamma-sigma-ren-0}
	\end{align} 
	Observe that it is sufficient for calculating the leading order expressions to assume
	\begin{align}
		\HBEC(t),\HHFB(t)=O(\frac1{\sqrt{N}}) \, .
	\end{align}
	Lemma \ref{lem-hfluc} implies that \eqref{eq-phi-ren-0} is equivalent to
	\begin{align}
		\MoveEqLeft u_t(0)\Big( -i\partial_t \phi_t  +  \lambda\vol|\phi_t|^2\hat{v}(0) \phi_t +  \frac{\lambda}{N}\int \dx{p} \hat{v}(p)\sigma_t(p)\overline{\phi}_t\\
		& + \, \frac{\lambda}{N}\int \dx{p} \big(\hat{v}(p)+\hat{v}(0)\big) \gamma_t(p) \phi_t\Big) \, + \, v_t(0)\Big( -\overline{i\partial_t \phi_t} +  \lambda\vol|\phi_t|^2\hat{v}(0) \overline{\phi}_t \\
		& +  \frac{\lambda}{N}\int \dx{p} \hat{v}(p)\overline{\sigma}_t(p)\phi_t + \frac{\lambda}{N}\int \dx{p} \big(\hat{v}(p)+\hat{v}(0)\big) \gamma_t(p) \overline{\phi}_t\Big) \, = \, 0 \, .
	\end{align} 
	Abbreviating
	\begin{align}
	\Gamo_t(p) :=&  \gamma_t(p) \, + \, N\vol|\phi_t|^2 \delta(p) \, , \label{def-Gamo} \\
	\Sigo_t(p) :=&  \sigma_t(p) \, + \, N\vol \phi_t^2 \delta(p) \, , \label{def-Sigo}
	\end{align} 
	this condition is satisfied if 
	\begin{align}\label{eq-phi-HFB-0}
	    \begin{split}
	i\partial_t \phio_t \, =& \, \frac{\lambda}{N} \Big( \big(\Gamo_t*(\hat{v}+\hat{v}(0))\big) (0) \phio_t \, + \, (\Sigo_t*\hat{v}) (0) \phiob_t\Big) \\
	& - \, 2\lambda\vol\hat{v}(0) |\phio_t|^2 \phio_t \, . 
	\end{split}
    \end{align}
	The superscript '$^{(\stepo)}$' accounts for renormalization to first order, as detailed below. For now, they do not play a specific role. 
	\par Since $\HHFBd(t)$ is a diagonal quadratic operator, we can absorb it into phase factors in $\HBEC(t)$, $\Hcub(t)$, and $\Hquart(t)$. For that purpose, we set 
	\begin{align} 
	\HHFBd(t) \, = \, 0 \, , \label{eq-HFBd=0}
	\end{align} 
	or equivalently,
	\begin{align}\label{def-bog-dispersion}
    \begin{split}
	\MoveEqLeft\Omo_t(p) \, = \,\\
	&\Big(E(p)+\frac{\lambda}N\big((\gamo_t+N\vol|\phio_t|^2\delta)*(\hat{v}+\hat{v}(0))\big)(p)\Big)\big(1+2\gamo_t(p)\big) \\
	& + \, \frac{2\lambda}N\Re\Big(\big((\sigob_t+N\vol(\phiob_t)^2\delta)*\hat{v}\big)(p)\sigo_t(p)\Big) \, -  \frac{\Re\big(\sigob_t(p)i\partial_t\sigo_t(p)\big)}{1+\gamo_t(p)} \, . 
	\end{split}
    \end{align}
	see Lemma \ref{lem-bog-dispersion}. 
	\par Thus \eqref{eq-gamma-sigma-ren-0} is satisfied if
	\begin{align}
	\HHFBod(t) \, = \, 0 \, , \label{eq-HHFBod=0}
	\end{align} 
	which has been elaborated on in \cite{GM1,GM2,GM3}. Lemma \ref{lem-hfluc} implies that \eqref{eq-HHFBod=0} is satisfied if 
	\begin{align}\label{eq-HFB-tot-0}
    \begin{split}
	\MoveEqLeft\frac{i\partial_t \sigo_t(p)}2 \, - \, \frac{\sigo_t(p)i\partial_t \gamo_t(p)}{2(1+\gamo_t(p))} \, = \,\\
	&  \Big(E(p)+\frac{\lambda}{N}\Gamo_t*\big(\hat{v}+\hat{v}(0)\big)(p)\Big)\sigo_t(p)  \\
	& + \, \frac{\lambda}{2N} \Big(\big(\Sigo_t* \hat{v}\big)(p) (1+\gamo_t(p)) \, + \, \big(\Sigob_t*\hat{v}\big)(p) \frac{\sigo_t(p)^2}{1+\gamo_t(p)}\Big) \, . 
	\end{split}
    \end{align}
	We show in Lemma \ref{lem-HFB-equations} that \eqref{eq-HFB-tot-0} is equivalent to
	\begin{align}
	i\partial_t \gamo_t \, &=\, \frac{\lambda}N \big[\big(\Sigo_t*\hat{v} \big) \sigob_t \, - \,  \big(\Sigob_t*\hat{v} \big) \,\sigo_t\big] \, , \label{eq-gamma-ren-2}\\
	i\partial_t \sigo_t \, &= \, 2\big(E+\frac{\lambda}{N}\Gamo_t*(\hat{v}+\hat{v}(0))\big)\sigo_t \, + \, \frac{\lambda}{N} \big( \Sigo_t* \hat{v}\big) (1+2\gamo_t) \, .  \label{eq-sigma-ren-2}
	\end{align}
    Together with \eqref{eq-phi-HFB-0}, we thus have shown that
	\begin{align}\label{eq-HFB-ren-1}
    \begin{split}
	i\partial_t \phio_t  \, =& \, \frac{\lambda}{N} \Big( \big(\Gamo_t*(\hat{v}+\hat{v}(0))\big) (0) \phio_t \, + \, (\Sigo_t*\hat{v}) (0) \phiob_t\Big) \\
	& - \, 2\lambda\vol\hat{v}(0) |\phio_t|^2 \phio_t \, . \\
	\partial_t \gamo_t  \, =& \, \frac{2\lambda}N \Im\big((\Sigo_t*\hat{v} ) \sigob_t\big) \, , \\
	i\partial_t \sigo_t \, =& \, 2\big(E+\frac{\lambda}{N}\Gamo_t*(\hat{v}+\hat{v}(0))\big)\sigo_t \, + \, \frac{\lambda}{N} \big( \Sigo_t* \hat{v}\big) (1+2\gamo_t) \, . 
	\end{split}
    \end{align}
	These are the well-known HFB equations in the translation invariant case, see, e.g., \cite{BBCFS22,GM1}. Moreover, Lemma \ref{lem-bog-dispersion} implies that, if $\sigo$ satisfies \eqref{eq-HFB-ren-1}, then the Bogoliubov dispersion, see \eqref{def-bog-dispersion}, satisfies
	\begin{align}\label{eq-Omo-explicit}
	\Omo_t  \, =\, E+\frac{\lambda}{N}\Gamo_t*\big(\hat{v}+\hat{v}(0)\big) \, + \, \frac{\lambda}{N}\frac{\Re\big((\Sigo_t*\hat{v}) \sigob_t \big)}{1+\gamo_t} \, .
	\end{align}

	\subsubsection{Second order HFB renormalization}
    \mh{We now perform a recursion that extends the previous construction in a natural sense. Namely, eliminating terms only in the first-order Duhamel expansion does not ensure that corrections to the quantum Boltzmann evolution in the second-order Duhamel expansion vanish. Accordingly, we impose corrections to the equations for the governing the HFB fields, to eliminate those corrections.}
    \par In order to determine the cubic Boltzmann operator, we follow \cite{chenhott} and compute the second order Duhamel expansion.
 
 Observe that for a self-adjoint operator $A$, operators $B$, $C$ and any state $\nu$ we have that
	\begin{align}
	\nu([[ A,B+B^\dagger], C+C^\dagger]) \, = \, 2\Re\big(\nu([[ A,B], C]) + \nu([[ A,B], C^\dagger])\big) \, . \label{eq-ada-dc-0}
	\end{align} 
	\begin{remark}[Commutator rule]\label{rem-com-rule}
		We note that, due to the commutators, every right argument in a commutator needs to be \mh{contracted with} at least one argument to the left of it. We refer to this fact as the \emph{commutator rule}.
	\end{remark}

    \begin{remark}\label{rem-int-abs-sq}
		Observe that we have that
		\begin{align}
		2\Re \int_{[0,t]^2} \dx{\bs_2} \mathds{1}_{s_1\geq s_2} A(s_1)\overline{A}(s_2) \, = \, \Big| \int_0^t \dx{s} A(s)\Big|^2 \, .
		\end{align} 
	\end{remark}

 \par Lemma \ref{lem-hfluc} implies that
    \begin{align}\label{eq-hcub-con}
            \begin{split}
                \wick{\settowidth{\wdth}{$\Hcub$}\hspace{.33\wdth}\c1{\vphantom{\Hcub}}\hspace{.33\wdth}\c1{\vphantom{\Hcub}}\hspace{-.66\wdth}\Hcub(t)} \, &= \, \frac{\lambda\sqrt{\vol}}{\sqrt{N}} e^{i\int_0^t\dx{\tau}\Omega_\tau(0)}\ad_0 \int \dx{k} \Big[u_t(0)\Big(\big(1+2\gamma_t(k)\big)f_0(k)\big(\hat{v}(k)+\hat{v}(0)\big)\phi_t\\
                & \qquad + 2f_0(k)\sigma_t(k)\hat{v}(k)\overline{\phi}_t\Big) \, + \, v_t(0)\Big(\big(1+2\gamma_t(k)\big)f_0(k)\big(\hat{v}(k)+\hat{v}(0)\big)\overline{\phi}_t \\
                & \qquad + 2f_0(k)\overline{\sigma}_t(k)\hat{v}(k)\phi_t\Big)\Big] \, + \, \mathrm{h.c.} \, .
            \end{split}
        \end{align}
	With that, we obtain 
	\begin{align}
	\MoveEqLeft -\int_{[0,t]^2} \dx{\bs_2} \mathds{1}_{s_1\geq s_2}\frac{\wick{[[\c1 {\ad_p} \c2 a_p, \settowidth{\wdth}{$\Hcub$}\hspace{.25\wdth}\c1{\vphantom{\Hcub}}\hspace{.25\wdth}\c1{\vphantom{\Hcub}}\hspace{.25\wdth}\c1{\vphantom{\Hcub}}\hspace{-.75\wdth}\Hcub(s_1) ], \hspace{.25\wdth}\c2{\vphantom{\Hcub}}\hspace{.25\wdth}\c1{\vphantom{\Hcub}}\hspace{.25\wdth}\c1{\vphantom{\Hcub}}\hspace{-.75\wdth}\Hcub(s_2)]} \, + \, \wick{[[\c2 {\ad_p} \c1 a_p, \settowidth{\wdth}{$\Hcub$}\hspace{.25\wdth}\c1{\vphantom{\Hcub}}\hspace{.25\wdth}\c1{\vphantom{\Hcub}}\hspace{.25\wdth}\c1{\vphantom{\Hcub}}\hspace{-.75\wdth}\Hcub(s_1) ], \hspace{.25\wdth}\c2{\vphantom{\Hcub}}\hspace{.25\wdth}\c1{\vphantom{\Hcub}}\hspace{.25\wdth}\c1{\vphantom{\Hcub}}\hspace{-.75\wdth}\Hcub(s_2)]}}\vol\\
	= & - \, 2\Re\Big( \int_{[0,t]^2} \dx{\bs_2} \mathds{1}_{s_1\geq s_2} \frac{\wick{[[\c1 {\ad_p} \c2 a_p, \settowidth{\wdth}{$\Hcub$}\hspace{.25\wdth}\c1{\vphantom{\Hcub}}\hspace{.25\wdth}\c1{\vphantom{\Hcub}}\hspace{.25\wdth}\c1{\vphantom{\Hcub}}\hspace{-.75\wdth}\Hcub(s_1) ], \hspace{.25\wdth}\c2{\vphantom{\Hcub}}\hspace{.25\wdth}\c1{\vphantom{\Hcub}}\hspace{.25\wdth}\c1{\vphantom{\Hcub}}\hspace{-.75\wdth}\Hcub(s_2)]}}\vol\Big) \\
	=&  - \, \delta(p)\int_{[0,t]^2} \dx{\bs_2} \mathds{1}_{s_1\geq s_2}\Big(\frac{\jb{[ \ad_0 , \Hcub(s_1)]}_0}\vol\frac{\jb{[ a_0 , \Hcub(s_2)]}_0}\vol \, + \, \big(s_1\leftrightarrow s_2\big)\Big) \\
	=&  \delta(p)\Big|-i\int_0^t \dx{s} \frac{\jb{[a_0,\Hcub(s)]}_0}\vol \Big|^2 \, ,\label{eq-f-hcub-con-0}
	\end{align} 
    see Remark \ref{rem-int-abs-sq}. As we show in \cite{chenhott}, we have that this condensate contribution of size $\frac{\lambda^2t^2}{N}$, and it dominates the cubic Boltzmann collision operator coming from
	\begin{align}
	\frac1N\int_0^t\dx{s}Q_3[f_0](s,p) = & - \, \int_{[0,t]^2} \dx{\bs_2}  \mathds{1}_{s_1\geq s_2} \Big(\frac{\jb{\wick{[[\c1 \ad_p \c2 a_p, \settowidth{\wdth}{$\Hcub$}\hspace{.25\wdth}\c3{\vphantom{\Hcub}}\hspace{-.25\wdth}\c1 \Hcub\settowidth{\wdth}{$\Hcub$}\hspace{-.25\wdth}\c1{\vphantom{\Hcub}}\hspace{.25\wdth}(s_1)], \settowidth{\wdth}{$\Hcub$}\hspace{.25\wdth}\c1{\vphantom{\Hcub}}\hspace{-.25\wdth} \c2 \Hcub\settowidth{\wdth}{$\Hcub$}\hspace{-.25\wdth} \c3{\vphantom{\Hcub}}\hspace{.25\wdth}( s_2)]}}_0}\vol \\
	& + \, \frac{\jb{\wick{[[\c2 \ad_p \c1 a_p, \settowidth{\wdth}{$\Hcub$}\hspace{.25\wdth}\c3{\vphantom{\Hcub}}\hspace{-.25\wdth}\c1 \Hcub\settowidth{\wdth}{$\Hcub$}\hspace{-.25\wdth}\c1{\vphantom{\Hcub}}\hspace{.25\wdth}(s_1)], \settowidth{\wdth}{$\Hcub$}\hspace{.25\wdth}\c1{\vphantom{\Hcub}}\hspace{-.25\wdth} \c2 \Hcub\settowidth{\wdth}{$\Hcub$}\hspace{-.25\wdth} \c3{\vphantom{\Hcub}}\hspace{.25\wdth}( s_2)]}}_0}\vol\Big) \, , \label{eq-cub-boltz-first-intro-0}
	\end{align} 
	at least in the continuous approximation, when it is of size $\frac{\lambda^2t}{N}$. Since \eqref{eq-f-hcub-con-0} is proportional to $\delta(p)$, we can absorb it into the condensate contribution. With similar steps as in \eqref{eq-f-hcub-con-0} and employing \eqref{eq-f-transport-vanish}, we obtain
	\begin{align}\label{eq-f-condensate-0}
        \begin{split}
	\MoveEqLeft- \, \int_{[0,t]^2} \dx{\bs_2} \mathds{1}_{s_1\geq s_2}\Big(\frac{\jb{[[\ad_pa_p,\HBEC(s_1)],\HBEC(s_2)]}_0}\vol\\
	&+ \, \frac{\jb{[[\ad_pa_p,\HBEC(s_1)],\Hcub(s_2)]}_0}\vol\\
	&+ \, \frac{\jb{[[\ad_pa_p,\Hcub(s_1)],\HBEC(s_2)]}_0}\vol\\
	& + \, 2\Re \frac{\wick{[[\c1 {\ad_p} \c2 a_p, \settowidth{\wdth}{$\Hcub$}\hspace{.25\wdth}\c1{\vphantom{\Hcub}}\hspace{.25\wdth}\c1{\vphantom{\Hcub}}\hspace{.25\wdth}\c1{\vphantom{\Hcub}}\hspace{-.75\wdth}\Hcub(s_1) ], \hspace{.25\wdth}\c2{\vphantom{\Hcub}}\hspace{.25\wdth}\c1{\vphantom{\Hcub}}\hspace{.25\wdth}\c1{\vphantom{\Hcub}}\hspace{-.75\wdth}\Hcub(s_2)]}}\vol\Big)\\
	&= \, \delta(p)\Big|-i\int_0^t \dx{s} \frac{\jb{[a_0,\HBEC(s) \, + \, \Hcub(s)]}_0}\vol \Big|^2 \, . 
	\end{split}
    \end{align}
	In order to eliminate this contribution, we choose 
	\begin{align}
		\jb{[a_0,\HBEC(t)+\Hcub(t)]}_0 \, = \, 0 \, . \label{eq-condensate-cond-0}
	\end{align} 
    \mh{This condition is equivalent to eliminating all first-order terms in the second-order Duhamel expansion of $\Phi$.}
	Lemma \ref{lem-hfluc} implies that
	\begin{align}\label{eq-phi-HBEC-0}
    \begin{split}
	\MoveEqLeft \frac{\jb{[a_0,\HBEC(t)]}_0}\vol \, = \\
	&\sqrt{N\vol} \Big[u_t(0)\Big( -i\partial_t \phi_t  +  \lambda\vol|\phi_t|^2\hat{v}(0) \phi_t +  \frac{\lambda}{N}\int \dx{p}\hat{v}(p)\sigma_t(p)\overline{\phi}_t\\
	& + \, \frac{\lambda}{N}\int \dx{p} \big(\hat{v}(p)+\hat{v}(0)\big) \gamma_t(p) \phi_t\Big) \, + \, v_t(0)\Big( -\overline{i\partial_t \phi_t} +  \lambda\vol|\phi_t|^2\hat{v}(0) \overline{\phi}_t \\
	& +  \frac{\lambda}{N}\int \dx{p} \hat{v}(p)\overline{\sigma}_t(p)\phi_t + \frac{\lambda}{N}\int \dx{p} \big(\hat{v}(p)+\hat{v}(0)\big) \gamma_t(p) \overline{\phi}_t\Big)\Big] e^{i\int_0^t \dx{s} \Omega_{s}(0)} \, , 
	\end{split}
    \end{align}
	and \eqref{eq-hcub-con} yields
	\begin{align}\label{eq-phi-cub-0}
	    \begin{split}
	\MoveEqLeft \frac{\jb{[a_0,\Hcub(t)]}_0}\vol \, = \, \frac{\jb{[a_0,\wick{\settowidth{\wdth}{$\Hcub$}\hspace{.33\wdth}\c1{\vphantom{\Hcub}}\hspace{.33\wdth}\c1{\vphantom{\Hcub}}\hspace{-.66\wdth}\Hcub(t)}]}_0}\vol \, =\\
	& \frac{\lambda\sqrt{\vol}}{\sqrt{N}} e^{i\int_0^t \dx{s} \Omega_s(0)}\Big[u_t(0)\int \dx{p} \Big(\big(1+2\gamma_t(p)\big)\phi_t\big(\hat{v}(p)+\hat{v}(0)\big)+ \\
	& 2 \sigma_t(p)\overline{\phi}_t \hat{v}(p)\Big)f_0(p) + v_t(0) \int \dx{p} \Big(  \big(1+2\gamma_t(p)\big)\overline{\phi}_t\big(\hat{v}(p)+\hat{v}(0)\big)+ \\
	& 2 \overline{\sigma}_t(p)\phi_t \hat{v}(p)\Big)f_0(p) \, . 
	\end{split}
    \end{align}
	In order to satisfy \eqref{eq-condensate-cond-0}, it suffices to equate the sum of the coefficients of $u_t(0)$ in \eqref{eq-phi-HBEC-0} and \eqref{eq-phi-cub-0} to zero. This condition is equivalent to
	\begin{align} \label{eq-phi-ren-1}
    \begin{split}
	i\partial_t \phi_t  \, =& \, \lambda\vol|\phi_t|^2\hat{v}(0) \phi_t +  \frac{\lambda}{N}\int \dx{p} \hat{v}(p)\sigma_t(p)\overline{\phi}_t \, + \, \frac{\lambda}{N}\int \dx{p} \big(\hat{v}(p)+\hat{v}(0)\big) \gamma_t(p) \phi_t\\
	& + \, \frac{\lambda}{N}\int \dx{p} \Big(\big(1+2\gamma_t(p)\big)\phi_t\big(\hat{v}(p)+\hat{v}(0)\big)+2 \sigma_t(p)\overline{\phi}_t \hat{v}(p)\Big)f_0(p) \, . 
	\end{split}
    \end{align}
	Observe that all integrands except for $f_0$ are even \mh{in $p$}. In anticipation of the evolution of $\gamma$ and $\sigma$, recall from \eqref{def-fplus}
	\begin{align}
	\fplus(p) \, = \, \frac12\big(f_0(p)+f_0(-p)\big) \, ,
	\end{align} 
	the even symmetrization of $f_0$.  In analogy to \eqref{def-Gamo}, \eqref{def-Sigo}, we introduce the second order renormalized shifted expectations
	\begin{align}
	\Gamt :=&  (1+2\fplus)\gamma \, + \, \fplus \, + \, N\vol|\phi|^2\delta \, , \label{def-Gamt} \\
	\Sigt :=&  (1+2\fplus)\sigma \, + \, N\vol\phi^2\delta \, . \label{def-Sigt}
	\end{align} 
	With these, and writing $\phi=\phit$, we can simplify \eqref{eq-phi-ren-1} into
	\begin{align}
    \begin{split}
	i\partial_t \phit_t \, =& \,  \frac{\lambda}{N} \Big( \big(\Gamt_t*(\hat{v}+\hat{v}(0))\big) (0) \phit_t \, + \, (\Sigt_t*\hat{v}) (0) \phitb_t\Big) \\
	& - \, 2\lambda\vol\hat{v}(0) |\phit_t|^2 \phit_t \, . \label{eq-phi-HFB-1}
	\end{split}
    \end{align}
	We recognize that \eqref{eq-phi-HFB-1} is a renormalization of \eqref{eq-phi-HFB-0}, where we substituted $(\Gamo,\Sigo)$ by the renormalized fields $(\Gamt,\Sigt)$. 
	\par Next, observe that
	\begin{align}
		[\ad_pa_p,\HHFBod(t)] \, = \, A_t(p)a_pa_{-p} \, + \mathrm{h.c.} \label{eq-f-HFB-com-0}
	\end{align} 
	for some coefficient $A_t$.	In particular, \eqref{eq-ada-dc-0} yields
	\begin{align}
		\MoveEqLeft \jb{[[\ad_pa_p,\HHFBod(s_1)],\HHFBod(s_2)+\Hquart(s_2)]}_0\\
		=&  2\Re\big(A_{s_1}(p)\jb{[a_pa_{-p},\HHFBod(s_2)+\Hquart(s_2)]}_0\big) \, . \label{eq-aa-cond-part-1}
	\end{align} 
	In addition, we have that 
	\begin{align}
		\wick{[\c1 {\ad_p} \c2 a_p, \c1 \Hquart\settowidth{\wdth}{$\Hquart$}\hspace{-.25\wdth}\c2{\vphantom{\Hquart}}\hspace{.25\wdth}( t)]} \, = \, \wick{[\c1 a_p \c2 {\ad_p}, \c1 \Hquart\settowidth{\wdth}{$\Hquart$}\hspace{-.25\wdth}\c2{\vphantom{\Hquart}}\hspace{.25\wdth}(t)]}\, \propto \,  f_0(p)\fbar(p)-\fbar(p)f_0(p) \, = \, 0 \, . \label{eq-aa-cond-part-2}
	\end{align} 
    Lemma \ref{lem-con-vert} yields
    \begin{align}\label{eq-hquart-con}
            \begin{split}
             \wick{\settowidth{\wdth}{$\Hquart$}\hspace{.33\wdth}\c1{\vphantom{\Hquart}}\hspace{.33\wdth}\c1{\vphantom{\Hquart}}\hspace{-.66\wdth}\Hquart(t)} \, =& \, \frac{\lambda}{N} \int \dx{p}\Big[\Big(\big(\fplus \sigma_t\big)*\hat{v}(p)(1+\gamma_t(p)) \, + \, \frac{\big(\fplus \overline{\sigma}_t\big)*\hat{v}(p)\sigma_t(p)}{1+\gamma_t(p)}\\
             & \quad + \, \big((1+2\gamma_t)\fplus\big)*\big(\hat{v}+\hat{v}(0)\big)(p)\sigma_t(p)\Big)e^{2i\int_0^t \dx{\tau}\Omega_\tau(p)}\ad_{p}\ad_{-p}\,+ \, \mathrm{h.c.}\\
             & \quad + \, \Big(\big((1+2\gamma_t)\fplus)*\big(\hat{v}+\hat{v}(0)\big)(p)(1+2\gamma_t(p)) \\
             & \quad + \, 4\Re\big(\big(\fplus\sigma_t\big)*\hat{v}(p)\overline{\sigma}_t(p)\big)\Big) \ad_pa_p\Big] \, .
            \end{split}
    \end{align}
	Then the CCRs \eqref{eq-CCR} imply
	\begin{align}\label{eq-aa-cond-part-3}
        \begin{split}
		[\ad_p a_p , \wick{\settowidth{\wdth}{$\Hquart$}\hspace{.33\wdth}\c1{\vphantom{\Hquart}}\hspace{.33\wdth}\c1{\vphantom{\Hquart}}\hspace{-.66\wdth}\Hquart(t)}] \, & = \, B(t,p) a_pa_{-p} \, + \, \mathrm{h.c.} 
	\end{split}
    \end{align}
	for some coefficient $B$. Using \eqref{eq-ada-dc-0}, \eqref{eq-f-HFB-com-0}, \eqref{eq-aa-cond-part-2}, and \eqref{eq-aa-cond-part-3}, we thus have
	\begin{align}\label{eq-aa-cond-part-4}
        \begin{split}
		\MoveEqLeft \jb{[[\wick{\ad_p a_p , \settowidth{\wdth}{$\Hquart$}\hspace{.25\wdth}\c1{\vphantom{\Hquart}}\hspace{-.25\wdth} \Hquart\settowidth{\wdth}{$\Hquart$}\hspace{-.25\wdth}\c1{\vphantom{\Hquart}}\hspace{.25\wdth}(s_1)}],\HHFBod(s_2)+\Hquart(s_2)]}_0\\
		& =  2\Re\big(B(s_1,p)\jb{[a_pa_{-p},\HHFBod(s_2)+\Hquart(s_2)]}_0\big) \, . 
	\end{split}
    \end{align}
	In order to eliminate the contributions coming from \eqref{eq-aa-cond-part-1} and \eqref{eq-aa-cond-part-4}, we choose
	\begin{align}
		\jb{[a_pa_{-p},\HHFBod(s_2)+\Hquart(s_2)]}_0 \, = \, 0. \label{eq-aa-cond-0}
	\end{align} 
	This condition relates to the {\it pair-absorption rate}. Observe that we have
	\begin{align}
	\wick{[\c1 a_p \c2 a_{-p},\c1 \ad \c2 \ad ]} \, \propto \, 1 \, + \, 2\fplus(p) \, ,
	\end{align} 
	where $\fplus(p)=f_0(p)+f_0(-p)$. 
	\par We start by calculating
	\begin{align}
	\MoveEqLeft \frac{\jb{[a_pa_{-p},\HHFBod(t)]}_0}\vol \, = \,  2 \big(1+2\fplus(p)\big)e^{2i\int_0^t \dx{s} \Omega_s(p)}\\
	&  \Big[-\frac{i\partial_t \sigma_t(p)}2 \, + \, \frac{\sigma_t(p)i\partial_t \gamma_t(p)}{2(1+\gamma_t(p))} +  \Big(E(p)+\frac{\lambda}{N}\Gamo_t*\big(\hat{v}+\hat{v}(0)\big)(p)\Big)\sigma_t(p)  \\
	& + \, \frac{\lambda}{2N} \Big( \big(\Sigo_t* \hat{v}\big)(p) (1+\gamma_t(p)) \, + \, \big(\Sigob_t*\hat{v}\big)(p) \frac{\sigma_t(p)^2}{1+\gamma_t(p)}\Big)  \Big]  \, , \label{eq-g-hcor-0}
	\end{align} 
	see Lemma \ref{lem-hfluc} for the expression for $\HHFBod$. We used the fact that all functions appearing here, except for $f_0$, are even, and we replaced $\gamma$, $\sigma$ by their respective shifts $\Gamo$, $\Sigo$, see \eqref{def-Gamo}, \eqref{def-Sigo}. Similarly, \eqref{eq-hquart-con} yields
	\begin{align}\label{eq-g-hquart-0}
    \begin{split}
	\MoveEqLeft \frac{\jb{[a_pa_{-p},\Hquart(t)]}_0}\vol \, = \, \frac{\jb{[a_pa_{-p},\wick{\settowidth{\wdth}{$\Hquart$}\hspace{.33\wdth}\c1{\vphantom{\Hquart}}\hspace{.33\wdth}\c1{\vphantom{\Hquart}}\hspace{-.66\wdth}\Hquart(t)}]}_0}\vol \, =\\
	&  2 \frac{\lambda}{N} \big(1+2\fplus(p)\big)e^{2i\int_0^t \dx{s} \Omega_s(p)} \Big(\big((\sigma\fplus)*\hat{v}\big)(p)\big(1+\gamma_t(p)\big)\\
	&  + \, \big((\overline{\sigma}_t\fplus)*\hat{v}\big)(p)\frac{\sigma_t(p)^2}{1+\gamma_t(p)} \, + \, \big((1+2\gamma_t)\fplus\big)*\big(\hat{v}+\hat{v}(0)\big)(p)\sigma_t(p) \Big)  \, . 
    \end{split}
	\end{align} 	
	Hence, substituting \eqref{eq-g-hcor-0} and \eqref{eq-g-hquart-0} into \eqref{eq-aa-cond-0} yields
	\begin{align}\label{eq-HFB-cor-raw-1}
    \begin{split}
	\MoveEqLeft \frac{i\partial_t \sigma_t(p)}2-\frac{\sigma_t(p)i\partial_t \gamma_t(p)}{2(1+\gamma_t(p))} \, = \\
	& \Big(E(p)+\frac{\lambda}{N}\Gamo_t*\big(\hat{v}+\hat{v}(0)\big)(p)\Big)\sigma_t(p) \\
	& + \, \frac{\lambda}{2N} \Big( \big(\Sigo_t* \hat{v}\big)(p) (1+\gamma_t(p)) \, + \, \big(\Sigob_t*\hat{v}\big)(p) \frac{\sigma_t(p)^2}{1+\gamma_t(p)}\Big)\\
	& \, + \, \frac{\lambda}{N} \Big(\big((1+2\gamma_t)\fplus\big)*\big(\hat{v}+\hat{v}(0)\big)(p)\sigma_t(p) \, + \, \big((\sigma\fplus)*\hat{v}\big)(p)\big(1+\gamma_t(p)\big) \\
	&  + \, \big((\overline{\sigma}_t\fplus)*\hat{v}\big)(p)\frac{\sigma_t(p)^2}{1+\gamma_t(p)}\Big) \, . 
    \end{split}
	\end{align} 	 
	Using the renormalized shifted fields $\Gamt$ and $\Sigt$, see \eqref{def-Gamt} and \eqref{def-Sigt}, we can rewrite \eqref{eq-HFB-cor-raw-1} as
	\begin{align}
	\MoveEqLeft \frac{i\partial_t \sigma_t(p)}2-\frac{\sigma_t(p)i\partial_t \gamma_t(p)}{2(1+\gamma_t(p))} \, = \\
	& \Big(E(p)+\frac{\lambda}{N}\Gamt_t*\big(\hat{v}+\hat{v}(0)\big)(p)\Big)\sigma_t(p) \\
	& + \, \frac{\lambda}{2N} \Big( \big(\Sigt_t* \hat{v}\big)(p) (1+\gamma_t(p)) \, + \, \big(\Sigtb_t*\hat{v}\big)(p) \frac{\sigma_t(p)^2}{1+\gamma_t(p)}\Big) \, . \label{eq-HFB-tot-1}
	\end{align} 
	We recognize that \eqref{eq-HFB-tot-1} is a renormalization of \eqref{eq-HFB-tot-0}, where $(\Gamo,\Sigo)$ is replaced by the renormalized fields $(\Gamt,\Sigt)$. In particular, the proof of Lemma \ref{lem-HFB-equations} implies that \eqref{eq-HFB-tot-1} is equivalent to
	\begin{align}
    \begin{split}
	i\partial_t \gamma_t \, = &\, \frac{\lambda}N \big[\big(\Sigt_t*\hat{v} \big) \,\overline{\sigma}_t \, - \,  \big(\Sigtb_t*\hat{v} \big) \,\sigma_t\big] \, , \\
	i\partial_t \sigma_t \, =& \, 2\big(E+\frac{\lambda}{N}\Gamt_t*(\hat{v}+\hat{v}(0))\big)\sigma_t \, + \, \frac{\lambda}{N} \big( \Sigt_t* \hat{v}\big) (1+2\gamma_t) \, . 
	\end{split}
    \end{align}
	Recalling \eqref{def-Gamt}, \eqref{def-Sigt} and \eqref{eq-phi-HFB-1}, we thus have shown that
	\begin{align}\label{eq-HFB-ren-2}
    \begin{split}
	i\partial_t \phit_t  \, = & \,  \frac{\lambda}{N} \Big( \big(\Gamt_t*(\hat{v}+\hat{v}(0))\big) (0) \phit_t \, + \, (\Sigt_t*\hat{v}) (0) \phitb_t\Big) \\
	& - \, 2\lambda\vol\hat{v}(0) |\phit_t|^2 \phit_t \, .  \\
	\partial_t \gamt_t \, =& \,\frac{2\lambda}N \Im\big((\Sigt_t*\hat{v} ) \sigtb_t\big) \, , \\
	i\partial_t \sigt_t \, =& \, 2\big(E+\frac{\lambda}{N}\Gamt_t*(\hat{v}+\hat{v}(0))\big)\sigt_t \, + \, \frac{\lambda}{N} \big( \Sigt_t* \hat{v}\big) (1+2\gamt_t) \, .
	\end{split}
    \end{align}
	\eqref{eq-HFB-ren-2} corresponds to the second order renormalization of \eqref{eq-HFB-ren-1} with the corresponding renormalized fields.

    \par To complete the renormalization of the HFB fields, we need to also renormalize the Bogoliubov dispersion $\Omega$. For that purpose, we recall the diagonal part of $\Hquartcon(t)$ from \eqref{eq-hquart-con}, see also Lemma \ref{lem-hfluc},
    \begin{align}
        \begin{split}
        \Hquartcond{t} \, :=& \, \frac{\lambda}{N} \int \dx{p}\Big(\big((1+2\gamma_t)\fplus)*\big(\hat{v}+\hat{v}(0)\big)(p)(1+2\gamma_t(p)) \\
        & \quad + \, 4\Re\big(\big(\fplus\sigma_t\big)*\hat{v}(p)\overline{\sigma}_t(p)\big)\Big) \ad_pa_p \, .
        \end{split}
    \end{align}
    Then we choose $\Omt$ such that
    \begin{align}
        \HHFBd(t) \, + \, \Hquartcond{t} \, = \, 0 \, . \label{eq-omt-ren-cond}
    \end{align}
    Employing Lemma \ref{lem-hfluc}, see also \eqref{def-bog-dispersion}, we obtain
    \begin{align}
        \begin{split}
        \MoveEqLeft\Omt_t(p) \, = \,\\
	       &\Big(E(p)+\frac{\lambda}N\big((\gamt_t+N\vol|\phit_t|^2\delta)*(\hat{v}+\hat{v}(0))\big)(p)\Big)\big(1+2\gamt_t(p)\big) \\
	       & + \, \frac{2\lambda}N\Re\Big(\big((\sigtb_t+N\vol(\phitb_t)^2\delta)*\hat{v}\big)(p)\sigt_t(p)\Big) \, -  \frac{\Re\big(\sigtb_t(p)i\partial_t\sigt_t(p)\big)}{1+\gamt_t(p)}\\
          & + \, \frac{\lambda}N \big((1+2\gamma_t)\fplus)*\big(\hat{v}+\hat{v}(0)\big)(p)(1+2\gamma_t(p)) \\
          & + \, \frac{4\lambda}{N}\Re\Big(\big(\fplus\sigtb_t\big)*\hat{v}(p)\sigt_t(p)\Big) \,.
    \end{split}
    \end{align}
    Recalling the total fields \eqref{def-Gamt}, \eqref{def-Sigt}, and employing \eqref{eq-HFB-ren-2}, we thus conclude that
    \begin{align}\label{eq-Omt-explicit}
        \begin{split}
          \Omt_t \, = \, E+\frac{\lambda}N\big(\Gamt_t*(\hat{v}+\hat{v}(0))\big) \, + \, \frac{\lambda}N\frac{\Re\big(\Sigtb*\hat{v})\sigt_t\big)}{1+\gamt_t} \, .
        \end{split}
    \end{align}
    \eqref{eq-Omt-explicit} corresponds to \eqref{eq-Omo-explicit} with renormalized fields.
    \par We conclude this section by equivalently reformulating the collected renormalization conditions \eqref{eq-condensate-cond-0}, \eqref{eq-aa-cond-0}, \eqref{eq-omt-ren-cond}:
    \begin{equation}\label{eq-ren-cond-2}
        \begin{cases}
            \HBEC(t)+\Hcubcon(t) \, &= \, 0 \, , \\
            \HHFB(t)+\Hquartcon(t) \, &= \, 0 \, , 
        \end{cases}
    \end{equation}
    compared with the first-order renormalization conditions \eqref{eq-phi-ren-0}, \eqref{eq-gamma-sigma-ren-0}.
        
	\subsection{Boltzmann collision terms} 
 The discussion in section \eqref{sec-perturbation} yields
	\begin{align}\label{eq-Boltzmann+error-exp}
		f_t(p) =&  f_0(p) \, + \, \frac1N\int_0^t\dx{s}Q_3[f_0](s,p) \\
            & \, + \,  \int_{[0,t]^3}\dx{\bs_3}\frac{\jb{[[[\ad_pa_p,\Hfluc(s_1)],\Hfluc(s_2)],\Hfluc(s_3)]}_{s_3}}\vol \, ,
	\end{align} 
	where
	\begin{align}\label{def-cub-boltzmann-0}
        \begin{split}
		\frac1N\int_0^t\dx{s}Q_3[f_0](s,p) \, =& \,  - \, \int_{[0,t]^2} \dx{\bs_2}  \mathds{1}_{s_1\geq s_2} \Big(\frac{\jb{\wick{[[\c1 \ad_p \c2 a_p, \settowidth{\wdth}{$\Hcub$}\hspace{.25\wdth}\c3{\vphantom{\Hcub}}\hspace{-.25\wdth}\c1 \Hcub\settowidth{\wdth}{$\Hcub$}\hspace{-.25\wdth}\c1{\vphantom{\Hcub}}\hspace{.25\wdth}(s_1)], \settowidth{\wdth}{$\Hcub$}\hspace{.25\wdth}\c1{\vphantom{\Hcub}}\hspace{-.25\wdth} \c2 \Hcub\settowidth{\wdth}{$\Hcub$}\hspace{-.25\wdth} \c3{\vphantom{\Hcub}}\hspace{.25\wdth}( s_2)]}}_0}\vol \\
		& + \, \frac{\jb{\wick{[[\c2 \ad_p \c1 a_p, \settowidth{\wdth}{$\Hcub$}\hspace{.25\wdth}\c3{\vphantom{\Hcub}}\hspace{-.25\wdth}\c1 \Hcub\settowidth{\wdth}{$\Hcub$}\hspace{-.25\wdth}\c1{\vphantom{\Hcub}}\hspace{.25\wdth}(s_1)], \settowidth{\wdth}{$\Hcub$}\hspace{.25\wdth}\c1{\vphantom{\Hcub}}\hspace{-.25\wdth} \c2 \Hcub\settowidth{\wdth}{$\Hcub$}\hspace{-.25\wdth} \c3{\vphantom{\Hcub}}\hspace{.25\wdth}( s_2)]}}_0}\vol\Big) \, , 
	\end{split}
    \end{align}
	see \eqref{eq-cub-boltz-first-intro-0}, denotes the cubic Boltzmann collision operator.

	\par We show in Lemma \ref{lem-cub-bol} that
	\begin{align}\label{eq-f-main-1}
        \begin{split}
	\MoveEqLeft \int_0^t\dx{s} Q_3[f_0](s,p) \, = \\
	& 2\lambda^2 \Re \int_{[0,t]^2} \dx{\bs_2}  \mathds{1}_{s_1\geq s_2}\int \dx{\bp_3}  \Big(\frac1{2!}\big(\delta(p_1-p)+\delta(p_2-p)-\delta(p_3-p)\big) \\
	& \bbf{1}{2}_{s_1}(\bp_3)\bbfb{1}{2}_{s_2}(\bp_3)e^{i\int_{s_2}^{s_1} \dx{\tau} \big(\Omega_\tau(p_1)+\Omega_\tau(p_2)-\Omega_\tau(p_3)\big)}\delta(p_1+p_2-p_3)\\
	&\big(\fbar(p_1)\fbar(p_2)f_0(p_3)-f_0(p_1)f_0(p_2)\fbar(p_3)\big) \\
	& + \, \frac1{3!}\big(\delta(p_1-p)+\delta(p_2-p)+\delta(p_3-p)\big)\\
	& \bbf{0}{3}_{s_1}(\bp_3)\bbfb{0}{3}_{s_2}(\bp_3)e^{i\int_{s_2}^{s_1} \dx{\tau} \big(\Omega_\tau(p_1)+\Omega_\tau(p_2)+\Omega_\tau(p_3)\big)}\delta(p_1+p_2+p_3) \\
	&\big(\fbar(p_1)\fbar(p_2)\fbar(p_3)-f_0(p_1)f_0(p_2)f_0(p_3)\big)\Big) \, .
	\end{split}
    \end{align}
    \mh{In particular, this provides an explicit expression for the Boltzmann operator in the evolution of $f$.} In Lemma \ref{lem-cub-bol}, we also give an expression for the quartic Boltzmann term $\frac1{N^2}\int_0^t\dx{s} Q_4[f_0](s,p)$.

    \subsection{Error terms}

    We abbreviate
    \begin{equation}\label{def-f[J]}
        f[J] \, := \, \int \dx{p}J(p)\ad_pa_p \, .
    \end{equation}
    Then we define the error
    \begin{align}
        \MoveEqLeft\Rem[f](t)[J] \, :=\\
        \begin{split}\label{eq-f-rem-main}
        & \, -\int \dx{p} J(p) \int_0^t\dx{s}\Big(\frac1N Q_3[f_0](s,p)+\frac1{N^2} Q_4[f_0](s,p)\Big) \\
        & + \, \int_0^t \dx{s} \jb{[f[J],\Hfluc(s)]}_0 \\
        & + \, \int_{[0,t]^2}\dx{\bs_2}\jb{[[f[J],\Hfluc(s_1)],\Hfluc(s_2)]}_0
        \end{split}\\
        & \, - \, \frac1{N^2}\int_0^t\dx{s} Q_4[f_0](s,p) \label{eq-f-rem-quart-Boltz}\\ 
        & \, + \, \frac1N\int_0^t\dx{s} \big(Q_3[f](s,p)-Q_3[f_0](s,p)\big)\label{eq-f-rem-center}\\
        & \, + \,  \int_{[0,t]^3}\dx{\bs_3}\jb{[[[f[J],\Hfluc(s_1)],\Hfluc(s_2)],\Hfluc(s_3)]}_{s_3}  \, . \label{eq-f-rem-tail}
    \end{align}
    Using the fact that odd moments of $a^{\#}$ w.r.t. $\jb{\cdot}_0$ vanish and rearranging the terms in the first three lines of \eqref{eq-f-rem-main}, yields
    \begin{align}
        \begin{split}\label{eq-main-error-terms}
            &\int_0^t \dx{s} \jb{[f[J],\HBEC(s)+\Hcub(s)]}_0 \, - \, \int \dx{p} J(p) \frac1N\int_0^t\dx{s}Q_3[f_0](s,p)\\
            & \quad + \, \int_{[0,t]^2} \dx{\bs_2} \jb{[[f[J],\HBEC(s_1)+\Hcub(s_1)],\HBEC(s_2)+\Hcub(s_2)]}_0
        \\
            & + \, \int_0^t \dx{s} \jb{[f[J],\HHFBod(s)+\Hquartconod{s}]}_0 \\
            & \quad + \, \int_{[0,t]^2} \dx{\bs_2} \jb{[[f[J],\HHFBod(s_1)+\Hquartconod{s_1}],\HHFB(s_2)+\Hquart(s_2)]}_0
        \\
        & + \, \int_0^t \dx{s} \jb{[f[J],\HHFBd(s)+\Hquartcond{s}]}_0 \\
            & \quad + \, \int_{[0,t]^2} \dx{\bs_2} \jb{[[f[J],\HHFBd(s_1)+\Hquartcond{s_1}],\HHFB(s_2)+\Hquart(s_2)]}_0
        \\
            & + \, \int_{[0,t]^2} \dx{\bs_2} \jb{[[f[J],\Hquart(s_1)-\Hquartcon(s_1)],\Hquart(s_2)]}_0 \\
            & \quad - \, \int \dx{p} J(p) \frac1{N^2}\int_0^t\dx{s} Q_4[f_0](s,p) \, . 
        \end{split} 
    \end{align}
    Due to our choice of HFB fields $(\phit,\gamt,\sigt)$ satisfying \eqref{eq-HFB-ren-2} and the Bogoliubov dispersion $\Omt$ satisfying \eqref{eq-Omt-explicit}, we have that \eqref{eq-main-error-terms} vanishes.
    \begin{enumerate}
        \item In order to estimate the terms coming from the tail \eqref{eq-f-rem-tail}, we compute $\HBEC$, $\HHFBod$, and $\HHFBd$ for the HFB fields $(\phit,\gamt,\sigt,\Omt)$ in Lemma \ref{lem-HBEC-HHFB-with-HFB-eq}. Then we use the following ideas: 
        \begin{enumerate}
            \item $\int \dx{p}a_p^{\#}\lesssim \nb^{\frac12}$ and that $a_p^{\#}\lesssim (\vol\nb)^{\frac12}$, see Lemmata \ref{lem-prod}, \ref{lem-fg-bound}, and \ref{lem-wick-ord-bd}.
            \item Proposition \ref{prop-fluct-dynam-bound}, followed by Lemma \ref{lem-num-mom}
            implies
            \begin{align}
                \jb{(\nb+\vol)^{\frac{\ell}2}}_t \, & \lesssim_{\fd,\nd{\gamma_0},\ell} \, e^{K_{\ell,\fd,\nd{\gamma_0}}\vd\vol\lambda t}\\
                & \quad \quad \jb{(\nb+\vol)^{\frac{\ell}2}\Big(1+\frac{\nb}{N\vol}\Big)}_0 \\
                & \lesssim_{\fd,\nd{\gamma_0},\vol,\ell} \, e^{K_{\ell,\fd,\nd{\gamma_0}}\vd\vol\lambda t} \vol^{\frac\ell2} \, .
            \end{align}
        \end{enumerate}
        With these steps, we obtain
        \begin{align}
            \MoveEqLeft |\int_{[0,t]^3}\dx{\bs_3}\jb{[[[f[J],\Hfluc(s_1)],\Hfluc(s_2)],\Hfluc(s_3)]}_{s_3} \Big) |\\
            &\lesssim_{\fd,\nd{\gamma_0},\vd,\vol}e^{K_{\fd,\nd{\gamma_0}}\vd\vol\lambda t} \frac{1}{N^{\frac32}}\|J\|_\infty 
        \end{align}
        \item Due to the dependence of $\frac1{N^2}\int_0^t\dx{s} Q_4[f_0](s,p)$ on the HFB fields $(\phi_t,\gamma_t,\sigma_t)$, we will employ a priori estimates established in Corollary \ref{cor-bog-coeff}. With that, we can control \eqref{eq-f-rem-quart-Boltz} by
        \begin{align}
            \frac1{N^2}|\int \dx{p}J(p)\int_0^t\dx{s} Q_4[f_0](s,p)|  \, \lesssim_{\fd,\vd} \, \frac{\lambda^2t^2}{N^2}e^{C_{\fd,\nd{\gamma_0}}\lambda\vd t}\|J\|_\infty \, .
        \end{align}
        \item In order to estimate \eqref{eq-f-rem-center}, we write $Q_3[h]=:Q_3[h,h,h]$, in order to emphasize the dependence on three arguments, each evaluated at different momenta. Then we have that
        \begin{align}
            Q_3[f,f,f]-Q_3[f_0,f_0,f_0] \, =& \, (Q_3[f,f,f]-Q_3[f_0,f,f]) \\
            & \, + \, (Q_3[f_0,f,f] \, - \, Q_3[f_0,f_0,f])\\
            & \, + \, (Q_3[f_0,f_0,f] \,- \, Q_3[f_0,f_0,f_0]) \, .
        \end{align}
        Each of the differences contains a factor $f-f_0\propto \frac1N$. Arguing as in \cite[Chapter 5.3]{chenhott}, we then obtain
        \begin{align}
            \frac1N\Big|\int_0^t\dx{s}\big(Q_3[f](s)[J]-Q_3[f_0](s)[J]\big)\Big| \, \lesssim_{\fd,\nd{\gamma_0},\vd,\vol}e^{K_{\fd,\nd{\gamma_0}}\vd\vol\lambda t} \frac{\|J\|_\infty}{N^{\frac32}} \, .
        \end{align}
    \end{enumerate}
    With these estimates, we obtain 
    \begin{align}
        |\Rem[f](t)[J]| \, \lesssim_{\fd,\nd{\gamma_0},\vd,\vol}e^{K_{\fd,\nd{\gamma_0}}\vd\vol\lambda t} \frac{\|J\|_\infty}{N^{\frac32}} \, .
    \end{align}

    \subsection{Evolution of \texorpdfstring{$g$}{g}}
    With analogous calculations as for $f$, one can show that the leading order term in the evolution of $g$ is given by
    \begin{align}
        -\frac{1}{\vol}\int\dx{p}J(p)\int_{[0,t]^2}\dx{\bs_2} \mathds{1}_{s_1\geq s_2} \jb{[[a_pa_{-p},\wick{
        \settowidth{\wdth}{$\Hcub$}\hspace{.25\wdth}\c1{\vphantom{\Hcub}}\hspace{.25\wdth}\c2{\vphantom{\Hcub}}\hspace{-.5\wdth}\Hcub(s_1)],\settowidth{\wdth}{$\Hcub$}\hspace{.25\wdth}\c1{\vphantom{\Hcub}}\hspace{.25\wdth}\c2{\vphantom{\Hcub}}\hspace{-.5\wdth}\Hcub
        }
        (s_2)]}_0 \, .
    \end{align}
    We show in Lemma \ref{lem-cub-bol-phi-g} that 
        \begin{align}
            \MoveEqLeft -\frac{1}{\vol}\int\dx{p}J(p)\int_{[0,t]^2}\dx{\bs_2} \mathds{1}_{s_1\geq s_2} \jb{[[a_pa_{-p},\wick{
        \settowidth{\wdth}{$\Hcub$}\hspace{.25\wdth}\c1{\vphantom{\Hcub}}\hspace{.25\wdth}\c2{\vphantom{\Hcub}}\hspace{-.5\wdth}\Hcub(s_1)],\settowidth{\wdth}{$\Hcub$}\hspace{.25\wdth}\c1{\vphantom{\Hcub}}\hspace{.25\wdth}\c2{\vphantom{\Hcub}}\hspace{-.5\wdth}\Hcub
        }
        (s_2)]}_0 \\
        & = \, \frac1N\int_0^t \dx{s} Q_3^{(g)}[f_0](s)[J] \, ,
    \end{align}
    where $Q_3^{(g)}$ is given in \eqref{def-g-Q3}.
	\par With analogous calculations as in the case of $f$, and using the fact that $g_0=0$, we obtain that
    \begin{align}
        \int \dx{p} J(p)g_t(p) \, &= \, \frac1N\int_0^t \dx{s} Q_3^{(g)}[f](s)[J] \, + \, \Rem[g](t)[J] \, ,
    \end{align}
    where
    \begin{align}
        |\Rem[g](t)[J]|\, \lesssim_{\fd,\nd{\gamma_0},\vd,\vol}e^{K_{\fd,\nd{\gamma_0}}\vd\vol\lambda t} \frac{\nd{J}}{N^{\frac32}} \, .
    \end{align}

    \subsection{Evolution of \texorpdfstring{$\Phi$}{F}}

    Recalling \eqref{def-Phi-Q3}, we show in Lemma \ref{lem-cub-bol-phi-g} that
    \begin{align}
        \MoveEqLeft \frac1{N^{\frac32}}\int_0^t \dx{s} Q_3^{(\Phi)}[f_0](s) \, = \, -\frac{1}{\vol}\int_{[0,t]^2}\dx{\bs_2} \mathds{1}_{s_1\geq s_2} \jb{[[a_0,\wick{
        \settowidth{\wdth}{$\Hquart$}\hspace{.25\wdth}\c1{\vphantom{\Hquart}}\hspace{.25\wdth}\c2{\vphantom{\Hquart}}\hspace{.25\wdth}\c3{\vphantom{\Hquart}}\hspace{-.75\wdth}\Hquart(s_1)],\settowidth{\wdth}{$\Hcub$}\hspace{.25\wdth}\c1{\vphantom{\Hcub}}\hspace{.25\wdth}\c2{\vphantom{\Hcub}}\hspace{.25\wdth}\c3{\vphantom{\Hcub}}\hspace{-.75\wdth}\Hcub
        }
        (s_2)]}_0 
    \end{align}
    We need to collect all terms involving a factor $N^{-\frac32}$. Consequently, we need to consider the third-order Duhamel expansion of $\Phi$. 

        \begin{enumerate}
            \item Due to $[a_0,\HBEC(s_1)]$ being a scalar, we have that \begin{equation}
                [[[a_0,\HBEC(s_1)],\Hfluc(s_2)],\Hfluc(s_3)] \, = \,  0 \, .
            \end{equation}
            \item \begin{align}
                \MoveEqLeft\jb{[[[a_0,\HHFB(s_1)],\Hfluc(s_2)],\Hfluc(s_3)]}_0 \, =\\
                &  \frac\lambda{N}\jb{[[a_{2,1}(s_1)a_0+a_{2,2}\ad_0],\Hfluc(s_2)],\Hfluc(s_3)]}_0\, \propto \, \frac{\lambda^3}{N^2} \, ,
            \end{align} 
            for $\Hfluc(t)\sim\frac{\lambda}{\sqrt{N}}$, see Proposition \ref{prop-hfluc-est}.
            \item Similarly to the previous step, we have that \begin{align}
                \MoveEqLeft\jb{[[[a_0,\Hquart(s_1)],\Hfluc(s_2)],\Hfluc(s_3)]}_0 \, \propto \, \frac{\lambda^3}{N^2} \, .
            \end{align}
            \item We are left with computing
            \begin{align}
                \MoveEqLeft\jb{[[[a_0,\Hcub(s_1)],\Hfluc(s_2)],\Hfluc(s_3)]}_0 \, =\\
                &  \frac\lambda{\sqrt{N}}\jb{[[f[a_{3,1}(s_1)]+g[a_{3,2}(s_1)]+g^\dagger[a_{3,3}(s_1)]],\Hfluc(s_2)],\Hfluc(s_3)]}_0 \, ,
            \end{align}
            where we abbreviated
            \begin{align}
                g[J] \, := \, \int\dx{p}J(p)a_pa_{-p} \, , \quad g^\dagger[J] \, := \, \int\dx{p}J(p)\ad_p\ad_{-p}  \, .\label{def-gJ}
            \end{align}
            Employing Lemma \ref{lem-hfluc}, we find that
            \begin{align}
                a_{3,1}(t,p) \, &= \, e^{i\int_0^t\dx{\tau}\Omega_\tau(0)}\bbf{1}{2}_t(0,p,p) \, ,\\
                a_{3,2}(t,p) \, &= \, \frac12e^{i\int_0^t\dx{\tau}\Omega_\tau(0)}\bbfb{1}{2}_t(p,p,0) \, ,\\
                a_{3,3}(t,p) \, &= \, \frac12e^{i\int_0^t\dx{\tau}\Omega_\tau(0)}\bbf{0}{3}_t(p,p,0) \, .
            \end{align}
            We recognize the evolutions of $f$ and $g$, resulting in
            \begin{align}
                \MoveEqLeft\int_{[0,t]^3}\dx{\bs_3}\mathds{1}_{s_1\geq s_2\geq s_3} \jb{[[[a_0,\Hcub(s_1)],\Hfluc(s_2)],\Hfluc(s_3)]}_0 \, =\\
                &  \frac\lambda{N^{\frac32}}\int_0^t\dx{s} e^{i\int_0^s\dx{\tau}\Omega_\tau(0)}\int\dx{p}\Big(\bbf{1}{2}_s(0,p,p)Q_3[f_0](s,p) \\
                & \quad + \, \bbfb{1}{2}_s(p,p,0)Q_3^{(g)}[f_0](s,p) \, + \, \bbf{0}{3}_s(p,p,0) \overline{Q_3^{(g)}[f_0](s,p)}\Big) \, .
            \end{align}
            With analogous steps as above and recalling \eqref{def-Phi-Q33}, we then obtain that
            \begin{align}
                \Phi_t \, &= \, \frac1{N^{\frac32}}\int_0^t \dx{s} \big(Q_3^{(\Phi)}[f](s)+Q_{3,3}^{(\Phi)}[f](s)\big)\, + \, \Rem[\Phi](t) \, , 
            \end{align}
            where
            \begin{align}
                |\Rem[\Phi](t)|\, \lesssim_{\fd,\nd{\gamma_0},\vd,\vol}e^{K_{\fd,\nd{\gamma_0}}\vd\vol\lambda t} \frac1{N^2} \, .
            \end{align}
        \end{enumerate}

    \section{Estimates on the HFB evolution\label{sec-HFB-est}}
	
	In order to proceed as in \cite{chenhott}, we need to determine a priori bounds for $u$,$v$, and $\phi$. Observe that it suffices to establish bounds for the HFB fields $\phi$ and $\gamma$, for $|\sigma|^2=(1+\gamma)\gamma$. 
    \par We start by rewriting the renormalized HFB equations \eqref{eq-HFB-ren-1} and \eqref{eq-HFB-ren-2}. In both cases, we get a closed system for the total fields $\phi^{(j)}$ and
    \begin{align}
        \begin{pmatrix}
         \Sigma^{(j)} \\
         \Gamma^{(j)}
        \end{pmatrix} \, = \, \big((1+\delta_{j,2}\fplus)\begin{pmatrix}
            \sigma^{(j)}\\
            \gamma^{(j)}
        \end{pmatrix} \, + \, \delta_{j,2}\fplus \, \begin{pmatrix}
            0\\1
        \end{pmatrix} \, + \, N\vol\delta\begin{pmatrix}
            (\phi^{(j)})^2\\
            |\phi_t^{(j)}|^2
        \end{pmatrix} \, ,
    \end{align} 
    without having to explicitly refer to the reduced fields $(\phi^{(j)},\sigma^{(j)},\gamma^{(j)})$.
    Then the HFB equations \eqref{eq-HFB-ren-1} and \eqref{eq-HFB-ren-2} can be rewritten as
	
	\begin{align}\label{eq-HFB-system}
	\begin{cases}
		i\partial_t \phi_t  \, &= \, \frac{\lambda}{N} \Big( \big(\Gamma_t*(\hat{v}+\hat{v}(0))\big) (0) \phi_t + (\Sigma_t*\hat{v}) (0) \overline{\phi}_t\Big)- 2\lambda\vol\hat{v}(0) |\phi_t|^2 \phi_t \, ,  \\
	\partial_t \Gamma_t \, &= \,  \frac{2\lambda}N \Im\big((\Sigma_t*\hat{v} ) \overline{\Sigma}_t\big) \, , \\
	i\partial_t \Sigma_t \, &= \, 2\big(E+\frac{\lambda}{N}\Gamma_t*(\hat{v}+\hat{v}(0))\big)\Sigma_t \, + \, \frac{\lambda}{N} \big( \Sigma_t* \hat{v}\big) (1+2\Gamma_t) \\
    &\quad \quad - \, 4N\lambda\vol^2\hat{v}(0) |\phi_t|^2 \phi_t^2\delta \, . 	
	\end{cases}	
 	\end{align}

	\subsection{Well-posedness theory for the HFB system}

	\begin{lemma}[Local well-posedness] \label{lem-HFB-lwp}
		Let $(\phi_0,\Gamma_0,\Sigma_0)\in \hfbspace^1$. Assume that $\hat{v}\in L^1_{\sqrt{1+E}}\cap L^\infty(\lattice)$. Then there exists a maximal existence time $0<T\leq \infty$ and a unique maximal mild solution $(\phi,\Gamma,\Sigma)$ of \eqref{eq-HFB-system} on $[0,T)$. If $T<\infty$, then $\lim_{t\to T-} \|(\phi_t,\Gamma_t,\Sigma_t)\|_{\hfbspace^1}=\infty$. The solution depends continuously on the initial datum.
	\end{lemma}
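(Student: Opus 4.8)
The plan is a standard Banach fixed-point (Picard) argument applied to the mild formulation \eqref{eq-phi-HFB-mild}--\eqref{eq-sigma-HFB-mild}. Define the solution map $\cN=(\cN_\phi,\cN_\Gamma,\cN_\Sigma)$ on $C^0_t\big([0,T],\hfbspace^1\big)$ by the right-hand sides of the three mild equations. The only unbounded term, namely $2E\,\Sigma_s$ in the $\Sigma$-equation, has already been absorbed into the free propagator $e^{-2iEt}$, which acts isometrically on $L^a(\lattice)$ for $a\in[2,\infty]$ and on each weighted space $L^2_{(1+E)^j}(\lattice)$; hence no loss of weight occurs in the mild scheme, and $\cN$ is genuinely a self-map of curves valued in $\hfbspace^1$ (the $\phi$- and $\Gamma$-equations carry no propagator at all, consistent with $E(0)=0$).

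The analytic core is a family of bilinear and trilinear estimates showing that every nonlinear term sends $\hfbspace^1$ boundedly into itself and is Lipschitz on bounded sets. These reduce to Young's inequality in the rescaled norms \eqref{eq-Lp-norm} --- $\|f*g\|_1\le\|f\|_1\|g\|_1$, $\|f*g\|_\infty\le\|f\|_1\|g\|_\infty$, $\|fg\|_a\le\|f\|_\infty\|g\|_a$, together with the point bounds $|(\Gamma*h)(0)|\le\|\Gamma\|_1\|h\|_\infty$ and $|(\Sigma*\hat v)(0)|\le\|\Sigma\|_\infty\|\hat v\|_1$, which are finite since $\hat v\in L^\infty(\lattice)$ --- combined with the sub-multiplicativity of the weight, $1+E(p)\le 2\,(1+E(p-q))(1+E(q))$, which upgrades Young to its weighted form, e.g. $\|\Sigma*\hat v\|_{L^2_{(1+E)}}\lesssim\|\Sigma\|_{L^2_{(1+E)}}\big(\|\hat v\|_{L^1_{\sqrt{1+E}}}+\|\hat v\|_\infty\big)$. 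This is exactly where the hypothesis $\hat v\in L^1_{\sqrt{1+E}}\cap L^\infty(\lattice)$ enters. The singular contribution $-4N\lambda\vol^2\hat v(0)\,|\phi_s|^2\phi_s^2\,\delta$ to the $\Sigma$-equation is harmless, since $\delta\in L^2_{(1+E)}\cap L^\infty(\lattice)$ (its support is $\{0\}$, where $E$ vanishes) and $\phi\mapsto|\phi|^2\phi^2$ is smooth.

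Putting $R:=2\,\|(\phi_0,\Gamma_0,\Sigma_0)\|_{\hfbspace^1}$, these estimates produce a time $T=T\big(R,\lambda,N,\vol,\|\hat v\|_{L^1_{\sqrt{1+E}}}+\|\hat v\|_\infty\big)>0$ such that $\cN$ maps the closed ball $B_R\subset C^0_t\big([0,T],\hfbspace^1\big)$ into itself and contracts on it --- the smallness factor $T$ coming from the Duhamel time integral, and the differences of the polynomial nonlinearities being controlled on $B_R$ by the same multilinear bounds. Banach's theorem then gives a unique mild solution on $[0,T]$. Since $T$ depends only on the $\hfbspace^1$-norm of the datum, local solutions concatenate to a unique maximal solution on some interval $[0,T_{\max})$, and the blow-up alternative follows: if $\limsup_{t\to T_{\max}^-}\|(\phi_t,\Gamma_t,\Sigma_t)\|_{\hfbspace^1}<\infty$ one could restart near $T_{\max}$ and extend past it. Inserting the solution back into \eqref{eq-HFB-system} shows the right-hand sides lie in $C^0_t\big([0,T_{\max}),\hfbspace^{-1}\big)$ (the term $2E\Sigma$ being the one that costs two orders of weight), hence $(\phi,\Gamma,\Sigma)\in C^1_t\big([0,T_{\max}),\hfbspace^{-1}\big)$; continuous dependence on the datum follows by applying the same bilinear bounds to the difference of two solutions and closing with Gr\"onwall's inequality.

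I expect no conceptual obstacle: the system is a semilinear evolution whose unbounded part generates a unitary group, so this is textbook Picard iteration. The only genuine work is bookkeeping --- checking that every term, including the $\hat v(0)$ point evaluations, the Kronecker-$\delta$ contribution, and the cubic terms $|\phi|^2\phi$ and $|\phi|^2\phi^2$, stays inside $\hfbspace^1$ with a norm controlled multilinearly in the weighted $L^1\cap L^\infty$ and $L^2\cap L^\infty$ scales --- and tracking the precise dependence of $T$ on $R$, $\lambda$, $N$, $\vol$ and $\|\hat v\|_{L^1_{\sqrt{1+E}}}+\|\hat v\|_\infty$, which is what will be needed for the globalization in Proposition \ref{prop-HFB-gwp}.
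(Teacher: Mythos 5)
Your proposal is correct and follows essentially the same route as the paper: a Banach fixed-point argument on the mild (Duhamel) formulation, with the local Lipschitz bounds on the nonlinearity obtained from (weighted) Young's inequality in the rescaled $L^1\cap L^\infty$ and $L^2_{(1+E)}\cap L^\infty$ scales, followed by the standard blow-up alternative and continuous dependence; the paper packages these multilinear estimates in Lemma \ref{lem-HFB-NL-contraction}, which is exactly the bookkeeping you describe.
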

	
	\prf
		We start by abbreviating the nonlinearity 
		\begin{align}
			\cJ_1(\phi,\Gamma,\Sigma) :=&  - \, i\Big[\frac{\lambda}{N} \Big( \big(\Gamma*(\hat{v}+\hat{v}(0))\big) (0) \phi \, + \, (\Sigma*\hat{v}) (0) \overline{\phi}\Big) \, , \label{def-cj1}\\
			& - \,  2\lambda\vol\hat{v}(0) |\phi|^2 \phi\Big] \\
			\cJ_2(\phi,\Gamma,\Sigma) :=&  - \, \frac{2\lambda}N \Im\Big(\big(\overline{\Sigma}*\hat{v} \big) \, \Sigma\Big) \, , \label{def-cj2}\\
			\cJ_3(\phi,\Gamma,\Sigma) :=&  - \, i\Big[2\big(\frac{\lambda}{N}\Gamma*(\hat{v}+\hat{v}(0))\big)\Sigma \, + \, \frac{\lambda}{N} \big( \Sigma* \hat{v}\big) (1+2\Gamma)\\
			& - \, 4N\lambda\vol^2\hat{v}(0)|\phi|^2\phi^2 \, \delta\Big] \, . \label{def-cj3}
		\end{align} 
		Let $\vec{\cJ}:=(\cJ_1,\cJ_2,\cJ_3)$, and $\diag(a,b,c)$ denote the diagonal $3\times 3$ matrix with entries $a$, $b$, $c$, and define
		\begin{align}
			\MoveEqLeft \cL(\widetilde{\phi},\widetilde{\Gamma},\widetilde{\Sigma}) \, := \,\\
			&(\phi_0,\Gamma_0,e^{-2iEt}\Sigma_0) + \int_0^t \dx{s} \vec{\cJ} (\widetilde{\phi}_s,\widetilde{\Gamma}_s,\widetilde{\Sigma}_s)\diag(1,1,e^{-2iE(t-s)}) \, .
		\end{align} 
		A standard calculation, explained in Lemma \ref{lem-HFB-NL-contraction}, yields
		\begin{align}
			\MoveEqLeft \|\vec{\cJ}(\phi_1,\Gamma_1,\Sigma_1) \, - \, \vec{\cJ}(\phi_2,\Gamma_2,\Sigma_2)\|_{\hfbspace^1}\\
			\leq &\,  C \Big(\frac{\lambda}N(\|\hat{v}\|_{L^1_{\sqrt{1+E}}}+\|\hat{v}\|_\infty)\big(\| (\phi_1,\Gamma_1,\Sigma_1)\|_{\hfbspace^1} + \| (\phi_2,\Gamma_2,\Sigma_2)\|_{\hfbspace^1}\big) \\
			& \|(\phi_1-\phi_2,\Gamma_1-\Gamma_2,\Sigma_1-\Sigma_2)\|_{\hfbspace^1} \, + \, \lambda\vol\hat{v}(0)(|\phi_1|^2+|\phi_2|^2)\\
			& \big(1+N\vol^{\frac32}(|\phi_1|+|\phi_2|)\big)|\phi_1-\phi_2| \, .
		\end{align} 
		With the previous estimates, we can apply the standard contraction mapping arguments to $\cL$ to show the existence of a unique solution $(\phi,\Gamma,\Sigma)\in C^0_t\big([0,T),\hfbspace^1\big)\cap C^1_t\big([0,T),\hfbspace^{-1}\big)$ as long as $T>0$ is small enough. In addition, in the usual way, one can show a blow-up alternative, i.e., if the maximal time $T>0$ of existence is finite, we have that $\lim_{t\to T-} \|(\phi_t,\Gamma_t,\Sigma_t)\|_{\hfbspace^1}=\infty$. With standard arguments, we also show the continuous dependence on initial data.
	\endprf 
	
	In order to obtain an expression for the HFB energy functional, we look at the phase factor 
	\begin{align}
		\MoveEqLeft \partial_t S_t \, = \, \\
		& \vol\int \dx{p} \Big[E(p)\gamma_t(p) \, + \, \frac{\lambda}{2N}\Gamo_t*(\hat{v}+\hat{v}(0))(p) \Gamo_t(p) \\
		& + \, \frac{\lambda}{2N}\Sigob_t*\hat{v}(p) \Sigo_t(p) \, - \, N\vol^2 \lambda|\phi_t|^4\hat{v}(p)\delta(p) \\
		&  - \, \frac{N\vol\delta(p)\Re\big(\overline{\phi}_ti\partial_t \phi_t\big)}2 \, - \,  \frac{\Re\big(\overline{\sigma}_t(p)i\partial_t\sigma_t(p)\big)}{2\big(1+\gamma_t(p)\big)}\Big] \, . \label{eq-dS-1}
	\end{align} 
	obtained in Lemma \ref{lem-hfluc}. Observe that the terms not involving any time derivatives correspond to the sum over all complete contractions of $\cH_N$ with expectations 
	\begin{align}
		\jb{a_p} =&  \phi \delta(p) \, , \\
		\jb{\ad_p a_p} =&  \vol\Gamo (p) \, , \\
		\jb{a_p a_{-p}} =&  \vol \Sigo (p) \, ,
	\end{align} 
	see, e.g., \cite{BBCFS22}. Moreover, we can replace $E(p)\gamma_t(p)$ in \eqref{eq-dS-1} by $E(p)\Gamo_t(p)$. In particular, \eqref{eq-dS-1} gives rise to the {\it HFB energy (density) functional}
	\begin{align}
		\ehfb(\phi,\Gamma,\Sigma) :=&  \int \dx{p} \Big(E(p)\Gamma(p) \, + \, \frac{\lambda}{2N}\Gamma*(\hat{v}+\hat{v}(0))(p) \Gamma(p) \\
		& + \, \frac{\lambda}{2N}\overline{\Sigma}*\hat{v}(p) \Sigma(p) \, - \, N\vol^2 \lambda|\phi|^4\hat{v}(p)\delta(p)\Big) \, . \label{def-ehfb}
	\end{align} 
	In addition, we define the {\it (total) HFB density}
	\begin{align}
		\|\Gamma\|_1 \, := \, \int \dx{p} \Gamma(p) \, . \label{def-rhfb}
	\end{align} 
	
	\begin{lemma}[Conservation laws] \label{lem-conservation}
		Assume that $\hat{v}\in L^1_{\sqrt{1+E}}\cap L^\infty(\lattice)$. Let $(\phi_0,\Gamma_0,\Sigma_0)\in \hfbspace^1$, and $(\phi,\Gamma,\Sigma)\in C^0_t\big([0,T),\hfbspace^1\big)\cap C^1_t\big([0,T),\hfbspace^{-1}\big)$ denote the maximal mild solution of \eqref{eq-HFB-system}. Then the HFB density $\|\Gamma_t\|_1$ and HFB energy $\ehfb(\phi_t,\Gamma_t,\Sigma_t)$ are differentiable in time and conserved.
	\end{lemma}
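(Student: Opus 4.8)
The plan is to differentiate the two functionals in time using the mild formulation of the HFB system \eqref{eq-HFB-system} and verify that all terms cancel. First I would address the HFB density $\|\Gamma_t\|_1 = \int\dx{p}\Gamma_t(p)$. Integrating the $\Gamma$-equation in \eqref{eq-HFB-system} over $p$, one gets $\partial_t\|\Gamma_t\|_1 = \frac{2\lambda}N\int\dx{p}\Im\big((\Sigma_t*\hat v)(p)\overline{\Sigma}_t(p)\big)$. The key observation is that $\int\dx{p}(\Sigma_t*\hat v)(p)\overline{\Sigma}_t(p) = \int\dx{p}\dx{q}\hat v(p-q)\Sigma_t(q)\overline{\Sigma}_t(p)$; since $\hat v$ is real (as $v$ is real and even, or directly from $v\geq0$ with $\hat v\geq 0$) and even, interchanging the roles of $p$ and $q$ shows this double integral equals its own complex conjugate, hence is real, so its imaginary part vanishes. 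The term $-4N\lambda\vol^2\hat v(0)|\phi_t|^2\phi_t^2\delta$ in the $\Sigma$-equation contributes to $\partial_t\Gamma_t$ only via the identification $\Gamma = \Sigma$-type bookkeeping, and in the $\Gamma$-equation itself it does not appear; one must be slightly careful that $\Sigma$ now contains the $N\vol\phi^2\delta$ piece, but the pairing $\Im((\Sigma*\hat v)\overline\Sigma)$ integrates to zero regardless of which square-summable-plus-delta representative $\Sigma$ is, by the same symmetrization. Differentiability in time follows from $(\phi,\Gamma,\Sigma)\in C^1_t([0,T),\hfbspace^{-1})$ together with the boundedness of the pairing, using $\hat v\in L^1_{\sqrt{1+E}}\cap L^\infty$.

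Next I would treat the HFB energy $\ehfb(\phi_t,\Gamma_t,\Sigma_t)$ defined in \eqref{def-ehfb}. Writing $\ehfb = \int\dx{p}\big(E\Gamma + \frac{\lambda}{2N}(\Gamma*(\hat v+\hat v(0)))\Gamma + \frac{\lambda}{2N}(\overline\Sigma*\hat v)\Sigma - N\vol^2\lambda|\phi|^4\hat v(p)\delta(p)\big)$, I would compute $\frac{d}{dt}\ehfb$ by the product/chain rule, substitute $\partial_t\phi_t$, $\partial_t\Gamma_t$, $\partial_t\Sigma_t$ from \eqref{eq-HFB-system}, and collect terms by homogeneity in $\lambda/N$. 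The zeroth-order-in-interaction term gives $\int E\,\partial_t\Gamma_t = \frac{2\lambda}N\int E\,\Im((\Sigma_t*\hat v)\overline\Sigma_t)$; the quadratic (convolution) terms, when one differentiates $\frac12(\Gamma*w)\Gamma$ and $\frac12(\overline\Sigma*\hat v)\Sigma$ and uses the symmetry of the kernels to write $\frac{d}{dt}\frac12\langle\Gamma,\Gamma*w\rangle = \langle\partial_t\Gamma,\Gamma*w\rangle$ etc., produce exactly the combinations $\frac{\lambda}N\int(\Gamma_t*(\hat v+\hat v(0)))\partial_t\Gamma_t$ and $\frac{\lambda}N\Re\int(\Sigma_t*\hat v)\partial_t\overline\Sigma_t$. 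Plugging in the equations of motion, every surviving term is of the form $\Im$ or $\Re$ of a quantity that is forced to be purely real or purely imaginary by the Hermitian symmetry of $E$ and of the convolution operator $f\mapsto f*\hat v$ (self-adjointness of the linearized HFB operator); one uses that $\Omega_t$ in the diagonal part and $\Sigma_t*\hat v$ in the off-diagonal part assemble into a selfadjoint operator, so $\langle\Sigma_t, \mathcal H\Sigma_t\rangle$-type pairings have vanishing time derivative along the flow. The cubic condensate terms $-2\lambda\vol\hat v(0)|\phi_t|^2\phi_t$ in $i\partial_t\phi_t$ pair against $-4N\vol^2\lambda|\phi_t|^2\phi_t^2\delta$ in $i\partial_t\Sigma_t$ and against $-N\vol^2\lambda\partial_t(|\phi_t|^4)\hat v(0)$ from differentiating the last term of $\ehfb$; a short computation shows these three contributions cancel, which is the algebraic heart of energy conservation and mirrors the standard Hamiltonian structure of HFB.

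The main obstacle I anticipate is purely bookkeeping rather than conceptual: keeping track of the delta-distribution pieces $N\vol|\phi|^2\delta$ in $\Gamma$ and $N\vol\phi^2\delta$ in $\Sigma$ when they appear inside convolutions evaluated at $p=0$, and making sure the pairings $\langle\Gamma*w,\Gamma\rangle$, $\langle\Sigma*\hat v,\Sigma\rangle$ are well-defined and the symmetrization argument is legitimate for the singular parts (it is, because $\hat v$ is continuous/bounded, so $\delta*\hat v = \hat v$ pointwise, and the resulting pairings are absolutely convergent given $(\phi,\Gamma,\Sigma)\in\hfbspace^1$ and $\hat v\in L^1_{\sqrt{1+E}}\cap L^\infty$). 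A second, minor obstacle is justifying differentiation under the integral sign and the chain rule in the distributional setting $C^1_t([0,T),\hfbspace^{-1})$: this is handled by noting that $\ehfb$ and $\|\cdot\|_1$ extend to continuous functionals on $\hfbspace^1$ whose Gâteaux derivatives land in $\hfbspace^{-1}$-dual pairings that are continuous, so $\frac{d}{dt}\ehfb(\phi_t,\Gamma_t,\Sigma_t) = \langle D\ehfb(\phi_t,\Gamma_t,\Sigma_t),(\partial_t\phi_t,\partial_t\Gamma_t,\partial_t\Sigma_t)\rangle$ rigorously, and then the algebraic cancellation above applies verbatim. Once both time derivatives are shown to vanish on $[0,T)$, conservation follows.
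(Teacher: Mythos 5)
Your proposal is correct and follows essentially the same route as the paper: mass conservation reduces to showing $\scp{\Sigma_t*\hat v}{\Sigma_t}\in\R$ (the paper uses Plancherel, writing it as $\int dx\,|\widecheck{\Sigma}_t(x)|^2v(x)$, instead of your $p\leftrightarrow q$ symmetrization of the double integral, which is equivalent), and energy conservation follows by differentiating each term of $\ehfb$, substituting the equations of motion, and cancelling the surviving $(\overline{\Sigma}_s*\hat v)(0)\phi_s^2$ contributions. One minor correction to your bookkeeping: the cubic term $-2\lambda\vol\hat v(0)|\phi_t|^2\phi_t$ in $i\partial_t\phi_t$ contributes nothing to $\partial_t|\phi_t|^4$ (after pairing with $|\phi_t|^2\overline{\phi}_t$ it is real, so the $\Im$ kills it), and the actual cancellation is a two-way one between the $-4N\lambda\vol^2\hat v(0)|\phi|^2\phi^2\delta$ source in the $\Sigma$-equation and the $(\Sigma_s*\hat v)(0)\overline{\phi}_s^2$ piece coming from differentiating the $|\phi|^4$ energy term.
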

	
	Since the arguments needed for this result are standard, we omit the details at this point. For the interested reader, we elaborate on these details after the proof of Lemma \ref{lem-HFB-NL-contraction}.
	\par In order to close this section on the well-posedness of the HFB equations, recall that 
	\begin{align}
		\gamma_t \, = \, |v_t|^2 \geq 0 \, , \quad |\sigma_t|^2 \, = \, u_t^2|v_t|^2 \, = \, (1+\gamma_t)\gamma_t \, .
	\end{align} 		
	
	Then we have that
		\begin{align}
			\Gamma \, \geq \, \gamma \, \geq \, 0 \, , \label{eq-gamma>=0}
		\end{align} 
		and 
		\begin{align}
			|\Sigma|^2 =& |\sigma+N\vol\phi^2\delta|^2 \\
			\leq & 2\big(|\sigma|^2 + (N\vol|\phi|^2\delta)^2\big)\\
			= & 2\big((\gamma+1)\gamma + (N\vol|\phi|^2\delta)^2\big) \\
			     \leq & 2(\Gamma+1)\Gamma \, , \label{eq-gamma-sigma-rel-1}
		\end{align} 
		where we applied Cauchy-Schwarz in the second line. As a consequence, Young's inequality implies
		\begin{align}
			|\Sigma| \leq & \sqrt{2(\Gamma+1)\Gamma}\\
			\leq & \sqrt{2}(\Gamma+1) \, . \label{eq-sigma-gamma-sqrt-rel-0}
		\end{align} 

    For the next statement, we recall from \eqref{def-gtr-str} the truncated fields
    \begin{align}
        \gtr \, = \, \Gamma-N\vol|\phi|^2\delta \, , \quad \str \, = \, \Sigma-N\vol \phi^2\delta \, .
    \end{align} 
 
	\begin{lemma}[Conditional global well-posedness] \label{prop-HFB-cond-gwp}
		Assume that $\hat{v}\in L^1_{\sqrt{1+E}}\cap L^\infty(\lattice)$, and that $v\geq0$. Let $(\phi_0,\Gamma_0,\Sigma_0)\in\hfbspace^1$. Let $(\phi,\Gamma,\Sigma)$ be the associated unique maximal mild solution of \eqref{eq-HFB-system} with existence time $T_0>0$, and assume that the truncated expectations satisfy $\gtr\geq0$, $|\str|^2\leq(\gtr+1)\gtr$. Then $T_0=\infty$.
	\end{lemma}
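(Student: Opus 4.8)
The plan is to contradict the blow-up alternative of Lemma~\ref{lem-HFB-lwp}: I will show that on any interval $[0,T_0)$ the norm $\|(\phi_t,\Gamma_t,\Sigma_t)\|_{\hfbspace^1}$ stays finite (in fact grows at most exponentially in $t$), which is incompatible with $T_0<\infty$. Throughout I use the two conserved quantities supplied by Lemma~\ref{lem-conservation} --- the HFB density $\|\Gamma_t\|_1$ and the HFB energy $\ehfb(\phi_t,\Gamma_t,\Sigma_t)$ --- together with the standing hypotheses $\gtr\geq0$, $|\str|^2\leq(\gtr+1)\gtr$, $v\geq0$ and $\hat v\geq0$. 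Since $\Gamma_t=\gtr_t+N\vol|\phi_t|^2\delta\geq0$, conservation of $\|\Gamma_t\|_1$ pins $\|\Gamma_t\|_1\equiv\rho_0:=\|\Gamma_0\|_1$, which immediately yields the uniform bounds $N\vol|\phi_t|^2\leq\rho_0$ and $0\leq\gtr_t\leq\Gamma_t$ with $\|\gtr_t\|_1\leq\rho_0$.

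Next I would control the weighted norm $\|\Gamma_t\|_{L^1_{(1+E)}}=\|\Gamma_t\|_1+\int E\Gamma_t$ via the energy. Writing $\ehfb=\int E\Gamma+I-N\vol^2\lambda\hat v(0)|\phi|^4$ with $I:=\tfrac{\lambda}{2N}\big(\int\Gamma*(\hat v+\hat v(0))\Gamma+\int\overline\Sigma*\hat v\,\Sigma\big)$, one checks $I\geq0$: because $v\geq0$ the kernel $(\hat v(p-q))_{p,q\in\lattice}$ is positive semidefinite, so $\int\Gamma*\hat v\,\Gamma\geq0$ and $\int\overline\Sigma*\hat v\,\Sigma\geq0$, while $\int\Gamma*\hat v(0)\Gamma=\hat v(0)\|\Gamma\|_1^2\geq0$. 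Combined with the bound on $|\phi_t|$ from the first step, conservation of $\ehfb$ then gives $\int E\Gamma_t\leq\ehfb(\phi_0,\Gamma_0,\Sigma_0)+\lambda\|\hat v\|_\infty\rho_0^2/N$ uniformly on $[0,T_0)$.

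The heart of the argument is the $L^\infty$ bound on $\gtr_t$; once it is in hand, the constraint gives $\|\Sigma_t\|_\infty\leq\|\gtr_t\|_\infty+1+\vol\rho_0$ and $\|\Sigma_t\|_{L^2_{(1+E)}}^2\leq 2(\|\gtr_t\|_\infty+1)\|\Gamma_t\|_{L^1_{(1+E)}}+2\vol\rho_0^2$ (using $\delta^2=\vol\delta$ and $|\Sigma_t|^2\leq 2|\str_t|^2+2(N\vol|\phi_t|^2)^2\delta^2$). Here I would use $|\str|\leq\sqrt{(\gtr+1)\gtr}\leq\gtr+1$ twice. First, together with $\hat v\geq0$, Young's inequality and $\|\gtr_s\|_1\leq\rho_0$, it produces the \emph{uniform} bound $|(\Sigma_s*\hat v)(p)|\leq(\gtr_s*\hat v)(p)+\|\hat v\|_1+N\vol|\phi_s|^2\|\hat v\|_\infty\leq 2\|\hat v\|_\infty\rho_0+\|\hat v\|_1=:C_0$, valid for all $p$ and $s<T_0$ --- crucially this does not involve $\|\gtr_s\|_\infty$. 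Feeding this and $|\str_s(p)|\leq\gtr_s(p)+1$ into the mild equation \eqref{eq-gamma-HFB-mild}, evaluated pointwise at $p$ (evaluation is a bounded functional on $\hfbspace^1$ through the $L^\infty$ component, so it commutes with the time integral), gives the \emph{linear} integral inequality $\gtr_t(p)+1\leq \gtr_0(p)+1+\tfrac{2\lambda C_0}{N}\int_0^t(\gtr_s(p)+1)\,\dx s$ for $p\neq0$, with only a bounded extra forcing term at $p=0$ coming from $N\vol^2|\phi_s|^2$; Grönwall then gives $\|\gtr_t\|_\infty\leq C_1 e^{C_2\lambda t/N}$.

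Collecting these estimates bounds $\|(\phi_t,\Gamma_t,\Sigma_t)\|_{\hfbspace^1}$ by $Ce^{C't}$ on $[0,T_0)$, so $T_0<\infty$ would violate the blow-up alternative of Lemma~\ref{lem-HFB-lwp}; hence $T_0=\infty$. I expect the third step to be the only genuinely delicate point: a priori $\Sigma*\hat v$ scales like $\|\Sigma\|_\infty\sim\|\gtr\|_\infty$, so $\partial_t\gtr$ is quadratic in $\gtr$ and a crude estimate only yields finite-time existence; it is precisely the interplay of the pointwise constraint $|\str|^2\leq(\gtr+1)\gtr$ with the \emph{conserved} $L^1$-mass of $\Gamma$ that linearises the Grönwall inequality. (A version of this estimate, now established without assuming the constraint a priori, will be what drives Proposition~\ref{prop-HFB-gwp}.)
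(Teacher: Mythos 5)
Your proposal is correct and follows essentially the same route as the paper's proof: mass conservation controls $|\phi_t|$ and $\|\Gamma_t\|_1$, energy conservation plus positivity of $v$ (via $\scp{\Sigma*\hat v}{\Sigma}=\int|\widecheck\Sigma|^2v\geq0$) controls $\int E\,\Gamma_t$, and the pointwise constraint $|\str|\leq\gtr+1$ combined with the conserved $L^1$-mass linearises the Grönwall inequality for the sup-norm of the pair-correlation, after which the $\Sigma$-bounds and the blow-up alternative of Lemma~\ref{lem-HFB-lwp} finish the argument. The only cosmetic difference is that the paper runs the Grönwall step on the full field $\Gamma$ using $|\Sigma|\leq\sqrt{2}(\Gamma+1)$, whereas you work with the truncated fields and treat the $\delta$-contributions separately; both are equivalent.
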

	\prf
		Mass-conservation and $\gtr\geq0$ imply that
		\begin{align}
			N\vol|\phi_t|^2 \, \leq & \, \|\Gamma_t\|_1 \\
			= & \, \|\Gamma_0\|_1 \, ,  
		\end{align} 
		which is why 
		\begin{align} 
			|\phi_t| \, \leq\, C_{\|\Gamma_0\|_1} \, .
		\end{align} 
		Next,
		\begin{align}
			\scp{\Sigma_s*\hat{v}}{\Sigma_s} \, = \, \int dx \, |\widecheck{\Sigma}_s(x)|^2 v(x) \, , 
		\end{align} 
		see \eqref{eq-e-sigma>=0}, and positivity of $v$ yield 
		\begin{align}
			\ehfb(\phi_t,\Sigma_t,\Gamma_t) \geq& \int \dx{p} \Big(E(p)\Gamma_t(p) \, + \, \frac{\lambda}{2N}\gtr_t*(\hat{v}+\hat{v}(0))(p) \gtr_t(p) \\
		& + \, \frac{\lambda}{2N}\overline{\Sigma}_t*\hat{v}(p) \Sigma_t(p) \\
	    \geq& \int \dx{p} E(p)\Gamma_t(p)\, . \label{eq-egamma-bd-1}
		\end{align} 
		Because of energy and mass conservation, we thus have that 
		\begin{align}
			\|\Gamma_t\|_{L^1_{1+E}} \, \leq \, C_{\|\Gamma_0\|_1,\ehfb(\phi_0,\Gamma_0,\Sigma_0)} \, . \label{eq-gamma-l1e-bd-1}
		\end{align}  
		Using $\Gamma\geq0$ and \eqref{eq-sigma-gamma-sqrt-rel-0}, we find that
		\begin{align}
			\partial_t \Gamma_t(p) \, \leq\, \frac{4\lambda}N\|(\Gamma_t(p)+1)*\hat{v}\|_\infty (\Gamma_t(p)+1) \, .
		\end{align} 
		In particular, mass conservation implies
		\begin{align}
			\partial_t \Gamma_t(p) \, \leq \, \frac{4\lambda}N(\|\Gamma_0\|_1\|\hat{v}\|_\infty+\|\hat{v}\|_1)(\Gamma_t(p)+1) \label{eq-dt-gamma-1}
		\end{align} 
		Gronwall's inequality then yields the point-wise bound
		\begin{align}
			\Gamma_t(p) \, \leq \, e^{\frac{4\lambda}N(\|\Gamma_0\|_1\|\hat{v}\|_\infty+\|\hat{v}\|_1)t}(\Gamma_t(p)+1) \, , 
		\end{align} 
		which, in turn, gives the uniform bound
		\begin{align}
			\|\Gamma_t\|_\infty \, \leq \, e^{\frac{4\lambda}N(\|\Gamma_0\|_1\|\hat{v}\|_\infty+\|\hat{v}\|_1)t}(\|\Gamma_0\|_\infty+1) \, . \label{eq-gamma-linfty-bd-1}
		\end{align}
		Finally, employing \eqref{eq-gamma-sigma-rel-0}, we find that
		\begin{align}
			\|\Sigma_t\|_{L^2_{1+E}}^2 \leq& 2\int \dx{p} (\Gamma_t(p)+1)\Gamma_t(p) (1+E(p)) \, .
		\end{align} 
		Collecting \eqref{eq-gamma-l1e-bd-1} and \eqref{eq-gamma-linfty-bd-1}, we thus obtain
		\begin{align}
			\|\Sigma_t\|_{L^2_{1+E}}^2 \leq& C_{\|\Gamma_0\|_1,\ehfb(\phi_0,\Gamma_0,\Sigma_0),\|\Gamma_0\|_\infty} e^{\frac{4\lambda}N(\|\Gamma_0\|_1\|\hat{v}\|_\infty+\|\hat{v}\|_1)t}\, .
		\end{align}
		\eqref{eq-sigma-gamma-sqrt-rel-0} and \eqref{eq-gamma-linfty-bd-1} imply
		\begin{align}
			\|\Sigma_t\|_\infty \, \leq \, 2e^{\frac{4\lambda}N(\|\Gamma_0\|_1\|\hat{v}\|_\infty+\|\hat{v}\|_1)t}(\|\Gamma_0\|_\infty+1) \, .
		\end{align} 
		In particular, we have that $\|(\phi_t,\Gamma_t,\Sigma_t)\|_{\hfbspace^1}<\infty$ for all $t\geq0$, which, by the blow-up alternative in Lemma \ref{lem-HFB-lwp}, implies $T_0=\infty$.
 	\endprf 
	
	\subsection{Symplectic description of HFB equations\label{sec-symplectic}}
	In this section, we follow closely the ideas in \cite{BBCFS22}. \mh{We invoke a symplectic description as a convenient path to establishing global well-posedness.} Let
	\begin{align}
		\symp_t \, := \, \begin{pmatrix}
			\gtr_t & \str_t \\ \strb_t & 1+ \gtr_t 
		\end{pmatrix} \, ,
	\end{align} 
	where we recall the definitions \eqref{def-gtr-str} of $\gtr$ and $\str$. Observe that we have that
	\begin{align}
		\symp \, \geq \, 0 \quad \Leftrightarrow \quad \gtr\geq0 \, \wedge |\str|^2 \, \leq \, (\gtr+1)\gtr \, . \label{eq-symp-pos-equiv-0}
	\end{align} 
	The goal of this section is to prove that $\symp_0\geq0$ implies $\symp_t\geq0$ along the flow induced by the HFB equations \eqref{eq-HFB-system}. More precisely, let
	\begin{align}
		\hgam := & E \, + \, \frac{\lambda}N \Gamma*(\hat{v}+\hat{v}(0)) \label{def-hgam} \, , \\
		\hsig := & \frac{\lambda}N\Sigma*\hat{v} \, , \label{def-hsig} \\
		\hsymp :=&  \begin{pmatrix} \hgam & \hsig \\ \overline{\hsig} & \hgam \end{pmatrix} \, , \label{def-hsymp} \\
		\ssymp :=&  \begin{pmatrix} 1 & 0 \\ 0 & -1 \end{pmatrix} \, . \label{def-ssymp}
	\end{align} 
	A straightforward calculation yields
	\begin{align}
		i\partial_t \symp_t = \ssymp \hsympt \symp_t -\symp_t \hsympt \ssymp \, .
	\end{align} 
	Let
	\begin{align} \label{def-sympbog}
		\begin{cases}
			i\partial_t \sympbogd_t &= \hsympt \ssymp \sympbogd_t \, , \\ 
			\sympbogd_0 &= \1 \, .
		\end{cases}
	\end{align} 
	We introduce the function space
	\begin{align}
		\sympspace^j \, := \, \Big\{ \begin{pmatrix}
			a & b \\ \overline{b} & a
		\end{pmatrix} \mid a\in L^1_{(1+E)^j}(\lattice) \, , \, b\in L^2_{(1+E)^j}(\lattice)\Big\} \, .
	\end{align} 
	
	\begin{lemma}[Well-posedness of \eqref{def-sympbog}] \label{lem-sympbog-wp}
		Assume that $\hat{v}\in L^1_{\sqrt{1+E}}\cap L^\infty(\lattice)$. Let $(\phi,\Gamma,\Sigma)\in C^0_t\big([0,T),\hfbspace^1\big)$. Then there is a unique solution $\sympbogd \in C^1_t\big([0,T),\hfbspace^1\big) $ of \eqref{def-sympbog}. 
	\end{lemma}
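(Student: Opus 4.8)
The plan is to observe that \eqref{def-sympbog}, although the generator $\hsympt\ssymp$ contains the unbounded operator $E$, decouples in momentum into a family of elementary $2\times 2$ linear ODEs, and that the only genuine difficulty — the unboundedness in $p$ — sits in a harmless place. First I would note that $\hsympt$ and $\ssymp$ are $2\times2$ matrices of multiplication operators on $\lattice$, so \eqref{def-sympbog} is equivalent to the family $i\partial_t\sympbogd_t(p)=\hsympt(p)\ssymp\,\sympbogd_t(p)$, $\sympbogd_0(p)=I_2$, indexed by $p\in\lattice$. For each fixed $p$ the coefficient $t\mapsto\hsympt(p)\ssymp$ is continuous, since $\hgam_t(p)=E(p)+\frac\lambda N\big((\Gamma_t*\hat{v})(p)+\hat{v}(0)\|\Gamma_t\|_1\big)$ and $\hsig_t(p)=\frac\lambda N(\Sigma_t*\hat{v})(p)$ depend continuously on $t$ (using $(\phi,\Gamma,\Sigma)\in C^0_t([0,T),\hfbspace^1)$, $\hat{v}\in L^1\cap L^\infty(\lattice)$, and Young's inequality) while $E(p)$ is $t$-independent. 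Hence the standard theory of linear ODEs with continuous coefficients provides, for every $p$, a unique solution $\sympbogd_\cdot(p)\in C^1_t([0,T),\C^{2\times2})$.

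Next I would exploit the Bogoliubov structure. Writing $-i\hsympt\ssymp=\left(\begin{smallmatrix}-i\hgam & i\hsig\\ -i\overline{\hsig}& i\hgam\end{smallmatrix}\right)$, one checks that the generator lies in the real subalgebra $\mathfrak{g}:=\{\left(\begin{smallmatrix}a&b\\\overline b&\overline a\end{smallmatrix}\right):a,b\in\C\}\subset\C^{2\times2}$, which is closed under matrix multiplication and contains $I_2$; since the Picard iterates remain in $\mathfrak{g}$, so does $\sympbogd_t(p)$, i.e., $\sympbogd_t(p)=\left(\begin{smallmatrix}u_t(p)&v_t(p)\\\overline{v_t(p)}&\overline{u_t(p)}\end{smallmatrix}\right)$ with $u_0(p)=1$, $v_0(p)=0$, and the equation reduces to $i\partial_t u=\hgam u-\hsig\overline v$, $i\partial_t v=\hgam v-\hsig\overline u$. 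Because $\hgam$ is real, the diagonal term contributes only a phase, and differentiating $|u_t(p)|^2-|v_t(p)|^2$ along this pair yields $|u_t(p)|^2-|v_t(p)|^2\equiv1$; consequently $\|\sympbogd_t(p)\|^2\leq 2\big(1+2|v_t(p)|^2\big)$.

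The core step is a Gr\"onwall bound for $|v_t(p)|^2$ that is uniform in $p$. From $i\partial_t v=\hgam v-\hsig\overline u$ and $\hgam\in\R$ one obtains $\partial_t|v_t(p)|^2=2\Re\big(i\hsig_t(p)\overline{u_t(p)}\,\overline{v_t(p)}\big)\leq 2|\hsig_t(p)|\,|u_t(p)|\,|v_t(p)|\leq 2|\hsig_t(p)|\big(1+|v_t(p)|^2\big)$, where the last inequality uses $|u_t(p)|^2=1+|v_t(p)|^2$ together with $x\leq\sqrt{1+x^2}$. The unbounded piece $\hgam\sim E(p)$ has disappeared, and only the globally bounded coefficient remains, with $\|\hsig_s\|_\infty\leq\frac\lambda N\|\hat{v}\|_1\|\Sigma_s\|_\infty$; since $v_0(p)=0$, Gr\"onwall gives
\begin{equation*}
1+|v_t(p)|^2\,\leq\,\exp\!\Big(\tfrac{2\lambda}N\int_0^t\dx{s}\,\|\hat{v}\|_1\|\Sigma_s\|_\infty\Big)\,,
\end{equation*}
uniformly in $p\in\lattice$ and locally bounded in $t$. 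Therefore $\sup_{p}\|\sympbogd_t(p)\|_{\C^{2\times2}}$ is finite and continuous on $[0,T)$, so $\sympbogd_t$ is a bounded block-multiplication operator on each of the spaces $\sympspace^j$, depending continuously on $t$; differentiating \eqref{def-sympbog} then upgrades this to the asserted $C^1$ regularity, and uniqueness is inherited from uniqueness at each fixed $p$.

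The main, and essentially only, obstacle is the last estimate: the generator $\hsympt\ssymp$ has operator norm growing like $E(p)=|p|^2/2$, so a naive Gr\"onwall argument would produce a bound blowing up as $|p|\to\infty$ and would not land $\sympbogd_t$ in a space of bounded operators. The point is that this dangerous term is diagonal and $\hgam$ is real, so it merely rotates $u_t(p)$ and $v_t(p)$ by a phase; the resulting conservation law $|u_t(p)|^2-|v_t(p)|^2\equiv1$ then leaves $|v_t(p)|$ controlled solely by the bounded off-diagonal coefficient $\hsig$. Everything else is routine linear-ODE theory together with the a priori control on $(\phi,\Gamma,\Sigma)$ provided by the hypothesis $(\phi,\Gamma,\Sigma)\in C^0_t([0,T),\hfbspace^1)$.
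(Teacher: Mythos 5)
Your proof is correct, but it takes a genuinely different route from the paper. The paper's argument is abstract: it observes that $E\ssymp$ generates a continuous (semi)group on $\sympspace^1$, that $t\mapsto\hsympt-E\ssymp$ is a continuous family of bounded perturbations (by the estimates of Lemma \ref{lem-HFB-NL-contraction}), and then invokes standard non-autonomous evolution-equation theory (Kato) to produce the evolution family. You instead exploit translation invariance to decouple \eqref{def-sympbog} into a family of $2\times2$ linear ODEs indexed by $p$, propagate the Bogoliubov form $\left(\begin{smallmatrix}u&v\\\overline v&\overline u\end{smallmatrix}\right)$ through the Picard iteration, and use the conservation law $|u_t(p)|^2-|v_t(p)|^2\equiv1$ to show that the unbounded diagonal coefficient $\hgam\sim E(p)$ contributes only a phase, so that a Gr\"onwall bound on $|v_t(p)|^2$ involves only $\|\hsig_s\|_\infty\leq\frac{\lambda}{N}\|\hat v\|_1\|\Sigma_s\|_\infty$ and is uniform in $p$. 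Your computations check out (the set $\mathfrak g$ is indeed a real subalgebra containing $-i\hsympt\ssymp$ because $\hgam$ is real-valued, and the reduced system and the two differential inequalities are correct). What each approach buys: the paper's is shorter given the machinery and would survive a loss of translation invariance, where the equation no longer decouples; yours is elementary and self-contained, makes explicit the uniform multiplier bound $\sup_p\|\sympbogd_t(p)\|<\infty$ that the paper never spells out, and directly exhibits the symplectic form of $\sympbogd_t$ that is what is actually used in \eqref{eq-symp>=0}. One caveat, which applies equally to the paper's own two-line proof: the asserted $C^1$ regularity can only hold with values in a weaker space (or pointwise in $p$), since $\partial_t\sympbogd_t(p)=-i\hsympt(p)\ssymp\sympbogd_t(p)$ grows like $E(p)$; your closing sentence glosses over this, but it does not affect the conjugation argument in Section \ref{sec-symplectic}, which only needs differentiability at each fixed $p$.
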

	\prf
		We have that $E\ssymp$ is the generator of a continuous semi-group on $\sympspace^1$. In addition, $(\phi,\Gamma,\Sigma)\in C^0_t\big([0,T),\hfbspace^1\big)$ and the estimates in the proof of Lemma \ref{lem-HFB-NL-contraction} imply the continuity of  $t\mapsto\hsympt-E\ssymp \in \sympspace^1$. Finally, we apply standard functional results, e.g., as found in \cite{kato70}, in order to show the well-posedness and regularity of $\sympbogd_t$. 
	\endprf 
	Now let $(\phi,\Gamma,\Sigma)\in C^0_t\big([0,T),\hfbspace^1\big)\cap C^1_t\big([0,T),\hfbspace^{-1}\big)$ be the solution of \eqref{eq-HFB-system}. Using Lemma \ref{lem-sympbog-wp}, we have that
	\begin{align}
		i\partial_t (\sympbog_t \symp_t \sympbogd_t) \, =& \, -\sympbog_t \ssymp\hsympt\symp_t \sympbogd_t \, + \, \sympbog_t \symp_t \hsympt\ssymp\sympbogd_t\\
		& + \, \sympbog_t\ssymp \hsympt \symp_t \sympbogd_t \, - \, \sympbog_t\symp_t \hsympt \ssymp \sympbogd_t \\
		=& \, 0 \, .
	\end{align} 
	In particular, provided $\symp_0\geq0$, we have that
	\begin{align}
		\symp_t \, = \, \sympbogd_t\symp_0 \sympbog_t \, \geq \, 0 \, . \label{eq-symp>=0}
	\end{align} 
	Together with \eqref{eq-symp-pos-equiv-0}, we have thus proved Proposition \ref{prop-HFB-gwp}.

	\subsection{Bounds on the HFB fields} 
 
    After establishing global well-posedness of \eqref{eq-HFB-system}, as well as energy and mass conservation, we are ready to establish bounds on $\gtr$ and $\phi$. As described at the beginning of section, these, in turn, will yield a bound on $\gamma$, and thus on $u$ and $v$. For that, we establish the next result. Recall from \eqref{def-gtr-str} that 
    $\gtr=\Gamma-N\vol|\phi|^2\delta$, $\str=\Sigma-N\vol\phi^2\delta$.
	
	\begin{lemma}[Mass transfer bound]\label{lem-mass-transfer}
		Assume that $\hat{v}\in L^1_{\sqrt{1+E}}\cap L^\infty(\lattice)$, and that $v\geq0$. Let $(\phi_0,\Gamma_0,\Sigma_0)\in \hfbspace^1$  with $\gtr_0\geq0$, $|\str_0|^2\leq\sqrt{(\gtr_0+1)\gtr_0}$, and $(\phi,\Gamma,\Sigma)\in C^0_t\big([0,T),\hfbspace^1\big)\cap C^1_t\big([0,T),\hfbspace^{-1}\big)$ denote the global solution of \eqref{eq-HFB-system}. Then we have that
		\begin{align}
			|\phi_t|^2 \, = & \, |\phi_0|^2 \, + \, \frac{\|\gtr_0\|_1-\|\gtr_t\|_1}{N\vol} \, , \\
			\gtr_t \, \leq & \, e^{2\lambda \vd(\frac{\|\gtr_0\|_1}N+2)t}(\gtr_0+1) \, , \\
			\|\gtr_t\|_1 \, \leq & \, e^{2\lambda \vd(\frac{\|\gtr_0\|_1}N+1)t}(\|\gtr_0\|_1+1) \, , \\
			\int \dx{p} E(p)\Gamma_t(p) \, \leq& \, \ehfb(\phi_0,\Gamma_0,\Sigma_0) \, ,
		\end{align}
		and $|\str_t| \, \leq \, \sqrt{(\gtr_t+1)\gtr_t}$.
	\end{lemma}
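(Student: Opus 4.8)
The plan is to obtain all five conclusions from three facts already in hand: the symplectic positivity $\symp_t\geq0$ of Proposition~\ref{prop-HFB-gwp}, and the conservation of the HFB mass $\|\Gamma_t\|_1$ and of the HFB energy $\ehfb$ from Lemma~\ref{lem-conservation}. The two relations $\gtr_t\geq0$ and $|\str_t|\leq\sqrt{(\gtr_t+1)\gtr_t}$ are nothing but \eqref{eq-symp-pos-equiv-0} applied to $\symp_t\geq0$, and need no further argument. For the mass-transfer identity I would write $\Gamma_t=\gtr_t+N\vol|\phi_t|^2\delta$ and use $\int\dx{p}\,\delta(p)=1$ to get $\|\Gamma_t\|_1=\|\gtr_t\|_1+N\vol|\phi_t|^2$; conservation of $\|\Gamma_t\|_1$ then gives $|\phi_t|^2=|\phi_0|^2+\tfrac{\|\gtr_0\|_1-\|\gtr_t\|_1}{N\vol}$, equivalently $N\vol|\phi_t|^2=\|\Gamma_0\|_1-\|\gtr_t\|_1$. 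Finally, $\int\dx{p}\,E(p)\Gamma_t(p)\leq\ehfb(\phi_0,\Gamma_0,\Sigma_0)$ is precisely the chain \eqref{eq-egamma-bd-1} (which uses $v\geq0$) combined with conservation of $\ehfb$.

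The two exponential bounds are Gronwall estimates carried out on the genuine functions $\gtr_t,\str_t$, not on the measures $\Gamma_t,\Sigma_t$. Isolating the Dirac part of $\Sigma_t$ in the $\Gamma$-equation of \eqref{eq-HFB-system} gives, for every $p$, $\partial_t\gtr_t(p)=\tfrac{2\lambda}N\Im\big(((\str_t*\hat{v})(p)+N\vol\phi_t^2\hat{v}(p))\overline{\str_t}(p)\big)$, while the $\phi$-equation, all condensate self-interactions being real, gives $\partial_t|\phi_t|^2=\tfrac{2\lambda}N\Im\big((\str_t*\hat{v})(0)\overline{\phi_t^2}\big)$, hence $\tfrac{d}{dt}\|\gtr_t\|_1=-N\vol\,\partial_t|\phi_t|^2$. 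The one estimate doing the work is a Cauchy--Schwarz bound on the convolution, using the symplectic constraint $|\str_t|\leq\sqrt{(\gtr_t+1)\gtr_t}$ and $\|\hat{v}\|_2^2\leq\|\hat{v}\|_\infty\|\hat{v}\|_1\leq\vd^2$:
\[
|(\str_t*\hat{v})(p)|\, \leq\, \Big(\int\dx{q}\,|\hat{v}(p-q)|^2(\gtr_t(q)+1)\Big)^{\frac12}\|\gtr_t\|_1^{\frac12}\, \leq\, \vd\,\sqrt{(\|\gtr_t\|_1+1)\|\gtr_t\|_1}\, .
\]
Combining this with the elementary inequality $\sqrt{(x+1)x}\leq x+1$ (which also gives $|\str_t(p)|\leq\gtr_t(p)+1$) and $\|\hat{v}\|_\infty\leq\vd$, one obtains $\tfrac{d}{dt}\|\gtr_t\|_1\leq2\lambda\vd\,\vol|\phi_t|^2(\|\gtr_t\|_1+1)$ and
\[
\partial_t\gtr_t(p)\, \leq\, \tfrac{2\lambda\vd}N\Big(\sqrt{(\|\gtr_t\|_1+1)\|\gtr_t\|_1}+N\vol|\phi_t|^2\Big)(\gtr_t(p)+1)\, .
\]

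The point is then to feed the mass-transfer identity back in, using the standing normalization $\vol|\phi_0|^2=1$ (so $\|\Gamma_0\|_1=\|\gtr_0\|_1+N$). Since $\|\gtr_t\|_1\geq0$ we get $\vol|\phi_t|^2\leq1+\tfrac{\|\gtr_0\|_1}N$, and, using $\sqrt{(x+1)x}\leq x+1$ and $N\geq1$,
\[
\sqrt{(\|\gtr_t\|_1+1)\|\gtr_t\|_1}+N\vol|\phi_t|^2\, =\, \sqrt{(\|\gtr_t\|_1+1)\|\gtr_t\|_1}-\|\gtr_t\|_1+\|\Gamma_0\|_1\, \leq\, 1+\|\Gamma_0\|_1\, \leq\, N\Big(2+\tfrac{\|\gtr_0\|_1}N\Big)\, .
\]
Hence both right-hand sides above are $t$-independent, $N$-uniform multiples of $\|\gtr_t\|_1+1$, resp.\ of $\gtr_t(p)+1$, and Gronwall's inequality---applied to $\|\gtr_t\|_1+1$ and, pointwise in $p$, to $\gtr_t(p)+1$---produces the two stated exponential bounds; the accompanying $|\str_t|\leq\sqrt{(\gtr_t+1)\gtr_t}$ is \eqref{eq-symp-pos-equiv-0} once more. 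The main point to be careful about---and the only genuine obstacle---is the bookkeeping of the Dirac masses in $\Gamma_t$ and $\Sigma_t$: every differential inequality must be set up for the smooth fields $\gtr_t,\str_t$, and one must verify, as above via mass conservation, that the condensate contribution $N\vol|\phi_t|^2$, which is itself of size $\sim N$, recombines with the $\|\gtr_t\|_1$-terms into constants that stay bounded uniformly in $N$.
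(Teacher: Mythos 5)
Your proposal is correct and follows essentially the same route as the paper: conservation of mass and energy (Lemma \ref{lem-conservation}) together with the symplectic positivity $\gtr_t\geq0$, $|\str_t|^2\leq(\gtr_t+1)\gtr_t$ from Proposition \ref{prop-HFB-gwp} yield the mass-transfer identity, the energy bound and the pointwise constraint on $\str_t$, and Gronwall applied to $\gtr_t(p)+1$ and to $\|\gtr_t\|_1+1$ produces the two exponential bounds with the same constants. Your Cauchy--Schwarz estimate $|(\str_t*\hat{v})(p)|\leq\vd\sqrt{(\|\gtr_t\|_1+1)\|\gtr_t\|_1}$ is a slightly more careful variant of the paper's Young-inequality step $|(\Sigma_t*\hat{v})(p)|\leq\vd(\|\Gamma_t\|_1+1)$, but the two are interchangeable here.
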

	\prf
		Observe that we have $\|\Gamma\|_1=\int \dx{p}\Gamma(p)$ for $\Gamma\geq0$. Mass conservation implies
		\begin{align}
		|\phi_t|^2 \, = \, \frac{\|\Gamma_0\|_1-\|\gtr_t\|_1}{N\vol} \, = \, |\phi_0|^2 \, + \, \frac{\|\gtr_0\|_1-\|\gtr_t\|_1}{N\vol} \, . \label{eq-phi-growth-1}
		\end{align} 
        Notice that Lemma \ref{prop-HFB-gwp} yields
		\begin{align}
			|\str_t| \, \leq \, \sqrt{(\gtr_t+1)\gtr_t} \, \leq \, \gtr_t+1 , \label{eq-str-gtr-bd-1}
		\end{align} 
		due to Young's inequality. \eqref{eq-HFB-system} implies 
		\begin{align}
			\partial_t \gtr_t \, = \, \frac{2\lambda}N\Im\big((\Sigma_t*\hat{v})\str_t\big) \, . \label{eq-dt-gtr-0}
		\end{align} 
		Using \eqref{eq-str-gtr-bd-1}, Young's inequality yields
		\begin{align}
			\partial_t \gtr_t \, \leq & \,\frac{2\lambda}N (\|\hat{v}\|_1+\|\hat{v}\|_\infty)(\|\Gamma_t\|_1+1)(\gtr_t+1) \\
			=& \, \frac{2\lambda}N \vd(\|\Gamma_0\|_1+1)(\gtr_t+1) \, ,
		\end{align} 
		where we employed mass conservation. We thus obtain, via Gronwall's inequality and using the fact $\|\Gamma_0\|_1=\|\gtr_0\|_1+N$, that
		\begin{align}
			\gtr_t \, \leq\, e^{2\lambda \vd\big(\frac{\|\gtr_0\|_1}N+2\big)t}(\gtr_0+1) \, . \label{eq-gtr-pw-bd-1}
		\end{align} 
		Similarly, \eqref{eq-dt-gtr-0}, together with \eqref{eq-str-gtr-bd-1} and \eqref{eq-phi-growth-1}, yields
		\begin{align}
		 	\partial_t \|\gtr_t\|_1 \, = & \, 2\lambda\vol\Im\big(\int \dx{p} (\Sigma_t*\hat{v})(p)\strb_t(p)\big) \\
		 	= & \, 2\lambda\vol\Im\big((\strb_t*\hat{v})(0)\phi_t^2\big) \\
		 	\leq& \, 2\lambda \vd\frac{\|\Gamma_0\|_1-\|\gtr_t\|_1}{N}(\|\gtr_t\|_1+1) \, ,
		 \end{align} 
		 where, analogously to \eqref{eq-e-sigma>=0}, we used that $\int \dx{p} (\str_t*\hat{v})(p)\strb_t(p)\in\R$. Then Gronwall's inequality implies
		 \begin{align}
		 	\|\gtr_t\|_1 \, \leq\, e^{2\lambda \vd(\frac{\|\gtr_0\|_1}N+1)t}(\|\gtr_0\|_1+1) \, . \label{eq-gtr-l1-growth-1}
		 \end{align} 
		 \eqref{eq-egamma-bd-1} together with energy conservation implies
		 \begin{align}
		 	\int \dx{p} E(p)\Gamma_t(p) \, \leq \, \ehfb(\phi_t,\Gamma_t,\Sigma_t) \, = \, \ehfb(\phi_0,\Gamma_0,\Sigma_0) \, .
		 \end{align} 
		 This concludes the proof. 
	\endprf 
	As a direct consequence, we obtain the following result.
	\begin{corollary}[Bounds on Bogoliubov coefficients] \label{cor-bog-coeff}
		Assume that $\hat{v}\in L^1_{\sqrt{1+E}}\cap L^\infty(\lattice)$. Let $(\phi_0,\Gamma_0,\Sigma_0)\in \hfbspace^1$ with $\gtr_0\geq0$, $|\str_0|^2\leq\sqrt{(\gtr_0+1)\gtr_0}$, and $(\phi,\Gamma,\Sigma)\in C^0_t\big([0,T),\hfbspace^1\big)\cap C^1_t\big([0,T),\hfbspace^{-1}\big)$ denote the global solution of \eqref{eq-HFB-system}. Let $\gamma_t,\sigma_t: \lattice\to \C$ with $0\leq\gamma_t\leq \gtr_t$ and $|\sigma_t|^2=(\gamma_t+1)\gamma_t$ be given, and define 
		\begin{align}
			u_t(p) \, &:= \, \sqrt{1+\gamma_t(p)} \, , \\
			v_t(p) \, &:= \, \frac{\sigma_t(p)}{\sqrt{1+\gamma_t(p)}} \, , 
		\end{align}
		i.e., $|v_t|^2=\gamma_t$. Then we have that
		\begin{align}
			1 \, + \, \|\gamma_t\|_\infty \, = \, 1+\|v_t\|_\infty^2 \, = \,\|u_t\|_\infty^2 \, &\leq \, 1 \, + \, e^{2\lambda \vd(\frac{\|\gtr_0\|_1}N+2)t}(\|\gtr_0\|_\infty+1) \, , \\
			\|v_t\|_2^2 \, = \, \|\gamma_t\|_1 \, &\leq  \, e^{2\lambda \vd(\frac{\|\gtr_0\|_1}N+1)t}\|\gtr_0\|_1 \, , \\
			\|v_t\|_{L^2_E}^2 \, = \, \|\gamma_t\|_{L^1_E} \, &\leq \, \ehfb(\phi_0,\Gamma_0,\Sigma_0)  \, .
		\end{align} 
	\end{corollary}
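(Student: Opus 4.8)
The plan is to obtain Corollary~\ref{cor-bog-coeff} as an immediate consequence of the \emph{a priori} bounds collected in Lemma~\ref{lem-mass-transfer}. The key observation is that each of the three quantities to be estimated, namely $\|u_t\|_\infty$, $\|v_t\|_2$ and $\|v_t\|_{L^2_E}$, is in fact a function of $\gamma_t$ alone, and that $\gamma_t$ is pointwise dominated by the truncated field $\gtr_t$, whose pointwise, $L^1$, and energy growth are exactly the content of Lemma~\ref{lem-mass-transfer}.

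First I would record the algebraic identities. Since $u_t(p)=\sqrt{1+\gamma_t(p)}$ and $|\sigma_t(p)|^2=(\gamma_t(p)+1)\gamma_t(p)$ by hypothesis, we get $|v_t(p)|^2=|\sigma_t(p)|^2/(1+\gamma_t(p))=\gamma_t(p)$, and hence $u_t(p)^2=1+|v_t(p)|^2=1+\gamma_t(p)$ pointwise. Because $\gamma_t\geq0$, taking suprema and (weighted) integrals yields
\begin{align}
  \|u_t\|_\infty^2 \, = \, 1+\|\gamma_t\|_\infty \, = \, 1+\|v_t\|_\infty^2 \, , \quad \|v_t\|_2^2 \, = \, \|\gamma_t\|_1 \, , \quad \|v_t\|_{L^2_E}^2 \, = \, \|\gamma_t\|_{L^1_E} \, .
\end{align}
So it suffices to bound $\|\gamma_t\|_\infty$, $\|\gamma_t\|_1$ and $\int\dx{p}\,E(p)\gamma_t(p)$. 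Here I would invoke the hypothesis $0\leq\gamma_t\leq\gtr_t$: the pointwise bound $\gtr_t\leq e^{2\lambda\vd(\frac{\|\gtr_0\|_1}{N}+2)t}(\gtr_0+1)$ from Lemma~\ref{lem-mass-transfer} gives $\|\gamma_t\|_\infty\leq\|\gtr_t\|_\infty\leq e^{2\lambda\vd(\frac{\|\gtr_0\|_1}{N}+2)t}(\|\gtr_0\|_\infty+1)$; integrating $\gamma_t\leq\gtr_t$ and using the $L^1$ bound on $\gtr_t$ (or, for the sharper constant, a direct Grönwall estimate on $\|\gamma_t\|_1$ mirroring the proof of Lemma~\ref{lem-mass-transfer} via the $\gamma$-equation in \eqref{eq-HFB-ren-2} and $|\sigma_t|\leq\sqrt{(\gamma_t+1)\gamma_t}$) gives the second estimate; and $\int\dx{p}\,E(p)\gamma_t(p)\leq\int\dx{p}\,E(p)\gtr_t(p)=\int\dx{p}\,E(p)\Gamma_t(p)$, the last equality because $\Gamma_t-\gtr_t=N\vol|\phi_t|^2\delta$ and $\int\dx{p}\,E(p)\delta(p)=E(0)=0$, so the bound $\int\dx{p}\,E(p)\Gamma_t(p)\leq\ehfb(\phi_0,\Gamma_0,\Sigma_0)$ from Lemma~\ref{lem-mass-transfer} closes the argument.

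I do not expect any genuine obstacle: the statement is essentially bookkeeping on top of Lemma~\ref{lem-mass-transfer}. The only points deserving a line of care are (i) that the Dirac contribution $N\vol|\phi_t|^2\delta$ to $\Gamma_t$ is harmless — it drops out of the $L^1_E$ estimate because $E(0)=0$, and it does not enter the $L^\infty$ or $L^1$ estimates since $\gamma_t$ is a genuine function dominated by the regular part $\gtr_t$ — and (ii) that the assumptions $\gtr_0\geq0$ and $|\str_0|^2\leq(\gtr_0+1)\gtr_0$ (equivalently $\symp_0\geq0$) are precisely those under which Lemma~\ref{lem-mass-transfer}, hence all of its growth bounds, applies globally in $t$, by Proposition~\ref{prop-HFB-gwp}.
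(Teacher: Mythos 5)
Your argument is exactly the paper's intended proof: the corollary is stated there as a direct consequence of Lemma \ref{lem-mass-transfer}, obtained precisely by the algebraic identities $|v_t|^2=\gamma_t$, $u_t^2=1+\gamma_t$ and the domination $0\leq\gamma_t\leq\gtr_t$, with the Dirac part of $\Gamma_t$ dropping out of the energy bound since $E(0)=0$. Your parenthetical remark correctly flags the only blemish, which lies in the paper's statement rather than in your proof: a literal application of Lemma \ref{lem-mass-transfer} yields $\|\gamma_t\|_1\leq e^{2\lambda\vd(\|\gtr_0\|_1/N+1)t}(\|\gtr_0\|_1+1)$ rather than the sharper constant $\|\gtr_0\|_1$ claimed in the corollary, and since $\gamma_t$ is only assumed to satisfy $0\leq\gamma_t\leq\gtr_t$ (not an evolution equation), the proposed direct Gr\"onwall estimate is not available under the stated hypotheses.
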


    \section{Tail estimates}

    In this section, we bound the tail in the perturbation expansion\mh{, i.e., \eqref{eq-f-tail} and the corresponding terms in the evolutions of $\Phi$ and $g$.} Recall from \eqref{def-Gamt} and \eqref{def-Sigt} that 
    \begin{align}
        \Gamt \, & = \, (1+2\fplus)\gamt+\fplus+N\vol|\phit|^2\delta \, , \\
        \Sigt \, & = \, (1+2\fplus)\sigt+ N\vol(\phit)^2\delta \, .
    \end{align}

    \begin{proposition}[Bound on Fluctuation dynamics]\label{prop-fluct-dynam-bound}
        Assume that $\hat{v}\in L^1_{\sqrt{1+E}}\cap L^\infty(\lattice)$, and that $\vol\geq1$. Let $(\phit,\Gamt,\Sigt)\in C^0_t\big([0,T),\hfbspace^1\big)\cap C^1_t\big([0,T),\hfbspace^{-1}\big)$ denote the global solution of \eqref{eq-HFB-system} with initial datum 
        \begin{equation}
            (\phit_0,\Gamt_0,\Sigt_0) \, = \, (\phi_0,(1+2\fplus)\gamma_0+\fplus+N\vol|\phi_0|^2\delta,(1+2\fplus)\sigma_0+N\vol\phi_0^2\delta) \in \hfbspace^1 \, .
        \end{equation}
        Then for any $\ell\in\N$, there exist constants $C_\ell,K_\ell>0$ such that for all $t>0$, we have that
        \begin{align}
            \MoveEqLeft\Big\|(\nb+\vol)^{\frac\ell2} \fluc(t)(\nb+\vol)^{-\frac\ell2}\big(1+\frac{\nb}{N\vol}\big)^{-\frac12}\Big\| \\
            & \leq \, C_\ell \Big(1+\frac{(1+\nd{f_0})^2(1+\nd{\gamma_0})^2}N\Big)^{\frac\ell2} (1+\nd{f_0})^{\frac\ell2}(1+\nd{\gamma_0})^\ell \\
            & \qquad e^{K_\ell\vd\lambda\vol(1+\frac{(1+\nd{f_0})(1+\nd{\gamma_0})}N)t} \, .
        \end{align}
    \end{proposition}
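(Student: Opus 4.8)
The plan is to control $\fluc(t)$ by a Gronwall argument at the level of its generator $\Hfluc(t)$, using the explicit kernels from Lemma~\ref{lem-hfluc} together with the a priori HFB bounds of Corollary~\ref{cor-bog-coeff} and Lemma~\ref{lem-mass-transfer}. Concretely, fix $\ell\in\N$ and $\psi$ in a dense domain, put $\Phi_t:=\fluc(t)\psi$, and consider $M_\ell(t):=\langle\Phi_t,(\nb+\vol)^\ell(1+\tfrac{\nb}{N\vol})\Phi_t\rangle$ (the extra weight is what will make the estimate close). Since $i\partial_t\fluc(t)=\Hfluc(t)\fluc(t)$ and $\Hfluc(t)$ is self-adjoint,
\begin{equation}
  \partial_t M_\ell(t) \, = \, i\,\big\langle \Phi_t,\,\big[\Hfluc(t),(\nb+\vol)^\ell(1+\tfrac{\nb}{N\vol})\big]\,\Phi_t\big\rangle \, ,
\end{equation}
so the whole statement reduces to an operator bound of the form $\pm i[\Hfluc(t),(\nb+\vol)^\ell(1+\tfrac{\nb}{N\vol})]\lesssim r(t)\,(\nb+\vol)^\ell(1+\tfrac{\nb}{N\vol})$ with an explicit, $t$-integrable rate $r(t)\lesssim C_\ell\,\vd\lambda\vol\big(1+\tfrac{(1+\fd)(1+\gd)}{N}\big)$.

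To obtain this, decompose $\Hfluc(t)=\HBEC(t)+\HHFB(t)+\Hcub(t)+\Hquart(t)$ as in \eqref{eq-Hfluc-decomp-0} and split each summand into its particle-number-conserving and number-changing parts; only the latter contribute to the commutator. For a monomial $a^{\#_1}_{q_1}\cdots a^{\#_j}_{q_j}$ changing $\nb$ by $d\in\{\pm1,\dots,\pm4\}$, use the CCR pull-through $(\nb+\vol)^\ell a^{\#}_q=a^{\#}_q(\nb+\vol\mp\vol)^\ell$ to rewrite the commutator as a sum of $O_\ell(1)$ terms, each of the schematic form $(\nb+\vol)^{\ell-1-k}\times(\text{number-changing monomial})\times(\nb+\vol)^{k}$ with a combinatorial coefficient of size $O_\ell(\vol)$ relative to one power of $\nb$ (an induction on $\ell$ organizes this cleanly). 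Estimate the resulting bilinear forms by Cauchy--Schwarz together with $\int\dx{p}\,a^{\#}_p\lesssim\nb^{1/2}$ and $a^{\#}_p\lesssim(\vol\nb)^{1/2}$ from Lemmata~\ref{lem-prod}, \ref{lem-fg-bound}, \ref{lem-wick-ord-bd}, absorbing all surviving powers of $\nb$ back into $(\nb+\vol)^\ell$. The scalar prefactors are $L^1$- and $L^\infty$-type norms of the kernels in Lemma~\ref{lem-hfluc}, i.e.\ norms of products of $\hat v$, $u_t=\sqrt{1+\gamma_t}$, $v_t=\sigma_t/\sqrt{1+\gamma_t}$, $\phi_t$ and $f_0$; these are bounded via $\vd$, via the boundedness of $|\phi_t|$ from Lemma~\ref{lem-mass-transfer}, and via $\|v_t\|_\infty^2=\|\gamma_t\|_\infty$, $\|v_t\|_2^2=\|\gamma_t\|_1$, $\|v_t\|_{L^2_E}^2=\|\gamma_t\|_{L^1_E}$ from Corollary~\ref{cor-bog-coeff}. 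Since the HFB initial datum here is $(\phi_0,(1+2\fplus)\gamma_0+\fplus+N\vol|\phi_0|^2\delta,(1+2\fplus)\sigma_0+N\vol\phi_0^2\delta)$, one has $\|\gtr_0\|_1+\|\gtr_0\|_\infty\lesssim(1+\fd)(1+\gd)$, which is how the claimed factors $(1+\fd)^{\ell/2}(1+\gd)^\ell$ enter (one such factor per power in $\ell$). The contributions from $\HBEC(t)$ and $\Hcub(t)$ that carry a factor $\sqrt{N\vol}\,\phi_t$ or $\lambda\vol^{-1/2}N^{-1/2}$ together with the condensate modes $\ad_0,a_0$ are precisely the ones that do not close against $(\nb+\vol)^\ell$ alone: here the number-changing monomial is $\ad_0$ or $a_0$ with coefficient of size $O(\lambda\vd\sqrt{N\vol})$, producing, relative to $(\nb+\vol)^\ell$, a term that is reabsorbed exactly by the weight $1+\tfrac{\nb}{N\vol}$ and the $N^{-1}$ corrections in the exponent.

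Having established the commutator bound uniformly on $[0,T)$, Gronwall applied to $M_\ell(t)$ gives $M_\ell(t)\le e^{K_\ell\vd\lambda\vol(1+\frac{(1+\fd)(1+\gd)}{N})t}M_\ell(0)$; since $M_\ell(0)=\langle\psi,(\nb+\vol)^\ell(1+\tfrac{\nb}{N\vol})\psi\rangle$, taking the supremum over $\psi$ and rewriting in operator-norm form yields the stated estimate, the polynomial prefactor $C_\ell(1+\tfrac{(1+\fd)^2(1+\gd)^2}{N})^{\ell/2}(1+\fd)^{\ell/2}(1+\gd)^\ell$ arising from the $\ell$-fold use of the kernel bounds (with the $N^{-1}$ terms tracking the $\sqrt{N\vol}\phi_t$ cross terms). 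The main obstacle is the commutator estimate itself: the $\nb$-bookkeeping must be done so that \emph{no} power of $\nb$ is lost—otherwise the bound degrades to one with the wrong number weight—while simultaneously the $\sqrt{N\vol}$-scaled BEC terms in $\HBEC$ and $\Hcub$ must be shown to close against $(1+\tfrac{\nb}{N\vol})^{1/2}$ rather than producing uncontrolled growth in $N$; the cubic term $\Hcub$ and the off-diagonal quartic part $\Hquart-\Hquartcon$ are the delicate cases, since they change $\nb$ by two and up to four units and carry the full strength of $u_t,v_t$.
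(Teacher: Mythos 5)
Your overall strategy---a single Gr\"onwall inequality for $M_\ell(t)=\scp{\fluc(t)\psi}{(\nb+\vol)^\ell(1+\frac{\nb}{N\vol})\fluc(t)\psi}$ driven by a commutator bound for the full generator $\Hfluc(t)$---is genuinely different from the paper's, and as written it has a gap at exactly the point you flag as ``the main obstacle''. The assertion that the problematic contributions are ``reabsorbed exactly by the weight $1+\frac{\nb}{N\vol}$'' does not hold: for the cubic term one has $\|\Hcub(t)P_n\|\lesssim\frac{\lambda}{\sqrt N}(n+\vol)^{3/2}$, so its commutator with $(\nb+\vol)^\ell(1+\frac{\nb}{N\vol})$ is of size $\frac{\lambda}{\sqrt N}(\nb+\vol)^{\ell+\frac12}(1+\frac{\nb}{N\vol})$, which exceeds $r(t)\,(\nb+\vol)^\ell(1+\frac{\nb}{N\vol})$ by the unbounded factor $\frac{(\nb+\vol)^{1/2}}{\sqrt N}\sim\vol^{1/2}(1+\frac{\nb}{N\vol})^{1/2}$; the $\ad\ad\ad\ad$ part of $\Hquart(t)$ behaves the same way against the $\frac{\nb}{N\vol}$ part of the weight. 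Putting the weight inside the Gr\"onwall functional therefore does not close the differential inequality; it merely relocates the loss. What is actually needed is an a priori estimate of the form $\frac{1}{N\vol}\jb{(\nb+\vol)^{\ell+1}}_t\lesssim\jb{(\nb+\vol)^{\ell}}_t+(\mbox{initial data carrying one extra weight})$, and producing that estimate is where the real work lies.

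The paper avoids this difficulty by not running a Gr\"onwall argument on $\fluc(t)$ at all. It factorizes $\fluc(t)=e^{iS_t}\bpropd(t)\bogd[k_t]\,\whfluc(t)\,\bog[k_0]$ with $\whfluc(t)=\weyld[\sqrt{N\vol}\phi_t]e^{-it\cH_N}\weyl[\sqrt{N\vol}\phi_0]$, bounds the Bogoliubov conjugations statically via Lemma \ref{lem-bog-dynamics} combined with the HFB a priori bounds of Corollary \ref{cor-bog-coeff} (this is where the factors $(1+\fd)^{\ell/2}(1+\nd{\gamma_0})^{\ell}$ and part of the exponential enter), uses that $\bprop(t)$ commutes with $\nb$, and reserves the Gr\"onwall argument for $\whfluc(t)$ alone (Lemma \ref{lem-weyl-fluct-dynamics}). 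There the hierarchy is closed by the exact identity \eqref{eq-nb-tUn-com}, namely $[\nb,\whfluc(t)]=-\sqrt{N\vol}(\overline{\phi}_ta_0+\phi_t\ad_0)\whfluc(t)+\sqrt N\whfluc(t)(a_0+\ad_0)$, which relies on $[\nb,\cH_N]=0$ and the shift property of Weyl operators and yields precisely the missing estimate \eqref{eq-nb-IH-2}. No analogue of this identity is available for the full $\fluc(t)$, since $[\nb,\bog[k_t]]$ is a genuinely quadratic operator. To repair your argument you would either have to prove the hierarchy-closing estimate for $\fluc(t)$ directly (which essentially forces you back to the factorization) or adopt the paper's decomposition. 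A secondary inaccuracy: after the renormalization conditions \eqref{eq-ren-cond-2}, $\HBEC(t)$ and $\Hcub(t)$ inside $\Hfluc(t)$ carry coefficients of order $\lambda/\sqrt N$, not $\sqrt{N\vol}\,\phi_t$; the large condensate coefficients occur only in the generator of the Weyl-conjugated dynamics, a further indication that the two generators should not be conflated.
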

    \begin{proof}
        Lemma \ref{lem-bog-dynamics} implies
        \begin{align}
            \big\|(\nb+\vol)^{\frac\ell2} \bog[k_0](\nb+\vol)^{-\frac\ell2}\big\| \, \lesssim_\ell & \, (1+\|\gamma_0\|_1+\|\gamma_0\|_\infty)^{\frac\ell2} \, ,\label{eq-nb-bog-flucdyn-bd}\\
            \big\|(\nb+\vol)^{\frac\ell2} \bog[k_t](\nb+\vol)^{-\frac\ell2}\big\| \, \lesssim_\ell & \, (1+\|\gamma_t\|_1+\|\gamma_t\|_\infty)^{\frac\ell2} \, .
        \end{align}
        By definition \eqref{def-gtr-str}, we have that $(\Gamt)^T=(1+2\fplus)\gamt+\fplus$. Corollary \ref{cor-bog-coeff} then implies
        \begin{align}\label{eq-nb-bogt-HFB-bd}
            \begin{split}
            \MoveEqLeft \big\|(\nb+\vol)^{\frac\ell2} \bog[k_t](\nb+\vol)^{-\frac\ell2}\big\| \\
            &\lesssim_\ell \, e^{\ell\lambda \vd(\frac{\|(\Gamt_0)^T\|_1}N+1)t}(1+\|(\Gamt_0)^T\|_1+\|(\Gamt_0)^T\|_\infty)^{\frac\ell2}\\
            &\lesssim_\ell \, e^{\ell\lambda \vd(\frac{(1+2\|f_0\|_\infty)\|\gamma_0\|_1+\|f_0\|_1}N+1)t}(1+\nd{f_0})^{\frac\ell2}(1+\nd{\gamma_0})^{\frac\ell2} \, .
            \end{split}
        \end{align}
        Lemma \ref{lem-weyl-fluct-dynamics} implies that for $\whfluc(t):=\weyld[\sqrt{N\vol}\phi_t]e^{-it\cH_N}\weyl[\sqrt{N\vol}\phi_0]$, we have that
        \begin{align}\label{eq-BEC-fluc-dyn}
        \begin{split}
		\MoveEqLeft\Big\|(\nb+\vol)^{\frac\ell2} \whfluc(t)(\nb+\vol)^{-\frac\ell2}\big(1+\frac{\nb}{N\vol}\big)^{-\frac12}\Big\| \\
        & \leq \, C_\ell \Big(1+\frac{(1+\nd{f_0})^2(1+\nd{\gamma_0})^2}N\Big)^{\frac\ell2} e^{K_\ell\vd\lambda\vol(1+\frac{(1+\nd{f_0})(1+\nd{\gamma_0})}N)t} \, .
        \end{split}
		\end{align}
        Collecting \eqref{eq-nb-bog-flucdyn-bd}, \eqref{eq-nb-bogt-HFB-bd}, and \eqref{eq-BEC-fluc-dyn}, we obtain
        \begin{align}
            \MoveEqLeft\Big\|(\nb+\vol)^{\frac\ell2} \fluc(t)(\nb+\vol)^{-\frac\ell2}\big(1+\frac{\nb}{N\vol}\big)^{-\frac12}\Big\| \\
            & \leq \, C_\ell \Big(1+\frac{(1+\nd{f_0})^2(1+\nd{\gamma_0})^2}N\Big)^{\frac\ell2} (1+\nd{f_0})^{\frac\ell2}(1+\nd{\gamma_0})^\ell \\
            & \qquad e^{K_\ell\vd\lambda\vol(1+\frac{(1+\nd{f_0})(1+\nd{\gamma_0})}N)t} \, ,
        \end{align}
        which concludes the proof.
    \end{proof}

    \begin{proposition}\label{prop-hfluc-est}
        Under the same assumptions as in Proposition \ref{prop-fluct-dynam-bound}, we have that
        \begin{align}
            \MoveEqLeft \frac{\|\HBEC(t) P_n\|}{\sqrt{n+\vol}}, \, \sqrt{N}\frac{\|\HHFBod(t) P_n\|}{n+\vol}, \, \sqrt{N}\frac{\|\HHFBd(t) P_n\|}{n}, \, \frac{\|\Hcub(t) P_n\|}{(n+\vol)^{\frac32}}, \, \sqrt{N}\frac{\|\Hquart(t) P_n\|}{(n+\vol)^2} \\
            & \leq \, C\frac{\lambda}{\sqrt{N}}e^{C\lambda\vd(1+\frac{(1+\fd)(1+\nd{\gamma_0})}N)t}\vd\fd (1+\fd)^2 (1+\nd{\gamma_0})^2 \, . 
        \end{align}
    \end{proposition}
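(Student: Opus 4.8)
The plan is to bound each of the five operators $\HBEC(t)$, $\HHFBod(t)$, $\HHFBd(t)$, $\Hcub(t)$, $\Hquart(t)$ separately by the same three-step recipe: first insert its explicit normal-ordered expression in terms of the renormalized HFB fields $(\phit,\gamt,\sigt,\Omt)$; then estimate the resulting coefficient kernels by products of $\vd$, $\fd$ and norms of the Bogoliubov coefficients $u_t,v_t$ and of $\phi_t$; and finally convert the kernel bounds into bounds on $\|\cdot\,P_n\|$ via the elementary monomial estimates of Lemmata~\ref{lem-prod}, \ref{lem-fg-bound} and \ref{lem-wick-ord-bd}. Since each of the five operators is a finite sum of such monomials, taking a worst-case constant at the end produces the claimed uniform bound.

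For the first step, the structurally essential input is that the renormalization conditions \eqref{eq-ren-cond-2} hold, so that $\HBEC(t)=-\Hcubcon(t)$ and $\HHFBd(t)=-\Hquartcond{t}$, $\HHFBod(t)=-\Hquartconod{t}$; the corresponding closed forms, with the renormalized fields substituted, are supplied by Lemma~\ref{lem-HBEC-HHFB-with-HFB-eq}. In particular $\HBEC(t)$ equals the contracted cubic vertex \eqref{eq-hcub-con}, carrying an explicit prefactor $\tfrac{\lambda\sqrt\vol}{\sqrt N}$ and a kernel that is $f_0$ convolved against a bounded combination of $u_t,v_t,\phi_t,\hat v$, while $\HHFBd(t)$ and $\HHFBod(t)$ equal the contracted quartic vertex \eqref{eq-hquart-con}, carrying the prefactor $\tfrac\lambda N$; this is exactly why the statement multiplies $\|\HHFBod(t)P_n\|$ and $\|\HHFBd(t)P_n\|$ by $\sqrt N$. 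For $\Hcub(t)$ and $\Hquart(t)$ one uses the explicit kernels $\bbf{\cdot}{\cdot}_t$ of \eqref{def-bbf03}--\eqref{def-bbf22} from Lemma~\ref{lem-hfluc}, with prefactors $\tfrac{\lambda\sqrt\vol}{\sqrt N}$ and $\tfrac\lambda N$ respectively. It is crucial to notice here that, after renormalization, $\HBEC$ and $\HHFB$ no longer contain the $\sqrt{N\vol}\,\phi$-sized terms present in the bare theory: Lemma~\ref{lem-HBEC-HHFB-with-HFB-eq} shows that imposing the HFB equations \eqref{eq-HFB-ren-2} and \eqref{eq-Omt-explicit} cancels every $N$-enhanced contribution, so these operators inherit the smallness of the contracted vertices; without this the linear operator $\HBEC$ alone would be of size $\sqrt N$ and the estimate would be false.

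For the second and third steps, the kernels are controlled using Corollary~\ref{cor-bog-coeff}, which gives $\|u_t\|_\infty^2=1+\|\gamma_t\|_\infty$, $\|v_t\|_\infty^2=\|\gamma_t\|_\infty$ and $\|v_t\|_2^2=\|\gamma_t\|_1$ in terms of $\gtr_t$, together with Lemma~\ref{lem-mass-transfer} for $|\phi_t|$ and for $\|\gtr_t\|_1,\|\gtr_t\|_\infty$. Since the relevant initial total field obeys $(\Gamt_0)^T=(1+2\fplus)\gamma_0+\fplus$, one has $\|\gtr_0\|_1,\|\gtr_0\|_\infty\lesssim(1+\fd)(1+\nd{\gamma_0})$, hence every Bogoliubov-coefficient norm is $\lesssim(1+\fd)(1+\nd{\gamma_0})\,e^{c\lambda\vd(1+\frac{(1+\fd)(1+\nd{\gamma_0})}N)t}$; since each cubic or quartic kernel contains at most two $u_t$- or $v_t$-factors, and the contracted vertices carry one additional such factor plus one convolution against $f_0$, one obtains the overall power $(1+\fd)^2(1+\nd{\gamma_0})^2$, the single factor $\fd$ (from the $f_0$-convolution in $\Hcubcon,\Hquartcon$), the single factor $\vd$ (from $\|\hat v\|_1,\|\hat v\|_\infty\le\vd$), and the exponential time factor in the statement. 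For the third step, the bounds $\|a_p^{\#}P_n\|\le\sqrt{\vol(n+\vol)}$, $\|\int\dx{p}\,a_p^{\#}P_n\|\lesssim\sqrt{n+\vol}$ and their multilinear analogues (Lemmata~\ref{lem-prod}, \ref{lem-fg-bound}, \ref{lem-wick-ord-bd}) turn a degree-$k$ normal-ordered monomial with an $L^2$-bounded kernel into an operator whose norm on $P_n$ is $\lesssim(n+\vol)^{k/2}$ times the kernel norm, with the refinement that a purely diagonal monomial $\int\dx{p}\,w(p)\ad_pa_p$ produces only the factor $n$ rather than $n+\vol$; this reproduces exactly the denominators $\sqrt{n+\vol}$, $n+\vol$, $n$, $(n+\vol)^{3/2}$, $(n+\vol)^2$ in the statement, using $\vol\ge1$ to replace $n$ or $n+1$ by $n+\vol$ where convenient. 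Assembling the three steps and absorbing the finitely many residual powers of $\vol$ into the constant gives the asserted estimate.

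I expect the principal difficulty to lie in the second step for the cubic and quartic vertices: in each monomial of \eqref{def-bbf03}--\eqref{def-bbf22} one must decide, after using the momentum-conservation $\delta$'s to eliminate one momentum, which of the factors $u_t,v_t,\phi_t,\hat v$ should carry an $L^2$ (or $\ell^2$) norm so that every remaining momentum summation is controlled by $\|v_t\|_2$, $\|\hat v\|_1$ or $\|\gamma_t\|_1$ instead of by a divergent sum; making this assignment uniformly over all terms, and over the contracted vertices as well, is what pins down precisely the announced power $(1+\fd)^2(1+\nd{\gamma_0})^2$ rather than a larger one. A secondary but essential point, already noted above, is the structural verification via Lemma~\ref{lem-HBEC-HHFB-with-HFB-eq} that the renormalized dynamics genuinely removes all $N$-enhanced pieces of $\HBEC$ and $\HHFB$.
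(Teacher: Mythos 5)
Your proposal follows essentially the same route as the paper's proof: both insert the renormalized expressions from Lemma \ref{lem-HBEC-HHFB-with-HFB-eq} (for $\HBEC$, $\HHFBd$, $\HHFBod$) and Lemma \ref{lem-hfluc} (for $\Hcub$, $\Hquart$), bound the kernels via Lemma \ref{lem-mass-transfer} and Corollary \ref{cor-bog-coeff}, and convert to operator bounds on $P_n$ with Lemmata \ref{lem-prod}, \ref{lem-fg-bound} and \ref{lem-wick-ord-bd}. Your structural remark that the HFB renormalization is what removes the $N$-enhanced pieces of $\HBEC$ and $\HHFB$, and your note on the diagonal term yielding the denominator $n$, correctly identify the points the paper treats implicitly when it declares the remaining four bounds ``analogous.''
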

    \begin{proof}
        Recalling from \eqref{eq-gamma-sigma-rel-0} that $|\sigma|^2=(1+\gamma)\gamma$, Lemmata \ref{lem-HBEC-HHFB-with-HFB-eq} and \ref{lem-wick-ord-bd} imply 
        \begin{align}
            \|\HBEC(t) P_n\| \, \leq& \, C\frac{\lambda\vol}{\sqrt{N}}(u_t(0)+|v_t(0)|)|\phi_t|\\
            & \, (|(\fplus\sigma_t)*\hat{v}(0)|+|\big((1+2\gamma_t)\fplus\big)*(\hat{v}+\hat{v}(0))(0)|)\sqrt{n+1} \\
            \leq& \, C\frac{\lambda\vol}{\sqrt{N}}\vd (u_t(0)+|v_t(0)|)|\phi_t| \fd(1+\nd{\gamma_t})\sqrt{n+1} \, .
        \end{align}
        Now we employ Lemma \ref{lem-mass-transfer} and Corollary \ref{cor-bog-coeff}, and obtain
        \begin{align}
            \frac{\|\HBEC(t) P_n\|}{\sqrt{n+1}} \, \leq& \, C\frac{\lambda }{\sqrt{N}} e^{C\lambda\vd(1+\frac{\|\gtr_0\|_1}N)t} \vd\fd  \Big(1+\frac{\|\gtr_0\|_1}N\Big)^{\frac12}(1+\nd{\gtr_0})^{\frac32} \\
            \leq& \, C\frac{\lambda }{\sqrt{N}} e^{C\lambda\vd(1+\frac{(1+\fd)(1+\nd{\gamma_0})}N)t}\vd\fd (1+\fd)^2 (1+\nd{\gamma_0})^2 \ . 
        \end{align}
        The bounds on $\|\HHFBod(t) P_n\|$ and $\|\HHFBd(t) P_n\|$, see Lemma \ref{lem-HBEC-HHFB-with-HFB-eq}, and on $\|\Hcub(t) P_n\|$ and $\|\Hquart(t) P_n\|$, see Lemma \ref{lem-hfluc}, can be obtained in an analogous fashion, using Lemmata \ref{lem-fg-bound} and \ref{lem-wick-ord-bd}, respectively. This concludes the proof.
    \end{proof}

    \begin{corollary}
        Under the same assumptions as in Proposition \ref{prop-fluct-dynam-bound}, and for $\vol\leq N$, we have that
        \begin{align}
            \MoveEqLeft|\int_{[0,t]^3}\dx{\bs_3}\jb{[[[f[J],\Hfluc(s_1)],\Hfluc(s_2)],\Hfluc(s_3)]}_{s_3}|\\
            \leq& \, C_{\fd}\|J\|_2 \frac1{N^{\frac32}} e^{C\lambda\vol\vd(1+\frac{(1+\fd)(1+\nd{\gamma_0})}N)t}(1+\nd{\gamma_0})^6\vol^{\frac{11}2} \, .
        \end{align}
    \end{corollary}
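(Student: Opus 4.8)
The plan is to estimate this triple-commutator tail by reducing it, term by term, to the operator-norm bounds already assembled in Propositions~\ref{prop-hfluc-est} and~\ref{prop-fluct-dynam-bound}, in the spirit of the corresponding estimate in \cite{chenhott}. First I would use the relative-state identity $\jb{A}_{s_3}=\jb{\flucd(s_3)A\fluc(s_3)}_0$ to rewrite the integrand as $\jb{\flucd(s_3)\,[[[f[J],\Hfluc(s_1)],\Hfluc(s_2)],\Hfluc(s_3)]\,\fluc(s_3)}_0$, and then insert the decomposition $\Hfluc(s_i)=\HBEC(s_i)+\HHFB(s_i)+\Hcub(s_i)+\Hquart(s_i)$ (further split into diagonal/off-diagonal and contracted/uncontracted pieces, cf.\ Lemma~\ref{lem-hfluc}) into each of the three inner factors. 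This produces finitely many terms; the commutator rule of Remark~\ref{rem-com-rule} organizes the nested brackets and, since each right argument must be paired at least once, caps the number-operator weight that can be generated.

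For each resulting term I would pass to operator norms on the Fock sectors $P_n$. The leftmost factor $f[J]=\int\dx{p}J(p)\ad_pa_p$ is controlled by $\|J\|_2$ times a fixed power of $(\nb+\vol)$ via Lemmata~\ref{lem-prod} and~\ref{lem-fg-bound} (Cauchy--Schwarz in the momentum sum is what yields $\|J\|_2$ here rather than $\|J\|_\infty$). Each factor $\Hfluc(s_i)$ is controlled by Proposition~\ref{prop-hfluc-est}: it contributes $\tfrac{\lambda}{\sqrt N}$, a factor $e^{C\lambda\vd(1+\frac{(1+\fd)(1+\nd{\gamma_0})}{N})s_i}$, and powers of $(\nb+\vol)$; crucially the pieces $\HHFB$ and $\Hquart$ each carry an extra $N^{-1/2}$, so the worst behaviour in $N$ comes from taking all three factors of $\HBEC/\Hcub$ type and equals $\lambda^3N^{-3/2}$. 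The accumulated powers $(\nb+\vol)^{m/2}$ are then commuted to the left past $\fluc(s_3)$ using Proposition~\ref{prop-fluct-dynam-bound} (the hypothesis $\vol\le N$ absorbs the $1+\tfrac{\nb}{N\vol}$ corrections there), at the cost of further $(1+\fd)$ and $(1+\nd{\gamma_0})$ factors; together with the $\gamma_t$-dependence of the Bogoliubov coefficients through Corollary~\ref{cor-bog-coeff} this accounts for the power $(1+\nd{\gamma_0})^6$. What remains is an initial-state moment $\jb{(\nb+\vol)^{m}(1+\tfrac{\nb}{N\vol})}_0$, bounded by $\vol^{m}$ via Lemma~\ref{lem-num-mom}; tracking the maximal weight $m$ produced by three nested commutators of cubic operators against the quadratic $f[J]$ gives the factor $\vol^{11/2}$.

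Finally I would integrate over the ordered simplex $\{0\le s_3\le s_2\le s_1\le t\}$: bounding each $e^{C\lambda\vd\vol s_i}\le e^{C\lambda\vd\vol t}$ and each of the three time integrations by $t$, one picks up $\lambda^3 t^3$, which, together with the exponential, is absorbed into $e^{C\lambda\vol\vd(1+\ldots)t}$ after enlarging $C$ (using $(\lambda t)^3\lesssim e^{\lambda t}$). Collecting the purely $\fd$-dependent constants yields the stated bound. I expect the main obstacle to be the bookkeeping in the middle step: verifying that among all the terms produced by the substitution the $N$-scaling is never worse than $\lambda^3N^{-3/2}$, i.e., that every appearance of $\HHFB$ or $\Hquart$ genuinely gains an $N^{-1/2}$ and that no commutator recombination spoils this, while simultaneously keeping the number-operator weight $m$ sharp enough that Proposition~\ref{prop-fluct-dynam-bound} and the initial-state moment bound both apply and produce exactly $\vol^{11/2}$ and $(1+\nd{\gamma_0})^6$.
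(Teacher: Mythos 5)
Your proposal follows essentially the same route as the paper's proof: bound each nested commutator factor via Proposition \ref{prop-hfluc-est} together with the operator-product bound of Lemma \ref{lem-prod} (yielding $\|J\|_2\,\lambda^3 t^3 N^{-3/2}$ times a number-operator moment at time $s$), transfer that moment to the initial state with Proposition \ref{prop-fluct-dynam-bound}, evaluate it with Lemma \ref{lem-num-mom} to produce $\vol^{11/2}$, and absorb $\lambda^3t^3$ and the $\vol\leq N$ corrections into the exponential. The bookkeeping you flag as the main risk is exactly what those two propositions are designed to package, so the argument closes as you describe.
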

    
    \begin{proof}
        Proposition \ref{prop-hfluc-est} and Lemma \ref{lem-prod} imply
        \begin{align}
            \MoveEqLeft|\int_{[0,t]^3}\dx{\bs_3}\jb{[[[f[J],\Hfluc(s_1)],\Hfluc(s_2)],\Hfluc(s_3)]}_{s_3}|\\
            \leq& \, C\|J\|_2 \frac{\lambda^3t^3}{N^{\frac32}}e^{C\lambda\vd(1+\frac{(1+\fd)(1+\nd{\gamma_0})}N)t}\vd^3\fd^3 \\
            & \, (1+\fd)^6 (1+\nd{\gamma_0})^6\sup_{s\in[0,t]}\jb{(\nb+\vol)^{1+\frac92}\Big(1+\frac{(\nb+\vol)^{\frac12}}{\sqrt{N}}\Big)^3}_s \, .
        \end{align}
        Proposition \ref{prop-fluct-dynam-bound} then yields
        \begin{align}
            \MoveEqLeft|\int_{[0,t]^3}\dx{\bs_3}\jb{[[[f[J],\Hfluc(s_1)],\Hfluc(s_2)],\Hfluc(s_3)]}_{s_3}|\\
            \leq& \, C\|J\|_2 \frac{\lambda^3t^3}{N^{\frac32}}e^{C\lambda\vol\vd(1+\frac{(1+\fd)(1+\nd{\gamma_0})}N)t}\vd^3\fd^3 \\
            & \, (1+\fd)^6 (1+\nd{\gamma_0})^6\jb{(\nb+\vol)^{\frac{11}2}\Big(1+\frac{(\nb+\vol)^{\frac12}}{\sqrt{N}}\Big)^3\Big(1+\frac{\nb}{N\vol}\Big)}_0 \, .
        \end{align}
        Finally, we employ Lemma \ref{lem-num-mom} and obtain
        \begin{align}
            \MoveEqLeft|\int_{[0,t]^3}\dx{\bs_3}\jb{[[[f[J],\Hfluc(s_1)],\Hfluc(s_2)],\Hfluc(s_3)]}_{s_3}|\\
            \leq& \, C_{\fd}\|J\|_2 \frac{\lambda^3t^3}{N^{\frac32}}e^{C\lambda\vol\vd(1+\frac{(1+\fd)(1+\nd{\gamma_0})}N)t}\vd^3\\
            & \, (1+\nd{\gamma_0})^6\vol^{\frac{11}2}\Big(1+\frac{\vol^{\frac12}}{N^{\frac12}}\Big)^3 \\
            \leq& \, C_{\fd}\|J\|_2 \frac1{N^{\frac32}} e^{C\lambda\vol\vd(1+\frac{(1+\fd)(1+\nd{\gamma_0})}N)t}(1+\nd{\gamma_0})^6\vol^{\frac{11}2} \, .
        \end{align}
    \end{proof}

\appendix 

\section{Calculation of the fluctuation dynamics}
	
	\begin{lemma}\label{lem-bogoliubov}
		We have that 
		\begin{align}
			\bogd[k]a_p\bog[k] =&  \cosh\big(|k(p)|\big)a_p \, + \, \sinh\big(|k(p)|\big)\frac{k(p)}{|k(p)|}\ad_{-p} \, .
		\end{align} 	
	\end{lemma}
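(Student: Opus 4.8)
The statement is the standard formula for the Bogoliubov transform of the annihilation operator. The plan is to differentiate the conjugated operator with respect to an interpolation parameter and solve the resulting linear ODE. Concretely, for $s\in[0,1]$ set $B(s) := \exp\bigl(\tfrac{s}{2}\int\dx{q}(k(q)\ad_q\ad_{-q}-\overline{k}(q)a_qa_{-q})\bigr)$, so that $B(1)=\bog[k]$ and $B(0)=\1$, and define the operator-valued function $A_p(s) := B(s)^\dagger a_p B(s)$. Since $k$ is even by assumption, the generator $\cG := \tfrac12\int\dx{q}(k(q)\ad_q\ad_{-q}-\overline{k}(q)a_qa_{-q})$ is anti-self-adjoint, and $\frac{d}{ds}A_p(s) = B(s)^\dagger [a_p,\cG] B(s)$.

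The next step is to compute the commutator $[a_p,\cG]$ using the CCR \eqref{eq-CCR}, namely $[a_p,\ad_q]=\vol\delta_{p,q}$. A short calculation gives $[a_p,\ad_q\ad_{-q}] = \vol\delta_{p,q}\ad_{-q} + \vol\delta_{p,-q}\ad_q$, and hence, using that $k$ is even, $[a_p,\cG] = k(p)\ad_{-p}$. Substituting back, one finds $\frac{d}{ds}A_p(s) = k(p)\, B(s)^\dagger \ad_{-p} B(s) =: k(p)\, \overline{A}_{-p}(s)$, where $\overline{A}_{-p}(s) := B(s)^\dagger \ad_{-p} B(s)$. Differentiating once more, and using $[\ad_{-p},\cG] = \overline{k}(-p)a_p = \overline{k}(p)a_p$ (again by evenness of $k$), one obtains the closed second-order system
\begin{align}
\frac{d}{ds}A_p(s) &= k(p)\,\overline{A}_{-p}(s)\,,\\
\frac{d}{ds}\overline{A}_{-p}(s) &= \overline{k}(p)\, A_p(s)\,,
\end{align}
equivalently $\frac{d^2}{ds^2}A_p(s) = |k(p)|^2 A_p(s)$, with initial data $A_p(0)=a_p$ and $\frac{d}{ds}A_p(s)\big|_{s=0} = k(p)\ad_{-p}$.

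The final step is to solve this linear ODE explicitly. The general solution is $A_p(s) = \cosh(|k(p)|s)\,C_1 + \sinh(|k(p)|s)\,C_2$ with $C_1 = a_p$ and $|k(p)|C_2 = k(p)\ad_{-p}$, i.e. $C_2 = \tfrac{k(p)}{|k(p)|}\ad_{-p}$. Evaluating at $s=1$ yields $\bogd[k]a_p\bog[k] = \cosh(|k(p)|)\,a_p + \sinh(|k(p)|)\tfrac{k(p)}{|k(p)|}\ad_{-p}$, as claimed. The only mild subtlety — and the point that deserves a sentence of care rather than being a genuine obstacle — is the justification that these operator-valued ODEs hold in a suitable strong sense on a dense domain (e.g. finite-particle vectors), so that the scalar ODE manipulations are legitimate; this is standard for quadratic generators and the $\delta_{p,\pm q}$ bookkeeping at $p=0$ (where the two delta contributions coincide but evenness of $k$ makes the formula consistent) just needs to be tracked honestly.
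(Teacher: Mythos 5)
Your proposal is correct and follows essentially the same route as the paper: interpolate with $\bog[\tau k]$, compute the commutators $[a_p,\cG]=k(p)\ad_{-p}$ and $[\ad_{-p},\cG]=\overline{k}(p)a_p$ via the CCR and evenness of $k$, and solve the resulting closed linear system with initial condition $\bog[0]=\1$. The only cosmetic difference is that you reduce to a scalar second-order ODE where the paper exponentiates the $2\times2$ matrix $\begin{pmatrix}0 & k(p)\\ \overline{k}(p) & 0\end{pmatrix}$ directly using $M^2=|k(p)|^2\1$; both yield the same $\cosh/\sinh$ formula.
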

	\prf
		Observe that
		\begin{align}
		\partial_\tau\begin{pmatrix}\bogd[\tau k]a_p\bog[\tau k]\\ \bogd[\tau k]\ad_{-p}\bog[\tau k]\end{pmatrix} = &
		\begin{pmatrix}
			 \frac12\int dq \, k(q)\bogd[\tau k][a_p,\ad_q\ad_{-q}]\bog[\tau k] \\
			 -\frac12\int dq \, k(q)\bogd[\tau k][\ad_{-p},a_qa_{-q}]\bog[\tau k]
		\end{pmatrix}\\
		=&  \begin{pmatrix}
			0 & k(p) \\
			\overline{k}(p) & 0
		\end{pmatrix}		
		\begin{pmatrix}
			\bogd[\tau k]a_p\bog[\tau k]\\ \bogd[\tau k]\ad_{-p}\bog[\tau k]
		\end{pmatrix} \, ,
		\end{align} 
		where we used the fact that $k$ is even. Employing the fact that $\bog[0]=\1$ and $\begin{pmatrix}
			0 & k(p) \\
			\overline{k}(p) & 0
		\end{pmatrix}^2=|k(p)|^2\1$, we find that
		\begin{align}
			\begin{pmatrix}\bogd[k]a_p\bog[k]\\ \bogd[ k]\ad_{-p}\bog[k]\end{pmatrix} =&  \exp\Big(\begin{pmatrix}
				0 & k(p) \\
				\overline{k}(p) & 0
			\end{pmatrix}\Big) \begin{pmatrix}a_p\\ \ad_{-p}\end{pmatrix}\\
			=&  \Big[\cosh(|k(p)|)\1+\frac{\sinh(|k(p)|)}{|k(p)|}\begin{pmatrix}
				0 & k(p) \\
				\overline{k}(p) & 0
			\end{pmatrix}\Big]\begin{pmatrix}a_p\\ \ad_{-p}\end{pmatrix} \, .
		\end{align} 
		This finishes the proof.
	\endprf

	\begin{lemma}\label{lem-hfluc}
		Let $\Hfluc(t)$ be defined by \eqref{def-fluc} and \eqref{eq-fluc-dyn}. Recall from \eqref{def-bbf03}--\eqref{def-bbf22} the definitions of $\bbf{j}{j}$. By choosing
		\begin{align}
			S_t \, = \, & \vol \int_0^t \dx{s} \int \dx{p} \Big[E(p)\gamma_s(p) \, + \, \frac{\lambda}{2N}\Gamo_s*(\hat{v}+\hat{v}(0))(p) \Gamo_s(p) \\
			& + \, \frac{\lambda}{2N}\Sigob_s*\hat{v}(p) \Sigo_s(p) \, - \, N\vol^2 \lambda|\phi_s|^4\hat{v}(0) \\
			&  - \, \frac{N\vol\delta(p)\Re\big(\overline{\phi}_si\partial_s \phi_s\big)}2 \, - \,  \frac{\Re\big(\overline{\sigma}_s(p)i\partial_s\sigma_s(p)\big)}{2\big(1+\gamma_s (p)\big)}\Big] \, , 
		\end{align} 
		we then have that 
		\begin{align}
		\Hfluc(t) \, = \, \HBEC(t)  \, + \, \HHFB(t) \, + \, \Hcub(t) \, + \, \Hquart(t) \, ,
		\end{align} 
		 where 
		 \begin{align}
		 	\HBEC(t) \, = & \, \sqrt{N\vol}\Big[u_t(0)\Big( -i\partial_t \phi_t  +  \lambda\vol\hat{v}(0)|\phi_t|^2 \phi_t +  \frac{\lambda}{N}(\sigma_t* \hat{v})(0)\overline{\phi}_t\\
		 	& + \, \frac{\lambda}{N} \big(\gamma_t*(\hat{v}+\hat{v}(0))\big) (0) \phi_t\Big) \, + \, v_t(0)\Big( -\overline{i\partial_t \phi_t} +  \lambda\vol\hat{v}(0) |\phi_t|^2\overline{\phi}_t \\
		 	& +  \frac{\lambda}{N}(\overline{\sigma}_t* \hat{v})(0)\phi_t + \frac{\lambda}{N} \big(\gamma_t*(\hat{v}+\hat{v}(0))\big) (0) \overline{\phi}_t\Big)\Big] e^{i\int_0^t \dx{s} \Omega_s(0)}\ad_0 \, + \, \mathrm{h.c.} \label{def-HBEC}
		 \end{align}  
		 is the BEC-Hamiltonian,
		 \begin{align}
		 \HHFB(t) \, = \, \HHFBd(t) \, + \, \HHFBod(t) 
		 \end{align}
		 is the HFB-Hamiltonian, 
		 \begin{align}
		 \HHFBd(t) \, = & \, \int \dx{p} \Big[ -\Omega_t(p)  -  \frac{\Re\big(\overline{\sigma}_t(p)i\partial_t\sigma_t(p)\big)}{1+\gamma_t(p)}\\
		 & +\,  \Big(E(p)+\frac{\lambda}N\big((\gamma_t+N\vol|\phi_t|^2\delta)*(\hat{v}+\hat{v}(0))\big)(p)\Big)\big(1+2\gamma_t(p)\big) \\
		 & + \, \frac{2\lambda}N\Re\Big(\big((\overline{\sigma}_t+N\vol\overline{\phi}_t^2\delta)*\hat{v}\big)(p)\sigma_t(p)\Big)\Big]\ad_p a_p  \label{def-HHFBd}
		 \end{align}
		 refers to its diagonal part, 
		 \begin{align}
		 \HHFBod(t) \, = \, & \int \dx{p} \Big[-\frac{i\partial_t \sigma_t(p)}2+\frac{\sigma_t(p)i\partial_t \gamma_t(p)}{2(1+\gamma_t(p))}\\
		 & + \,   \Big(E(p)+\frac{\lambda}N\big((\gamma_t+N\vol|\phi_t|^2\delta)*(\hat{v}+\hat{v}(0))\big)(p)\Big)\sigma_t(p)  \\
		 & + \, \frac{\lambda}{2N}\Big( \big((\sigma_t+N\vol\phi_t^2\delta)*\hat{v}\big)(p)(1+\gamma_t(p)) \\
		 & + \, \big((\overline{\sigma}_t+N\vol\overline{\phi}_t^2\delta)*\hat{v}\big)(p)\frac{\sigma_t(p)^2}{1+\gamma_t(p)}\Big) \Big] e^{2i\int_0^t \dx{\tau} \Omega_\tau(p)}\ad_p\ad_{-p} \, + \, \mathrm{h.c.} \label{def-HHFBod}
		 \end{align} 
		 to its off-diagonal part,
		 \begin{align}
		 	\Hcub(t) \, = \, & \frac{\lambda}{\sqrt{N}} \int \dx{\bp_3} \Big(\delta(p_1+p_2+p_3)e^{i\int_0^t \dx{\tau} \big(\Omega_\tau(p_1)+\Omega_\tau(p_2)+\Omega_\tau(p_3)\big)}\\
		 	&\frac1{3!} \bbf{0}{3}_t(\bp_3)\ad_{p_1}\ad_{p_2}\ad_{p_3}\\
		 	& + \,  \delta(p_1+p_2-p_3)e^{i\int_0^t \dx{\tau} \big(\Omega_\tau(p_1)+\Omega_\tau(p_2)-\Omega_\tau(p_3)\big)}\\
		 	& \frac1{2!}\bbf{1}{2}_t(\bp_3)\ad_{p_1}\ad_{p_2}a_{p_3}\\
		 	& + \, h.c.\Big) \label{def-Hcub}
		 \end{align} 
		 describes the cubic processes, and
		 \begin{align}
			 \Hquart(t) \, = \, & \frac{\lambda}{N} \int \dx{\bp_4}  \Big( \delta(p_1+p_2+p_3+p_4)e^{i\int_0^t \dx{\tau} \big(\Omega_\tau(p_1)+\Omega_\tau(p_2)+\Omega_\tau(p_3)+\Omega_\tau(p_4)\big)}\\
			 & \frac1{4!}\bbf{0}{4}_t(\bp_4) \ad_{p_1}\ad_{p_2}\ad_{p_3}\ad_{p_4}\\
			 & + \, \delta(p_1+p_2+p_3-p_4)e^{i\int_0^t \dx{\tau} \big(\Omega_\tau(p_1)+\Omega_\tau(p_2)+\Omega_\tau(p_3)-\Omega_\tau(p_4)\big)}  \\
			& \frac1{3!}\bbf{1}{3}_t(\bp_4)\ad_{p_1}\ad_{p_2}\ad_{p_3}a_{p_4}\\
			& + \, h.c. \\
			& + \,  \delta(p_1+p_2-p_3-p_4)e^{i\int_0^t \dx{\tau} \big(\Omega_\tau(p_1)+\Omega_\tau(p_2)-\Omega_\tau(p_3)-\Omega_\tau(p_4)\big)}  \\
			& \frac1{(2!)^2}\bbf{2}{2}_t(\bp_4)\ad_{p_1}\ad_{p_2}a_{p_3}a_{p_4}\Big) \label{def-Hquart} 
		 \end{align} 
		 describes quartic interactions.
	\end{lemma}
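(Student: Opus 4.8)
The proof is a direct computation: we differentiate the defining relation \eqref{def-fluc} in $t$, use the transformation rules \eqref{eq-weyl}, \eqref{eq-bog-rot}, \eqref{eq-bog-prop}, normal-order the resulting operator, and match it degree by degree in the creation/annihilation operators against the claimed decomposition \eqref{eq-Hfluc-decomp-0}. Writing $\fluc(t) = e^{iS_t}\bpropd(t)\bogd[k_t]\whfluc(t)\bog[k_0]$ with $\whfluc(t) = \weyld[\sqrt{N\vol}\phi_t]e^{-it\cH_N}\weyl[\sqrt{N\vol}\phi_0]$ and applying $i\partial_t$ by the product rule, the derivative of $e^{-it\cH_N}$ produces the principal term $\bpropd(t)\bogd[k_t]\bigl(\weyld[\sqrt{N\vol}\phi_t]\cH_N\weyl[\sqrt{N\vol}\phi_t]\bigr)\bog[k_t]\bprop(t)$, while the derivatives of the three prefactors produce ``gauge'' corrections: since $e^{iS_t}$ is a scalar it contributes $i\partial_t S_t$; since $\bprop(t)$ commutes with $\int\dx p\,\Omega_t(p)\ad_p a_p$, differentiating $\bpropd(t)$ contributes exactly $-\int\dx p\,\Omega_t(p)\ad_p a_p$ (cf.\ \eqref{def-bogprop}); differentiating $\bogd[k_t]$ contributes $i\,\bpropd(t)\bigl((\partial_t\bogd[k_t])\bog[k_t]\bigr)\bprop(t)$; and differentiating $\weyld[\sqrt{N\vol}\phi_t]$ contributes $i\,\bpropd(t)\bogd[k_t]\bigl((\partial_t\weyld[\sqrt{N\vol}\phi_t])\weyl[\sqrt{N\vol}\phi_t]\bigr)\bog[k_t]\bprop(t)$.

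The two genuinely new ingredients are the last two gauge terms. For the Weyl piece, $(\partial_t\weyld[\sqrt{N\vol}\phi_t])\weyl[\sqrt{N\vol}\phi_t]$ is computed in the standard way for a Weyl operator with time-dependent argument; it equals $-\sqrt{N\vol}\bigl(\overline{\partial_t\phi_t}\,a_0-\partial_t\phi_t\,\ad_0\bigr)$ up to a purely imaginary $c$-number, i.e.\ an operator linear in $a_0^{\#}$ plus a scalar. For the Bogoliubov piece, $\cB_t:=(\partial_t\bogd[k_t])\bog[k_t]$ is anti-self-adjoint and quadratic; differentiating the identity $\bogd[k_t]a_p\bog[k_t]=u_t(p)a_p+v_t(p)\ad_{-p}$ from \eqref{eq-bog-rot} and comparing $[\cB_t,\,\cdot\,]$ on both sides pins $\cB_t$ down in terms of $\partial_t u_t,\partial_t v_t$, hence, via $\gamma_t=|v_t|^2$, $\sigma_t=u_tv_t$ and \eqref{eq-gamma-sigma-rel-0}, in terms of $\partial_t\gamma_t,\partial_t\sigma_t$. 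Together with the term $-\int\dx p\,\Omega_t(p)\ad_p a_p$ and the quadratic part of the conjugated Hamiltonian, this yields precisely the time-derivative combinations $-\tfrac{i\partial_t\sigma_t}{2}+\tfrac{\sigma_t\,i\partial_t\gamma_t}{2(1+\gamma_t)}$ in \eqref{def-HHFBod} and $-\Omega_t(p)-\tfrac{\Re(\overline{\sigma}_t\,i\partial_t\sigma_t)}{1+\gamma_t(p)}$ in \eqref{def-HHFBd}, with the phases $e^{2i\int_0^t\Omega_\tau(p)\dx\tau}$ attached to the $\ad_p\ad_{-p}$ part by the final conjugation with $\bprop(t)$.

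It then remains to expand $\weyld[\sqrt{N\vol}\phi_t]\cH_N\weyl[\sqrt{N\vol}\phi_t]$ by inserting the condensate shift $a_p\mapsto a_p+\widehat{\sqrt{N\vol}\phi_t}(p)$ of \eqref{eq-weyl} (cf.\ \eqref{eq-dens-weyl-trafo}), conjugate by $\bog[k_t]$ via $a_p\mapsto u_t(p)a_p+v_t(p)\ad_{-p}$, conjugate by $\bprop(t)$ via $a_p\mapsto e^{-i\int_0^t\Omega_s(p)\dx s}a_p$, and normal-order. Sorting by the number of operators surviving: removing one condensate insertion from each of the four interaction factors, after the Bogoliubov rotation assembles the $u$/$v$ products, gives the linear BEC Hamiltonian \eqref{def-HBEC}; removing two insertions, together with the free term $\int\dx p\,E(p)\ad_p a_p$, the normal-ordering $c$-numbers, and the gauge quadratics of the previous paragraph, gives the quadratic HFB Hamiltonian $\HHFBd+\HHFBod$; removing one insertion gives the cubic term \eqref{def-Hcub}, where the six ways of distributing $u$/$v$ onto $\ad_{p_1}\ad_{p_2}a_{p_3}$ (resp.\ $\ad_{p_1}\ad_{p_2}\ad_{p_3}$) under the momentum constraint reproduce exactly the kernels $\bbf{1}{2}$ and $\bbf{0}{3}$ of \eqref{def-bbf12}, \eqref{def-bbf03}; and removing no insertion gives the quartic term \eqref{def-Hquart}, with the analogous $u$/$v$-expansions producing $\bbf{0}{4},\bbf{1}{3},\bbf{2}{2}$, the arguments of $\hat v$ being the momentum differences $p_i\pm p_j$ dictated by the interaction. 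Finally all $c$-number contributions — from the Weyl phase, from the normal-ordering factors $\delta_\lattice(0)=\vol$, and from the complete contractions of $\cH_N$ against the first and second moments $\phi_t\delta$, $\vol\Gamo_t$, $\vol\Sigo_t$ — are collected into one scalar, and the stated choice of $S_t$ (with $S_0=0$, so that $i\partial_t S_t$ cancels it) removes it; this is the argument used in \cite{chenhott}. The main obstacle is purely organizational: tracking the $u$/$v$-combinatorics together with the symmetrization factors $\tfrac1{j!(4-j)!}$ so that the conjugated quartic interaction matches the kernels $\bbf{j}{j}$ exactly as defined, and reconciling the several sources of quadratic terms (free part, double condensate insertion, $\cB_t$, and normal-ordering corrections) against the precise coefficients in \eqref{def-HHFBd}--\eqref{def-HHFBod}.
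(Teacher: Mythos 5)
Your proposal follows essentially the same route as the paper's proof: product-rule differentiation of the conjugated dynamics, explicit computation of the Weyl and Bogoliubov gauge terms, conjugation and normal-ordering of $\weyld[\sqrt{N\vol}\phi_t]\cH_N\weyl[\sqrt{N\vol}\phi_t]$, degree-by-degree matching against the kernels $\bbf{j}{k}$, and absorption of all scalars into $S_t$. The one deviation is that you determine $(\partial_t\bogd[k_t])\bog[k_t]$ from its adjoint action rather than from the integral formula $[\partial_t e^{-A(t)}]e^{A(t)}=\int_0^1\dx{\tau}\,e^{-A(t)\tau}(-\partial_tA(t))e^{A(t)\tau}$ used in the paper; be aware that the commutator pins this quadratic generator down only modulo a central scalar, and since the lemma asserts a \emph{specific} $S_t$ (containing the term $-\Re\big(\overline{\sigma}_s i\partial_s\sigma_s\big)/(2(1+\gamma_s))$ per mode), that scalar must still be computed explicitly, e.g.\ by normal-ordering the symmetrically ordered $\ad_pa_p+a_p\ad_p$ contribution that the differentiation actually produces.
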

	\prf
		By definition, we have that
		\begin{align}
			\Hfluc(t) =&   -\partial_t S_t \, - \, \int \dx{p} \Omega_t(p) \ad_pa_p \, + \, \bpropd(t) [i\partial_t \bogd[k_t]]\bog[k_t]\bprop(t) \\
			& + \, \bpropd(t) \bogd[k_t][i\partial_t \weyld[\sqrt{N\vol}\phi_t]]\weyl[\sqrt{N\vol}\phi_t]\bog[k_t]\bprop(t) \\
			& + \, \bpropd(t) \bogd[k_t] \weyld[\sqrt{N\vol}\phi_t]\cH_N\weyl[\sqrt{N\vol}\phi_t]\bog[k_t]\bprop(t) \, . \label{eq-hfluc-raw-0}
		\end{align} 
		In ascending order, we define $\partial_t S_t$ till $\Hquart(t)$ to be the normal-ordered zeroth till fourth order polynomials in $\Hfluc(t)$. 
		\par In order to calculate $[i\partial_t \bogd(t)]\bog(t)$, we use the fact that
		\begin{align} 
			\big[\partial_te^{-A(t)}\big]e^{A(t)} =&  \lim_{h\to 0} \frac1h \int_0^1 \dx{\tau} \partial_\tau\Big(e^{-A(t+h)\tau}e^{A(t)\tau}\Big)\\
			=&  \int_0^1 \dx{\tau} e^{-A(t)\tau}[-\partial_tA(t)]e^{A(t)\tau} \, , \label{eq-partialt-uni}
		\end{align} 
		see, e.g., the proof of Proposition 3.2 in \cite{BPS2}. Accordingly, we have that
		\begin{align}
			[\partial_t \bogd[k_t]]\bog[k_t] =&  \frac12\int_0^1 \dx{\tau} \bogd[\tau k_t]\int \dx{p} \big(\partial_t\overline{k_t}(p)a_pa_{-p} \, - \, \partial_tk_t(p)\ad_p\ad_{-p}\big) \bog[\tau k_t]\\
			=&  \frac12\int \dx{p} \int_0^1 \dx{\tau} \Big(\big(\cosh(\tau|k_t(p)|)^2\partial_t\overline{k_t}(p)\\
			&- \, \sinh(\tau|k_t(p)|)^2\frac{\overline{k_t}(p)^2}{|k_t(p)|^2}\partial_tk_t(p)\big)a_pa_{-p} \, - \, h.c.\\
			& + \, \sinh(\tau|k_t(p)|)\cosh(\tau|k_t(p)|)\frac{k_t(p)\partial_t\overline{k_t}(p)-\partial_tk_t(p)\overline{k_t}(p)}{|k_t(p)|}\\
			&\big(\ad_p a_p \, + \, a_p \ad_p\big) \Big)\, .
		\end{align} 
		Using the trigonometric identities for $\sinh$ and $\cosh$, we can simplify the expression to be
		\begin{align}
			[\partial_t \bogd[k_t]]\bog[k_t] =&  \frac14\int \dx{p} \Big[ \Big(\frac{\sinh(2|k_t(p)|)+1}{2|k_t(p)|}\partial_t\overline{k_t}(p)\\
			& - \, \frac{\sinh(2|k_t(p)|)-1}{2|k_t(p)|}\frac{\overline{k_t}(p)^2}{|k_t(p)|^2}\partial_tk_t(p)\Big)a_pa_{-p} \, - \, h.c. \\
			&+ \, 4\frac{\cosh(2|k_t(p)|)-1}{2|k_t(p)|}\frac{i\Im\big(k_t(p)\partial_t\overline{k_t}(p)\big)}{|k_t(p)|}\big(\ad_p a_p \, + \, \frac\vol2\big)\Big] \\
			=&  \frac12 \int \dx{p} \Big[\Big( u_t(p)\overline{v_t}(p) \frac{i\Im\big(k_t(p)\partial_t\overline{k_t}(p)\big)}{|k_t(p)|^2}\\
			& + \, \frac{\overline{k_t}(p)\Re\big(k_t(p)\partial_t\overline{k_t}(p)\big)}{|k_t(p)|^2}\Big)a_pa_{-p} \, - \, h.c. \\
			&+ \, 2 |v_t(p)|^2\frac{i\Im\big(k_t(p)\partial_t\overline{k_t}(p)\big)}{|k_t(p)|^2}\big(\ad_p a_p \, + \, \frac\vol2\big)\Big] \, . \label{eq-bog-intermed-0}
		\end{align} 
		A straightforward calculation yields
		\begin{align}
			\partial_t u_t(p) =&  v_t(p) \frac{\overline{k_t}(p)\Re\big(k_t(p)\partial_t\overline{k_t}(p)\big)}{|k_t(p)|^2} \, , \\
			\partial_t \overline{v_t}(p) =&  u_t(p)\frac{\overline{k_t}(p)\Re\big(k_t(p)\partial_t\overline{k_t}(p)\big)}{|k_t(p)|^2} \, + \, \overline{v_t}(p)\frac{i\Im\big(k_t(p)\partial_t\overline{k_t}(p)\big)}{|k_t(p)|^2} \, . 
		\end{align} 
		In particular, we have that 
		\begin{align}
			\MoveEqLeft \frac{\overline{k_t}(p)\Re\big(k_t(p)\partial_t\overline{k_t}(p)\big)}{|k_t(p)|^2} \, + \, u_t(p)\overline{v_t}(p)\frac{i\Im\big(k_t(p)\partial_t\overline{k_t}(p)\big)}{|k_t(p)|^2}\\
			=&  u_t(p)	\partial_t \overline{v_t}(p) \, - \, \overline{v_t}(p)\partial_t u_t(p) \\
			=&  u_t(p)^2\partial_t\big(\frac{\overline{v_t}(p)}{u_t(p)}\big) \, . \label{eq-bog-aa-coeff-0}
		\end{align} 
		Because $u_t(p)^2=1+|v_t(p)|^2$, we can invoke that 
		\begin{align}
		u_t(p)\partial_t u_t(p) \, = \, \frac12\partial_t u_t(p)^2 \,=\, \frac12\partial_t |v_t(p)|^2 \, = \, \Re\big(v_t(p)\partial_t \overline{v_t}(p)\big) \, ,
		\end{align}
		in order to obtain that
		\begin{align}
			\MoveEqLeft |v_t(p)|^2\frac{i\Im\big(k_t(p)\partial_t\overline{k_t}(p)\big)}{|k_t(p)|^2}\\
			=&  v_t(p)\partial_t \overline{v_t}(p)-u_t(p)\partial_t u_t(p)\\
			=&  i\Im\big(v_t(p)\partial_t \overline{v_t}(p)\big) \\
			=&  v_t(p)^2\partial_t \big(\frac{\overline{v_t}(p)}{v_t(p)}\big) \, . \label{eq-bog-ada-coeff-0}
		\end{align} 
		Using the notation $\gamma_t(p)=|v_t(p)|^2$, $\sigma_t(p)=u_t(p)v_t(p)$, and employing \eqref{eq-bog-aa-coeff-0}, \eqref{eq-bog-ada-coeff-0}, we can hence simplify \eqref{eq-bog-intermed-0} as
		\begin{align}
			[i\partial_t \bogd[k_t]]\bog[k_t] =&  \frac{i}2 \int \dx{p} \Big( (1+\gamma_t(p))\partial_t\big(\frac{\overline{\sigma}_t(p)}{1+\gamma_t(p)}\big)a_pa_{-p} \, - \, h.c. \\
			& + \, 2 \frac{\sigma_t(p)^2}{1+\gamma_t(p)}\partial_t\big(\frac{\overline{\sigma}_t(p)}{\sigma_t(p)}\big)\big(\ad_p a_p \, + \, \frac\vol2\big)\Big) \\
			=&  -\int \dx{p} \Big[\Big(\frac{\overline{i\partial_t \sigma_t}(p)}2+\frac{\overline{\sigma}_t(p)i\partial_t \gamma_t(p)}{2(1+\gamma_t(p))}\Big) a_pa_{-p} \, + \, \mathrm{h.c.} \\
			& - \,  \frac{\Im\big(\sigma_t(p)\partial_t\overline{\sigma}_t(p)\big)}{1+\gamma_t(p)}\Big(\ad_p a_p \, + \, \frac\vol2\Big)\Big] \label{eq-bog-intermed-1}
		\end{align} 
		Observing that $\Im(z)=\Re(i\overline{z})$, and using \eqref{eq-bog-prop}, \eqref{eq-bog-intermed-1} implies
		\begin{align}
			\MoveEqLeft \bpropd(t) [i\partial_t \bogd[k_t]]\bog[k_t]\bprop(t)\\
			 =&  -\int \dx{p} \Big[\Big(\frac{\overline{i\partial_t \sigma_t}(p)}2+\frac{\overline{\sigma}_t(p)i\partial_t \gamma_t(p)}{2(1+\gamma_t(p))}\Big)e^{-2i\int_0^t \dx{s} \Omega_s(p)}a_pa_{-p} \, + \, \mathrm{h.c.} \\
			&  + \,  \frac{\Re\big(\overline{\sigma}_t(p)i\partial_t\sigma_t(p)\big)}{1+\gamma_t(p)}\Big(\ad_p a_p \, + \, \frac\vol2\Big)\Big] \label{eq-bog-dyn-0}
		\end{align}
		Next, using \eqref{eq-partialt-uni} again, we have that 
		\begin{align}\label{eq-har-dyn-0}
            \begin{split}
			& [i\partial_t \weyld[\sqrt{N\vol}\phi_t] ] \weyl[\sqrt{N\vol}\phi_t] \\
			&= \, -\sqrt{N\vol}\int_0^1 \dx{\tau} \weyld[\tau\sqrt{N\vol}\phi_t]\big(i\partial_t \phi_t\ad_0 +\overline{i\partial_t \phi_t} a_0\big)\weyl[\tau\sqrt{N\vol}\phi_t] \\
			&= \, -\sqrt{N\vol}\big(i\partial_t \phi_t\ad_0 +\overline{i\partial_t \phi_t} a_0\big) -\frac{N\vol^2}2\Re\big(\overline{\phi}_ti\partial_t \phi_t\big) \, , 
		\end{split}
        \end{align}
		where we also employed that $\delta(0)=\vol$. 
		\par Finally, we calculate
		\begin{align}
				\MoveEqLeft \weyld[\sqrt{N\vol}\phi_t]\cH_N\weyl[\sqrt{N\vol}\phi_t]\\
				=&  \frac{N\vol^3 \lambda|\phi_t|^4\hat{v}(0)}2 \\
				&+ \, \sqrt{N\vol^3}\lambda|\phi_t|^2\hat{v}(0)\big(\phi_t \ad_0 + \overline{\phi}_ta_0\big) \\
				& + \, \int \dx{p} \big(E(p)+\lambda\vol(\hat{v}(p)+\hat{v}(0))|\phi_t|^2\big)\ad_pa_p \\
				&+ \, \frac{\lambda\vol}2 \int \dx{p} \hat{v}(p) \big(\phi_t^2\ad_p\ad_{-p}+\overline{\phi}_t^2a_pa_{-p}\big)\\
				&+ \, \frac{\lambda\sqrt{\vol}}{\sqrt{N}} \int d\bp_2 \, \hat{v}(p_2)\big(\ad_{p_1}\ad_{p_2}a_{p_{12}} \phi_t \, + \, \ad_{p_{12}}a_{p_2}a_{p_1}\overline{\phi}_t\big) \\
				&+ \, \frac{\lambda}{2N} \int \dx{\bp_4}  \delta(p_1+p_2-p_3-p_4)\hat{v}(p_1-p_3)\ad_{p_1}\ad_{p_2}a_{p_3}a_{p_4} \, . \label{eq-weyl-HN-0}
		\end{align} 	
		In order to calculate 
		\begin{align}
			\bogd[k_t]\weyld[\sqrt{N\vol}\phi_t]\cH_N\weyl[\sqrt{N\vol}\phi_t]\bog[k_t] \, , 
		\end{align} 
		we calculate the transformations for the scalar terms, the terms linear in $a$, $\ad$, etc. in $\weyld[\sqrt{N\vol}\phi_t]\cH_N\weyl[\sqrt{N\vol}\phi_t]$. We have that
		\begin{align}
			\MoveEqLeft \sqrt{N\vol^3}\lambda|\phi_t|^2\hat{v}(0)\bogd[k_t]\big(\phi_t \ad_0 + \overline{\phi}_ta_0\big)\bog[k_t]\\
			=&  \sqrt{N\vol^3}\lambda|\phi_t|^2\hat{v}(0) \Big(\big(\phi_tu_t(0)+\overline{\phi}_tv_t(0)\big)\ad_0 \, + \, \big(\overline{\phi}_tu_t(0)+\phi_t\overline{v_t}(0)\big)a_0\Big) \, . \label{eq-twhwt-1}
		\end{align} 
		For the quadratic terms, we have that
		\begin{align}
			\MoveEqLeft \int \dx{p} \big(E(p)+\lambda\vol(\hat{v}(p)+\hat{v}(0))|\phi_t|^2\big)\bogd[k_t]\ad_pa_p\bog[k_t]\\
			& + \, \frac{\lambda\vol}2 \int \dx{p} \hat{v}(p) \bogd[k_t]\big(\phi_t^2\ad_p\ad_{-p}+\overline{\phi}_t^2a_pa_{-p}\big)\bog[k_t] \\
			=&  \int \dx{p} \Big[\big(E(p)+\lambda\vol(\hat{v}(p)+\hat{v}(0))|\phi_t|^2\big)(u_t(p)^2\ad_pa_p+|v_t(p)|^2a_p\ad_p)\\
			& + \, \frac{\lambda\vol}{2}\hat{v}(p)\big(u_t(p)\overline{v_t}(p)\phi_t^2   + u_t(p)v_t(p)\overline{\phi}_t^2 \big)\big(\ad_pa_p+a_p\ad_p\big)\\
			&+ \, \big(E(p)+\lambda\vol(\hat{v}(p)+\hat{v}(0))|\phi_t|^2\big)u_t(p)\big(v_t(p)\ad_p\ad_{-p} + \overline{v_t}(p)a_pa_{-p}\big)\\
			&+ \, \frac{\lambda\vol}2\hat{v}(p)\Big(\big(u_t(p)^2\phi_t^2+v_t(p)^2\overline{\phi}_t^2\big)\ad_p\ad_{-p}\\
			& + \, \big(\overline{v_t}(p)^2\phi_t^2 + u_t(p)^2\overline{\phi}_t^2\big)a_pa_{-p}\Big)\Big]\\
			=&  \int \dx{p} \Big[\Big(\big(E(p)+\lambda\vol(\hat{v}(p)+\hat{v}(0))|\phi_t|^2\big)\gamma_t(p) \\
			& + \frac{\lambda\vol}2 \hat{v}(p) \big( \overline{\sigma}_t(p)  \phi_t^2 +  \sigma_t(p)\overline{\phi}_t^2\big)\Big)\vol \\
			& + \, \Big( \big(E(p)+\lambda\vol(\hat{v}(p)+\hat{v}(0))|\phi_t|^2\big)\big(1+2\gamma_t(p)\big) \\
			& + \, \lambda\vol \hat{v}(p)\big(\overline{\sigma}_t(p)\phi_t^2+\sigma_t(p)\overline{\phi}_t^2\big)\Big) \ad_pa_p \\
			& + \, \Big( \big(E(p)+\lambda\vol(\hat{v}(p)+\hat{v}(0))|\phi_t|^2\big)\sigma_t(p)  \\
			& + \, \frac{\lambda\vol}2 \hat{v}(p)\big( (1+\gamma_t(p))\phi_t^2 +\frac{\sigma_t(p)^2}{1+\gamma_t(p)} \overline{\phi}_t^2\big)\Big)\ad_p\ad_{-p} \, + \, \mathrm{h.c.}\Big] \, , \label{eq-quad-rot-0}
		\end{align} 
		where the hermitian conjugate 'h.c.' only refers to the terms proportional to $\ad_p\ad_{-p}$. Bogoliubov-rotating the cubic terms yields
		\begin{align}
			\MoveEqLeft \frac{\lambda\sqrt{\vol}}{\sqrt{N}} \int d\bp_2 \, \hat{v}(p_2)\bogd[k_t]\big(\ad_{p_1}\ad_{p_2}a_{p_{12}} \phi_t \, + \, \ad_{p_{12}}a_{p_2}a_{p_1}\overline{\phi}_t\big)\bog[k_t]\\
			=&  \frac{\lambda\sqrt{\vol}}{\sqrt{N}} \int d\bp_2 \, \frac{\hat{v}(p_1)+\hat{v}(p_2)}{2}\Big(\big(u_t(p_1)\ad_{p_1}+\overline{v_t}(p_1)a_{-p_1}\big)\\
			&\big(u_t(p_2)\ad_{p_2}+\overline{v_t}(p_2)a_{-p_2}\big)\big(u_t(p_{12})a_{p_{12}}+v_t(p_{12})\ad_{-p_{12}}\big) \phi_t  \, + \, \mathrm{h.c.}\Big) \, . \label{eq-cub-rot-0}
		\end{align} 
		After normal-ordering, the expressions linear in $a$, $\ad$ in \eqref{eq-cub-rot-0} are given by
		\begin{align}
			\MoveEqLeft \frac{\lambda\sqrt{\vol}}{\sqrt{N}} \int \dx{p}  \Big[\Big(\hat{v}(p)\big(u_t(p)\overline{v_t}(p)v_t(0)\phi_t\, + \, u_t(p)v_t(p)u_t(0)\overline{\phi}_t\big)\\
			& + |v_t(p)|^2\big(\hat{v}(p)+\hat{v}(0)\big)\big(u_t(0)\phi_t+v_t(0)\overline{\phi}_t \big)\Big)\ad_0 \,+ \, h.c.\Big] \\
			=&  \frac{\lambda\sqrt{\vol}}{\sqrt{N}} \int \dx{p} \Big[u_t(0)\Big(\hat{v}(p)\sigma_t(p)\overline{\phi}_t+  \big(\hat{v}(p)+\hat{v}(0)\big) \gamma_t(p) \phi_t\Big) \\
			& + \, v_t(0)\Big(\hat{v}(p)\overline{\sigma}_t(p)\phi_t+ \big(\hat{v}(p)+\hat{v}(0)\big) \gamma_t(p) \overline{\phi}_t\Big) \Big] \ad_0 \, + \,  h.c. \, . \label{eq-cub-rot-NO-lin-0}
		\end{align} 
		Similarly, the cubic terms obtained from normal-ordering \eqref{eq-cub-rot-0} are given by
		\begin{align}
			\MoveEqLeft \frac{\lambda\sqrt{\vol}}{2\sqrt{N}} \int \dx{\bp_3} \big(\hat{v}(p_1)+\hat{v}(p_2)\big)\delta(p_1+p_2-p_3)\\
			&\Big(\big(u_t(p_1)u_t(p_2)u_t(p_3)\phi_t+v_t(p_1)v_t(p_2)\overline{v_t}(p_3)\overline{\phi}_t\big)\ad_{p_1}\ad_{p_2}a_{p_3}\\
			& + \, \big( u_t(p_1)u_t(p_2)v_t(p_3)\phi_t + v_t(p_1)v_t(p_2)u_t(p_3)\overline{\phi}_t\big)\ad_{p_1}\ad_{p_2}\ad_{-p_3}\\
			& + \, 2\big(\overline{v_t}(p_1)u_t(p_2)v_t(p_3)\phi_t+u_t(p_1)v_t(p_2)u_t(p_3)\overline{\phi}_t\big)\ad_{-p_3}\ad_{p_2}a_{-p_1}\\
			& + \, h.c.\Big) \, . \label{eq-cub-rot-NO-cub-0}
		\end{align} 
		Here, we already exploited the symmetry w.r.t. $p_1\leftrightarrow p_2$. Relabeling momenta yields
		\begin{align}
			\MoveEqLeft \frac{\lambda\sqrt{\vol}}{2\sqrt{N}} \int \dx{\bp_3} \Big[\Big(\big(u_t(p_1)u_t(p_2)u_t(p_3)\phi_t+v_t(p_1)v_t(p_2)\overline{v_t}(p_3)\overline{\phi}_t\big)\big(\hat{v}(p_1)+\hat{v}(p_2)\big)\\
			& + 2\big(v_t(p_1)u_t(p_2)\overline{v_t}(p_3)\phi_t+u_t(p_1)v_t(p_2)u_t(p_3)\overline{\phi}_t\big)\big(\hat{v}(p_2)+\hat{v}(p_3)\big)\Big)\\
			& \delta(p_1+p_2-p_3)\ad_{p_1}\ad_{p_2}a_{p_3}\\
			& + \, \big( u_t(p_1)u_t(p_2)v_t(p_3)\phi_t + v_t(p_1)v_t(p_2)u_t(p_3)\overline{\phi}_t\big)\big(\hat{v}(p_1)+\hat{v}(p_2)\big)\\
			& \delta(p_1+p_2+p_3)\ad_{p_1}\ad_{p_2}\ad_{p_3}\\
			& + \, h.c.\Big) \, . 
		\end{align} 
		To conclude the calculation of these cubic terms, we symmetrize the coefficient of $\ad_{p_1}\ad_{p_2}a_{p_3}$ w.r.t. $p_1\leftrightarrow p_2$, and the coefficient of $\ad_{p_1}\ad_{p_2}\ad_{p_3}$ w.r.t. permutations of $(p_1,p_2,p_3)$ to obtain
		\begin{align}
			\MoveEqLeft \frac{\lambda\sqrt{\vol}}{\sqrt{N}} \int \dx{\bp_3} \Big[ \frac1{3!}\delta(p_1+p_2+p_3)\ad_{p_1}\ad_{p_2}\ad_{p_3}\\
			& \Big(\big( u_t(p_1)u_t(p_2)v_t(p_3)\phi_t + v_t(p_1)v_t(p_2)u_t(p_3)\overline{\phi}_t\big)\big(\hat{v}(p_1)+\hat{v}(p_2)\big)\\
			& + \, \big( v_t(p_1)u_t(p_2)u_t(p_3)\phi_t + u_t(p_1)v_t(p_2)v_t(p_3)\overline{\phi}_t\big)\big(\hat{v}(p_2)+\hat{v}(p_3)\big)\\
			& + \, \big( u_t(p_1)v_t(p_2)u_t(p_3)\phi_t + v_t(p_1)u_t(p_2)v_t(p_3)\overline{\phi}_t\big)\big(\hat{v}(p_1)+\hat{v}(p_3)\big)\Big) \\
			& + \, \frac1{2!}\delta(p_1+p_2-p_3)\ad_{p_1}\ad_{p_2}a_{p_3}\\
			& \Big(\big(u_t(p_1)u_t(p_2)u_t(p_3)\phi_t+v_t(p_1)v_t(p_2)\overline{v_t}(p_3)\overline{\phi}_t\big)\big(\hat{v}(p_1)+\hat{v}(p_2)\big)\\
			& + \big(v_t(p_1)u_t(p_2)\overline{v_t}(p_3)\phi_t+u_t(p_1)v_t(p_2)u_t(p_3)\overline{\phi}_t\big)\big(\hat{v}(p_2)+\hat{v}(p_3)\big)\\
			& + \big(u_t(p_1)v_t(p_2)\overline{v_t}(p_3)\phi_t+v_t(p_1)u_t(p_2)u_t(p_3)\overline{\phi}_t\big)\big(\hat{v}(p_1)+\hat{v}(p_3)\big)\Big)\\
			& + \, h.c.\Big) \, . \label{eq-cub-rot-NO-cub-1}
		\end{align} 
		We are left with calculating
		\begin{align}
			\MoveEqLeft \frac{\lambda}{2N} \int \dx{\bp_4}  \delta(p_1+p_2-p_3-p_4)\hat{v}(p_1-p_3)\bogd[k_t]\ad_{p_1}\ad_{p_2}a_{p_3}a_{p_4}\bog[k_t]\\
			=&  \frac{\lambda}{2N} \int \dx{\bp_4}  \delta(p_1+p_2-p_3-p_4)\hat{v}(p_1-p_3)\big(u_t(p_1)\ad_{p_1}+\overline{v_t}(p_1)a_{-p_1}\big)\\
			& \big(u_t(p_2)\ad_{p_2}+\overline{v_t}(p_2)a_{-p_2}\big)\big(u_t(p_3)a_{p_3}+v_t(p_3)\ad_{-p_3}\big)\\
			&\big(u_t(p_4)a_{p_4}+v_t(p_4)\ad_{-p_4}\big) \, . \label{eq-quart-rot-0}
		\end{align} 
		After normal-ordering, the scalar terms come from pairs of annihilation operators $a$ left to a pair of creation operators $\ad$ in \eqref{eq-quart-rot-0}, and they are given by
		\begin{align}
			\MoveEqLeft \frac{\lambda}{2N} \int \dx{\bp_4}  \delta(p_1+p_2-p_3-p_4)\hat{v}(p_1-p_3) \\
			&\Big( \delta(p_1+p_2)\delta(p_3+p_4)\overline{v_t}(p_1)u_t(p_2)u_t(p_3)v_t(p_4)\\
			& + \, \big(\delta(p_1-p_4)\delta(p_2-p_3)+\delta(p_1-p_3)\delta(p_2-p_4)\big)\\
			& \overline{v_t}(p_1)\overline{v_t}(p_2)v_t(p_3)v_t(p_4)\Big) \\
			=&  \frac{\lambda\vol}{2N} \int \dx{p} dq \, \Big(\overline{\sigma}_t(p) \hat{v}(p-q)\sigma_t(q) \, + \, \gamma_t(p)(\hat{v}(p-q)+\hat{v}(0)) \gamma_t(q)\Big) \, . \label{eq-quart-rot-NO-sca-0}
		\end{align} 
		Next, we determine all terms proportional to $aa$ in the normal-ordering of \eqref{eq-quart-rot-0}, that come from having a creation operator $a$ left of an annihilation operator $\ad$, are given by
		\begin{align}
			\MoveEqLeft \frac{\lambda}{2N} \int \dx{\bp_4}  \delta(p_1+p_2-p_3-p_4)\hat{v}(p_1-p_3)\\
			& \Big(\overline{v_t}(p_1)u_t(p_2)u_t(p_3)u_t(p_4)\delta(p_1+p_2)a_{p_3}a_{p_4} \, + \, \overline{v_t}(p_1)\overline{v_t}(p_2)u_t(p_3)v_t(p_4)\\
			&\big(\delta(p_3+p_4)a_{p_1}a_{p_2}+\delta(p_2-p_4)a_{p_1}a_{p_3}+\delta(p_1-p_4)a_{p_2}a_{p_3}\big)\\
			& + \, \overline{v_t}(p_1)\overline{v_t}(p_2)v_t(p_3)u_t(p_4)\big(\delta(p_1-p_3)a_{p_2}a_{p_4}+\delta(p_2-p_3)a_{p_1}a_{p_4}\big)\Big)\\
			=&  \frac{\lambda}{2N} \int \dx{p}dq \, \Big( \overline{\sigma}_t(q) \hat{v}(q-p) (1+\gamma_t(p)) \, + \, \sigma_t(q)\hat{v}(q-p) \frac{\overline{\sigma}_t(p)^2}{1+\gamma_t(p)}\\
			& + \, 2\gamma_t(q)\big(\hat{v}(q-p)+\hat{v}(0)\big)\overline{\sigma}_t(p) \Big) a_pa_{-p} \, . \label{eq-quart-rot-NO-aa-0}
		\end{align} 
		In a similar fashion, we determine all terms involving $\ad a$ in \eqref{eq-quart-rot-0} after normal-ordering to be
		\begin{align}
			\MoveEqLeft \frac{\lambda}{2N} \int \dx{\bp_4}  \delta(p_1+p_2-p_3-p_4)\hat{v}(p_1-p_3)\\
			&\Big(u_t(p_1)\overline{v_t}(p_2)v_t(p_3)u_t(p_4)\delta(p_2-p_3)\ad_{p_1}a_{p_4} \, + \, u_t(p_1)\overline{v_t}(p_2)u_t(p_3)v_t(p_4)\\ 
			& \big(\delta(p_3+p_4)\ad_{p_1}a_{-p_2}+\delta(p_2-p_4)\ad_{p_1}a_{p_3}\big)\, + \, \overline{v_t}(p_1)u_t(p_2)v_t(p_3)u_t(p_4)\\
			&\big(\delta(p_1+p_2)\ad_{-p_3}a_{p_4}+\delta(p_1-p_3)\ad_{p_2}a_{p_4}\big) \, + \, \overline{v_t}(p_1)u_t(p_2)u_t(p_3)v_t(p_4)\\
			& \big(\delta(p_1+p_2)\ad_{-p_4}a_{p_3} +\delta(p_1-p_4)\ad_{p_2}a_{p_3} + \delta(p_3+p_4)\ad_{-p_2}a_{p_1}\big) \, \\
			& + \, \overline{v_t}(p_1)\overline{v_t}(p_2)v_t(p_3)v_t(p_4)\big(\delta(p_1-p_4)\ad_{-p_3}a_{-p_2}+ \delta(p_1-p_3)\ad_{-p_4}a_{-p_2}\\
			&+\delta(p_2-p_3)\ad_{-p_4}a_{-p_1}+\delta(p_2-p_4)\ad_{-p_3}a_{-p_2}\big) \\
			=&  \frac{\lambda}{N} \int \dx{p} dq \, \Big( \gamma_t(q)\big(\hat{v}(q-p)+\hat{v}(0)\big)(1+2\gamma_t(p)) \\
			& + \,  \sigma_t(q)\hat{v}(q-p)\overline{\sigma}_t(p) \, + \,  \overline{\sigma}_t(q)\hat{v}(q-p)\sigma_t(p)\Big) \ad_p a_p \, . \label{eq-quart-rot-NO-ada-0}
		\end{align}
		Here recall that we need to contract each $a$ with each $\ad$ to the right of it. Finally, the quartic, normal-ordered terms coming from \eqref{eq-quart-rot-0} are given by
		\begin{align}
			\MoveEqLeft \frac{\lambda}{4N} \int \dx{\bp_4}  \delta(p_1+p_2-p_3-p_4)\big(\hat{v}(p_1-p_3)+\hat{v}(p_1-p_4)\big)\\
			&\Big( u_t(p_1)u_t(p_2)v_t(p_3)v_t(p_4)\ad_{p_1}\ad_{p_2}\ad_{-p_3}\ad_{-p_4}\\
			& + \, 2u_t(p_1)u_t(p_2)v_t(p_3)u_t(p_4) \ad_{p_1}\ad_{p_2}\ad_{-p_3}a_{p_4}\\
			& + \, 2u_t(p_1)\overline{v_t}(p_2)v_t(p_3)v_t(p_4)\ad_{p_1}\ad_{-p_3}\ad_{-p_4} a_{-p_2}\\
			& + \, h.c. \\
			& + \, \big(u_t(p_1)u_t(p_2)u_t(p_3)u_t(p_4) + v_t(p_1)v_t(p_2)\overline{v_t}(p_3)\overline{v_t}(p_3)\big) \ad_{p_1}\ad_{p_2}a_{p_3}a_{p_4}\\
			& + \, 4u_t(p_1)\overline{v_t}(p_2)v_t(p_3)u_t(p_4)\ad_{p_1}\ad_{-p_3}a_{-p_2}a_{p_4}\Big) \, , 
		\end{align} 
		where the hermitian conjugate refers to the preceding terms. Analogously to \eqref{eq-cub-rot-NO-cub-0}, we already symmetrized the expression w.r.t. $p_1\leftrightarrow p_2$ and $p_3\leftrightarrow p_4$. Relabeling momenta again, we have
		\begin{align}
			\MoveEqLeft \frac{\lambda}{4N} \int \dx{\bp_4}  \Big[ \delta(p_1+p_2+p_3+p_4)\ad_{p_1}\ad_{p_2}\ad_{p_3}\ad_{p_4}\\
			& u_t(p_1)u_t(p_2)v_t(p_3)v_t(p_4)\big(\hat{v}(p_1+p_3)+\hat{v}(p_2+p_3)\big)\\
			& + \, 2\delta(p_1+p_2+p_3-p_4) \ad_{p_1}\ad_{p_2}\ad_{p_3}a_{p_4}\\
			& \Big(u_t(p_1)u_t(p_2)v_t(p_3)u_t(p_4)\big(\hat{v}(p_1+p_3)+\hat{v}(p_2+p_3)\big) \\
			& + \, u_t(p_1)v_t(p_2)v_t(p_3)\overline{v_t}(p_4)\big(\hat{v}(p_1+p_2)+\hat{v}(p_1+p_3)\big) \Big) \\
			& + \, h.c. \\
			& + \,  \delta(p_1+p_2-p_3-p_4)\ad_{p_1}\ad_{p_2}a_{p_3}a_{p_4}\\
			& \Big(\big(u_t(p_1)u_t(p_2)u_t(p_3)u_t(p_4) \, + \, v_t(p_1)v_t(p_2)\overline{v_t}(p_3)\overline{v_t}(p_3)\big)\big(\hat{v}(p_1-p_3)+\hat{v}(p_2-p_3)\big)\\
			& + \, 4u_t(p_1)v_t(p_2)\overline{v_t}(p_3)u_t(p_4)\big(\hat{v}(p_1+p_2)+\hat{v}(p_2-p_3)\big)\Big) \, . \label{eq-quart-rot-NO-quart-0}
		\end{align}
		Let 
		\begin{align}
			a^{(\sigma)}(\bq_\ell) \, := \, \prod_{j=1}^\ell a^{(\sigma)}(q_j) \, , \label{eq-a-vector-0}
		\end{align} 
		see also \eqref{eq-a-set-0}. Symmetrizing the coefficients of $\ad(\bp_n)a(\bk_m)$ in \eqref{eq-quart-rot-NO-quart-0} w.r.t. $\bp_n$ and $\bk_m$, we find that
		\begin{align}
			\MoveEqLeft \frac{\lambda}{N} \int \dx{\bp_4}  \Big[ \frac1{4!}\delta(p_1+p_2+p_3+p_4)\ad_{p_1}\ad_{p_2}\ad_{p_3}\ad_{p_4}\\
			& \Big(\big(u_t(p_1)u_t(p_2)v_t(p_3)v_t(p_4)  +  v_t(p_1)v_t(p_2)u_t(p_3)u_t(p_4)\big)\big(\hat{v}(p_1+p_3)+\hat{v}(p_2+p_3)\big)\\
			& + \, \big(u_t(p_1)v_t(p_2)u_t(p_3)v_t(p_4)+ v_t(p_1)u_t(p_2)v_t(p_3)u_t(p_4)\big)\big(\hat{v}(p_1+p_2)+\hat{v}(p_2+p_3)\big)\\
			& + \, \big(u_t(p_1)v_t(p_2)v_t(p_3)u_t(p_4)+ v_t(p_1)u_t(p_2)u_t(p_3)v_t(p_4)\big)\big(\hat{v}(p_1+p_2)+\hat{v}(p_1+p_3)\big)\Big)\\
			& + \, \frac1{3!}\delta(p_1+p_2+p_3-p_4) \ad_{p_1}\ad_{p_2}\ad_{p_3}a_{p_4}\\
			& \Big(\big(u_t(p_1)u_t(p_2)v_t(p_3)u_t(p_4)+ v_t(p_1)v_t(p_2)u_t(p_3)\overline{v_t}(p_4)\big)\big(\hat{v}(p_1+p_3)+\hat{v}(p_2+p_3)\big) \\
			& + \big(u_t(p_1)v_t(p_2)u_t(p_3)u_t(p_4)+ v_t(p_1)u_t(p_2)v_t(p_3)\overline{v_t}(p_4)\big)\big(\hat{v}(p_1+p_2)+\hat{v}(p_2+p_3)\big) \\
			& + \, \big(v_t(p_1)u_t(p_2)u_t(p_3)u_t(p_4)+ u_t(p_1)v_t(p_2)v_t(p_3)\overline{v_t}(p_4)\big)\big(\hat{v}(p_1+p_2)+\hat{v}(p_1+p_3)\big) \\
			& + \, h.c. \\
			& + \,  \frac1{(2!)^2}\delta(p_1+p_2-p_3-p_4)\ad_{p_1}\ad_{p_2}a_{p_3}a_{p_4} \label{eq-quart-rot-NO-quart-1}\\
			& \Big(\big(u_t(p_1)u_t(p_2)u_t(p_3)u_t(p_4) \, + \, v_t(p_1)v_t(p_2)\overline{v_t}(p_3)\overline{v_t}(p_3)\big)\big(\hat{v}(p_1-p_3)+\hat{v}(p_2-p_3)\big)\\
			& + \, \big(u_t(p_1)v_t(p_2)\overline{v_t}(p_3)u_t(p_4)+v_t(p_1)u_t(p_2)u_t(p_3)\overline{v_t}(p_4)\big)\big(\hat{v}(p_1+p_2)+\hat{v}(p_2-p_3)\big)\\
			& + \, \big(v_t(p_1)u_t(p_2)\overline{v_t}(p_3)u_t(p_4)+u_t(p_1)v_t(p_2)u_t(p_3)\overline{v_t}(p_4)\big)\big(\hat{v}(p_1+p_2)+\hat{v}(p_1-p_3)\big)\Big) \, . 
		\end{align} 
		\par With these calculations, we are ready to identify each of the terms in $\Hfluc$, in ascending order of involved number of annihilation and creation operators. Collecting \eqref{eq-hfluc-raw-0}, \eqref{eq-bog-dyn-0}, \eqref{eq-har-dyn-0}, \eqref{eq-weyl-HN-0}, \eqref{eq-quad-rot-0}, and \eqref{eq-quart-rot-NO-sca-0}, we obtain that we need to choose
		\begin{align}
			\MoveEqLeft \partial_t S_t \, = \, \\
			&  \vol\Big[\frac{N\vol \lambda|\phi_t|^4\hat{v}(0)}2 \, - \, \frac{N\vol}2\Re\big(\overline{\phi}_ti\partial_t \phi_t\big) \, - \, \frac12 \int \dx{p} \frac{\Re\big(\overline{\sigma}_t(p)i\partial_t\sigma_t(p)\big)}{1+\gamma_t(p)}\\
			& + \,  \int \dx{p}  \Big(\frac{\lambda}{2N}\big(\overline{\sigma}_t* \hat{v}(p)\sigma_t(p) \, + \, \gamma_t*(\hat{v}+\hat{v}(0))(p) \gamma_t(p)\big) \\
			& \big(E(p)+\lambda\vol(\hat{v}(p)+\hat{v}(0))|\phi_t|^2\big)\gamma_t(p) \, + \, \frac{\lambda\vol}2 \hat{v}(p) \big( \overline{\sigma}_t(p)  \phi_t^2 +  \sigma_t(p)\overline{\phi}_t^2\big)\Big) 
		\end{align} 
		in order to eliminate the scalar contributions in $\Hfluc(t)$.
		\par Using the shifted expectations $\Gamo$ and $\Sigo$, see \eqref{def-Gamo}, \eqref{def-Sigo}, we can rewrite $\partial_t S_t$ as
		\begin{align}
			\MoveEqLeft \partial_t S_t \, = \, \\
			& \vol\int \dx{p} \Big[E(p)\gamma_t(p) \, + \, \frac{\lambda}{2N}\Gamo_t*(\hat{v}+\hat{v}(0))(p) \Gamo_t(p) \\
			& + \, \frac{\lambda}{2N}\Sigob_t*\hat{v}(p) \Sigo_t(p) \, - \, N\vol \lambda|\phi_t|^4\hat{v}(0) \\
			&  - \, \frac{N\vol\delta(p)\Re\big(\overline{\phi}_ti\partial_t \phi_t\big)}2 \, - \,  \frac{\Re\big(\overline{\sigma}_t(p)i\partial_t\sigma_t(p)\big)}{2\big(1+\gamma_t(p)\big)}\Big] \, . 
		\end{align} 
		Similarly, collecting \eqref{eq-bog-prop}, \eqref{eq-har-dyn-0}, \eqref{eq-twhwt-1}, \eqref{eq-cub-rot-NO-lin-0}, we have, after conjugating with $\bprop(t)$, that 
		\begin{align}
		\MoveEqLeft \HBEC(t) \, = \, \\
		& \sqrt{N\vol}\Big[-\big(i\partial_t \phi_tu_t(0)+\overline{i\partial_t \phi_t}v_t(0)\big) \\
		& + \, \lambda\vol|\phi_t|^2\hat{v}(0) \big(\phi_tu_t(0)+\overline{\phi}_tv_t(0)\big)\\
		&+ \, \frac{\lambda}{N}  u_t(0)\Big(\int \dx{p}\hat{v}(p)\sigma_t(p)\overline{\phi}_t+  \int \dx{p}\big(\hat{v}(p)+\hat{v}(0)\big) \gamma_t(p) \phi_t\Big) \\
		& + \, v_t(0)\Big(\int \dx{p}\hat{v}(p)\overline{\sigma}_t(p)\phi_t+ \int \dx{p}\big(\hat{v}(p)+\hat{v}(0)\big) \gamma_t(p) \overline{\phi}_t\Big)   \Big]e^{i\int_0^t \dx{s} \Omega_\tau(0)}\ad_0 \\
		&+ \, h.c. \\
		=&  \sqrt{N\vol}\Big[u_t(0)\Big( -i\partial_t \phi_t  +  \lambda\vol\hat{v}(0)|\phi_t|^2 \phi_t +  \frac{\lambda}{N}(\sigma_t* \hat{v})(0)\overline{\phi}_t\\
		& + \, \frac{\lambda}{N} \big(\gamma_t*(\hat{v}+\hat{v}(0)\big) (0) \phi_t\Big) \, + \, v_t(0)\Big( -\overline{i\partial_t \phi_t} +  \lambda\vol\hat{v}(0) |\phi_t|^2\overline{\phi}_t \\
		& +  \frac{\lambda}{N}(\overline{\sigma}_t* \hat{v})(0)\phi_t + \frac{\lambda}{N} \big(\gamma_t*(\hat{v}+\hat{v}(0)\big) (0) \overline{\phi}_t\Big)\Big] e^{i\int_0^t \dx{\tau} \Omega_\tau(0)}\ad_0 \, + \, \mathrm{h.c.}
		\end{align} 
		In order to give an expression for $\HHFB(t)$, we decompose it into 
		\begin{align}
			\HHFB(t) \, = \, \HHFBd(t) \, + \, \HHFBod(t) \, ,
		\end{align} 
		where $\HHFBd(t)$ refers to the diagonal part, involving terms proportional to $\ad a$, and $\HHFBod(t)$ to the off-diagonal part, involving terms propotional to $aa$ and $\ad \ad$. Then \eqref{eq-hfluc-raw-0}, \eqref{eq-bog-dyn-0}, \eqref{eq-quad-rot-0}, \eqref{eq-quart-rot-NO-ada-0}, after conjugation with $\bprop(t)$ using \eqref{eq-bog-prop}, imply
		\begin{align}
			\MoveEqLeft \HHFBd(t) \, = \, \\
			& \int \dx{p} \Big[ -\Omega_t(p)  -  \frac{\Re\big(\overline{\sigma}_t(p)i\partial_t\sigma_t(p)\big)}{1+\gamma_t(p)}\\
			& +\,  \Big(E(p)+\frac{\lambda}N\big((\gamma_t+N\vol|\phi_t|^2\delta)*(\hat{v}+\hat{v}(0))\big)(p)\Big)\big(1+2\gamma_t(p)\big) \\
			& + \, \frac{2\lambda}N\Re\Big(\big((\overline{\sigma}_t+N\vol\overline{\phi}_t^2\delta)*\hat{v}\big)(p)\sigma_t(p)\Big)\Big]\ad_p a_p \, .
		\end{align}  
		Similarly, we obtain, using, in addition \eqref{eq-quart-rot-NO-aa-0}, that
		\begin{align}
			\MoveEqLeft \HHFBod(t) \, = \, \\
			& \int \dx{p} \Big[-\frac{i\partial_t \sigma_t(p)}2+\frac{\sigma_t(p)i\partial_t \gamma_t(p)}{2(1+\gamma_t(p))}\\
			& + \,   \Big(E(p)+\frac{\lambda}N\big((\gamma_t+N\vol|\phi_t|^2\delta)*(\hat{v}+\hat{v}(0))\big)(p)\Big)\sigma_t(p)  \\
			& + \, \frac{\lambda}{2N}\Big( \big((\sigma_t+N\vol\phi_t^2\delta)*\hat{v}\big)(p)(1+\gamma_t(p)) \\
			& + \, \big((\overline{\sigma}_t+N\vol\overline{\phi}_t^2\delta)*\hat{v}\big)(p)\frac{\sigma_t(p)^2}{1+\gamma_t(p)}\Big) \Big] e^{2i\int_0^t \dx{\tau} \Omega_\tau(p)}\ad_p\ad_{-p}\\
			& + \, h.c.
		\end{align} 
		Conjugating \eqref{eq-cub-rot-NO-cub-1} with $\bprop(t)$ and using the notation \eqref{def-bbf03}--\eqref{def-bbf12}, \eqref{eq-bog-prop} implies 
		\begin{align}
			\MoveEqLeft \Hcub(t) \, = \\
			& \frac{\lambda}{\sqrt{N}} \int \dx{\bp_3} \Big[\delta(p_1+p_2+p_3)e^{i\int_0^t \dx{\tau} \big(\Omega_\tau(p_1)+\Omega_\tau(p_2)+\Omega_\tau(p_3)\big)}\\
			&\frac1{3!} \bbf{0}{3}_t(\bp_3)\ad_{p_1}\ad_{p_2}\ad_{p_3}\\
			& + \,  \delta(p_1+p_2-p_3)e^{i\int_0^t \dx{\tau} \big(\Omega_\tau(p_1)+\Omega_\tau(p_2)-\Omega_\tau(p_3)\big)}\\
			& \frac1{2!}\bbf{1}{2}_t(\bp_3)\ad_{p_1}\ad_{p_2}a_{p_3}\\
			& + \, h.c.\Big) \, .
		\end{align} 
		Similarly, conjugating \eqref{eq-quart-rot-NO-quart-1} with $\bprop(t)$ using \eqref{eq-bog-prop} and the notation \eqref{def-bbf04}--\eqref{def-bbf22}, we obtain that
		\begin{align}
			\MoveEqLeft \Hquart(t) \, =\\
			& \frac{\lambda}{N} \int \dx{\bp_4}  \Big[ \delta(p_1+p_2+p_3+p_4)e^{i\int_0^t \dx{\tau} \big(\Omega_\tau(p_1)+\Omega_\tau(p_2)+\Omega_\tau(p_3)+\Omega_\tau(p_4)\big)}\\
			& \frac1{4!}\bbf{0}{4}_t(\bp_4) \ad_{p_1}\ad_{p_2}\ad_{p_3}\ad_{p_4}\\
			& + \, \delta(p_1+p_2+p_3-p_4)e^{i\int_0^t \dx{\tau} \big(\Omega_\tau(p_1)+\Omega_\tau(p_2)+\Omega_\tau(p_3)-\Omega_\tau(p_4)\big)}  \\
			& \frac1{3!}\bbf{1}{3}_t(\bp_4)\ad_{p_1}\ad_{p_2}\ad_{p_3}a_{p_4}\\
			& + \, h.c. \\
			& + \,  \delta(p_1+p_2-p_3-p_4)e^{i\int_0^t \dx{\tau} \big(\Omega_\tau(p_1)+\Omega_\tau(p_2)-\Omega_\tau(p_3)-\Omega_\tau(p_4)\big)}  \\
			& \frac1{(2!)^2}\bbf{2}{2}_t(\bp_4)\ad_{p_1}\ad_{p_2}a_{p_3}a_{p_4} \, . 
		\end{align} 
		This finishes the proof.
	\endprf

    \begin{lemma}[Contracted vertices]\label{lem-con-vert}
        Let $\Hcub$ and $\Hquart$ be defined as in Lemma \ref{lem-hfluc}. Then we have 
        \begin{align}
            \begin{split}
                \wick{\settowidth{\wdth}{$\Hcub$}\hspace{.33\wdth}\c1{\vphantom{\Hcub}}\hspace{.33\wdth}\c1{\vphantom{\Hcub}}\hspace{-.66\wdth}\Hcub(t)} \, &= \, \frac{\lambda\sqrt{\vol}}{\sqrt{N}} e^{i\int_0^t\dx{\tau}\Omega_\tau(0)}\ad_0 \int \dx{k} \Big[u_t(0)\Big(\big(1+2\gamma_t(k)\big)f_0(k)\big(\hat{v}(k)+\hat{v}(0)\big)\phi_t\\
                & \qquad + 2f_0(k)\sigma_t(k)\hat{v}(k)\overline{\phi}_t\Big) \, + \, v_t(0)\Big(\big(1+2\gamma_t(k)\big)f_0(k)\big(\hat{v}(k)+\hat{v}(0)\big)\overline{\phi}_t \\
                & \qquad + 2f_0(k)\overline{\sigma}_t(k)\hat{v}(k)\phi_t\Big)\Big] \, + \, \mathrm{h.c.} \, ,
            \end{split}
        \end{align}
        and 
        \begin{align}
             \wick{\settowidth{\wdth}{$\Hquart$}\hspace{.33\wdth}\c1{\vphantom{\Hquart}}\hspace{.33\wdth}\c1{\vphantom{\Hquart}}\hspace{-.66\wdth}\Hquart(t)} \, =& \, \frac{\lambda}{N} \int \dx{p}\Big[\Big(\big(\fplus \sigma_t\big)*\hat{v}(p)(1+\gamma_t(p)) \, + \, \frac{\big(\fplus \overline{\sigma}_t\big)*\hat{v}(p)\sigma_t(p)}{1+\gamma_t(p)}\\
             & \quad + \, \big((1+2\gamma_t)\fplus\big)*\big(\hat{v}+\hat{v}(0)\big)(p)\sigma_t(p)\Big)e^{2i\int_0^t \dx{\tau}\Omega_\tau(p)}\ad_{p}\ad_{-p}\,+ \, \mathrm{h.c.}\\
             & \quad + \, \Big(\big((1+2\gamma_t)\fplus)*\big(\hat{v}+\hat{v}(0)\big)(p)(1+2\gamma_t(p)) \\
             & \quad + \, 4\Re\big(\big(\fplus\sigma_t\big)*\hat{v}(p)\overline{\sigma}_t(p)\big)\Big) \ad_pa_p\Big] \, .
        \end{align}
    \end{lemma}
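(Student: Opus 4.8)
The plan is to read off the explicit normal-ordered expressions for $\Hcub(t)$ and $\Hquart(t)$ from Lemma \ref{lem-hfluc} (formulas \eqref{def-Hcub} and \eqref{def-Hquart}) and to carry out the single self-contractions prescribed by Wick's theorem for $\jb{\cdot}_0$, cf.\ Definition \ref{defi-quasifree}. Since $\jb{\cdot}_0$ is number conserving and translation invariant, the only non-vanishing two-point contraction is $\jb{\ad_p a_q}_0 = \vol f_0(p)\delta_{p,q} = \delta(p-q)f_0(p)$ (together with $\jb{a_p\ad_q}_0 = \delta(p-q)(1+f_0(p))$), while $\jb{a_pa_q}_0 = \jb{\ad_p\ad_q}_0 = 0$. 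Hence inside a normal-ordered monomial only an $\ad$--$a$ pair can be contracted: contracting a single pair reduces a cubic vertex to a term linear in $a^{\#}$ and a quartic vertex to a quadratic one.

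For $\Hcub(t)$ only the monomial $\ad_{p_1}\ad_{p_2}a_{p_3}$, with constraint $\delta(p_1+p_2-p_3)$, weight $\tfrac1{2!}\bbf{1}{2}_t(\bp_3)$ and phase $e^{i\int_0^t\dx{\tau}(\Omega_\tau(p_1)+\Omega_\tau(p_2)-\Omega_\tau(p_3))}$, and its hermitian conjugate $\ad_{p_3}a_{p_2}a_{p_1}$ admit a self-contraction. Contracting $a_{p_3}$ with $\ad_{p_1}$ (resp.\ $\ad_{p_2}$) produces the factor $f_0(p_1)\delta(p_1-p_3)$ (resp.\ $f_0(p_2)\delta(p_2-p_3)$); combined with $\delta(p_1+p_2-p_3)$ this forces the surviving creation operator to be $\ad_0$ and leaves one free internal momentum $k$, and the two choices coincide by the $p_1\leftrightarrow p_2$ symmetry of $\bbf{1}{2}_t$, cancelling the $\tfrac1{2!}$; the phase collapses to $e^{i\int_0^t\dx{\tau}\Omega_\tau(0)}$ since $\Omega_\tau(k)-\Omega_\tau(k)=0$. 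It then remains to evaluate $\bbf{1}{2}_t(0,k,k)$: using $u_t(k)^2 = 1+\gamma_t(k)$, $|v_t(k)|^2 = \gamma_t(k)$ and $u_t(k)v_t(k)=\sigma_t(k)$, the first and third $\hat{v}$-weighted blocks in \eqref{def-bbf12} reassemble into $(1+2\gamma_t(k))(\hat{v}(k)+\hat{v}(0))(u_t(0)\phi_t + v_t(0)\overline{\phi}_t)$ and the middle block into $2\hat{v}(k)(u_t(0)\sigma_t(k)\overline{\phi}_t + v_t(0)\overline{\sigma}_t(k)\phi_t)$. Integrating $f_0(k)$ over $k$ and adding the hermitian conjugate (from $\ad_{p_3}a_{p_2}a_{p_1}$, which yields the $a_0$ terms) gives exactly the claimed formula.

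For $\Hquart(t)$ the monomial $\ad_{p_1}\ad_{p_2}\ad_{p_3}\ad_{p_4}$ admits no self-contraction. The monomial $\ad_{p_1}\ad_{p_2}\ad_{p_3}a_{p_4}$ (with $\delta(p_1+p_2+p_3-p_4)$ and weight $\tfrac1{3!}\bbf{1}{3}_t$) self-contracts in three ways, each leaving a product $\ad_{p_i}\ad_{p_j}$ with $p_i + p_j = 0$ forced by the two $\delta$'s, i.e.\ an off-diagonal $\ad_p\ad_{-p}$; its hermitian conjugate supplies the $a_pa_{-p}$ partner. The monomial $\ad_{p_1}\ad_{p_2}a_{p_3}a_{p_4}$ (with $\delta(p_1+p_2-p_3-p_4)$ and weight $\tfrac1{(2!)^2}\bbf{2}{2}_t$) self-contracts in $2\times 2$ ways, each leaving a diagonal $\ad_p a_p$. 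Using the full symmetry of $\bbf{1}{3}_t$ in its first three arguments and of $\bbf{2}{2}_t$ under $p_1\leftrightarrow p_2$ and $p_3\leftrightarrow p_4$, these contraction terms combine to reproduce the prefactors $\tfrac1{3!}$ and $\tfrac1{(2!)^2}$; and since the kernel factors that remain after the $\delta$-constraints are even in the contracted momentum $k$, one replaces $f_0(k)$ by $\fplus(k)$ under the momentum integral. Evaluating the collapsed kernels with $u_t^2-|v_t|^2=1$, $\gamma_t=|v_t|^2$, $\sigma_t=u_tv_t$ then produces the convolutions $(\fplus\sigma_t)*\hat{v}$, $(\fplus\overline{\sigma}_t)*\hat{v}$ and $((1+2\gamma_t)\fplus)*(\hat{v}+\hat{v}(0))$, with the phase $e^{2i\int_0^t\dx{\tau}\Omega_\tau(p)}$ surviving for the $\ad_p\ad_{-p}$ part and cancelling entirely for the $\ad_p a_p$ part; the $4\Re(\cdot)$ in the diagonal coefficient arises because two of the four contractions of $\ad_{p_1}\ad_{p_2}a_{p_3}a_{p_4}$ give a term while the remaining two give its complex conjugate.

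The only genuinely delicate point is this collapsing step. Each of $\bbf{1}{2}_t$, $\bbf{1}{3}_t$, $\bbf{2}{2}_t$ is a sum of three blocks of $u,v$-monomials with permuted arguments and different $\hat{v}$-weights, cf.\ \eqref{def-bbf12} and \eqref{def-bbf04}--\eqref{def-bbf22}, and one must verify that, once the $\delta$-functions set one momentum to $0$ and identify the remaining pair, these blocks recombine precisely into $1+2\gamma_t$, $\sigma_t$, $\overline{\sigma}_t$ against the correct combinations of $\hat{v}(p)$ and $\hat{v}(0)$. Keeping the combinatorial multiplicities aligned with the symmetrization prefactors, and the complex conjugations consistent between a monomial and its hermitian conjugate, is where care is required; everything else reduces to direct substitution and the identities $u_t^2-|v_t|^2=1$, $\gamma_t=|v_t|^2$, $\sigma_t=u_tv_t$.
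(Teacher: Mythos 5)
Your proposal is correct and follows essentially the same route as the paper's proof: read off the normal-ordered vertices from Lemma \ref{lem-hfluc}, perform the single Wick self-contraction $\jb{\ad_{p_i}a_{p_j}}_0=f_0(p_i)\delta(p_i-p_j)$ (the only nonvanishing one, since the state is number conserving and the monomials are normal-ordered), use the permutation symmetries of $\bbf{1}{2}$, $\bbf{1}{3}$, $\bbf{2}{2}$ to collapse the contraction multiplicities against the combinatorial prefactors, evaluate the kernels at the pinched momenta via $u_t^2=1+\gamma_t$, $u_tv_t=\sigma_t$, and symmetrize $f_0\to\fplus$ against the even remaining kernel. One side remark is misattributed, though it does not affect the outcome: the factor $4\Re(\cdot)$ in the diagonal coefficient does \emph{not} come from two of the four contractions of $\ad_{p_1}\ad_{p_2}a_{p_3}a_{p_4}$ producing the complex conjugate of the other two --- by the very symmetries you invoke, all four contractions give identical contributions $\bbf{2}{2}_t(p,k,k,p)f_0(k)\ad_pa_p$. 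The $4\Re$ arises inside the collapsed kernel itself: the cross block evaluates to $\sigma_t(p)\overline{\sigma}_t(k)+\overline{\sigma}_t(p)\sigma_t(k)=2\Re\big(\sigma_t(p)\overline{\sigma}_t(k)\big)$, and its weight $\hat{v}(p+k)+\hat{v}(p-k)$ contributes the remaining factor $2$ after the even symmetrization in $k$.
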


    \begin{proof}
        Using symmetry of $\bbf{1}{2}(p_1,p_2,p_3)$ in $(p_1\leftrightarrow p_2)$, see Lemma \ref{lem-hfluc}, we obtain that
        \begin{align}
	       \wick{\settowidth{\wdth}{$\Hcub$}\hspace{.33\wdth}\c1{\vphantom{\Hcub}}\hspace{.33\wdth}\c1{\vphantom{\Hcub}}\hspace{-.66\wdth}\Hcub(t)} \, &= \, \frac{\lambda}{\sqrt{N}} \int \dx{k} \bbf{1}{2}_t(k,0,k) f_0(k)e^{i\int_0^t\dx{\tau}\Omega_\tau(0)}\ad_0 \, + \, \mathrm{h.c.} \, ,
        \end{align}
        where 
        \begin{align}
            \MoveEqLeft \bbf{1}{2}_t(\bp_3) \, =\\
			&\sqrt{\vol}\Big(\big(u_t(p_1)u_t(p_2)u_t(p_3)\phi_t+v_t(p_1)v_t(p_2)\overline{v_t}(p_3)\overline{\phi}_t\big)\big(\hat{v}(p_1)+\hat{v}(p_2)\big)\\
			& + \big(v_t(p_1)u_t(p_2)\overline{v_t}(p_3)\phi_t+u_t(p_1)v_t(p_2)u_t(p_3)\overline{\phi}_t\big)\big(\hat{v}(p_2)+\hat{v}(p_3)\big)\\
			& + \big(u_t(p_1)v_t(p_2)\overline{v_t}(p_3)\phi_t+v_t(p_1)u_t(p_2)u_t(p_3)\overline{\phi}_t\big)\big(\hat{v}(p_1)+\hat{v}(p_3)\big) \Big) \, ,
        \end{align}
        Using the facts that $u_t=\sqrt{1+\gamma_t}$ and that $v_t=\frac{\sigma_t}{\sqrt{1+\gamma_t}}$, we thus find that
        \begin{align}
            \wick{\settowidth{\wdth}{$\Hcub$}\hspace{.33\wdth}\c1{\vphantom{\Hcub}}\hspace{.33\wdth}\c1{\vphantom{\Hcub}}\hspace{-.66\wdth}\Hcub(t)} \, & = \, \frac{\lambda\sqrt{\vol}}{\sqrt{N}} \int \dx{k} f_0(k)e^{i\int_0^t\dx{\tau}\Omega_\tau(0)}\ad_0 \\
            & \qquad \Big(\big(u_t(0)(1+\gamma_t(k))\phi_t+v_t(0)\gamma_t(k)\overline{\phi}_t\big)\big(\hat{v}(k)+\hat{v}(0)\big)\\
            & \qquad + \, \big(u_t(0)\gamma_t(k)\phi_t+v_t(0)(1+\gamma_t(k))\overline{\phi}_t\big)\big(\hat{v}(k)+\hat{v}(0)\big)\\
            & \qquad + \, \big(v_t(0)\overline{\sigma}_t(k)\phi_t+u_t(0)\sigma_t(k)\overline{\phi}_t\big)2\hat{v}(k)\Big) \, + \, \mathrm{h.c.} \\
            & = \, \frac{\lambda\sqrt{\vol}}{\sqrt{N}} \int \dx{k} f_0(k)e^{i\int_0^t\dx{\tau}\Omega_\tau(0)}\ad_0 \\
            & \qquad \Big(\big(\hat{v}(k)+\hat{v}(0)\big)\big(1+2\gamma_t(k)\big)\big(u_t(0)\phi_t+v_t(0)\overline{\phi}_t\big) \\
            & \qquad + \, 2\hat{v}(k)\big(v_t(0)\overline{\sigma}_t(k)\phi_t+u_t(0)\sigma_t(k)\overline{\phi}_t\big)\Big) \, + \, \mathrm{h.c.} \, ,
    	\end{align}
        where we also used the evenness of $\sigma_t$ and $\gamma_t$. Sorting the terms by the coefficients $u_t(0)$ and $v_t(0)$, we obtain
        \begin{align}
                \wick{\settowidth{\wdth}{$\Hcub$}\hspace{.33\wdth}\c1{\vphantom{\Hcub}}\hspace{.33\wdth}\c1{\vphantom{\Hcub}}\hspace{-.66\wdth}\Hcub(t)} \, &= \, \frac{\lambda\sqrt{\vol}}{\sqrt{N}} e^{i\int_0^t\dx{\tau}\Omega_\tau(0)}\ad_0 \int \dx{k} \Big[u_t(0)\Big(\big(1+2\gamma_t(k)\big)f_0(k)\big(\hat{v}(k)+\hat{v}(0)\big)\phi_t\\
                & \qquad + 2f_0(k)\sigma_t(k)\hat{v}(k)\overline{\phi}_t\Big) \, + \, v_t(0)\Big(\big(1+2\gamma_t(k)\big)f_0(k)\big(\hat{v}(k)+\hat{v}(0)\big)\overline{\phi}_t \\
                & \qquad + 2f_0(k)\overline{\sigma}_t(k)\hat{v}(k)\phi_t\Big)\Big] \, + \, \mathrm{h.c.} \, .
        \end{align}
        Similarly, the symmetries of $\bbf{1}{3}_t(\bp_4)$ in $(p_1\leftrightarrow p_2\leftrightarrow p_3)$, and of $\bbf{2}{2}_t(\bp_4)$ in $(p_1\leftrightarrow p_2)$ and $(p_3\leftrightarrow p_4)$, see Lemma \ref{lem-hfluc}, imply
        \begin{align}
        \wick{\settowidth{\wdth}{$\Hquart$}\hspace{.33\wdth}\c1{\vphantom{\Hquart}}\hspace{.33\wdth}\c1{\vphantom{\Hquart}}\hspace{-.66\wdth}\Hquart(t)} \, = \, & \frac{\lambda}{N} \int \dx{p}\dx{k} f_0(k)\Big(\frac{\bbf{1}{3}_t(p,-p,k,k)}{2}e^{2i\int_0^t \dx{\tau}\Omega_\tau(p)}\ad_{p}\ad_{-p} \, + \, \mathrm{h.c.} \\
		& \quad  + \, \bbf{2}{2}_t(p,k,k,p)\ad_pa_p \Big) \, ,
        \end{align}
        where
        \begin{align}
            \MoveEqLeft \bbf{1}{3}_t(\bp_4) \, := \\
			&\big(u_t(p_1)u_t(p_2)v_t(p_3)u_t(p_4)+ v_t(p_1)v_t(p_2)u_t(p_3)\overline{v_t}(p_4)\big)\big(\hat{v}(p_1+p_3)+\hat{v}(p_2+p_3)\big) \\
			& + \big(u_t(p_1)v_t(p_2)u_t(p_3)u_t(p_4)+ v_t(p_1)u_t(p_2)v_t(p_3)\overline{v_t}(p_4)\big)\big(\hat{v}(p_1+p_2)+\hat{v}(p_2+p_3)\big) \\
			& + \, \big(v_t(p_1)u_t(p_2)u_t(p_3)u_t(p_4)+ u_t(p_1)v_t(p_2)v_t(p_3)\overline{v_t}(p_4)\big)\big(\hat{v}(p_1+p_2)+\hat{v}(p_1+p_3)\big) \, ,  \\
			\MoveEqLeft \bbf{2}{2}_t(\bp_4) \, :=\\
			& \big(u_t(p_1)u_t(p_2)u_t(p_3)u_t(p_4) \, + \, v_t(p_1)v_t(p_2)\overline{v_t}(p_3)\overline{v_t}(p_3)\big)\big(\hat{v}(p_1-p_3)+\hat{v}(p_2-p_3)\big)\\
			& + \, \big(u_t(p_1)v_t(p_2)\overline{v_t}(p_3)u_t(p_4)+v_t(p_1)u_t(p_2)u_t(p_3)\overline{v_t}(p_4)\big)\big(\hat{v}(p_1+p_2)+\hat{v}(p_2-p_3)\big)\\
			& + \, \big(v_t(p_1)u_t(p_2)\overline{v_t}(p_3)u_t(p_4)+u_t(p_1)v_t(p_2)u_t(p_3)\overline{v_t}(p_4)\big)\big(\hat{v}(p_1+p_2)+\hat{v}(p_1-p_3)\big) \, .
        \end{align}
        Again using the relations between $u_t$, $v_t$ and $\gamma_t$, $\sigma_t$, we obtain
        \begin{align}
        \MoveEqLeft \wick{\settowidth{\wdth}{$\Hquart$}\hspace{.33\wdth}\c1{\vphantom{\Hquart}}\hspace{.33\wdth}\c1{\vphantom{\Hquart}}\hspace{-.66\wdth}\Hquart(t)} \,= \\
        & \frac{\lambda}{N} \int \dx{p}\dx{k} f_0(k)\\
        & \quad \Big[\Big(\big((1+\gamma_t(p))\sigma_t(k) \, + \, \frac{\sigma_t(p)\overline{\sigma}_t(k)}{1+\gamma_t(p)}\big) 2\hat{v}(p-k) + \, 2\big(\sigma_t(p)(1+\gamma_t(k)) \\
        & \qquad + \sigma_t(p)\gamma_t(k)\big)\big(\hat{v}(p-k)+\hat{v}(0)\big)\Big) \frac12 e^{2i\int_0^t \dx{\tau}\Omega_\tau(p)}\ad_{p}\ad_{-p} \, + \, \mathrm{h.c.} \\
        & \quad + \, \Big(\big((1+\gamma_t(p))(1+\gamma_t(k)) \,+ \, \gamma_t(p)\gamma_t(k)\big)\big(\hat{v}(p-k)+\hat{v}(0)\big)\\
        & \qquad + \,\big((1+\gamma_t(p))\gamma_t(k) \,+ \, \gamma_t(p)(1+\gamma_t(k))\big)\big(\hat{v}(p-k)+\hat{v}(0)\big)\\ 
        & \qquad + \, (\sigma_t(p)\overline{\sigma}_t(k) \, + \, \overline{\sigma}_t(p)\sigma_t(k))2\hat{v}(p-k)\Big)\ad_pa_p\Big] \\
        = & \, \frac{\lambda}{N} \int \dx{p}\dx{k} f_0(k) \Big[\Big(\big((1+\gamma_t(p))\sigma_t(k) \, + \, \frac{\sigma_t(p)\overline{\sigma}_t(k)}{1+\gamma_t(p)}\big) \hat{v}(p-k) \\
        & \qquad + \, \big(\sigma_t(p)(1+\gamma_t(k)) \, + \, \overline{\sigma}_t(p)\gamma_t(k)\big)\hat{v}(p-k)\Big) e^{2i\int_0^t \dx{\tau}\Omega_\tau(p)}\ad_{p}\ad_{-p} \,+ \, \mathrm{h.c.} \\ 
        &\quad + \, \Big((1+2\gamma_t(p))(1+2\gamma_t(k))\big(\hat{v}(p-k)+\hat{v}(0)\big) \\
        &\qquad + \, 4\Re\big(\sigma_t(p)\overline{\sigma}_t(k)\big)\hat{v}(p-k)\Big) \ad_pa_p\Big] \, .
        \end{align}
        Abbreviating $\fplus=\frac12(f_0(p)+f_0(-p))$, see \eqref{def-fplus} and using evenness of $\sigma$, $\gamma$, $\hat{v}$, we can simplify the expression as
        \begin{align}
             \wick{\settowidth{\wdth}{$\Hquart$}\hspace{.33\wdth}\c1{\vphantom{\Hquart}}\hspace{.33\wdth}\c1{\vphantom{\Hquart}}\hspace{-.66\wdth}\Hquart(t)} \, =& \, \frac{\lambda}{N} \int \dx{p}\Big[\Big(\big(\fplus \sigma_t\big)*\hat{v}(p)(1+\gamma_t(p)) \, + \, \frac{\big(\fplus \overline{\sigma}_t\big)*\hat{v}(p)\sigma_t(p)}{1+\gamma_t(p)}\\
             & \quad + \, \big((1+2\gamma_t)\fplus\big)*\big(\hat{v}+\hat{v}(0)\big)(p)\sigma_t(p)\Big)e^{2i\int_0^t \dx{\tau}\Omega_\tau(p)}\ad_{p}\ad_{-p}\,+ \, \mathrm{h.c.}\\
             & \quad + \, \Big(\big((1+2\gamma_t)\fplus)*\big(\hat{v}+\hat{v}(0)\big)(p)(1+2\gamma_t(p)) \\
             & \quad + \, 4\Re\big(\big(\fplus\sigma_t\big)*\hat{v}(p)\overline{\sigma}_t(p)\big)\Big) \ad_pa_p\Big] \, .
        \end{align}
        This concludes the proof.
    \end{proof}

	\begin{lemma}[HFB equations] \label{lem-HFB-equations}
		Let
		\begin{align}
		\Gamo_t(p) :=&  \gamma_t(p) \, + \, N\vol|\phi_t|^2 \delta(p) \, , \\
		\Sigo_t(p) :=&  \sigma_t(p) \, + \, N\vol \phi_t^2 \delta(p) \, .
		\end{align} 
		Then
		\begin{align}
		\MoveEqLeft \frac{i\partial_t \sigma_t(p)}2 \, - \, \frac{\sigma_t(p)i\partial_t \gamma_t(p)}{2(1+\gamma_t(p))} \, = \,\\
		&  \Big(E(p)+\frac{\lambda}{N}\Gamo_t*\big(\hat{v}+\hat{v}(0)\big)(p)\Big)\sigma_t(p)  \\
		& + \, \frac{\lambda}{2N} \Big( \big(\Sigo_t* \hat{v}\big)(p) (1+\gamma_t(p)) \, + \, \big(\Sigob_t*\hat{v}\big)(p) \frac{\sigma_t(p)^2}{1+\gamma_t(p)}\Big) \label{eq-HFB-raw-0}
		\end{align} 
		is equivalent to
		\begin{align}
			i\partial_t \gamma_t =&  \frac{\lambda}N \big[\big(\Sigo_t*\hat{v} \big) \,\overline{\sigma}_t \, - \,  \big(\Sigob_t*\hat{v} \big) \,\sigma_t\big] \, ,  \\
			i\partial_t \sigma_t =& 2\big(E+\frac{\lambda}{N}\Gamo_t*(\hat{v}+\hat{v}(0))\big)\sigma_t \, + \, \frac{\lambda}{N} \big( \Sigo_t* \hat{v}\big) (1+2\gamma_t) \, .\label{eq-sigma-HFB-lem-1}
		\end{align} 
	\end{lemma}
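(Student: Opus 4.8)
The plan is to reduce both sides of the claimed equivalence to the same real three–dimensional system of ODEs, using throughout the relation $|\sigma_t(p)|^2 = (1+\gamma_t(p))\gamma_t(p)$ from \eqref{eq-gamma-sigma-rel-0} that links $\sigma_t$ to $\gamma_t$ (the two real components of $\sigma_t$ together with $\gamma_t$ carry the three real degrees of freedom on each side). Fixing $p$ and abbreviating $a := E(p) + \frac{\lambda}{N}\big(\Gamo_t*(\hat{v}+\hat{v}(0))\big)(p)\in\R$, $b := \frac{\lambda}{N}(\Sigo_t*\hat{v})(p)\in\C$, $\gamma := \gamma_t(p)$, $\sigma := \sigma_t(p)$, equation \eqref{eq-HFB-raw-0} multiplied by $2$ reads
\[
 i\partial_t\sigma - \frac{\sigma\, i\partial_t\gamma}{1+\gamma} \;=\; 2a\sigma + b(1+\gamma) + \overline{b}\,\frac{\sigma^2}{1+\gamma}, \qquad(\star)
\]
whereas the system \eqref{eq-sigma-HFB-lem-1} becomes $i\partial_t\gamma = b\overline{\sigma}-\overline{b}\sigma$ and $i\partial_t\sigma = 2a\sigma + b(1+2\gamma)$; here one uses that $\hat{v}$ is real, so that $\Sigob_t*\hat{v} = \overline{\Sigo_t*\hat{v}}$, and that $z-\overline z = 2i\,\Im z$, to identify $\frac{\lambda}{N}\big[(\Sigo_t*\hat{v})\overline{\sigma}_t - (\Sigob_t*\hat{v})\sigma_t\big]$ with $b\overline\sigma - \overline b\sigma$.

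For the implication $(\star)\Rightarrow$ system, I would multiply $(\star)$ by $\overline\sigma$ and take imaginary parts. Since $\partial_t\gamma$ is real, $\Im(i\,\overline\sigma\,\partial_t\sigma) = \Re(\overline\sigma\,\partial_t\sigma)$; differentiating the constraint gives $\Re(\overline\sigma\,\partial_t\sigma) = \tfrac12\partial_t|\sigma|^2 = \tfrac{1+2\gamma}{2}\,\partial_t\gamma$; and $\frac{|\sigma|^2}{1+\gamma} = \gamma$ makes the $2a|\sigma|^2$ term and the term carrying $\partial_t\gamma$ collapse, leaving $\tfrac12\partial_t\gamma = \Im(b\overline\sigma)$, which is the first equation of the system. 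Inserting this back into $(\star)$ and using $\frac{\sigma(b\overline\sigma-\overline b\sigma)}{1+\gamma} = b\gamma - \frac{\overline b\sigma^2}{1+\gamma}$ (again by the constraint) cancels the $\overline b\sigma^2/(1+\gamma)$ terms and yields $i\partial_t\sigma = 2a\sigma + b(1+2\gamma)$, the second equation. No division by $\sigma$ is needed, so the degenerate set $\sigma = 0$ (which forces $\gamma = 0$) requires no separate discussion.

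For the converse, system $\Rightarrow(\star)$, one simply substitutes $i\partial_t\sigma = 2a\sigma + b(1+2\gamma)$ and $i\partial_t\gamma = b\overline\sigma - \overline b\sigma$ into the left-hand side of $(\star)$ and simplifies with $|\sigma|^2 = (1+\gamma)\gamma$; the terms reorganize exactly into $2a\sigma + b(1+\gamma) + \overline b\sigma^2/(1+\gamma)$. If one wishes to impose the constraint only at the initial time, it is enough to check that it is preserved by the system: differentiating $|\sigma_t|^2 - (1+\gamma_t)\gamma_t$ and inserting the two equations gives $2\Re(\overline\sigma\,\partial_t\sigma) = (1+2\gamma)\,\partial_t\gamma$ identically, hence the derivative vanishes. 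The computation has no real obstacle; the only points demanding care are that \emph{both} implications genuinely rely on the constraint — $(\star)$ by itself is strictly weaker than the system — and the sign and conjugation conventions in passing between $b\overline\sigma - \overline b\sigma$ and $2i\,\Im(b\overline\sigma)$ and between $\Sigob_t*\hat{v}$ and $\overline{\Sigo_t*\hat{v}}$.
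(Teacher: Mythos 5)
Your proposal is correct and follows essentially the same route as the paper's proof: multiply \eqref{eq-HFB-raw-0} by $\overline{\sigma}_t$, take imaginary parts, use $|\sigma_t|^2=(1+\gamma_t)\gamma_t$ to collapse the left-hand side to $\tfrac14\partial_t\gamma_t$ and the right-hand side to $\tfrac{\lambda}{2N}\Im\big((\Sigo_t*\hat v)\overline{\sigma}_t\big)$, then substitute the resulting $\gamma$-equation back to isolate $i\partial_t\sigma_t$. Your explicit remarks on the converse direction and on the fact that both implications use the constraint $|\sigma_t|^2=(1+\gamma_t)\gamma_t$ are a slight sharpening of what the paper leaves implicit, but not a different argument.
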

	\prf	
		Multiplying \eqref{eq-HFB-raw-0} by $\overline{\sigma}_t(p)$, and taking imaginary parts, the l.h.s. of \eqref{eq-HFB-raw-0} reads
		\begin{align}
		\MoveEqLeft \frac{\Re(\overline{\sigma}_t(p)\partial_t\sigma_t(p))}{2} \, - \, \frac{|\sigma_t(p)|^2\partial_t\gamma_t(p)}{2(1+\gamma_t(p))}\\
		=&  \frac14 \partial_t\big(|\sigma_t(p)|^2 \, - \, \gamma_t(p)^2\big)\\
		=&  \frac14\partial_t \gamma_t(p) \, ,
		\end{align} 
		where we employed \eqref{eq-gamma-sigma-rel-0}. Thus, we have that
		\begin{align}
		\frac14\partial_t \gamma_t =&  \frac{\lambda}{2N} \Im\Big( \big(\Sigo_t* \hat{v}\big)(1+\gamma_t)\overline{\sigma}_t \,  + \, \big(\Sigob_t*\hat{v}\big) \frac{|\sigma_t|^2\sigma_t}{1+\gamma_t}\Big) \, ,
		\end{align} 
		which, using \eqref{eq-gamma-sigma-rel-0}, is equivalent to
		\begin{align}
		\partial_t\gamma_t(p) =&  \frac{2\lambda}{N} \Im\Big( \big(\Sigo_t* \hat{v}\big)(1+\gamma_t)\overline{\sigma}_t \,  + \, \big(\Sigob_t*\hat{v}\big) \gamma_t \sigma_t\Big)\\
		=&  \frac{2\lambda}{N} \Im\Big( \big(\Sigo_t* \hat{v}\big)\overline{\sigma}_t\Big)
		\, . 
		\end{align} 
		Substituting this identity into \eqref{eq-HFB-raw-0}, we find that
		\begin{align}
		\MoveEqLeft \frac{i\partial_t \sigma_t(p)}2 \, - \, \frac{\lambda}{N}\frac{\sigma_t(p)}{2(1+\gamma_t(p))}\Big( \big(\Sigo_t* \hat{v}\big)\overline{\sigma}_t \, - \, \big(\Sigob_t* \hat{v}\big)\sigma_t\Big) \, = \,\\
		&\Big(E(p)+\frac{\lambda}{N}\Gamo_t*\big(\hat{v}+\hat{v}(0)\big)(p)\Big)\sigma_t(p)  \\
		& + \, \frac{\lambda}{2N} \Big( \big(\Sigo_t* \hat{v}\big)(p) (1+\gamma_t(p)) \, + \, \big(\Sigob_t*\hat{v}\big)(p) \frac{\sigma_t(p)^2}{1+\gamma_t(p)}\Big) \, .
		\end{align} 
		Isolating $i\partial_t \sigma_t$ on the l.h.s. and using \eqref{eq-gamma-sigma-rel-0}, we obtain that
		\begin{align}
			i\partial_t \sigma_t =&  2\big(E+\frac{\lambda}{N}\Gamo_t*(\hat{v}+\hat{v}(0))\big)\sigma_t  \, + \, \frac{\lambda}{N} \big(\Sigo_t* \hat{v}\big) (1+2\gamma_t) \, . 
		\end{align} 
		This concludes the proof.
	\endprf 
	
	\begin{lemma}[Bogoliubov dispersion] \label{lem-bog-dispersion}
		We have that $\HHFBd(t)=0$ if
		\begin{align}
			\MoveEqLeft \Omega_t(p) \, = \,\\
			&\Big(E(p)+\frac{\lambda}N\big((\gamma_t+N\vol|\phi_t|^2\delta)*(\hat{v}+\hat{v}(0))\big)(p)\Big)\big(1+2\gamma_t(p)\big) \\
			& + \, \frac{2\lambda}N\Re\Big(\big((\overline{\sigma}_t+N\vol\overline{\phi}_t^2\delta)*\hat{v}\big)(p)\sigma_t(p)\Big) \, -  \frac{\Re\big(\overline{\sigma}_t(p)i\partial_t\sigma_t(p)\big)}{1+\gamma_t(p)}\, .
		\end{align}
		If $\sigma_t$ satisfies \eqref{eq-sigma-HFB-lem-1}, we have that
		\begin{align}
			\Omo_t  \, =\, E+\frac{\lambda}{N}\Gamo_t*\big(\hat{v}+\hat{v}(0)\big) \, + \, \frac{\lambda}{N}\frac{\Re\big((\Sigo_t*\hat{v})\sigob_t\big) }{1+\gamo_t}  \, .
		\end{align}  
	\end{lemma}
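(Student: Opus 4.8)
The plan is to derive both assertions directly from the explicit expression \eqref{def-HHFBd} for $\HHFBd(t)$ obtained in Lemma~\ref{lem-hfluc}, without any further analytic input. First I would observe that \eqref{def-HHFBd} exhibits $\HHFBd(t)$ as a purely diagonal quadratic operator $\int\dx{p}\,C_t(p)\,\ad_pa_p$, where $C_t(p)$ is precisely the bracket in \eqref{def-HHFBd} and, crucially, $-\Omega_t(p)$ is the only summand of $C_t(p)$ carrying the dispersion. Since the operators $\{\ad_pa_p\}_{p\in\lattice}$ are linearly independent, $\HHFBd(t)=0$ is equivalent to $C_t(p)=0$ for every $p$; this is a scalar linear equation in $\Omega_t(p)$, which I would simply solve. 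Reading off the coefficients (and using the abbreviations $\Gamo_t=\gamma_t+N\vol|\phi_t|^2\delta$, $\Sigo_t=\sigma_t+N\vol\phi_t^2\delta$, $\Sigob_t=\overline{\sigma}_t+N\vol\overline{\phi}_t^2\delta$ as mere notation) yields exactly the first displayed formula, which is the first claim.

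For the second claim I would substitute the evolution equation \eqref{eq-sigma-HFB-lem-1} into the only term of that formula containing a time derivative, namely $-\Re\big(\overline{\sigo}_t\,i\partial_t\sigo_t\big)/(1+\gamo_t)$. Writing $h_t:=E+\tfrac{\lambda}{N}\Gamo_t*(\hat{v}+\hat{v}(0))$, so that $i\partial_t\sigo_t=2h_t\sigo_t+\tfrac{\lambda}{N}(\Sigo_t*\hat{v})(1+2\gamo_t)$, I would multiply by $\overline{\sigo}_t$ and take the real part: the first summand becomes $2h_t|\sigo_t|^2=2h_t\gamo_t(1+\gamo_t)$ after invoking $|\sigo_t|^2=\gamo_t(1+\gamo_t)$ from \eqref{eq-gamma-sigma-rel-0}, and the second becomes $\tfrac{\lambda}{N}(1+2\gamo_t)\,\Re\big((\Sigo_t*\hat{v})\overline{\sigo}_t\big)$; since $\hat v$ is real and even, $\overline{\Sigo_t*\hat v}=\Sigob_t*\hat v$, hence $\Re\big((\Sigo_t*\hat{v})\overline{\sigo}_t\big)=\Re\big((\Sigob_t*\hat{v})\sigo_t\big)$, matching the Hartree-type term already present. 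Dividing by $1+\gamo_t$ and inserting this back, the $h_t$-contributions combine via $h_t(1+2\gamo_t)-2h_t\gamo_t=h_t$ and the $\Re\big((\Sigob_t*\hat{v})\sigo_t\big)$-contributions combine via the elementary identity
\[ 2-\frac{1+2\gamo_t}{1+\gamo_t}\,=\,\frac{1}{1+\gamo_t}\,, \]
which produces precisely $\Omo_t=E+\tfrac{\lambda}{N}\Gamo_t*(\hat{v}+\hat{v}(0))+\tfrac{\lambda}{N}\Re\big((\Sigo_t*\hat{v})\sigob_t\big)/(1+\gamo_t)$.

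I do not expect any genuine obstacle here: the whole argument is a finite computation with \eqref{def-HHFBd}, closely parallel to the proof of Lemma~\ref{lem-HFB-equations}. The only point that demands care is bookkeeping, namely tracking which convolutions $\Sigo_t*\hat v$ and $\Sigob_t*\hat v$ are complex conjugates of one another (so that the two pieces proportional to $\Re\big((\Sigob_t*\hat{v})\sigo_t\big)$ are correctly identified and merged) and consistently absorbing the condensate $\delta$-shifts into $\Gamo_t,\Sigo_t$ at each step; once this is done, the algebraic identity above closes the argument.
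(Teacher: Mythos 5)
Your proposal is correct and follows essentially the same route as the paper's proof: the first claim is read off by setting the coefficient of $\ad_pa_p$ in \eqref{def-HHFBd} to zero and solving for $\Omega_t(p)$, and the second follows by substituting \eqref{eq-sigma-HFB-lem-1} into the $\Re\big(\overline{\sigma}_t\,i\partial_t\sigma_t\big)/(1+\gamma_t)$ term, invoking $|\sigma_t|^2=\gamma_t(1+\gamma_t)$, and combining via $h_t(1+2\gamma_t)-2h_t\gamma_t=h_t$ and $2-\tfrac{1+2\gamma_t}{1+\gamma_t}=\tfrac{1}{1+\gamma_t}$, exactly as in the paper. Your explicit remark that $\overline{\Sigo_t*\hat v}=\Sigob_t*\hat v$ (since $\hat v$ is real) is the right bookkeeping point and is implicit in the paper's computation.
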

	\prf
		Lemma \ref{lem-hfluc} implies that $\HHFBd(t)=0$ if
		\begin{align}
			\MoveEqLeft \Omega_t(p) \, = \,\\
			&\Big(E(p)+\frac{\lambda}N\big((\gamma_t+N\vol|\phi_t|^2\delta)*(\hat{v}+\hat{v}(0))\big)(p)\Big)\big(1+2\gamma_t(p)\big) \\
			& + \, \frac{2\lambda}N\Re\Big(\big((\overline{\sigma}_t+N\vol\overline{\phi}_t^2\delta)*\hat{v}\big)(p)\sigma_t(p)\Big) \, -  \frac{\Re\big(\overline{\sigma}_t(p)i\partial_t\sigma_t(p)\big)}{1+\gamma_t(p)} \, . 
		\end{align} 
		Substituting \eqref{eq-sigma-ren-2} in this expression and recalling definitions \eqref{def-Gamo} of $\Gamo$ and \eqref{def-Sigo} of $\Sigo$, we obtain
		\begin{align}
			\Omo_t = & - 2\big(E+\frac{\lambda}{N}\Gamo_t*(\hat{v}+\hat{v}(0))\big)\frac{|\sigo_t|^2}{1+\gamo_t}+  \big[E+\frac{\lambda}{N}\Gamo_t*\big(\hat{v}+\hat{v}(0)\big)\big](1+2\gamo_t) \\
			& - \, \frac{\lambda}{N} \Re\Big(\big( \Sigo_t* \hat{v}\big) \sigob_t\Big)\frac{1+2\gamma_t}{1+\gamo_t} \, + \, \frac{2\lambda}{N} \Re\Big(\big(\Sigo_t*\hat{v}\big) \sigob_t \Big) \, .
		\end{align}
		Recalling that $|\sigma_t|^2=\gamma_t(1+\gamma_t)$, see \eqref{eq-gamma-sigma-rel-0}, this results in
		\begin{align}
			\Omo_t  \, =\, E+\frac{\lambda}{N}\Gamo_t*\big(\hat{v}+\hat{v}(0)\big) \, + \, \frac{\lambda}{N}\frac{\Re\big((\Sigo_t*\hat{v})\sigob_t\big) }{1+\gamo_t} \, .
		\end{align}  
		This finishes the proof.
	\endprf 
	
	\section{HFB system analysis}
	
	\begin{lemma}[Generalized HFB equations]\label{lem-HFB-gen}
		Let $(\phi,\gamma,\sigma)$ satisfy either of the renormalized equations
		\begin{align}
			i\partial_t \phi_t  =&  \frac{\lambda}{N} \Big( \big(\Gamj_t*(\hat{v}+\hat{v}(0))\big) (0) \phi_t \, + \, (\Sigj_t*\hat{v}) (0) \overline{\phi}_t\Big) \\
			& - \,  2\lambda\vol\hat{v}(0) |\phi_t|^2 \phi_t \, , \label{eq-phi-HFB-genlem-1}\\
			i\partial_t \gamma_t =&  \frac{\lambda}N \big[\big(\Sigj_t*\hat{v} \big) \,\overline{\sigma}_t \, - \,  \big(\overline{\Sigj_t}*\hat{v} \big) \,\sigma_t\big] \, , \label{eq-gamma-HFB-genlem-1}\\
			i\partial_t \sigma_t =& 2\big(E+\frac{\lambda}{N}\Gamj_t*(\hat{v}+\hat{v}(0))\big)\sigma_t \, + \, \frac{\lambda}{N} \big( \Sigj_t* \hat{v}\big) (1+2\gamma_t)  \label{eq-sigma-HFB-genlem-1}
		\end{align} 
		for $j=0,1$, where 
		\begin{align}
			\Gamj =& (1+2\delta_{1,j}\fplus)\gamma \, + \, \delta_{1,j}\fplus \, + \, N\vol|\phi|^2\delta \, , \\
			\Sigj =&  (1+2\delta_{1,j}\fplus)\sigma \, + \, N\vol\phi^2\delta \, ,
		\end{align} 
		see definitions \eqref{def-Gamo}, \eqref{def-Sigo}, \eqref{def-Gamt} and \eqref{def-Sigt}. Then $(\phi,\Gamj,\Sigj)$ satisfies
		\begin{align} 
		i\partial_t \phi_t  =&  \frac{\lambda}{N} \Big( \big(\Gamj_t*(\hat{v}+\hat{v}(0))\big) (0) \phi_t \, + \, (\Sigj_t*\hat{v}) (0) \overline{\phi}_t\Big) \\
		& - \,  2\lambda\vol\hat{v}(0) |\phi_t|^2 \phi_t \, , \\
		\partial_t \Gamj_t =&  -\frac{2\lambda}N \Im\Big(\big(\overline{\Sigj_t}*\hat{v} \big) \,\Sigj_t\Big) \, , \\
		i\partial_t \Sigj_t =& 2\big(E+\frac{\lambda}{N}\Gamj_t*(\hat{v}+\hat{v}(0))\big)\Sigj_t \, + \, \frac{\lambda}{N} \big( \Sigj_t* \hat{v}\big) (1+2\Gamj_t) \\
		& - \, 4N\lambda\vol^2\hat{v}(0)|\phi|^2\phi^2\delta 
		\end{align} 
		for $j=0,1$.
	\end{lemma}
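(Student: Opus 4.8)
\emph{Strategy.} The equation for $\phi_t$ in the conclusion is \emph{verbatim} the hypothesized equation \eqref{eq-phi-HFB-genlem-1}, so there is nothing to prove there. The task therefore reduces to deriving the evolution equations for the shifted fields $\Gamj_t=(1+2\delta_{1,j}\fplus)\gamma_t+\delta_{1,j}\fplus+N\vol|\phi_t|^2\delta$ and $\Sigj_t=(1+2\delta_{1,j}\fplus)\sigma_t+N\vol\phi_t^2\delta$ directly from their definitions. The plan is to differentiate in $t$, substitute \eqref{eq-phi-HFB-genlem-1}--\eqref{eq-sigma-HFB-genlem-1}, and reassemble everything in terms of $\Gamj$, $\Sigj$. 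The only algebraic inputs I will use are $\partial_t|\phi_t|^2=2\Re(\overline{\phi}_t\partial_t\phi_t)$ and $\partial_t\phi_t^2=2\phi_t\partial_t\phi_t$; the identity $(1+2\delta_{1,j}\fplus)\sigma_t=\Sigj_t-N\vol\phi_t^2\delta$; and, from the definition of $\Gamj$, the identity $(1+2\delta_{1,j}\fplus)(1+2\gamma_t)=1+2\Gamj_t-2N\vol|\phi_t|^2\delta$; together with the fact that $\Gamj_t$, $\fplus$ and $\hat v$ are real-valued, and that the momentum $\delta$ is supported at the origin where $E(0)=0$ and $(\Sigj_t*\hat v)(p)\delta(p)=(\Sigj_t*\hat v)(0)\delta(p)$.

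\emph{The equation for $\Gamj$.} Writing $i\partial_t\Gamj_t=(1+2\delta_{1,j}\fplus)\,i\partial_t\gamma_t+N\vol\,(i\partial_t|\phi_t|^2)\,\delta$, I substitute \eqref{eq-gamma-HFB-genlem-1} into the first summand and \eqref{eq-phi-HFB-genlem-1} into the second. Using $(1+2\delta_{1,j}\fplus)\overline{\sigma}_t=\overline{\Sigj_t}-N\vol\overline{\phi}_t^2\delta$, the first summand becomes $\frac\lambda N\big[(\Sigj_t*\hat v)\overline{\Sigj_t}-(\overline{\Sigj_t}*\hat v)\Sigj_t\big]$ minus a $\delta$-supported remainder $\lambda\vol\big[(\Sigj_t*\hat v)(0)\overline{\phi}_t^2-(\overline{\Sigj_t}*\hat v)(0)\phi_t^2\big]\delta$ obtained by localizing the convolutions against $N\vol\phi_t^2\delta$ at the origin. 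On the other hand, inserting \eqref{eq-phi-HFB-genlem-1} into $i\partial_t|\phi_t|^2=(i\partial_t\phi_t)\overline{\phi}_t-\overline{(i\partial_t\phi_t)\overline{\phi}_t}$, the Hartree contribution $(\Gamj_t*(\hat v+\hat v(0)))(0)|\phi_t|^2$ and the term $-2\lambda\vol\hat v(0)|\phi_t|^4$ are real and drop out, leaving exactly $N\vol\,(i\partial_t|\phi_t|^2)\,\delta=+\lambda\vol\big[(\Sigj_t*\hat v)(0)\overline{\phi}_t^2-(\overline{\Sigj_t}*\hat v)(0)\phi_t^2\big]\delta$, which cancels the remainder above. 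Hence $i\partial_t\Gamj_t=\frac\lambda N\big[(\Sigj_t*\hat v)\overline{\Sigj_t}-(\overline{\Sigj_t}*\hat v)\Sigj_t\big]$; since $\hat v$ is real the bracket equals $-2i\,\Im\big((\overline{\Sigj_t}*\hat v)\Sigj_t\big)$, giving $\partial_t\Gamj_t=-\frac{2\lambda}N\Im\big((\overline{\Sigj_t}*\hat v)\Sigj_t\big)$, as claimed.

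\emph{The equation for $\Sigj$.} Similarly $i\partial_t\Sigj_t=(1+2\delta_{1,j}\fplus)\,i\partial_t\sigma_t+N\vol\,(i\partial_t\phi_t^2)\,\delta$. Substituting \eqref{eq-sigma-HFB-genlem-1} and using $(1+2\delta_{1,j}\fplus)\sigma_t=\Sigj_t-N\vol\phi_t^2\delta$ together with $(1+2\delta_{1,j}\fplus)(1+2\gamma_t)=1+2\Gamj_t-2N\vol|\phi_t|^2\delta$, the first summand equals the desired $2\big(E+\frac\lambda N\Gamj_t*(\hat v+\hat v(0))\big)\Sigj_t+\frac\lambda N(\Sigj_t*\hat v)(1+2\Gamj_t)$ minus the $\delta$-supported terms $2\lambda\vol(\Gamj_t*(\hat v+\hat v(0)))(0)\phi_t^2\delta+2\lambda\vol(\Sigj_t*\hat v)(0)|\phi_t|^2\delta$, where $E(0)=0$ kills the free part of the subtracted $N\vol\phi_t^2\delta$ contribution. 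Substituting \eqref{eq-phi-HFB-genlem-1} into $i\partial_t\phi_t^2=2\phi_t(i\partial_t\phi_t)$ and multiplying by $N\vol\delta$ produces $2\lambda\vol(\Gamj_t*(\hat v+\hat v(0)))(0)\phi_t^2\delta+2\lambda\vol(\Sigj_t*\hat v)(0)|\phi_t|^2\delta-4N\lambda\vol^2\hat v(0)|\phi_t|^2\phi_t^2\delta$. The first two $\delta$-terms cancel those coming from the $\sigma$-part, and the surviving term $-4N\lambda\vol^2\hat v(0)|\phi_t|^2\phi_t^2\delta$ — inherited from the $-2\lambda\vol\hat v(0)|\phi_t|^2\phi_t$ term of \eqref{eq-phi-HFB-genlem-1} — is precisely the extra term in the claimed equation for $\Sigj$. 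This completes the derivation, uniformly in $j\in\{0,1\}$, the weight $1+2\delta_{1,j}\fplus$ never interfering since it is absorbed into $\Gamj$, $\Sigj$ at each step.

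\emph{Main obstacle.} The whole computation is purely algebraic — no analytic estimates enter, since $(\phi,\gamma,\sigma)\in C^1_t$ is assumed — but the delicate part is the bookkeeping of the $\delta$-supported (condensate) contributions: one must correctly localize every convolution against the $\delta$-parts of $\Gamj$, $\Sigj$ at the origin, track the exact cancellations between these and the condensate terms generated by $\partial_t|\phi_t|^2$ and $\partial_t\phi_t^2$ through \eqref{eq-phi-HFB-genlem-1}, and isolate the single non-cancelling remainder $-4N\lambda\vol^2\hat v(0)|\phi_t|^2\phi_t^2\delta$. This is where essentially all the care is required.
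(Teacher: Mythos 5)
Your proposal is correct and follows essentially the same route as the paper's proof: differentiate the definitions of $\Gamj$ and $\Sigj$, substitute \eqref{eq-phi-HFB-genlem-1}--\eqref{eq-sigma-HFB-genlem-1}, use $\partial_t|\phi_t|^2=2\Im(\overline{\phi}_t\,i\partial_t\phi_t)$ and $i\partial_t\phi_t^2=2\phi_t\,i\partial_t\phi_t$, and reassemble via $(1+2\delta_{1,j}\fplus)\sigma_t=\Sigj_t-N\vol\phi_t^2\delta$ and $(1+2\delta_{1,j}\fplus)(1+2\gamma_t)=1+2\Gamj_t-2N\vol|\phi_t|^2\delta$. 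Your explicit bookkeeping of the $\delta$-supported cancellations is just a more verbose reading of the same algebra the paper carries out in one pass.
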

	\prf
		We start with 
		\begin{align}
			\partial_t |\phi_t|^2 = & 2\Re(\overline{\phi}_t\partial_t \phi_t )\\
			=&  2\Im(\overline{\phi}_ti\partial_t \phi_t ) \\
			=&  \frac{2\lambda}N\Im\Big((\Sigj_t*\hat{v})(0) \overline{\phi}_t^2\Big) \, .
		\end{align} 
		As a consequence, \eqref{eq-gamma-HFB-genlem-1} implies that
		\begin{align}
			\partial_t\Gamj_t =&  (1+2\delta_{1,j}\fplus)\partial_t\gamma_t \, + \, N\vol \partial_t|\phi_t|^2\delta \\
			=&  \frac{2\lambda}N \Im\Big(\big(\Sigj_t*\hat{v} \big) \,\big( (1+2\delta_{1,j}\fplus)\overline{\sigma}_t+N\vol\overline{\phi}_t^2\delta\big)\Big) \\
			=&  -\frac{2\lambda}N \Im\Big(\big(\overline{\Sigj_t}*\hat{v} \big) \,\Sigj_t\Big) \, .
		\end{align} 
		Similarly, we obtain, using \eqref{eq-phi-HFB-genlem-1} and \eqref{eq-sigma-HFB-genlem-1}, that
		\begin{align}
			i\partial_t\Sigj_t =&  (1+2\delta_{1,j}\fplus)i\partial_t\sigma_t \, + \, 2N\vol\phi_t i\partial_t\phi_t\delta \\
			=&  (1+2\delta_{1,j}\fplus)\Big(2\big(E+\frac{\lambda}{N}\Gamj_t*(\hat{v}+\hat{v}(0))\big)\sigma_t \\
			& + \, \frac{\lambda}{N} \big( \Sigj_t* \hat{v}\big) (1+2\gamma_t) \Big) \, + \, 2\lambda\vol\phi_t \delta \Big( \big(\Gamj_t*(\hat{v}+\hat{v}(0))\big)\phi_t \\
			& + \, (\Sigj_t*\hat{v})\overline{\phi}_t\Big) \, - \,  4N\lambda\vol^2\hat{v}(0) |\phi_t|^2 \phi_t^2 \\
			=& 2\big(E+\frac{\lambda}{N}\Gamj_t*(\hat{v}+\hat{v}(0))\big)\Sigj_t \, + \, \frac{\lambda}{N} \big( \Sigj_t* \hat{v}\big) (1+2\Gamj_t) \\
			& - \, 4N\lambda\vol^2\hat{v}(0)|\phi|^2\phi^2\delta \, .
		\end{align} 
		This concludes the proof.
	\endprf 
	
	\begin{lemma} \label{lem-HFB-NL-contraction}
			Assume that $\hat{v}\in L^1_{\sqrt{1+E}}\cap L^\infty(\lattice)$. Recall the definition of the nonlinearity $\vec{\cJ}=(\cJ_1,\cJ_2,\cJ_3)$ in the HFB equations, where
			\begin{align}
			\cJ_1(\phi,\Gamma,\Sigma) =&  - \, i\Big[\frac{\lambda}{N} \Big( \big(\Gamma*(\hat{v}+\hat{v}(0))\big) (0) \phi \, + \, (\Sigma*\hat{v}) (0) \overline{\phi}\Big) \, , \\
			& - \,  2\lambda\vol\hat{v}(0) |\phi|^2 \phi\Big] \\
			\cJ_2(\phi,\Gamma,\Sigma) =&  - \, \frac{2\lambda}N \Im\Big(\big(\overline{\Sigma}*\hat{v} \big) \, \Sigma\Big) \, ,\\
			\cJ_3(\phi,\Gamma,\Sigma) =&  - \, i\Big[2\big(\frac{\lambda}{N}\Gamma*(\hat{v}+\hat{v}(0))\big)\Sigma \, + \, \frac{\lambda}{N} \big( \Sigma* \hat{v}\big) (1+2\Gamma)\\
			& - \, 4N\lambda\vol^2\hat{v}(0)|\phi|^2\phi^2 \, \delta\Big] \, ,
		\end{align} 
		see \eqref{def-cj1}--\eqref{def-cj3}. Then there is a constant $C>0$ such that we have that
		\begin{align}
			\MoveEqLeft \|\vec{\cJ}(\phi_1,\Gamma_1,\Sigma_1) \, - \, \vec{\cJ}(\phi_2,\Gamma_2,\Sigma_2)\|_{\hfbspace^1}\\
			\leq & C \Big(\frac{\lambda}N(\|\hat{v}\|_{L^1_{\sqrt{1+E}}}+\|\hat{v}\|_\infty)\big(\| (\phi_1,\Gamma_1,\Sigma_1)\|_{\hfbspace^1} + \| (\phi_2,\Gamma_2,\Sigma_2)\|_{\hfbspace^1}\big) \\
			& \|(\phi_1-\phi_2,\Gamma_1-\Gamma_2,\Sigma_1-\Sigma_2)\|_{\hfbspace^1} \, + \, \lambda\vol\hat{v}(0)(|\phi_1|^2+|\phi_2|^2)\\
			& \big(1+N\vol^{\frac32}(|\phi_1|+|\phi_2|)\big)|\phi_1-\phi_2|
		\end{align} 
	\end{lemma}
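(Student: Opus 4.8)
The plan is to bound the difference $\vec{\cJ}(\phi_1,\Gamma_1,\Sigma_1)-\vec{\cJ}(\phi_2,\Gamma_2,\Sigma_2)$ component by component, using the algebraic identity
\[
    a_1b_1 - a_2b_2 \, = \, (a_1-a_2)b_1 \, + \, a_2(b_1-b_2)
\]
(and its trilinear analogue) to split every product and multilinear term into pieces each carrying exactly one difference factor, and then to estimate each piece in the $\hfbspace^1$-norm $|\phi| + \|\Gamma\|_{L^1_{1+E}} + \|\Gamma\|_\infty + \|\Sigma\|_{L^2_{1+E}} + \|\Sigma\|_\infty$. The only analytic inputs needed are Young's convolution inequality and Hölder, together with the embedding $L^1_{1+E}\cap L^\infty \hookrightarrow L^1_{\sqrt{1+E}}$, so that convolutions $\Gamma * (\hat v + \hat v(0))$ and $\Sigma*\hat v$ are controlled in $L^\infty$ (and in the weighted norms) by $(\|\hat v\|_{L^1_{\sqrt{1+E}}}+\|\hat v\|_\infty)\|(\phi,\Gamma,\Sigma)\|_{\hfbspace^1}$. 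Note $\hat v(0)$ is finite since $\hat v\in L^\infty$, or alternatively $\hat v(0) = \int v \le \|\hat v\|_{L^1}$.

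First I would handle $\cJ_2(\phi,\Gamma,\Sigma) = -\tfrac{2\lambda}{N}\Im(\overline\Sigma * \hat v)\Sigma$. Writing the difference as $\Im((\overline{\Sigma_1-\Sigma_2})*\hat v)\Sigma_1 + \Im(\overline{\Sigma_2}*\hat v)(\Sigma_1-\Sigma_2)$, I bound the first term in $L^1_{1+E}$ by $\|(\overline{\Sigma_1-\Sigma_2})*\hat v\|_\infty\|\Sigma_1\|_{L^1_{1+E}}$ and use $\|\Sigma\|_{L^1_{1+E}}\lesssim \|\Sigma\|_{L^2_{1+E}} + \|\Sigma\|_\infty$ (Cauchy–Schwarz on the torus measure, with $\vol\ge 1$), plus the analogous $L^\infty$ bound; the second term is symmetric. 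This yields a contribution of the form $\tfrac{\lambda}{N}(\|\hat v\|_{L^1_{\sqrt{1+E}}}+\|\hat v\|_\infty)(\|\Sigma_1\|_{\hfbspace^1}+\|\Sigma_2\|_{\hfbspace^1})\|\Sigma_1-\Sigma_2\|_{\hfbspace^1}$, which is of the claimed shape. Next, $\cJ_3$ contains the linear-in-$\Sigma$ pieces $2(\tfrac{\lambda}{N}\Gamma*(\hat v+\hat v(0)))\Sigma$ and $\tfrac{\lambda}{N}(\Sigma*\hat v)(1+2\Gamma)$; each is bilinear in $(\Gamma,\Sigma)$ after moving the harmless additive constant $1$, so the same split-and-Young argument applies, estimating $\Gamma*(\hat v + \hat v(0))$ and $\Sigma*\hat v$ in $L^\infty$ and the remaining factor in $L^2_{1+E}$ or $L^\infty$. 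The term $1 \cdot (\Sigma*\hat v)$ is linear, so its difference is simply $(\Sigma_1-\Sigma_2)*\hat v$, bounded directly. The condensate term $-4N\lambda\vol^2\hat v(0)|\phi|^2\phi^2\delta$ in $\cJ_3$ is supported at $p=0$, so its $\hfbspace^1$-norm is essentially $4N\lambda\vol^2\hat v(0)||\phi_1|^2\phi_1^2 - |\phi_2|^2\phi_2^2|$; factoring $|\phi_1|^2\phi_1^2 - |\phi_2|^2\phi_2^2$ into a sum of terms each with one factor $\phi_1-\phi_2$ or $\overline{\phi_1}-\overline{\phi_2}$ gives the $\lambda\vol\hat v(0)(|\phi_1|^2+|\phi_2|^2)N\vol^{3/2}(|\phi_1|+|\phi_2|)|\phi_1-\phi_2|$ piece (after absorbing powers of $\vol$, using $\vol\ge 1$), which matches the last line of the claimed bound.

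For $\cJ_1$, the first bracket $\tfrac{\lambda}{N}((\Gamma*(\hat v+\hat v(0)))(0)\phi + (\Sigma*\hat v)(0)\overline\phi)$ is again bilinear: evaluate the convolutions at $p=0$, bounded by $(\|\hat v\|_{L^1_{\sqrt{1+E}}}+\|\hat v\|_\infty)\|(\Gamma,\Sigma)\|_{\hfbspace^1}$, times $|\phi|$, and split the difference in the two factors. The cubic Gross–Pitaevskii-type term $-2\lambda\vol\hat v(0)|\phi|^2\phi$ has difference $|\phi_1|^2\phi_1 - |\phi_2|^2\phi_2$, factored as before into terms with one $\phi_1-\phi_2$ and a coefficient $|\phi_1|^2+|\phi_2|^2$, giving the $\lambda\vol\hat v(0)(|\phi_1|^2+|\phi_2|^2)|\phi_1-\phi_2|$ piece (the $1$ in $(1+N\vol^{3/2}(\cdots))$). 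I do not expect a genuine obstacle here — the proof is entirely a bookkeeping exercise of telescoping differences and applying Young/Hölder — but the main point requiring care is to verify that every convolution factor is legitimately controlled in $L^\infty$ by the $\hfbspace^1$-norm, i.e. that $\|\Gamma*(\hat v+\hat v(0))\|_\infty \le (\|\hat v\|_1 + \|\hat v\|_\infty)\|\Gamma\|_1 + \|\hat v(0)\|\|\Gamma\|_1 \lesssim (\|\hat v\|_{L^1_{\sqrt{1+E}}}+\|\hat v\|_\infty)\|\Gamma\|_{\hfbspace^1}$, and similarly for $\Sigma*\hat v$ via $\|\Sigma\|_1 \lesssim \|\Sigma\|_2 + \|\Sigma\|_\infty$; keeping the weight $1+E$ correctly attached (the $L^1_{1+E}$ and $L^2_{1+E}$ pieces of the target norm) forces one to always place the weight on the non-convolution factor, which works since $(1+E)$ is multiplicative enough under the relevant Young estimate with $\hat v \in L^1_{\sqrt{1+E}}$. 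Collecting all contributions and taking $C$ to be the maximum of the finitely many implied constants yields the stated inequality.
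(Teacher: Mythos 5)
Your overall strategy --- telescoping the differences so that each term carries exactly one difference factor, then applying Young and H\"older --- is the same as the paper's, and most of your bookkeeping (the $\cJ_1$ estimate, the quartic-in-$\phi$ condensate terms, the weight-splitting $\sqrt{E(p)}\le\sqrt{E(p-q)}+\sqrt{E(q)}$ needed to exploit $\hat v\in L^1_{\sqrt{1+E}}$) is sound. However, there is a genuine gap in your treatment of $\cJ_2$. You propose to bound
\begin{align}
\big\|\big((\overline{\Sigma_1-\Sigma_2})*\hat v\big)\,\Sigma_1\big\|_{L^1_{1+E}}\,\le\,\|(\overline{\Sigma_1-\Sigma_2})*\hat v\|_\infty\,\|\Sigma_1\|_{L^1_{1+E}}
\end{align}
and then to control $\|\Sigma_1\|_{L^1_{1+E}}$ by $\|\Sigma_1\|_{L^2_{1+E}}+\|\Sigma_1\|_\infty$ ``by Cauchy--Schwarz on the torus measure.'' This embedding is false: the relevant measure space is the momentum lattice $\lattice$ with the \emph{infinite} measure $\int_\lattice\dx{p}=\frac1\vol\sum_{p\in\lattice}$, not the finite-volume position torus $\Lambda$. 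For instance, $\Sigma(p)=(1+E(p))^{-3/2-\epsilon}$ with $0<\epsilon<1$ lies in $L^2_{1+E}\cap L^\infty(\lattice)$ but not in $L^1_{1+E}(\lattice)$ in three dimensions. Since the $\Sigma$-slot of $\hfbspace^1$ only carries $L^2_{1+E}\cap L^\infty$, the quantity $\|\Sigma_1\|_{L^1_{1+E}}$ is not controlled by the norms you are allowed to use, and your $\cJ_2$ estimate does not close.

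The repair is to distribute the weight evenly between the two $\Sigma$-factors rather than placing it entirely on one: by Cauchy--Schwarz,
\begin{align}
\int\dx{p}\,(1+E(p))\,\Big|\big((\overline{\Sigma_1-\Sigma_2})*\hat v\big)(p)\,\Sigma_1(p)\Big|\,\le\,\|(\Sigma_1-\Sigma_2)*\hat v\|_{L^2_{1+E}}\,\|\Sigma_1\|_{L^2_{1+E}}\,,
\end{align}
and then $\|(\Sigma_1-\Sigma_2)*\hat v\|_{L^2_{1+E}}$ is estimated by Young's inequality after splitting $\sqrt{E(p)}\le\sqrt{E(p-q)}+\sqrt{E(q)}$, which yields $\|\hat v\|_1\|\Sigma_1-\Sigma_2\|_{L^2_{1+E}}+\|\sqrt{E}\,\hat v\|_1\|\Sigma_1-\Sigma_2\|_2$ and hence the factor $\|\hat v\|_{L^1_{\sqrt{1+E}}}$ in the claimed bound. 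This is exactly how the paper's proof proceeds. Note that the analogous pairing issue does not arise in $\cJ_3$, where the target norm is $L^2_{1+E}$ and one may legitimately put the $\Gamma$-factor in $L^\infty$ or invoke $\|\Gamma\|_{L^2_{1+E}}\le\|\Gamma\|_{L^1_{1+E}}^{1/2}\|\Gamma\|_\infty^{1/2}$, both of which are controlled by the $\Gamma$-slot of $\hfbspace^1$.
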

	\prf
		$\cJ_1$ satisfies
		\begin{align}
			\MoveEqLeft |\cJ_1(\phi_1,\Gamma_1,\Sigma_1) \, - \, \cJ_1(\phi_2,\Gamma_2,\Sigma_2)|\\
			\leq& \frac{\lambda}{N} \Big( |\Gamma_1-\Gamma_2|*(|\hat{v}|+|\hat{v}(0)|)(0)|\phi_1| \, + \, |\Gamma_2|*(|\hat{v}|+|\hat{v}(0)|)(0)|\phi_1-\phi_2|\\
			& + \, \big(|\Sigma_1-\Sigma_2|*|\hat{v}|\big)(0) |\phi_1| \, + \, \big(|\Sigma_2|*|\hat{v}|\big)(0) |\phi_1-\phi_2|\Big) \\
			& + \,  2\lambda\vol|\hat{v}(0)| \big||\phi_1|^2 \phi_1-|\phi_2|^2 \phi_2\big| \, .
		\end{align} 
		Using the fact that
		\begin{align}
			|\phi_1|^2 \, - \, |\phi_2|^2 \, = \, \Re\big[(\overline{\phi}_1-\overline{\phi}_2)(\phi_1+\phi_2)\big] \, , \label{eq-phi2-diff-0}
		\end{align} 
		we thus obtain that
		\begin{align}
			\MoveEqLeft |\cJ_1(\phi_1,\Gamma_1,\Sigma_1) \, - \, \cJ_1(\phi_2,\Gamma_2,\Sigma_2)|\\
			\leq& \frac{\lambda}{N} \Big( 2\|\hat{v}\|_\infty\big(|\phi_1|\|\Gamma_1-\Gamma_2\|_1 \, + \, \|\Gamma_2\|_1 |\phi_1-\phi_2|\big) \\
			& + \, \|\hat{v}\|_\infty \big(\|\Sigma_1-\Sigma_2\|_1 |\phi_1| \, + \, \|\Sigma_2\|_1|\phi_1-\phi_2|\big)\Big)  \\
			& + \, 4\lambda\vol\hat{v}(0)(|\phi_1|^2+|\phi_2|^2)|\phi_1-\phi_2| \, .
		\end{align} 
		Next, we find that 
		\begin{align}
			\MoveEqLeft \|\cJ_2(\phi_1,\Gamma_1,\Sigma_1) \, - \, \cJ_2(\phi_2,\Gamma_2,\Sigma_2)\|_{L^1_{1+E}}\\
			=&  \frac{2\lambda}N\int \dx{p} \big|(\overline{\Sigma}_1-\overline{\Sigma}_2)*\hat{v} (p)\, \Sigma_1(p) \, + \, \overline{\Sigma}_2*\hat{v}(p) \, (\Sigma_1-\Sigma_2)(p)\, \big|\big(1+E(p)\big) \\
			\leq & \frac{2\lambda}N \|(\Sigma_1-\Sigma_2)*\hat{v}\|_{L^2_{1+E}} \big(\|\Sigma_1\|_{L^2_{1+E}} + \|\Sigma_2\|_{L^2_{1+E}}\big) \label{eq-J2-est-1}
		\end{align} 
		by Cauchy-Schwarz, where we also used that $\int \dx{p} f*g(p) h(p)=\int \dx{p} h*g(p) f(p)$. In addition, observe that Young's inequality implies
		\begin{align}
			\|(\Sigma_1-\Sigma_2)*\hat{v}\|_2 \, \leq\, \|\hat{v}\|_1 \|\Sigma_1-\Sigma_2\|_2 \, . \label{eq-sigma-diff*v-1}
		\end{align} 
		Using $E(p)=\frac12|p|^2$ and 
		\begin{align} 
			\sqrt{E(p)} \, \leq \, \sqrt{E(p-q)}+\sqrt{E(q)} \, ,
		\end{align} 	
		Young's inequality yields
		\begin{align}
			\|\big((\Sigma_1-\Sigma_2)*\hat{v}\big)\sqrt{E}\|_2 \leq & \, \|\big(\sqrt{E}(\Sigma_1-\Sigma_2)\big)*\hat{v}\|_2 \, + \, \|(\Sigma_1-\Sigma_2)*(\sqrt{E}\hat{v})\|_2 \\
			\leq & \, \|\sqrt{E}(\Sigma_1-\Sigma_2)\|_2\|\hat{v}\|_1  \, + \, \|\Sigma_1-\Sigma_2\|_2 \|\sqrt{E}\hat{v}\|_1 \, . \label{eq-sigma-diff*v-2}
		\end{align} 
		Collecting \eqref{eq-J2-est-1}, \eqref{eq-sigma-diff*v-1}, and \eqref{eq-sigma-diff*v-2}, we conclude that
		\begin{align}
			\MoveEqLeft \|\cJ_2(\phi_1,\Gamma_1,\Sigma_1) \, - \, \cJ_2(\phi_2,\Gamma_2,\Sigma_2)\|_{L^1_{1+E}}\\
			\leq& \frac{2\lambda}N \big(\|\Sigma_1\|_{L^2_{1+E}} + \|\Sigma_2\|_{L^2_{1+E}}\big)\|\hat{v}\|_{L^1_{\sqrt{1+E}}}\\
			& (\|\Sigma_1-\Sigma_2\|_{L^2_{1+E}}+ \|\Sigma_1-\Sigma_2\|_\infty) \, .
		\end{align} 
		With similar steps, we arrive at
		\begin{align}
			\MoveEqLeft \|\cJ_2(\phi_1,\Gamma_1,\Sigma_1) \, - \, \cJ_2(\phi_2,\Gamma_2,\Sigma_2)\|_\infty\\
			\leq& \frac{2\lambda}N \|(\Sigma_1-\Sigma_2)*\hat{v}\|_\infty \big(\|\Sigma_1\|_\infty + \|\Sigma_2\|_\infty\big) \\
			\leq& \frac{2\lambda}N \|\hat{v}\|_1 \big(\|\Sigma_1\|_\infty + \|\Sigma_2\|_\infty\big)\|\Sigma_1-\Sigma_2\|_\infty \, .
		\end{align} 
		Finally, $\cJ_3$ satisfies		
		\begin{align}
		\MoveEqLeft \|\cJ_3(\phi_1,\Gamma_1,\Sigma_1) \, - \, \cJ_3(\phi_2,\Gamma_2,\Sigma_2)\|_{L^2_{1+E}}\\
		\leq& \frac{2\lambda}{N}\Big(\|(\Gamma_1-\Gamma_2)*(\hat{v}+\hat{v}(0))\big)\Sigma_1\|_{L^2_{1+E}} \, + \, \|\Gamma_2*(\hat{v}+\hat{v}(0))\big)(\Sigma_1-\Sigma_2)\|_{L^2_{1+E}}\\
		& + \, \|(\Sigma_1-\Sigma_2)* \hat{v}\Gamma_1\|_{L^2_{1+E}} \, + \, \|\Sigma_2*\hat{v}(\Gamma_1-\Gamma_2)\|_{L^2_{1+E}} \, + \, \frac12\|(\Sigma_1-\Sigma_2)* \hat{v}\|_{L^2_{1+E}}\\
			& + \, 4N\lambda\vol^{\frac52}\hat{v}(0)\big||\phi_1|^2\phi_1^2-|\phi_2|^2\phi_2^2\big| \, . \label{eq-dj3-0}
		\end{align} 
		Again, Cauchy-Schwarz implies
		\begin{align}
			\MoveEqLeft \|(\Gamma_1-\Gamma_2)*(\hat{v}+\hat{v}(0))\big)\Sigma_1\|_{L^2_{1+E}}\\
			\leq & \, \|(\Gamma_1-\Gamma_2)*\hat{v}\Sigma_1\|_{L^2_{1+E}} \, + \, \hat{v}(0)\|\Gamma_1-\Gamma_2\|_1\|\Sigma_1\|_{L^2_{1+E}}\\
			\leq & \, \big(\|(\Gamma_1-\Gamma_2)*\hat{v}\|_{L^2_{1+E}} \, + \, \hat{v}(0)\|\Gamma_1-\Gamma_2\|_1\big) \|\Sigma_1\|_{L^2_{1+E}} \, . 
		\end{align} 
	\eqref{eq-sigma-diff*v-1} and \eqref{eq-sigma-diff*v-2} hence yield
		\begin{align}
			\MoveEqLeft \|(\Gamma_1-\Gamma_2)*(\hat{v}+\hat{v}(0))\big)\Sigma_1\|_{L^2_{1+E}}\\
			\leq & \, \big(\|\hat{v}\|_{L^1_{\sqrt{1+E}}}\|\Gamma_1-\Gamma_2\|_{L^2_{1+E}} \, + \, \hat{v}(0)\|\Gamma_1-\Gamma_2\|_1\big) \|\Sigma_1\|_{L^2_{1+E}} \, . \label{eq-j3-dgamma-1}
		\end{align} 
		Moreover, by \eqref{eq-phi2-diff-0}, we have that 
		\begin{align}
			\big||\phi_1|^2\phi_1^2-|\phi_2|^2\phi_2^2\big| \leq&  |\phi_1|^2|\phi_1^2-\phi_2^2|+|\phi_2^2\big||\phi_1|^2-|\phi_2|^2\big|\\
		\leq& (|\phi_1|^2+|\phi_2|^2)(|\phi_1|+|\phi_2|)|\phi_1-\phi_2| \, . \label{eq-phi4-diff-0}
		\end{align} 
		Then \eqref{eq-dj3-0}, \eqref{eq-j3-dgamma-1}, \eqref{eq-phi4-diff-0} and analogous estimates to those for $\cJ_2$ imply that
		\begin{align}
			\MoveEqLeft \|\cJ_3(\phi_1,\Gamma_1,\Sigma_1) \, - \, \cJ_3(\phi_2,\Gamma_2,\Sigma_2)\|_{L^2_{1+E}}\\
		\leq& \frac{2\lambda}{N}(\|\hat{v}\|_{L^1_{\sqrt{1+E}}}+\|\hat{v}\|_\infty)\big((\|\Sigma_1\|_{L^2_{1+E}}+ \|\Sigma_2\|_{L^2_{1+E}}) \|\Gamma_1-\Gamma_2\|_{L^2_{1+E}}\\
		& + \, (\|\Gamma_1\|_{L^2_{1+E}} + \|\Gamma_2 \|_{L^2_{1+E}}+\frac12)\|\Sigma_1-\Sigma_2\|_{L^2_{1+E}}\big)\\
			& + \, 4N\lambda\vol^{\frac52}\hat{v}(0) (|\phi_1|^2+|\phi_2|^2)(|\phi_1|+|\phi_2|)|\phi_1-\phi_2| \, .
		\end{align}
		 Observe that 
		 \begin{align} 
		 	\|\Gamma\|_{L^2_{1+E}} \leq & \|\Gamma\|_{L^1_{1+E}}^{\frac12} \|\Gamma\|_\infty^{\frac12}\\
		 	\leq & \, \frac12(\|\Gamma\|_{L^1_{1+E}} +\|\Gamma\|_\infty) \, ,
		 \end{align} 
		 which is why
		 \begin{align}
			\MoveEqLeft \|\cJ_3(\phi_1,\Gamma_1,\Sigma_1) \, - \, \cJ_3(\phi_2,\Gamma_2,\Sigma_2)\|_{L^2_{1+E}}\\
		\leq& \frac{\lambda}{N}(\|\hat{v}\|_{L^1_{\sqrt{1+E}}}+\|\hat{v}\|_\infty)\big((\|\Sigma_1\|_{L^2_{1+E}}+ \|\Sigma_2\|_{L^2_{1+E}}) (\|\Gamma_1-\Gamma_2\|_{L^1_{1+E}} + \|\Gamma_1-\Gamma_2\|_\infty)\\
		& + \, (\|\Gamma_1\|_{L^1_{1+E}} + \|\Gamma_1\|_\infty + \|\Gamma_2 \|_{L^1_{1+E}} + \|\Gamma_2 \|_\infty+1)\|\Sigma_1-\Sigma_2\|_{L^2_{1+E}}\big)\\
			& + \, 4N\lambda\vol^{\frac52}\hat{v}(0) (|\phi_1|^2+|\phi_2|^2)(|\phi_1|+|\phi_2|) |\phi_1-\phi_2| \, .
		\end{align}
		Similarly, we obtain that
		\begin{align}
			\MoveEqLeft \|\cJ_3(\phi_1,\Gamma_1,\Sigma_1) \, - \, \cJ_3(\phi_2,\Gamma_2,\Sigma_2)\|_\infty\\
		\leq& \frac{4\lambda}{N}\|\hat{v}\|_\infty\big((\|\Sigma_1\|_\infty+ \|\Sigma_2\|_\infty) \|\Gamma_1-\Gamma_2\|_1\\
		& + \, (\|\Gamma_1\|_1 + \|\Gamma_2\|_1+1)\|\Sigma_1-\Sigma_2\|_\infty\big)\\
			& + \, 4N\lambda\vol^{\frac52}\hat{v}(0) (|\phi_1|^2+|\phi_2|^2)(|\phi_1|+|\phi_2|) |\phi_1-\phi_2| \, .
		\end{align}
	\endprf 
	
	\prf[Proof of Lemma \ref{lem-conservation}] 
		As a consequence of $(\phi,\Gamma,\Sigma)$ being a mild solution, we have that
		\begin{align}
			\int \dx{p} \Gamma_t(p) =&  \int \dx{p} \Gamma_0(p) \, - \, \frac{2\lambda}N \int_0^t \dx{s} \Im\scp{\Sigma_s*\hat{v}}{\Sigma_s} \, . \label{eq-mass-cons-proof-1} 
		\end{align} 
		Let $\widecheck{f}$ denote the inverse Fourier transform of $f$. Observe that due to Plancherel, 
		\begin{align}
			\scp{\Sigma_s*\hat{v}}{\Sigma_s} \, = \, \int dx \, |\widecheck{\Sigma}_s(x)|^2 v(x) \label{eq-e-sigma>=0}
		\end{align} 
		is a real number, as $\|v\|_\infty\leq \|\hat{v}\|_1 <\infty$, $v$ is real-valued, and $\Sigma_s\in \ell^2(\lattice)$. As a consequence, \eqref{eq-mass-cons-proof-1} implies
		\begin{align} 
			\int \dx{p} \Gamma_t(p) \, = \, \int \dx{p} \Gamma_0(p) \, ,
		\end{align} 
		as desired.
		\par In order to show energy-conservation, we split the energy into its individual terms. From equation \eqref{eq-HFB-ren-2}, we have that
		\begin{align}
			\int \dx{p} E(p) \Gamma_t(p) \, = \, \int \dx{p} E(p) \Gamma_0(p) -\frac{2\lambda}N \int_0^t \dx{s} \scp{\Sigma_s*\hat{v}}{E\Sigma_s} \, , \label{eq-ec-egamma-1}
		\end{align}  
		where we notice that $\Sigma_s*\hat{v}\in L^2_{1+E}\cap L^\infty(\lattice)$. In particular, all terms in \eqref{eq-ec-egamma-1} are finite. Next, we have that
		\begin{align}
			\MoveEqLeft \frac{\lambda}{2N}\scp{\Gamma_t*(\hat{v}+\hat{v}(0))}{\Gamma_t}-\frac{\lambda}{2N}\scp{\Gamma_0*(\hat{v}+\hat{v}(0))}{\Gamma_0}\\
			=&  \frac{\lambda}N\int_0^t \dx{s} \scp{\Gamma_s*(\hat{v}+\hat{v}(0))}{\partial_s\Gamma_s}\\
			=&  -\frac{2\lambda^2}{N^2}\int_0^t \dx{s} \scp{\Gamma_s*(\hat{v}+\hat{v}(0))}{\Im\Big((\overline{\Sigma}_s*\hat{v})\Sigma_s\Big)}\\
			=&  -\frac{2\lambda^2}{N^2}\int_0^t \dx{s} \Im \scp{\Sigma_s*\hat{v}}{(\Gamma_s*\hat{v})\Sigma_s} \, , \label{eq-ec-g2-1}
		\end{align} 
		where the expressions in each line are finite. Using the fact that $\Sigma_s*\hat{v}\in L^2_{1+E}\cap L^\infty(\lattice)$, equation \eqref{eq-HFB-ren-2} for $\Sigma$
		\begin{align}
			\MoveEqLeft \frac{\lambda}{2N}\scp{\Sigma_t*\hat{v}}{\Sigma_t} \, - \, \frac{\lambda}{2N}\scp{\Sigma_0*\hat{v}}{\Sigma_0}\\
			=&  \frac{\lambda}N \int_0^t \dx{s} \Im \scp{\Sigma_s*\hat{v}}{i\partial_s\Sigma_s}\\
			=&  \frac{\lambda}N \Im \int_0^t \dx{s} \int \dx{p} (\overline{\Sigma}_s*\hat{v})(p)\Big(2\big(E(p)+\frac{\lambda}{N}\Gamma_s*(\hat{v}(p)+\hat{v}(0))\big)\Sigma_s(p)) \\
			& + \, \frac{\lambda}{N} \big( \Sigma_s*\hat{v}(p)\big) (1+2\Gamma_s(p)) \, - \, 4N\lambda\vol^2\hat{v}(0)|\phi_s|^2\phi_s^2\delta(p)\Big) \\
			=&  \frac{2\lambda}N \Im \int_0^t \dx{s} \Big( \scp{\Sigma_s*\hat{v}}{(E +\frac{\lambda}{N}\Gamma_s*\hat{v})\Sigma_s}\\
			& - \, 2N\lambda\vol^2\hat{v}(0)|\phi_s|^2(\overline{\Sigma}_s*\hat{v})(0)\phi_s^2 \Big) \, . \label{eq-ec-s2-1}
		\end{align} 
		In addition, equation \eqref{eq-HFB-ren-2} for $\phi$ implies
		\begin{align}
			\MoveEqLeft - N\vol^2\lambda\hat{v}(0)\big(|\phi_t|^4-|\phi_0|^4\big)\\
			=&  -2N\vol^2\lambda\hat{v}(0)\Im \int_0^t \dx{s} |\phi_s|^2\overline{\phi}_s i\partial_s \phi_s\\
			=&  -2\vol^2\lambda^2\hat{v}(0)\Im \int_0^t \dx{s} |\phi_s|^2
			(\Sigma_s*\hat{v}) (0) \overline{\phi}_s^2 \\
			=&  2\vol^2\lambda^2\hat{v}(0)\Im \int_0^t \dx{s} |\phi_s|^2
			(\overline{\Sigma}_s*\hat{v}) (0) \phi_s^2 \, . \label{eq-ec-phi4-1}
		\end{align} 
		Collecting \eqref{eq-ec-egamma-1}, \eqref{eq-ec-g2-1}, \eqref{eq-ec-s2-1}, \eqref{eq-ec-phi4-1}, we obtain that $\ehfb(\phi_t,\Gamma_t,\Sigma_t)$ is differentiable w.r.t. t, and that
		\begin{align}
			\partial_t\ehfb(\phi_t,\Gamma_t,\Sigma_t) \, = \, 0 \, .
		\end{align} 
		This concludes the proof.
	\endprf 
	
	\section{Boltzmann evolution}
	
	\begin{lemma}[Calculation of collision operators]\label{lem-cub-bol}
		Using the notation in Lemma \ref{lem-hfluc}, the cubic Boltzmann operator is given by
		\begin{align}
		\MoveEqLeft \frac1N\int_0^t\dx{s}Q_3[f_0](s,p)  \, = \, \\
		& \frac{2\lambda^2}{N}\Re \int_{[0,t]^2} \dx{\bs_2}  \mathds{1}_{s_1\geq s_2}\int \dx{\bp_3}  \Big(\frac1{2!}\big(\delta(p_1-p)+\delta(p_2-p)-\delta(p_3-p)\big) \\
		& \bbf{1}{2}_{s_1}(\bp_3)\bbfb{1}{2}_{s_2}(\bp_3)e^{i\int_{s_2}^{s_1} \dx{\tau} \big(\Omega_\tau(p_1)+\Omega_\tau(p_2)-\Omega_\tau(p_3)\big)}\delta(p_1+p_2-p_3)\\
		&\big(\fbar(p_1)\fbar(p_2)f_0(p_3)-f_0(p_1)f_0(p_2)\fbar(p_3)\big) \\
		& + \, \frac1{3!}\big(\delta(p_1-p)+\delta(p_2-p)+\delta(p_3-p)\big)\\
		& \bbf{0}{3}_{s_1}(\bp_3)\bbfb{0}{3}_{s_2}(\bp_3)e^{i\int_{s_2}^{s_1} \dx{\tau} \big(\Omega_\tau(p_1)+\Omega_\tau(p_2)+\Omega_\tau(p_3)\big)}\delta(p_1+p_2+p_3) \\
		&\big(\fbar(p_1)\fbar(p_2)\fbar(p_3)-f_0(p_1)f_0(p_2)f_0(p_3)\big)\Big) \, ,
		\end{align} 
		and the quartic Boltzmann operator is given by
		\begin{align}
			\MoveEqLeft \frac1{N^2}\int_0^t\dx{s} Q_4[f_0](s,p) \, = \, \\
			& \frac{2\lambda^2}{N^2}\Re \int_{[0,t]^2} \dx{\bs_2}  \mathds{1}_{s_1\geq s_2} \int \dx{\bp_4}  \Big(\frac1{(2!)^2}\bbf{2}{2}_{s_1}(\bp_4)\bbfb{2}{2}_{s_2}(\bp_4)\\
			& \big(\delta(p_1-p)+\delta(p_2-p)-\delta(p_3-p)-\delta(p_4-p)\big)\\
			& \delta(p_1+p_2-p_3-p_4)e^{i\int_{s_2}^{s_1} \dx{\tau} \big(\Omega_\tau(p_1)+\Omega_\tau(p_2)-\Omega_\tau(p_3)-\Omega_\tau(p_4)\big)}\\
			&\big(\fbar(p_1)\fbar(p_2)f_0(p_3)f_0(p_4)-f_0(p_1)f_0(p_2)\fbar(p_3)\fbar(p_4)\big) \\
			& + \, \frac1{3!}\bbf{1}{3}_{s_1}(\bp_4)\bbfb{1}{3}_{s_2}(\bp_4)\\
			& \big(\delta(p_1-p)+\delta(p_2-p)+\delta(p_3-p)-\delta(p_4-p)\big)\\
			& \delta(p_1+p_2+p_3-p_4)e^{i\int_{s_2}^{s_1} \dx{\tau} \big(\Omega_\tau(p_1)+\Omega_\tau(p_2)+\Omega_\tau(p_3)-\Omega_\tau(p_4)\big)}\\
			&\big(\fbar(p_1)\fbar(p_2)\fbar(p_3)f_0(p_4)-f_0(p_1)f_0(p_2)f_0(p_3)\fbar(p_4)\big)\\
			& + \, \frac1{4!}\bbf{0}{4}_{s_1}(\bp_4)\bbfb{0}{4}_{s_2}(\bp_4)\\
			& \big(\delta(p_1-p)+\delta(p_2-p)+\delta(p_3-p)+\delta(p_4-p)\big)\\
			& \delta(p_1+p_2+p_3+p_4)e^{i\int_{s_2}^{s_1} \dx{\tau} \big(\Omega_\tau(p_1)+\Omega_\tau(p_2)+\Omega_\tau(p_3)+\Omega_\tau(p_4)\big)}\\
			&\big(\fbar(p_1)\fbar(p_2)\fbar(p_3)\fbar(p_4)-f_0(p_1)f_0(p_2)f_0(p_3)f_0(p_4)\big)\Big) \, .
		\end{align}
	\end{lemma}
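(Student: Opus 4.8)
The plan is to substitute the explicit expressions for $\Hcub(t)$ and $\Hquart(t)$ from Lemma~\ref{lem-hfluc} into the defining double-commutator expression \eqref{def-cub-boltzmann-0} (and its quartic analogue), and to evaluate the resulting Wick contractions against the number-conserving, translation invariant, quasifree state $\jb{\cdot}_0$. First I would compute the inner commutator $[\ad_pa_p,\Hcub(s_1)]$. Since \eqref{eq-CCR} gives $[\ad_pa_p,\ad_q]=\delta(p-q)\,\ad_q$ and $[\ad_pa_p,a_q]=-\delta(p-q)\,a_q$, commuting $\ad_pa_p$ through a normal-ordered monomial $\ad_{p_1}\ad_{p_2}\ad_{p_3}$ produces the factor $\delta(p-p_1)+\delta(p-p_2)+\delta(p-p_3)$, and through $\ad_{p_1}\ad_{p_2}a_{p_3}$ the factor $\delta(p-p_1)+\delta(p-p_2)-\delta(p-p_3)$; the h.c.\ monomials contribute the negatives of these. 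This accounts for the external sign pattern $\delta(p_1-p)+\delta(p_2-p)\pm\delta(p_3-p)$ in the statement, while the phases $e^{i\int_0^{s_1}(\cdots)}$ carried by the $s_1$-vertex are left untouched.

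Next I would take the outer commutator with $\Hcub(s_2)$ and the expectation $\jb{\cdot}_0$, retaining only the fully connected Wick contraction displayed in \eqref{def-cub-boltzmann-0}; the remaining contractions either reproduce the self-contracted vertices $\Hcubcon$ of Lemma~\ref{lem-con-vert}, already removed by the renormalization condition \eqref{eq-condensate-cond-0}, or vanish by the transport identity \eqref{eq-f-transport-vanish}. For the connected piece every leg of the $s_1$-monomial must pair with a leg of the $s_2$-monomial; since $\jb{\cdot}_0$ is number conserving, the only non-vanishing pair contractions are $\jb{\ad_q a_q}_0/\vol=f_0(q)$ and $\jb{a_q\ad_q}_0/\vol=\fbar(q)$, while $\jb{a_qa_{-q}}_0=\jb{\ad_q\ad_{-q}}_0=0$; translation invariance then enforces the momentum-conservation $\delta$'s $\delta(p_1+p_2-p_3)$ and $\delta(p_1+p_2+p_3)$. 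Consequently the $\ad\ad a$ vertex at $s_1$ can only pair with the conjugate $aa\ad$ vertex at $s_2$, giving the gain amplitude $\fbar(p_1)\fbar(p_2)f_0(p_3)$, and the mirror pairing (the $s_1$ h.c.\ vertex against the $s_2$ vertex) gives the loss amplitude $f_0(p_1)f_0(p_2)\fbar(p_3)$; analogously the $\ad\ad\ad$ vertex pairs with $aaa$ to give $\fbar(p_1)\fbar(p_2)\fbar(p_3)$ and $f_0(p_1)f_0(p_2)f_0(p_3)$. The conjugate vertex at $s_2$ contributes $\bbfb{1}{2}_{s_2}(\bp_3)$ (resp.\ $\bbfb{0}{3}_{s_2}(\bp_3)$) and the conjugate phase $e^{-i\int_0^{s_2}(\cdots)}$, which combines with the $s_1$-phase into $e^{i\int_{s_2}^{s_1}(\Omega_\tau(p_1)+\Omega_\tau(p_2)\mp\Omega_\tau(p_3))}$. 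Counting the number of leg-pairings of each already-symmetrized vertex against its conjugate cancels the symmetrization factorials down to the stated prefactors $\tfrac1{2!}$ and $\tfrac1{3!}$, and summing over the two time orderings produces the $2\Re\int_{[0,t]^2}\dx{\bs_2}\mathds{1}_{s_1\geq s_2}$ structure (using that swapping $s_1\leftrightarrow s_2$ conjugates the integrand, cf.\ Remark~\ref{rem-int-abs-sq}), matching \eqref{eq-f-main-1}.

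Finally, the quartic operator $Q_4$ is obtained by the identical argument applied to the three monomial types $\ad\ad\ad\ad$, $\ad\ad\ad a$, $\ad\ad a a$ of $\Hquart(t)$, whose symmetrized coefficients carry the prefactors $\tfrac1{4!}$, $\tfrac1{3!}$, $\tfrac1{(2!)^2}$; the connected double contraction against the conjugate monomials yields the three Boltzmann-type contributions with kernels $\bbf{0}{4}$, $\bbf{1}{3}$, $\bbf{2}{2}$, momentum $\delta$'s $\delta(p_1+p_2+p_3+p_4)$, $\delta(p_1+p_2+p_3-p_4)$, $\delta(p_1+p_2-p_3-p_4)$, the matching $\Omega$-phases, and the gain/loss differences stated. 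I expect the main obstacle to be the purely combinatorial bookkeeping of the Wick pairings: ensuring that the number of contractions of each vertex type against its conjugate cancels the symmetrization factorials exactly, that the signs of the external $\delta(p_j-p)$ factors are correctly inherited from annihilation versus creation legs, and that no contribution is double-counted between $Q_3/Q_4$ and the already-subtracted self-contracted vertices $\Hcubcon$, $\Hquartcon$.
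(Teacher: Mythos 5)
Your proposal is correct and follows essentially the same route as the paper's proof: substitute the normal-ordered vertices from Lemma \ref{lem-hfluc} into the double commutator, evaluate the connected Wick contractions against the number-conserving, translation invariant quasifree state (so only $f_0$ and $\fbar$ survive and the momentum $\delta$'s are enforced), and use the permutation symmetry of the symmetrized kernels $\bbf{1}{2},\bbf{0}{3}$ (resp.\ $\bbf{2}{2},\bbf{1}{3},\bbf{0}{4}$) to cancel one of the two symmetrization factorials. The only imprecision is your attribution of the overall $2\Re$ to a swap of time orderings: the ordering $\mathds{1}_{s_1\geq s_2}$ is fixed throughout, and the factor $2\Re$ instead comes from pairing each double commutator with its Hermitian conjugate via \eqref{eq-ada-dc-0} --- which is in effect what you do when you pair the $s_1$-vertex with the conjugate $s_2$-vertex and then take the mirror pairing.
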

	\prf
		Recall that Lemma \ref{lem-hfluc} implies
		\begin{align}
			\MoveEqLeft \Hcub(t) \, := \\
			& \frac{\lambda}{\sqrt{N}} \int \dx{\bp_3} \Big(\delta(p_1+p_2+p_3)e^{i\int_0^t \dx{\tau} \big(\Omega_\tau(p_1)+\Omega_\tau(p_2)+\Omega_\tau(p_3)\big)}\\
			&\frac1{3!} \bbf{0}{3}_t(\bp_3)\ad_{p_1}\ad_{p_2}\ad_{p_3}\\
			& + \,  \delta(p_1+p_2-p_3)e^{i\int_0^t \dx{\tau} \big(\Omega_\tau(p_1)+\Omega_\tau(p_2)-\Omega_\tau(p_3)\big)}\\
			& \frac1{2!}\bbf{1}{2}_t(\bp_3)\ad_{p_1}\ad_{p_2}a_{p_3}\\
			& + \, h.c.\Big) \, .
		\end{align}
		Using \eqref{eq-ada-dc-0}, we hence have that
		\begin{align}
		\MoveEqLeft \frac1N\int_0^t\dx{s}Q_3[f_0](s,p) \, = \, \\
		& - \, \int_{[0,t]^2} \dx{\bs_2}  \mathds{1}_{s_1\geq s_2} \Big(\frac{\jb{\wick{[[\c1 \ad_p \c2 a_p, \settowidth{\wdth}{$\Hcub$}\hspace{.25\wdth}\c3{\vphantom{\Hcub}}\hspace{-.25\wdth}\c1 \Hcub\settowidth{\wdth}{$\Hcub$}\hspace{-.25\wdth}\c1{\vphantom{\Hcub}}\hspace{.25\wdth}(s_1)], \settowidth{\wdth}{$\Hcub$}\hspace{.25\wdth}\c1{\vphantom{\Hcub}}\hspace{-.25\wdth} \c2 \Hcub\settowidth{\wdth}{$\Hcub$}\hspace{-.25\wdth} \c3{\vphantom{\Hcub}}\hspace{.25\wdth}( s_2)]}}_0}\vol \label{eq-cub-bol-con-1}\\
		& + \, \frac{\jb{\wick{[[\c2 \ad_p \c1 a_p, \settowidth{\wdth}{$\Hcub$}\hspace{.25\wdth}\c3{\vphantom{\Hcub}}\hspace{-.25\wdth}\c1 \Hcub\settowidth{\wdth}{$\Hcub$}\hspace{-.25\wdth}\c1{\vphantom{\Hcub}}\hspace{.25\wdth}(s_1)], \settowidth{\wdth}{$\Hcub$}\hspace{.25\wdth}\c1{\vphantom{\Hcub}}\hspace{-.25\wdth} \c2 \Hcub\settowidth{\wdth}{$\Hcub$}\hspace{-.25\wdth} \c3{\vphantom{\Hcub}}\hspace{.25\wdth}( s_2)]}}_0}\vol\Big) \, = \label{eq-cub-bol-con-2}\\
		& -\frac{2\lambda^2}{N\vol}\Re \int_{[0,t]^2} \dx{\bs_2}  \mathds{1}_{s_1\geq s_2}\int \dx{\bp_3} d\bk_3 \, \Big[\jb{\wick{[[\c1 \ad_p \c2 a_p,\c3 \ad_{p_1}\c4 \ad_{p_2}\c1 a_{p_3}],\c2\ad_{k_3}\c4 a_{k_2}\c3 a_{k_1}]}}_0 \\
		&\frac1{(2!)^2}\bbf{1}{2}_{s_1}(\bp_3)\bbfb{1}{2}_{s_2}(\bk_3)\delta(p_1+p_2-p_3)\delta(k_1+k_2-k_3) \\
		& e^{i\int_{0}^{s_1} \dx{\tau} \big(\Omega_\tau(p_1)+\Omega_\tau(p_2)-\Omega_\tau(p_3)\big)-i\int_{0}^{s_2} \dx{\tau} \big(\Omega_\tau(k_1)+\Omega_\tau(k_2)-\Omega_\tau(k_3)\big)}\\
		& + \, \jb{\wick{[[\c1 \ad_p \c2 a_p,\c2 \ad_{p_1}\c3 \ad_{p_2}\c4 \ad_{p_3}],\c1 a_{k_1}\c3 a_{k_2}\c4 a_{k_3}]}}_0\\
		&\frac1{(3!)^2}\bbf{0}{3}_{s_1}(\bp_3)\bbfb{0}{3}_{s_2}(\bk_3)\delta(p_1+p_2+p_3)\delta(k_1+k_2+k_3) \\
		& e^{i\int_0^{s_1} \dx{\tau} \big(\Omega_\tau(p_1)+\Omega_\tau(p_2)+\Omega_\tau(p_3)\big)-i\int_0^{s_2} \dx{\tau} \big(\Omega_\tau(k_1)+\Omega_\tau(k_2)+\Omega_\tau(k_3)\big)}\\
		&+ \, \mbox{all\ contractions\ of\ the\ form \eqref{eq-cub-bol-con-1}--\eqref{eq-cub-bol-con-2}}\Big] \, . \label{eq-f-main-0}
		\end{align} 
		We recognize the collision kernels $\bbf{1}{2}$ and $\bbf{0}{3}$. Now observe that
		\begin{align}
			\jb{\wick{[[\c1 \ad_p \c2 a_p, A_1 \c2 \ad_{p_j} A_2 ], B_1 \c1 a_{k_\ell}B_2]}}_0 =&  \delta(p-p_j)\big(\fbar(p)-f_0(p)\big)\jb{\wick{[[A_1 \c1 \ad_{p_j} A_2 ], B_1 \c1 a_{k_\ell}B_2]}}_0\\
			=&  \delta(p-p_j)\jb{\wick{[[A_1 \c1 \ad_{p_j} A_2 ], B_1 \c1 a_{k_\ell}B_2]}}_0 \, , \\
			\jb{\wick{[[\c1 \ad_p \c2 a_p, A_1 \c1 a_{p_j} A_2 ], B_1 \c2 \ad_{k_\ell}B_2]}}_0 =&  -\delta(p-p_j)\jb{\wick{[[A_1 \c1 \ad_{p_j} A_2 ], B_1 \c1 a_{k_\ell}B_2]}}_0 \, .
		\end{align} 
		In particular, we have that
		\begin{align}
			\MoveEqLeft \jb{\wick{[[\c1 \ad_p \c2 a_p,\c2 \ad_{p_1}\c3 \ad_{p_2}\c4 \ad_{p_3}],\c1 a_{k_1}\c3 a_{k_2}\c4 a_{k_3}]}}_0\\
			& + \, \mbox{all\ contractions\ of\ the\ form \eqref{eq-cub-bol-con-1}--\eqref{eq-cub-bol-con-2}}\\
			=&  \big(\delta(p_1-p)+\delta(p_2-p)-\delta(p_3-p)\big)\\
			& \Big(\sum_{\pi\in \cS_2}\delta(p_1-k_{\pi(1)})\delta(p_2-k_{\pi(2)})\Big)\delta(p_3-k_3)\\
			&\big(f_0(p_1)f_0(p_2)\fbar(p_3)-\fbar(p_1)\fbar(p_2)f_0(p_3)\big) \, ,
		\end{align} 
		and similarly
		\begin{align}
		\MoveEqLeft \jb{\wick{[[\c1 \ad_p \c2 a_p,\c3 \ad_{p_1}\c4 \ad_{p_2}\c1 a_{p_3}],\c2\ad_{k_3}\c4 a_{k_2}\c3 a_{k_1}]}}_0\\
		& + \, \mbox{all\ contractions\ of\ the\ form \eqref{eq-cub-bol-con-1}--\eqref{eq-cub-bol-con-2}}\\
		=& \big(\delta(p_1-p)+\delta(p_2-p)+\delta(p_3-p)\big)\\
		& \sum_{\pi\in \cS_3}\delta(p_1-k_{\pi(1)})\delta(p_2-k_{\pi(2)})\delta(p_3-k_{\pi(3)})\\
		&\big(f_0(p_1)f_0(p_2)f_0(p_3)-\fbar(p_1)\fbar(p_2)\fbar(p_3)\big) \, .
		\end{align} 
		Exploiting the symmetry of $\bbf{1}{2}_s(\bk_3)$ w.r.t. permutations of $(k_1,k_2)$ and the symmetry of $\bbf{0}{3}_s(\bk_3)$ w.r.t. permutations of $(k_1,k_2,k_3)$, \eqref{eq-f-main-0} yields
		\begin{align}
			\MoveEqLeft \frac1N\int_0^t\dx{s}Q_3[f_0](s,p) \, = \, \\
			& \frac{2\lambda^2}{N}\Re \int_{[0,t]^2} \dx{\bs_2}  \mathds{1}_{s_1\geq s_2}\int \dx{\bp_3}  \Big(\frac1{2!}\big(\delta(p_1-p)+\delta(p_2-p)-\delta(p_3-p)\big) \\
			& \bbf{1}{2}_{s_1}(\bp_3)\bbfb{1}{2}_{s_2}(\bp_3)e^{i\int_{s_2}^{s_1} \dx{\tau} \big(\Omega_\tau(p_1)+\Omega_\tau(p_2)-\Omega_\tau(p_3)\big)}\delta(p_1+p_2-p_3)\\
			&\big(\fbar(p_1)\fbar(p_2)f_0(p_3)-f_0(p_1)f_0(p_2)\fbar(p_3)\big) \\
			& + \, \frac1{3!}\big(\delta(p_1-p)+\delta(p_2-p)+\delta(p_3-p)\big)\\
			& \bbf{0}{3}_{s_1}(\bp_3)\bbfb{0}{3}_{s_2}(\bp_3)e^{i\int_{s_2}^{s_1} \dx{\tau} \big(\Omega_\tau(p_1)+\Omega_\tau(p_2)+\Omega_\tau(p_3)\big)}\delta(p_1+p_2+p_3) \\
			&\big(\fbar(p_1)\fbar(p_2)\fbar(p_3)-f_0(p_1)f_0(p_2)f_0(p_3)\big)\Big) \, .
		\end{align} 
		With analogous steps, one obtains
		\begin{align}
			\MoveEqLeft \frac1{N^2}\int_0^t\dx{s} Q_4[f_0](s,p) \, = \, \\
			& \frac{2\lambda^2}{N^2}\Re \int_{[0,t]^2} \dx{\bs_2}  \mathds{1}_{s_1\geq s_2} \int \dx{\bp_4}  \Big(\frac1{(2!)^2}\bbf{2}{2}_{s_1}(\bp_4)\bbfb{2}{2}_{s_2}(\bp_4)\\
			& \big(\delta(p_1-p)+\delta(p_2-p)-\delta(p_3-p)-\delta(p_4-p)\big)\\
			& \delta(p_1+p_2-p_3-p_4)e^{i\int_{s_2}^{s_1} \dx{\tau} \big(\Omega_\tau(p_1)+\Omega_\tau(p_2)-\Omega_\tau(p_3)-\Omega_\tau(p_4)\big)}\\
			&\big(\fbar(p_1)\fbar(p_2)f_0(p_3)f_0(p_4)-f_0(p_1)f_0(p_2)\fbar(p_3)\fbar(p_4)\big) \\
			& + \, \frac1{3!}\bbf{1}{3}_{s_1}(\bp_4)\bbfb{1}{3}_{s_2}(\bp_4)\\
			& \big(\delta(p_1-p)+\delta(p_2-p)+\delta(p_3-p)-\delta(p_4-p)\big)\\
			& \delta(p_1+p_2+p_3-p_4)e^{i\int_{s_2}^{s_1} \dx{\tau} \big(\Omega_\tau(p_1)+\Omega_\tau(p_2)+\Omega_\tau(p_3)-\Omega_\tau(p_4)\big)}\\
			&\big(\fbar(p_1)\fbar(p_2)\fbar(p_3)f_0(p_4)-f_0(p_1)f_0(p_2)f_0(p_3)\fbar(p_4)\big)\\
			& + \, \frac1{4!}\bbf{0}{4}_{s_1}(\bp_4)\bbfb{0}{4}_{s_2}(\bp_4)\\
			& \big(\delta(p_1-p)+\delta(p_2-p)+\delta(p_3-p)+\delta(p_4-p)\big)\\
			& \delta(p_1+p_2+p_3+p_4)e^{i\int_{s_2}^{s_1} \dx{\tau} \big(\Omega_\tau(p_1)+\Omega_\tau(p_2)+\Omega_\tau(p_3)+\Omega_\tau(p_4)\big)}\\
			&\big(\fbar(p_1)\fbar(p_2)\fbar(p_3)\fbar(p_4)-f_0(p_1)f_0(p_2)f_0(p_3)f_0(p_4)\big)\Big) \, .
		\end{align} 
		This concludes the proof.
	\endprf 

    \begin{lemma}[Calculation of collision operators for $(\Phi,g)$]\label{lem-cub-bol-phi-g}
		Recall from \eqref{def-bpb} and \eqref{def-bprev} that $\bprev_3=(p_3,p_2,p_1)$ and $\bpb=(p_1,p_2,-p_3)$. Using the notation in Lemma \ref{lem-hfluc}, the cubic Boltzmann operator for $\Phi$ is given by
		\begin{align}
		    \MoveEqLeft \int_0^t \dx{s} Q_3^{(\Phi)}[f_0](s) \, = \\
            &\lambda^2\int_{[0,t]^2}\dx{\bs_2} \mathds{1}_{s_1\geq s_2} e^{i\int_0^{s_1}\dx{\tau} \Omega_\tau(0)}\Big[\frac12 \delta(p_1+p_2-p_3)\Big(e^{i\int_{s_2}^{s_1}\dx{\tau}(\Omega_\tau(p_1)+\Omega_\tau(p_2)-\Omega_\tau(p_3))}\\
            & \quad\quad \bbf{1}{3}_{s_1}(0,\bp_3)\bbfb{1}{2}_{s_2}(\bp_3) \, - \, e^{-i\int_{s_2}^{s_1}\dx{\tau}(\Omega_\tau(p_1)+\Omega_\tau(p_2)-\Omega_\tau(p_3))}\bbf{2}{2}_{s_1}(0,\bprev_3)\bbf{1}{2}_{s_2}(\bprev_3)\Big)\\
            & \qquad \big(f_0(p_1)f_0(p_2)\fbar(p_3)-\fbar(p_1)\fbar(p_2)f_0(p_3)\big)\\
            & \quad + \, \frac1{3!}\delta(p_1+p_2+p_3)\Big(\bbf{0}{4}_{s_1}(0,\bp_3)\bbfb{0}{3}_{s_2}(\bp_3)e^{i\int_{s_2}^{s_1}\dx{\tau}(\Omega_\tau(p_1)+\Omega_\tau(p_2)+\Omega_\tau(p_3))}\\
            & \quad\quad - \, \bbfb{1}{3}_{s_1}(\bp_3,0)\bbf{0}{3}_{s_2}(\bp_3) e^{-i\int_{s_2}^{s_1}\dx{\tau}(\Omega_\tau(p_1)+\Omega_\tau(p_2)+\Omega_\tau(p_3))}\Big)\\
            & \qquad \big(f_0(p_1)f_0(p_2)f_0(p_3)-\fbar(p_1)\fbar(p_2)\fbar(p_3)\big)\Big] \, ,
		\end{align}
        and for $g$, it is given by
        \begin{align}
            \MoveEqLeft \int_0^t \dx{s} Q_3^{(g)}[f_0](s)[J] \, =\\
            &\lambda^2\int\dx{p}J(p)\int_{[0,t]^2}\dx{\bs_2} \mathds{1}_{s_1\geq s_2} \int \dx{\bp_3}\Big[\delta(p_1+p_2-p_3)\Big(\delta(p-p_3)e^{2i\int_0^{s_1}\dx{\tau}\Omega_\tau(p_3)}\\
        & \quad e^{i\int_{s_2}^{s_1}\dx{\tau}(\Omega_\tau(p_1)+\Omega_\tau(p_2)-\Omega_\tau(p_3))} \bbf{0}{3}_{s_1}(\bpb_3) \bbfb{1}{2}_{s_2}(\bp_3) -2\delta(p-p_1)e^{-2i\int_0^{s_1}\dx{\tau}\Omega_\tau(p_1)}\\
        & \quad e^{-i\int_{s_2}^{s_1}\dx{\tau}(\Omega_\tau(p_1)+\Omega_\tau(p_2)-\Omega_\tau(p_3))}\bbfb{1}{2}_{s_1}(\bpb_3)\bbf{1}{2}_{s_2}(\bp_3)\Big)\\
        & \quad \big(f_0(p_1)f_0(p_2)\fbar(p_3)-\fbar(p_1)\fbar(p_2)f_0(p_3)\big)\\
        & + \, \delta(p-p_3)e^{2i\int_0^{s_1}\dx{\tau}\Omega_\tau(p_3)}\delta(p_1+p_2+p_3)e^{-i\int_{s_2}^{s_1}\dx{\tau}(\Omega_\tau(p_1)+\Omega_\tau(p_2)+\Omega_\tau(p_3))} \\
        & \quad \bbfb{1}{2}_{s_1}(\bpb_3)\bbf{0}{3}_{s_2}(\bp_3)\big(\fbar(p_1)\fbar(p_2)\fbar(p_3)-f_0(p_1)f_0(p_2)f_0(p_3)\big)\Big] \, .
    \end{align}
    \end{lemma}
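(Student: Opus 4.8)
The plan is to follow, essentially verbatim, the computational scheme of the proof of Lemma~\ref{lem-cub-bol}, now carried out for the observables $a_pa_{-p}$ (in the case of $g$) and $a_0$ (in the case of $\Phi$) instead of $\ad_pa_p$, and — for $\Phi$ — for the mixed double commutator of $\Hquart$ with $\Hcub$ rather than of $\Hcub$ with itself. In both cases the starting point is the leading-order double-commutator term in the Duhamel expansion already isolated in the text preceding the statement, i.e.\ the connected $\Hcub$--$\Hcub$ contraction $-\frac1\vol\int\dx{p}J(p)\int_{[0,t]^2}\dx{\bs_2}\mathds{1}_{s_1\geq s_2}\jb{[[a_pa_{-p},\Hcub(s_1)],\Hcub(s_2)]}_0$ for $g$, and the connected $\Hquart$--$\Hcub$ contraction $-\frac1\vol\int_{[0,t]^2}\dx{\bs_2}\mathds{1}_{s_1\geq s_2}\jb{[[a_0,\Hquart(s_1)],\Hcub(s_2)]}_0$ for $\Phi$. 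Throughout, the commutator rule (Remark~\ref{rem-com-rule}) and Wick's theorem for the quasifree, number-conserving state $\jb{\cdot}_0$ reduce each expectation to a finite sum of fully paired momentum integrals, evaluated via $\jb{\ad_pa_p}_0=\vol f_0(p)$ and $\jb{a_p\ad_p}_0=\vol\fbar(p)$.

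For $g$, I would first insert the expression~\eqref{def-Hcub} for $\Hcub$, carry along the Bogoliubov phases using~\eqref{eq-bog-prop}, and expand the inner commutator $[a_pa_{-p},\Hcub(s_1)]$ by the CCR~\eqref{eq-CCR}. Since $a_pa_{-p}$ carries two annihilation operators, commuting it against the $\ad_{p_1}\ad_{p_2}a_{p_3}$ vertex of $\Hcub(s_1)$ leaves a term linear in $a^{\#}$ with $\delta(p_1+p_2-p_3)$ and $\delta(p\mp p_1)$ (plus the $p_1\leftrightarrow p_2$ copy), whereas commuting it against the $\ad_{p_1}\ad_{p_2}\ad_{p_3}$ vertex (and its h.c.) ties two of the creation momenta to $\pm p$; the latter is precisely what produces the reflected multivector $\bpb_3=(p_1,p_2,-p_3)$ in the final formula. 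Commuting once more with $\Hcub(s_2)$, taking $\jb{\cdot}_0$, and using the commutator rule to select the admissible pairings, one recognizes the kernels $\bbf{1}{2}$ and $\bbf{0}{3}$ together with their conjugates from the h.c.\ contributions, while the accumulated phases combine, after splitting $\int_0^{s_1}=\int_0^{s_2}+\int_{s_2}^{s_1}$, into the factors $e^{\pm 2i\int_0^{s_1}\dx{\tau}\Omega_\tau(p_j)}$ and $e^{\pm i\int_{s_2}^{s_1}\dx{\tau}(\Omega_\tau(p_1)+\Omega_\tau(p_2)-\Omega_\tau(p_3))}$, $e^{-i\int_{s_2}^{s_1}\dx{\tau}(\Omega_\tau(p_1)+\Omega_\tau(p_2)+\Omega_\tau(p_3))}$ displayed in the statement. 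Finally I would collapse the permutation sums using the symmetry of $\bbf{1}{2}_s$ under $p_1\leftrightarrow p_2$ and of $\bbf{0}{3}_s$ under all permutations of $(p_1,p_2,p_3)$, arriving at the stated expression for $\int_0^t\dx{s}Q_3^{(g)}[f_0](s)[J]$.

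For $\Phi$, the only genuinely new step is the inner commutator $[a_0,\Hquart(s_1)]$: since $\Hquart$ is quartic, this is cubic in $a^{\#}$ and is obtained by setting one creation momentum in each quartic vertex of~\eqref{def-Hquart} equal to $0$, which yields the kernels $\bbf{0}{4}_{s_1}(0,\bp_3)$, $\bbf{1}{3}_{s_1}(0,\bp_3)$ and $\bbf{2}{2}_{s_1}(0,\bprev_3)$ — the reordering $\bprev_3=(p_3,p_2,p_1)$ arising from the position of the annihilation operators in the $\ad\ad aa$ vertex — and carries the extra phase $e^{i\int_0^{s_1}\dx{\tau}\Omega_\tau(0)}$. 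Commuting the resulting cubic operator with $\Hcub(s_2)$, taking $\jb{\cdot}_0$, and again applying the commutator rule and Wick's theorem, the momentum matchings against the $\Hcub(s_2)$ vertex produce the conjugate cubic kernels $\bbfb{1}{2}_{s_2}$, $\bbfb{0}{3}_{s_2}$ (and $\bbf{1}{2}_{s_2}$, $\bbf{0}{3}_{s_2}$ from the h.c.\ terms) together with the occupation-number differences $f_0(p_1)f_0(p_2)\fbar(p_3)-\fbar(p_1)\fbar(p_2)f_0(p_3)$ and $f_0(p_1)f_0(p_2)f_0(p_3)-\fbar(p_1)\fbar(p_2)\fbar(p_3)$; symmetrizing over $(p_1,p_2,p_3)$ as before gives the claimed formula for $\int_0^t\dx{s}Q_3^{(\Phi)}[f_0](s)$.

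The main obstacle will be purely organizational rather than conceptual: keeping track of the numerous contraction patterns admitted by the commutator rule, of the signs incurred when commuting $a^{\#}$ through $\ad_{-p}$ as opposed to $a_p$, and — most delicately — of the momentum relabelings that bring the raw integrals into the symmetric form displayed in the statement, since these relabelings are exactly what force the appearance of $\bpb_3$ and $\bprev_3$ and fix the precise phase structure $e^{\pm 2i\int_0^{s_1}\dx{\tau}\Omega_\tau(\,\cdot\,)}$, $e^{\pm i\int_{s_2}^{s_1}\dx{\tau}(\,\cdot\,)}$. No individual step goes beyond what is already done in Lemma~\ref{lem-cub-bol}; the difficulty is entirely in the combinatorics and the phase bookkeeping, and I would organize the proof by first treating $g$ in full detail and then indicating the (shorter) modifications for $\Phi$.
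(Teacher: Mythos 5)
Your proposal is correct and follows essentially the same route as the paper: the paper's proof is precisely the ``analogous calculation to Lemma~\ref{lem-cub-bol}'' you describe, starting from the connected double commutators $\jb{[[a_pa_{-p},\Hcub(s_1)],\Hcub(s_2)]}_0$ and $\jb{[[a_0,\Hquart(s_1)],\Hcub(s_2)]}_0$, applying Wick's theorem with the commutator rule, and finishing with the momentum relabeling (the paper substitutes $p_1\leftrightarrow p_3$ in the last $\Phi$-term) that produces $\bprev_3$ and $\bpb_3$. Your identification of where the reflected multivectors and the phase factors $e^{\pm 2i\int_0^{s_1}\Omega}$, $e^{\pm i\int_{s_2}^{s_1}(\cdots)}$ come from matches the paper's bookkeeping.
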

    \begin{proof}
        With analogous calculations as in Lemma \ref{lem-cub-bol}, we obtain 
        \begin{align}
        \MoveEqLeft \frac1{N^{\frac32}}\int_0^t \dx{s} Q_3^{(\Phi)}[f_0](s) \, = \, -\frac1\vol\int_{[0,t]^2}\dx{\bs_2} \mathds{1}_{s_1\geq s_2} \jb{[[a_0,\wick{
        \settowidth{\wdth}{$\Hquart$}\hspace{.25\wdth}\c1{\vphantom{\Hquart}}\hspace{.25\wdth}\c2{\vphantom{\Hquart}}\hspace{.25\wdth}\c3{\vphantom{\Hquart}}\hspace{-.75\wdth}\Hquart(s_1)],\settowidth{\wdth}{$\Hcub$}\hspace{.25\wdth}\c1{\vphantom{\Hcub}}\hspace{.25\wdth}\c2{\vphantom{\Hcub}}\hspace{.25\wdth}\c3{\vphantom{\Hcub}}\hspace{-.75\wdth}\Hcub
        }
        (s_2)]}_0 \\
        = \, & \frac{\lambda^2}{N^{\frac32}} \int_{[0,t]^2}\dx{\bs_2} \mathds{1}_{s_1\geq s_2} e^{i\int_0^{s_1}\dx{\tau} \Omega_\tau(0)}\Big[\frac12 \delta(p_1+p_2-p_3)e^{i\int_{s_2}^{s_1}\dx{\tau}(\Omega_\tau(p_1)+\Omega_\tau(p_2)-\Omega_\tau(p_3))}\\
        & \quad \bbf{1}{3}_{s_1}(0,\bp_3)\bbfb{1}{2}_{s_2}(\bp_3)\big(f_0(p_1)f_0(p_2)\fbar(p_3)-\fbar(p_1)\fbar(p_2)f_0(p_3)\big)\\
        & \quad + \, \frac1{3!}\delta(p_1+p_2+p_3)\Big(\bbf{0}{4}_{s_1}(0,\bp_3)\bbfb{0}{3}_{s_2}(\bp_3)e^{i\int_{s_2}^{s_1}\dx{\tau}(\Omega_\tau(p_1)+\Omega_\tau(p_2)+\Omega_\tau(p_3))}\\
        & \quad - \, \bbfb{1}{3}_{s_1}(\bp_3,0)\bbf{0}{3}_{s_2}(\bp_3) e^{-i\int_{s_2}^{s_1}\dx{\tau}(\Omega_\tau(p_1)+\Omega_\tau(p_2)+\Omega_\tau(p_3))}\Big)\\
        & \qquad \big(f_0(p_1)f_0(p_2)f_0(p_3)-\fbar(p_1)\fbar(p_2)\fbar(p_3)\big)\Big] \, .
        \\
        &\quad + \, \frac12 \bbf{2}{2}_{s_1}(0,\bp_3)\bbf{1}{2}_{s_2}(\bp_3)\delta(p_1-p_2-p_3)e^{i\int_{s_2}^{s_1}\dx{\tau}(\Omega_\tau(p_1)-\Omega_\tau(p_2)-\Omega_\tau(p_3))}\\
        &\qquad \big(f_0(p_1)\fbar(p_2)\fbar(p_3)-\fbar(p_1)f_0(p_2)f_0(p_3)\big) \, .
    \end{align}
    In order to conclude the computation of $Q_3^{(\Phi)}[f_0](t)$, we substitute $p_1\leftrightarrow p_3$ in the last term.
    \par Similarly, we compute
    \begin{align}
        \MoveEqLeft \int_0^t \dx{s} Q_3^{(g)}[f_0](s)[J] \, = \, -\frac{N}{\vol}\int\dx{p}J(p)\int_{[0,t]^2}\dx{\bs_2} \mathds{1}_{s_1\geq s_2} \jb{[[a_pa_{-p},\wick{
        \settowidth{\wdth}{$\Hcub$}\hspace{.25\wdth}\c1{\vphantom{\Hcub}}\hspace{.25\wdth}\c2{\vphantom{\Hcub}}\hspace{.25\wdth}\c3{\vphantom{\Hcub}}\hspace{-.75\wdth}\Hcub(s_1)],\settowidth{\wdth}{$\Hcub$}\hspace{.25\wdth}\c1{\vphantom{\Hcub}}\hspace{.25\wdth}\c2{\vphantom{\Hcub}}\hspace{.25\wdth}\c3{\vphantom{\Hcub}}\hspace{-.75\wdth}\Hcub
        }
        (s_2)]}_0 \\
        = \, & \lambda^2\int\dx{p}J(p)\int_{[0,t]^2}\dx{\bs_2} \mathds{1}_{s_1\geq s_2} \int \dx{\bp_3}\Big[\delta(p-p_3)e^{2i\int_0^{s_1}\dx{\tau}\Omega_\tau(p_3)}\\
        & \Big(\delta(p_1+p_2-p_3) e^{i\int_{s_2}^{s_1}\dx{\tau}(\Omega_\tau(p_1)+\Omega_\tau(p_2)-\Omega_\tau(p_3))} \bbf{0}{3}_{s_1}(\bpb_3) \bbfb{1}{2}_{s_2}(\bp_3) \\
        & \big(f_0(p_1)f_0(p_2)\fbar(p_3)-\fbar(p_1)\fbar(p_2)f_0(p_3)\big) \\
        & + \, \delta(p_1+p_2+p_3)e^{-i\int_{s_2}^{s_1}\dx{\tau}(\Omega_\tau(p_1)+\Omega_\tau(p_2)+\Omega_\tau(p_3))}\bbfb{1}{2}_{s_1}(\bpb_3)\bbf{0}{3}_{s_2}(\bp_3) \\
        & \big(\fbar(p_1)\fbar(p_2)\fbar(p_3)-f_0(p_1)f_0(p_2)f_0(p_3)\big)\Big) \\
        & + 2\delta(p-p_1)e^{-2i\int_0^{s_1}\dx{\tau}\Omega_\tau(p_1)}\delta(p_1+p_2-p_3)e^{-i\int_{s_2}^{s_1}\dx{\tau}(\Omega_\tau(p_1)+\Omega_\tau(p_2)-\Omega_\tau(p_3))}\\
        & \bbfb{1}{2}_{s_1}(\bpb_3)\bbf{1}{2}_{s_2}(\bp_3)\big(\fbar(p_1)\fbar(p_2)f_0(p_3)-f_0(p_1)f_0(p_2)\fbar(p_3)\big) \, .
    \end{align}
    Rearranging the terms yields the result.
    \end{proof}
	
	\section{Trace estimates}

	\begin{lemma}[Number operator moments] \label{lem-num-mom}
		We have for all $\ell\in\N_0$ that
		\begin{equation}
			\jb{(\nb+\vol)^{\frac\ell2}}_0 \, \leq \, C_{\ell,\fd} \vol^{\frac\ell2} \, .  
		\end{equation}
	\end{lemma}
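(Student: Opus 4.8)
The plan is to estimate the moments $\jb{(\nb+\vol)^{\ell/2}}_0$ directly against the reference state $\jb{\cdot}_0$ defined in \eqref{def-jb0}, exploiting the fact that $\jb{\cdot}_0$ is a quasifree, number-conserving, gauge- and translation-invariant Gibbs state for the quadratic operator $\cK$, so all of its moments can be computed via Wick's theorem. First I would observe that, by Definition \ref{defi-quasifree} and the number-conservation of $\jb{\cdot}_0$, only the diagonal contractions $\wick{\c a^\dagger_p \c a_p}=\jb{\ad_p a_p}_0=\vol\, f_0(p)$ survive, where we identify $f_0(p)$ with the occupation density of the Gibbs state, i.e. $f_0(p)=(e^{K(p)}-1)^{-1}$ (this is what ties the bound to $\fd=\|f_0\|_1+\|f_0\|_\infty$). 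Consequently $\jb{\nb}_0=\int\dx{p}\,\jb{\ad_pa_p}_0 = \vol\|f_0\|_1$, which is already of order $\vol$.

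The key step is to control the higher moments $\jb{\nb^k}_0$ for integer $k\leq \lceil \ell/2\rceil$. Writing $\nb^k$ as an iterated integral $\int \dx{\bp_k}\,\ad_{p_1}a_{p_1}\cdots \ad_{p_k}a_{p_k}$, I would normal-order this monomial using the CCR \eqref{eq-CCR}; each commutator produces a factor $\delta_\lattice(p_i-p_j)$ and reduces the degree by one, so after expanding we get a finite sum (with combinatorial coefficients depending only on $k$) of fully normal-ordered monomials $\int \dx{\bp_m}\,\ad_{p_1}\cdots\ad_{p_m}a_{p_{\pi(1)}}\cdots a_{p_{\pi(m)}}$ against the lower-order $\delta$-terms. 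Applying Wick's theorem to each normal-ordered monomial and using number/gauge invariance, the only nonzero contractions pair each $\ad_{p_i}$ with some $a_{p_{\pi(j)}}$, giving a product of factors $\vol f_0(p_i)$ together with momentum $\delta$'s that collapse some integrals. The upshot is a bound of the schematic form
\begin{align}
    \jb{\nb^k}_0 \, \leq \, C_k \sum_{j=0}^{k} \vol^{j} \|f_0\|_1^{a_j}\|f_0\|_\infty^{b_j} \, ,
\end{align}
with $a_j+b_j$ and the powers of $\vol$ arranged so that the dominant term is $\vol^k\|f_0\|_1^k$ and all subleading terms carry strictly fewer powers of $\vol$; using $\vol\geq 1$ this gives $\jb{\nb^k}_0\leq C_{k,\fd}\,\vol^{k}$. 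Since $(\nb+\vol)^{\ell/2}\leq C_\ell(\nb^{\lceil\ell/2\rceil}+\vol^{\lceil\ell/2\rceil})$ for all $\ell\in\N_0$ — bounding the non-integer power by the next integer power via $x^{\ell/2}\leq 1+x^{\lceil\ell/2\rceil}$ applied to $x=\nb/\vol$ and rescaling — we conclude $\jb{(\nb+\vol)^{\ell/2}}_0\leq C_{\ell,\fd}\,\vol^{\ell/2}$, as claimed.

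I expect the main obstacle to be purely bookkeeping: carefully tracking, in the normal-ordering of $\nb^k$, which terms pick up extra factors of $\vol$ from $\delta_\lattice(0)=\vol$ versus which terms lose an integration, and checking that no term exceeds the target power $\vol^{\ell/2}$. This is the standard computation underlying moment bounds for quasifree states (see the CCR-algebra formalism in \cite{BR2} and the analogous estimates in \cite{chenhott}), so rather than grinding through the full combinatorics I would present the normal-ordering identity for $\nb^k$, invoke Wick's theorem, and then note that each contraction pattern contributes at most $C_k\vol^{k}\fd^{\,k}$ by the elementary inequality $\int\dx{p}\,f_0(p)^n\leq \|f_0\|_\infty^{n-1}\|f_0\|_1\leq \fd^{\,n}$. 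No other input is needed.
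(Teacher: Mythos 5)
The paper states Lemma \ref{lem-num-mom} without any proof (it appears bare in the ``Trace estimates'' section, alongside Lemmas \ref{lem-prod} and \ref{lem-fg-bound}, as a standard quasifree-moment estimate), so there is no in-paper argument to compare against; what can be said is that your route is the correct and essentially canonical one. Your identification $f_0(p)=(e^{K(p)}-1)^{-1}$ is consistent with \eqref{def-K}--\eqref{def-jb0} and with the normalization $\jb{\ad_pa_p}_0=\vol f_0(p)$, the normal-ordering of $\nb^k$ indeed produces only Stirling-type coefficients with no stray powers of $\vol$ (since $\int\dx{q}\,\delta_\lattice(p-q)=1$), and in the Wick expansion of $\jb{:\nb^m:}_0$ each cycle of the contraction permutation contributes one factor $\vol\int\dx{p}\,f_0(p)^{\ell}\leq \vol\,\fd^{\ell}$, so that with $\vol\geq1$ (assumed in Theorem \ref{thm-main}) every contraction pattern is bounded by $\vol^{m}\fd^{m}$ and $\jb{\nb^k}_0\leq C_{k,\fd}\vol^k$ follows. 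One cosmetic slip: the displayed claim $(\nb+\vol)^{\ell/2}\leq C_\ell(\nb^{\lceil\ell/2\rceil}+\vol^{\lceil\ell/2\rceil})$ would, for odd $\ell$, cost an extra factor $\vol^{1/2}$ if used as written; your parenthetical correction --- applying $x^{\ell/2}\leq 1+x^{\lceil\ell/2\rceil}$ to $x=(\nb+\vol)/\vol$ and multiplying back by $\vol^{\ell/2}$ --- is the version you actually need, and it closes the argument at the claimed order $\vol^{\ell/2}$. No gap beyond that.
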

	
	\begin{lemma}[Operator product bound]\label{lem-prod}
		Let $A_j\in\cP[a,\ad]$ be monomials in $a,\ad$, $\gamma_j>0$, and $k_j\in\N$ be such that
		\begin{align}
		\|P_m A_j P_{m-\osg(A_j)}\|	\, \leq \, \gamma_j (m+\vol)^{k_j/2} \label{eq-Aj-bd-ass-0}
		\end{align} 
		for all $j\in\{1,\ldots,\ell\}$ and all $m\in\N_0$. Then we have that
		\begin{align}
		|\nu(\prod_{j=1}^\ell A_j)|  \, \leq \, \Big(\prod_{j=1}^\ell \gamma_j\Big) \nu\Big( \big(\cN+\sum_{m=1}^\ell|\osg(A_m)|+\vol \big)^{\sum_{j=1}^\ell k_j/2}\Big)
		\end{align} 
		for any state $\nu$. 
	\end{lemma}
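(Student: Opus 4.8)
The plan is to turn the estimate into a quadratic-form bound on individual Fock-space vectors and then integrate against the state $\nu$. Throughout set $K:=\tfrac12\sum_{j=1}^\ell k_j$, $M:=\sum_{m=1}^\ell|\osg(A_m)|$ and $S:=\sum_{j=1}^\ell\osg(A_j)$, and work first on the dense subspace of vectors with finitely many particles, where $A_1\cdots A_\ell$ is an honest operator; the general case then follows by approximation.

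First I would resolve the product into particle-number sectors. For a fixed $n\in\N_0$ put $n_j:=n+\sum_{i>j}\osg(A_i)$, so that $n_\ell=n$, $n_0=n+S$, and $n_{j-1}=n_j+\osg(A_j)$; then $A_1\cdots A_\ell P_n=P_{n_0}A_1P_{n_1}A_2\cdots A_\ell P_{n_\ell}$, and hypothesis \eqref{eq-Aj-bd-ass-0} (applied with $m=n_{j-1}$) yields $\|A_1\cdots A_\ell P_n\|\le(\prod_j\gamma_j)\prod_{j=1}^\ell(n_{j-1}+\vol)^{k_j/2}$. The decisive step is to distribute these sector weights evenly between the input index $n$ and the output index $n+S$. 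Since $\bigl|\sum_{i\ge j}\osg(A_i)\bigr|\le M$ and $\bigl|\sum_{i<j}\osg(A_i)\bigr|\le M$, we have $0\le n_{j-1}\le n+M$ and $0\le n_{j-1}\le (n+S)+M$, hence $n_{j-1}+\vol\le\bigl[(n+M+\vol)\,((n+S)+M+\vol)\bigr]^{1/2}$ by $\min\{a,b\}\le\sqrt{ab}$. Multiplying over $j$ gives
\[
\|A_1\cdots A_\ell P_n\|\;\le\;\Bigl(\prod_{j=1}^\ell\gamma_j\Bigr)\,(n+M+\vol)^{K/2}\,\bigl((n+S)+M+\vol\bigr)^{K/2}.
\]

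Next, for $\psi$ in the domain of $(\cN+M+\vol)^{K/2}$ I would write $\langle\psi,A_1\cdots A_\ell\psi\rangle=\sum_{n\ge0}\langle P_{n+S}\psi,\,A_1\cdots A_\ell P_n\psi\rangle$, bound each summand by Cauchy--Schwarz together with the previous display, and then apply Cauchy--Schwarz in the index $n$. The two resulting factors, $\bigl(\sum_{n}((n+S)+M+\vol)^{K}\|P_{n+S}\psi\|^2\bigr)^{1/2}$ and $\bigl(\sum_{n}(n+M+\vol)^{K}\|P_{n}\psi\|^2\bigr)^{1/2}$, both equal $\|(\cN+M+\vol)^{K/2}\psi\|$ after re-indexing and discarding the vanishing sectors with negative index. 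This produces the form inequality $|\langle\psi,A_1\cdots A_\ell\psi\rangle|\le(\prod_j\gamma_j)\,\langle\psi,(\cN+M+\vol)^{K}\psi\rangle$.

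Finally I would pass to an arbitrary state. Writing $\nu(\cdot)=\tr(\rho\,\cdot)$ with $\rho=\sum_k\lambda_k|\psi_k\rangle\langle\psi_k|$, $\lambda_k\ge0$, $\sum_k\lambda_k=1$ (the relevant case being $\nu\bigl((\cN+M+\vol)^{K}\bigr)<\infty$, which makes all the quantities below finite), the form inequality applied to each $\psi_k$ together with the triangle inequality give $|\nu(\prod_jA_j)|\le\sum_k\lambda_k|\langle\psi_k,\prod_jA_j\psi_k\rangle|\le(\prod_j\gamma_j)\sum_k\lambda_k\langle\psi_k,(\cN+M+\vol)^{K}\psi_k\rangle=(\prod_j\gamma_j)\,\nu\bigl((\cN+M+\vol)^{K}\bigr)$, which is the assertion. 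The only genuinely delicate point is the balanced redistribution of the sector weights via $\min\{a,b\}\le\sqrt{ab}$: a single Cauchy--Schwarz without it would deliver only the exponent $\sum_jk_j$ in place of the sharp $\sum_jk_j/2=K$; everything else is bookkeeping of the particle-number shifts and two applications of Cauchy--Schwarz.
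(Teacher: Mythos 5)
Your argument is correct and complete: the sector decomposition $A_1\cdots A_\ell P_n=P_{n_0}A_1P_{n_1}\cdots A_\ell P_{n_\ell}$, the balanced redistribution of the weights via $n_{j-1}+\vol\le\min\{n+M+\vol,\,n+S+M+\vol\}\le\bigl[(n+M+\vol)(n+S+M+\vol)\bigr]^{1/2}$, and the two applications of Cauchy--Schwarz together yield exactly the stated exponent $\sum_j k_j/2$, and the passage from the vector-level form bound to a general (normal) state is routine. The paper states Lemma \ref{lem-prod} without proof, treating it as a standard number-operator bound, so there is no in-text argument to compare against; your write-up would serve as a valid self-contained proof, and you correctly identify the $\min\{a,b\}\le\sqrt{ab}$ step as the one point where a naive estimate would lose a factor of two in the exponent.
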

	
	\begin{lemma}\label{lem-fg-bound}
		Given a test function $J\in L^2\cap L^\infty(\lattice)$, let
		\begin{align}
		f[J]  := & \int_\lattice \, dp \, J(p) \ad_pa_p \, ,\\
		g[J] := & \int_\lattice \, dp \, J(p) a_{-p}a_p \, .
		\end{align} 
		Then we have 
		\begin{align}
		\|P_m f[J]  P_n\| \leq & \delta_{m,n}\|J\|_\infty m \, , \\
		\|P_m g[J]  P_n\| \leq & \, \delta_{n,m+2}(\|J\|_2+\|J\|_\infty)(m+1+\vol) \, .
		\end{align} 
	\end{lemma}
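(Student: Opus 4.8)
The plan is to establish the two norm bounds in Lemma~\ref{lem-fg-bound} by direct computation on Fock-space sectors, using the momentum-space representation of $f[J]$ and $g[J]$ and the commutation relations \eqref{eq-CCR}. I would begin with the diagonal operator $f[J]=\int_\lattice\dx p\,J(p)\ad_pa_p$, which is number-conserving, so $P_mf[J]P_n$ vanishes unless $m=n$; this immediately forces the Kronecker factor $\delta_{m,n}$. On the $m$-particle sector, $f[J]$ acts (in first quantization) as $\sum_{j=1}^m J(-i\nabla_{x_j})$ — more precisely, since $J$ is a function on the reciprocal lattice and $a_p^\#$ are the plane-wave modes, $f[J]$ is the second quantization $\mathrm{d}\Gamma(\hat J)$ of the Fourier multiplier with symbol $J$, whose operator norm on the one-particle space is $\|J\|_\infty$. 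Hence $\|f[J]\restriction_{\cF^{(m)}}\|\le m\|J\|_\infty$, giving $\|P_mf[J]P_n\|\le\delta_{m,n}\|J\|_\infty m$. Alternatively, without invoking $\mathrm{d}\Gamma$ language, one can bound $\langle\Psi,f[J]\Psi\rangle$ for $\Psi\in\cF^{(m)}$ by $\|J\|_\infty\sum_p\langle\Psi,\ad_pa_p\Psi\rangle/\vol=\|J\|_\infty\langle\Psi,\nb\Psi\rangle/\vol=\|J\|_\infty m\|\Psi\|^2$ after noting $\langle\Psi,\nb\Psi\rangle=\vol m\|\Psi\|^2$ on $\cF^{(m)}$, and self-adjointness of $f[J]$ upgrades the quadratic-form bound to an operator-norm bound.

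For $g[J]=\int_\lattice\dx p\,J(p)a_{-p}a_p$, the operator lowers the particle number by $2$, so $P_mg[J]P_n$ vanishes unless $n=m+2$, producing the factor $\delta_{n,m+2}$. For the size estimate I would use that $g[J]$ maps $\cF^{(m+2)}\to\cF^{(m)}$ and estimate $\|g[J]\Psi\|$ for $\Psi\in\cF^{(m+2)}$. Writing $g[J]=\frac1\vol\sum_p J(p)a_{-p}a_p$, I would split into the ``diagonal'' contribution $p=-p$ (only $p=0$, or in general the finitely many $2$-torsion points, contributing $\|J\|_\infty$ times an $a_0a_0$-type term bounded by $\nb$) and the genuinely off-diagonal part, and use a Cauchy--Schwarz argument of the standard form $\|\sum_p J(p)a_{-p}a_p\Psi\|\le\big(\sum_p|J(p)|^2\big)^{1/2}\big(\sum_p\|a_{-p}a_p\Psi\|^2\big)^{1/2}$ for the off-diagonal piece, controlling $\sum_p\|a_pa_{-p}\Psi\|^2\le\vol\langle\Psi,\nb(\nb+\vol)\Psi\rangle$-type quantities via \eqref{eq-CCR}. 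After restoring the normalization convention \eqref{def-mom-int} (each $\frac1\vol\sum$ contributes appropriate powers of $\vol$) and collecting terms, this yields $\|g[J]\restriction_{\cF^{(m+2)}}\|\lesssim(\|J\|_2+\|J\|_\infty)(m+1+\vol)$; the two norms of $J$ appear respectively from the $\ell^2$-Cauchy--Schwarz step and from the diagonal $p=-p$ term, and the factor $(m+1+\vol)$ from the combination of one factor of $\sqrt{m+2}$ from each annihilation operator together with the $\vol$ from a CCR contraction.

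The only mildly delicate point — and the one I would be most careful about — is bookkeeping the powers of $\vol$ consistently with the paper's conventions \eqref{eq-CCR} and \eqref{def-mom-int}, i.e.\ that $[a_p,\ad_q]=\vol\delta_{p,q}$ and that $\int_\lattice\dx p=\frac1\vol\sum_p$, so that $f[J]$ and $g[J]$ as written are the ``physically normalized'' operators and the stated bounds come out with the advertised constants (no stray $\vol^{\pm1}$). One must also track which momenta $p$ satisfy $p=-p$ on the reciprocal lattice $\lattice=(\tfrac{2\pi}L\Z)^3$ — these are the eight half-period points — and verify that their contribution is indeed absorbed into the $\|J\|_\infty(m+1+\vol)$ term; this is where the $+\vol$ (rather than just $+1$) in the $g$-bound genuinely originates. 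Since both estimates are standard second-quantization bounds of the type appearing throughout \cite{chenhott} and the references therein, I would present the argument concisely, citing the CCR and the $\nb$-sector identities, and leave the elementary $\vol$-power arithmetic to the reader.
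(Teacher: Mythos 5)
Your treatment of $f[J]$ is fine: the operator is number conserving, equals the second quantization of the Fourier multiplier with symbol $J$ on the one-particle space, and $\|\mathrm{d}\Gamma(A)\restriction_{\cF^{(m)}}\|\le m\|A\|$ gives the first bound. (Two small remarks: with the paper's normalization $\nb=\int\dx{p}\ad_pa_p$ one has $\jb{\nb}=m$ on $\cF^{(m)}$, not $\vol m$; and the quadratic-form shortcut needs $J$ real-valued, whereas the $\mathrm{d}\Gamma$ argument does not.)

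The $g[J]$ estimate has a genuine gap. The Cauchy--Schwarz split you propose, $\|\frac1\vol\sum_pJ(p)a_{-p}a_p\Psi\|\le\frac1\vol(\sum_p|J(p)|^2)^{1/2}(\sum_p\|a_{-p}a_p\Psi\|^2)^{1/2}$, is too lossy: on $\cF^{(m+2)}$ one has $\sum_p\|a_{-p}a_p\Psi\|^2\le(m+1)(m+2)\vol^2\|\Psi\|^2$, and this is saturated (e.g.\ by $\Psi\propto(\ad_0)^{m+2}\vac$), so after $\sum_p|J(p)|^2=\vol\|J\|_2^2$ your route yields at best $\vol^{1/2}\|J\|_2\,(m+2)$, which is \emph{not} dominated by $(\|J\|_2+\|J\|_\infty)(m+1+\vol)$ when $m\sim\vol\to\infty$. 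Your intermediate claim $\sum_p\|a_pa_{-p}\Psi\|^2\le\vol\jb{\nb(\nb+\vol)}_\Psi$ is likewise false with the paper's normalization of $\nb$. The estimate that actually closes is the dual one: for $\Phi\in\cF^{(m)}$ write
\begin{align}
\scp{\Phi}{g[J]\Psi} \, = \, \int\dx{p}\,J(p)\scp{\ad_{-p}\Phi}{a_p\Psi} \, ,
\end{align}
apply Cauchy--Schwarz in $p$, and use $\|\ad_{-p}\Phi\|^2=\|a_{-p}\Phi\|^2+\vol\|\Phi\|^2$ together with $\int\dx{p}\|a_p\Phi\|^2=m\|\Phi\|^2$ and $\int\dx{p}\|a_p\Psi\|^2=(m+2)\|\Psi\|^2$. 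This gives
\begin{align}
\|P_mg[J]P_{m+2}\| \, \leq \, (m+2)^{\frac12}\big(\|J\|_\infty^2\,m+\vol\|J\|_2^2\big)^{\frac12} \, \leq \, (\|J\|_2+\|J\|_\infty)(m+1+\vol) \, ,
\end{align}
which is precisely Lemma \ref{lem-wick-ord-bd}(2) with $n=2$ and $H(p_1,p_2)=J(p_1)$. In particular, the $+\vol$ does \emph{not} originate from the finitely many $2$-torsion points $p=-p$ of $\lattice$ (those contribute only $O(\|J\|_\infty(m+2))$); it comes from the CCR commutator $[a_{-p},\ad_{-p}]=\vol$ in the dual estimate, and the two norms of $J$ track the two terms of $\|\ad_{-p}\Phi\|^2$ rather than an off-diagonal/diagonal split of the momentum sum.
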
	

	For the following standard result, we need to introduce some notation. For a proof of the statement, we refer, e.g., to \cite{bachfrsi,bafrsi}. Denote
    \begin{align}
        a_p^{(1)} \, := \, \ad_p \, , \qquad a_p^{(-1)} \, := \, a_p \, .
    \end{align}
    Given a finite ordered subset $J=\{j_1<j_2<\ldots<j_r\}\subset \N$ and $\sigma_{j_k}\in\{\pm1\}$, we define the ordered product
	\begin{align}
	\prod_{j\in J} a_{p_j}^{(\sigma_j)} \, : = \, a_{p_{j_1}}^{(\sigma_{j_1})}\ldots a_{p_{j_r}}^{(\sigma_{j_r})} \, .
	\end{align}
	In addition, we abbreviate
	\begin{align}
	\bp_J \, := \, (p_{j_k})_{k=1}^r \, ,
	\end{align} 
	as well as 
	\begin{align}
	a^{(\sigma)}(\bp_J) \, := \, \prod_{j\in J} a^{(\sigma)}_{p_j} \, . \label{eq-a-set-0}
	\end{align} 
    Note that we have that
    \begin{align}\label{eq-com-sign-1}
        [\nb,\prod_{j\in J} a_{p_j}^{(\sigma_j)}] \, = \, \sum_{j\in J}\sigma_j \prod_{j\in J} a_{p_j}^{(\sigma_j)} \, .
    \end{align}
	Furthermore, we define the sets
	\begin{align}
	J_{\pm} \, := \, \{j\in J\mid \sigma_j=\pm 1\} 
	\end{align} 
	and the Wick-ordered product
	\begin{align}
	:\prod_{j\in J}a_{p_j}^{(\sigma_j)}: \, := \, \ad(\bp_{J_+})a(\bp_{J_-})
	\end{align} 	
	with all creation operators to the left, and all annihilation operators to the right. 	
	\par Finally, in order to keep track of the correct scaling, it is useful to work with the rescaled $\ell^2(\lattice)$-norm
	\begin{align}
	\|H\|_{L^2(\lattice)} \, = \, \frac{1}{\sqrt{\vol}} \|H\|_{\ell^2(\lattice)} \, .
	\end{align} 
	More generally, we also define
	\begin{align}
	\|H\|_{L^\infty_{\bp_m} L^2_{\bk_n}((\lattice)^{m+n})} \, := \, \sup_{\bp_m\in (\lattice)^m}\Big( \int_{(\lattice)^n} d\bk_n |H(\bp_m,\bk_n)|^2\Big)^{\frac12} \, , \\
	\|H\|_{ L^2_{\bk_n}L^\infty_{\bp_m}((\lattice)^{m+n})} \, := \, \Big( \int_{(\lattice)^n} d\bk_n \sup_{\bp_m\in (\lattice)^m}|H(\bp_m,\bk_n)|^2\Big)^{\frac12} \, ,
	\end{align} 
	where in the case $n=0$, this norm reduces to $\|H\|_{L^\infty_{\bp_m}((\lattice)^m)}$, and in the case $m=0$, to $\|H\|_{L^2_{\bk_n}((\lattice)^n)}$.
	
	\begin{lemma}[Wick-ordered operator bound] \label{lem-wick-ord-bd}
		Let $M\in\N_0$, $n\in\N$, $J:=\{1,\ldots,n\}$, $\sigma_j\in\{\pm1\}$ for all $j\in J$. Let $H:(\lattice)^n\to \C$, and $g_j:\lattice\to\C$ be given functions. Then the following holds true
		\begin{enumerate}
			\item If $J_\pm\neq\emptyset$, we have that
			\begin{fleqn}[\parindent]
				\begin{align}
				\MoveEqLeft \Big\|\int_{(\lattice)^n} d\bp_n H(\bp_n)\delta(\sum_{j=1}^np_j\sigma_j) \prod_{j=1}^ng_j(p_j):\prod_{j=1}^na_{p_j}^{(\sigma_j)} :P_M \Big\|\\
				\leq & \, \| |H|^{\frac12}  \prod_{j\in J_-} g_j(p_j) \delta(\sum_{j=1}^n\sigma_j p_j)^{\frac12}\|_{L^\infty_{\bp_{J_+}}L^2_{\bp_{J_-}}}\| |H|^{\frac12}  \prod_{j\in J_+} g_j(p_j) \delta(\sum_{j=1}^n\sigma_j p_j)^{\frac12}\|_{L^\infty_{\bp_{J_-}}L^2_{\bp_{J_+}}}\\
				&(M+\sum_{j=1}^n\sigma_j)_{|J_+|}^{\frac12}(M)_{|J_-|}^{\frac12}\mathds{1}_{M\geq |J_-|} \, , 
				\end{align} 
				where
				\begin{align}
				(x)_m \, := \, \prod_{k=0}^{m-1}(x-k) \label{def-falling-fac}
				\end{align} 
				denotes the falling factorial. 
				\item If $J_+=\emptyset$ and $n\geq2$, we find that 
				\begin{align}
				\MoveEqLeft \Big\|\int_{(\lattice)^n} d\bp_n H(\bp_n)\delta(\sum_{j=1}^np_j) a(\bp_n) P_M \Big\|\\
				\leq & \,  \big(\|H\|_{L^2_{\bp_{n-2}L^\infty_{p_{n-1},p_n}}}^2(M-n+1)+\vol \|\delta(\sum_{j=1}^np_j)^{\frac12} H\|_{L^2_{\bp_n}}^2\big)^{\frac12} (M)_{n-1}^{\frac12} \, .
				\end{align} 
			\end{fleqn}
			Similarly, in the case $J_-=\emptyset$, we have that 
			\begin{align}
			\MoveEqLeft \Big\|\int_{(\lattice)^n} d\bp_n \, H(\bp_n) \delta(\sum_{j=1}^np_j) \ad(\bp_n) P_M\Big\| \\
			\leq & \big(\|H\|_{L^2_{\bp_{n-2}}L^\infty_{p_{n-1},p_n}}^2(M+1)+\vol \|\delta(\sum_{j=1}^np_j)^{\frac12} H\|_{L^2_{\bp_n}}^2\big)^{\frac12} (M+n)_{n-1}^{\frac12} \, .
			\end{align} 
			\item If $n=1$, we obtain
			\begin{align}
			\|a_0 P_M\| \, = \, \sqrt{M\vol} \, .
			\end{align} 
		\end{enumerate}
	\end{lemma}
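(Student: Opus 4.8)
The statement to prove is Lemma \ref{lem-wick-ord-bd}, the Wick-ordered operator bound. Let me think about how to prove it.

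The plan is to prove Lemma~\ref{lem-wick-ord-bd} by the standard Fock-space Cauchy--Schwarz estimates for Wick monomials (as in \cite{bachfrsi,bafrsi}). Two facts are the backbone. First, for one-particle functions one has the operator-norm bounds $\|a(f_1)\cdots a(f_\ell)P_m\|\le\|f_1\|\cdots\|f_\ell\|\,(m)_\ell^{1/2}$ and $\|\ad(f_1)\cdots\ad(f_k)P_m\|\le\|f_1\|\cdots\|f_k\|\,(m+k)_k^{1/2}$, the first of which vanishes for $m<\ell$. Second, with the rescaled measure $\int d\bp=|\Lambda|^{-1}\sum_p$ and the CCR~\eqref{eq-CCR}, one has the resolution identity $\int d\bp_J\,\ad(\bp_J)a(\bp_J)=(\nb)_{|J|}$, hence $\int d\bp_J\|a(\bp_J)\Psi\|^2=\langle\Psi,(\nb)_{|J|}\Psi\rangle$ for any $\Psi$, and in particular this equals $(M)_{|J|}\|\Psi\|^2$ for $\Psi\in P_M\cF$. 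I also use $\delta_{\lattice}^{1/2}\delta_{\lattice}^{1/2}=\delta_{\lattice}$ and $\|\delta_{\lattice}\|_{L^\infty}=|\Lambda|$.

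For item~(1), where $J_\pm\neq\emptyset$, I would test the operator $A$ (the left-hand side of the claimed inequality) against normalized $\phi\in P_{M+\sum_j\sigma_j}\cF$ and $\psi\in P_M\cF$, move the creation operators onto the bra, and estimate
\begin{align*}
\big|\langle\phi,A\,P_M\psi\rangle\big|
&=\Big|\int d\bp_n\,H(\bp_n)\,\delta(\textstyle\sum_j\sigma_jp_j)\,{\textstyle\prod_j}g_j(p_j)\,\langle a(\bp_{J_+})\phi,\,a(\bp_{J_-})\psi\rangle\Big|\\
&\le\int d\bp_n\,|H|\,\delta\,{\textstyle\prod_j}|g_j|\;\|a(\bp_{J_+})\phi\|\,\|a(\bp_{J_-})\psi\|.
\end{align*}
Then I split the kernel \emph{across} the two momentum groups, $|H|\,\delta\,{\textstyle\prod_j}|g_j|=F_+(\bp_n)\,F_-(\bp_n)$ with $F_+:=|H|^{1/2}\delta^{1/2}{\textstyle\prod_{j\in J_-}}|g_j|$ and $F_-:=|H|^{1/2}\delta^{1/2}{\textstyle\prod_{j\in J_+}}|g_j|$, apply Cauchy--Schwarz in $L^2(d\bp_n)$ to $(F_+\|a(\bp_{J_+})\phi\|)\cdot(F_-\|a(\bp_{J_-})\psi\|)$, and integrate out the spectator momenta: $\int d\bp_{J_-}F_+^2\le\||H|^{1/2}{\textstyle\prod_{J_-}}g_j\,\delta^{1/2}\|_{L^\infty_{\bp_{J_+}}L^2_{\bp_{J_-}}}^2$ uniformly in $\bp_{J_+}$, leaving $\int d\bp_{J_+}\|a(\bp_{J_+})\phi\|^2=(M+\sum_j\sigma_j)_{|J_+|}\|\phi\|^2$, and symmetrically $\int d\bp_{J_+}F_-^2\le\||H|^{1/2}{\textstyle\prod_{J_+}}g_j\,\delta^{1/2}\|_{L^\infty_{\bp_{J_-}}L^2_{\bp_{J_+}}}^2$, leaving $\int d\bp_{J_-}\|a(\bp_{J_-})\psi\|^2=(M)_{|J_-|}\|\psi\|^2$, which also vanishes for $M<|J_-|$ and thus produces the indicator. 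Taking the supremum over normalized $\phi,\psi$ gives exactly the stated bound.

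Items~(2) and~(3) are the purely-annihilation (resp.\ purely-creation, by adjoint) endpoint cases. Item~(3) is immediate, since $a_0=a(\1_\Lambda)$ with $\|\1_\Lambda\|_{L^2(\Lambda)}^2=|\Lambda|$, whence $\|a_0P_M\|=(M|\Lambda|)^{1/2}$. For item~(2) the same testing strategy applies against $\phi\in P_{M-n}\cF$, but now there is no creation operator onto which one can offload the momentum constraint, so the single $\delta_{\lattice}$ must be treated directly: peeling off $n-1$ annihilation operators via the resolution identity contributes the factor $(M)_{n-1}^{1/2}$, while the last factor is estimated by Cauchy--Schwarz in the remaining momentum, splitting $|H|\,\delta$ between a part where a genuine annihilation survives on the $(M-n+1)$-particle sector --- giving $\|H\|_{L^2_{\bp_{n-2}}L^\infty_{p_{n-1},p_n}}^2(M-n+1)$ --- and a part absorbed by the constraint, where $\|\delta_{\lattice}\|_{L^\infty}=|\Lambda|$ yields $|\Lambda|\,\|\delta^{1/2}H\|_{L^2_{\bp_n}}^2$; combining the two alternatives under one square root gives the claim, and the creation case follows by taking adjoints. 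The main obstacle is precisely this endpoint bookkeeping in~(2): keeping the falling-factorial constants sharp while correctly apportioning the momentum-conservation delta between ``saturated by the kernel'' and ``absorbed by a surviving annihilation.'' Since this is entirely routine, I would, if brevity is desired, simply invoke the detailed computations of \cite{bachfrsi,bafrsi} at that point rather than reproduce them.
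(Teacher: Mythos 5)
Your proposal is sound: the paper itself gives no proof of this lemma, deferring entirely to \cite{bachfrsi,bafrsi}, and your argument is precisely the standard Cauchy--Schwarz/number-operator scheme used there --- testing against sector-localized vectors, splitting the kernel $|H|\,\delta\,\prod_j|g_j|$ across the two momentum groups, and evaluating $\int d\bp_J\|a(\bp_J)\Psi\|^2=\langle\Psi,(\nb)_{|J|}\Psi\rangle$ on fixed particle-number sectors, which correctly reproduces the factors $(M+\sum_j\sigma_j)_{|J_+|}^{\frac12}(M)_{|J_-|}^{\frac12}$ and the indicator $\mathds{1}_{M\geq|J_-|}$. The only place your sketch stays schematic is the endpoint bookkeeping in item (2), which you acknowledge and which is exactly the part the paper likewise outsources to the cited references.
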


    \section{Bounds on fluctuation dynamics}

 
	\begin{lemma}[Bogoliubov dynamics]\label{lem-bog-dynamics}
        Assume that $\vol\geq1$. For any $t\geq0$, let $u_t$, $v_t$ be defined as in \eqref{def-up}, such that $\gamma_t:=|v_t|^2\in L^1\cap L^\infty(\lattice)$. Then, for any $\ell\in\N$, there exist constants $C_\ell>0$ such that for any $t>0$ we have that
        \begin{align}
            \big\|(\nb+\vol)^{\frac\ell2} \bog[k_0](\nb+\vol)^{-\frac\ell2}\big\| \, &\leq \, C_\ell(1+\|\gamma_0\|_1+\|\gamma_0\|_\infty)^{\frac\ell2} \, ,\\
            \big\|(\nb+\vol)^{\frac\ell2} \bogd[k_t](\nb+\vol)^{-\frac\ell2}\big\| \, &\leq \, C_\ell(1+\|\gamma_t\|_1+\|\gamma_t\|_\infty)^{\frac\ell2} \, .
        \end{align}
    \end{lemma}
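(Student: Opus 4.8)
\textbf{Proof strategy for Lemma \ref{lem-bog-dynamics}.}
The plan is to reduce both estimates to a single Gr\"onwall-type differential inequality for the map $\tau\mapsto\bog[\tau k]$, where we interpolate between the identity at $\tau=0$ and the full Bogoliubov operator at $\tau=1$. First I would fix $\ell\in\N$ and a reference vector $\Psi$ in the form domain of $(\nb+\vol)^{\ell/2}$, and consider the quantity
\begin{align}
    F(\tau) \, := \, \big\|(\nb+\vol)^{\frac\ell2}\bog[\tau k](\nb+\vol)^{-\frac\ell2}\Psi\big\|^2 \, .
\end{align}
Differentiating in $\tau$ and using that $\partial_\tau\bog[\tau k]=\frac12\int\dx{p}\big(k(p)\ad_p\ad_{-p}-\overline{k}(p)a_pa_{-p}\big)\bog[\tau k]$ (which follows exactly as in the proof of Lemma \ref{lem-bogoliubov}), the $\tau$-derivative of $F$ is controlled by the commutator of $(\nb+\vol)^{\ell}$ with the quadratic generator $\frac12\int\dx{p}\big(k(p)\ad_p\ad_{-p}-\mathrm{h.c.}\big)$. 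Since each $\ad_p\ad_{-p}$ raises the particle number by $2$, the operator inequality $(\nb+\vol)^{\ell}\lesssim_\ell \big(\nb+2+\vol\big)^\ell\lesssim_\ell(\nb+\vol)^\ell$ lets one bound $\pm[(\nb+\vol)^\ell, \ad_p\ad_{-p}]$ by a constant (depending on $\ell$) times $(\nb+\vol)^{\ell-1/2}\,\ad_p\ad_{-p}$ plus lower-order terms; integrating in $p$ against $k$ and using $\|k\|_1,\|k\|_\infty\lesssim \|\gamma\|_1^{1/2}+\|\gamma\|_\infty^{1/2}$ (because $|k(p)|=\operatorname{arcsinh}\sqrt{\gamma(p)}\lesssim \sqrt{\gamma(p)}+\gamma(p)$ when $\gamma$ is bounded, and more carefully $|k(p)|\lesssim \log(1+\gamma(p))^{1/2}$ is integrable once $\gamma\in L^1$) yields
\begin{align}
    \frac{d}{d\tau}F(\tau) \, \lesssim_\ell \, \big(1+\|\gamma\|_1+\|\gamma\|_\infty\big)\,F(\tau) \, .
\end{align}
Gr\"onwall's inequality then gives $F(1)\lesssim_\ell e^{C_\ell(1+\|\gamma\|_1+\|\gamma\|_\infty)}\,F(0)$, and taking the supremum over $\Psi$ produces a bound of the claimed \emph{form} but with an exponential dependence on $\|\gamma\|_1+\|\gamma\|_\infty$ rather than the polynomial $(1+\|\gamma\|_1+\|\gamma\|_\infty)^{\ell/2}$ stated. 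To recover the sharp polynomial growth, I would instead work directly with the explicit transformation rule \eqref{eq-bog-rot-raw}: conjugation by $\bogd[k]$ replaces $a_p$ by $u(p)a_p+v(p)\ad_{-p}$ with $u(p)=\sqrt{1+\gamma(p)}$, $|v(p)|=\sqrt{\gamma(p)}$, and hence $\bogd[k](\nb+\vol)\bog[k]=\int\dx{p}\big(u(p)\ad_p+\overline{v}(p)a_{-p}\big)\big(u(p)a_p+v(p)\ad_{-p}\big)+\vol\le \big(1+2\|\gamma\|_\infty\big)\nb+2\|\gamma\|_1+\vol\lesssim (1+\|\gamma\|_1+\|\gamma\|_\infty)(\nb+\vol)$ after normal ordering and using the CCR \eqref{eq-CCR}. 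Iterating this operator inequality $\ell$ times (each conjugation costs one factor of $1+\|\gamma\|_1+\|\gamma\|_\infty$, using that $\bogd[k](\nb+\vol)^j\bog[k]$ is again a polynomial in $\nb$ with the same leading behavior by the same normal-ordering computation) gives $\bogd[k](\nb+\vol)^\ell\bog[k]\lesssim_\ell (1+\|\gamma\|_1+\|\gamma\|_\infty)^\ell (\nb+\vol)^\ell$, which upon taking square roots is precisely $\big\|(\nb+\vol)^{\ell/2}\bog[k](\nb+\vol)^{-\ell/2}\big\|\lesssim_\ell (1+\|\gamma\|_1+\|\gamma\|_\infty)^{\ell/2}$; applying this with $k=k_0$ and with $k=k_t$ (noting $\bogd[k_t]$ and $\bog[k_t]$ satisfy the same bound since $|v_t|^2=\gamma_t$ and the estimate is symmetric under $\bog\leftrightarrow\bogd$ after relabeling $p\mapsto -p$) yields both claimed inequalities.

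\textbf{Main obstacle.} The delicate point is the normal-ordering step that turns $\bogd[k](\nb+\vol)^j\bog[k]$ into a polynomial in $\nb$ whose \emph{leading} coefficient is exactly $(1+O(\|\gamma\|_\infty))^j$ and whose lower-order coefficients are controlled by powers of $\|\gamma\|_1+\|\gamma\|_\infty$: each conjugation produces cross terms like $\int\dx{p}\,u(p)\overline{v}(p)\,\ad_p\ad_{-p}$ and $\int\dx{p}\,|v(p)|^2\,a_p a_{-p}^{\dagger}$ which, after reordering, contribute both a $\nb$-proportional piece and a scalar piece proportional to $\vol\|\gamma\|_1$ or $\|\gamma\|_\infty$. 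One must check that these never generate a term growing faster than $(1+\|\gamma\|_1+\|\gamma\|_\infty)^j(\nb+\vol)^j$; the cleanest way is to prove the one-step inequality $\bogd[k](\nb+c\vol)^j\bog[k]\le C_j(1+\|\gamma\|_1+\|\gamma\|_\infty)(\nb+c'\vol)^j$ for suitable constants and induct on $j$, absorbing scalar shifts into the $\vol$-term using $\vol\ge1$. I expect this bookkeeping — rather than any conceptual difficulty — to be the bulk of the work, and it is entirely analogous to standard moment estimates for Bogoliubov transformations as in \cite{grmama,GM1,GM2}.
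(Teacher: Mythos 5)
Your second (and main) route is the same in spirit as the paper's: conjugate $\nb$ once by $\bog[k]$ using \eqref{eq-bog-rot}, normal-order to get $\bogd[k]\nb\bog[k]=\nb+\int\dx{p}\big(2\gamma(p)\ad_pa_p+(\sigma(p)\ad_p\ad_{-p}+\mathrm{h.c.})+\vol\gamma(p)\big)$, and then control the $\ell$-th power. However, the step where you pass to the $\ell$-th power is a genuine gap. You write that one can ``iterate the operator inequality'' $\bogd[k](\nb+\vol)\bog[k]\lesssim(1+\|\gamma\|_1+\|\gamma\|_\infty)(\nb+\vol)$ to conclude $\bogd[k](\nb+\vol)^\ell\bog[k]\lesssim_\ell(1+\|\gamma\|_1+\|\gamma\|_\infty)^\ell(\nb+\vol)^\ell$, and in your ``main obstacle'' paragraph you propose an induction on $j$ of the same flavor. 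This does not close: for non-commuting positive operators, $0\le A\le cB$ does \emph{not} imply $A^\ell\le c^\ell B^\ell$ (the map $x\mapsto x^\ell$ is not operator monotone for $\ell\ge2$), and here $A=\bogd[k](\nb+\vol)\bog[k]$ does not commute with $\nb$ precisely because of the off-diagonal pieces $\sigma(p)\ad_p\ad_{-p}+\mathrm{h.c.}$. The paper avoids this by never powering an operator inequality: it expands $(\bogd[k]\nb\bog[k]+\vol)^\ell$ into products of $\ell$ quadratic monomials, records sector-wise bounds $\|P_mf[1+2\gamma]P_n\|\le\delta_{m,n}(1+2\|\gamma\|_\infty)m$ and $\|P_mg[\overline{\sigma}]P_n\|\le\delta_{n,m+2}(\|\sigma\|_2+\|\sigma\|_\infty)(m+1+\vol)$ from Lemma \ref{lem-fg-bound} (using $\|\sigma\|_2\le\|\gamma\|_1^{1/2}(1+\|\gamma\|_\infty)^{1/2}$, which follows from $|\sigma|^2=\gamma(1+\gamma)$), and then invokes the product bound of Lemma \ref{lem-prod}, which bounds the expectation of a product of monomials by $\prod_j\gamma_j$ times a moment of $\nb+\vol$. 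That lemma is exactly the replacement for the invalid powering of operator inequalities, and your argument needs it (or an equivalent sector-wise/commutator scheme) to produce the stated polynomial constant.

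Two smaller points. First, your discarded Gr\"onwall route is indeed discarded for the right reason (it yields $e^{C(1+\|\gamma\|_1+\|\gamma\|_\infty)}$ rather than a power), but note also that the intermediate claim $\|k\|_1\lesssim\|\gamma\|_1^{1/2}+\|\gamma\|_\infty^{1/2}$ is false: $|k(p)|=\operatorname{arcsinh}\sqrt{\gamma(p)}\sim\sqrt{\gamma(p)}$ for small $\gamma$, and $\sqrt{\gamma}\in L^1$ does not follow from $\gamma\in L^1\cap L^\infty$; this is precisely why the paper works with $\|\sigma\|_2$ rather than any $L^1$-type norm of $k$ or $\sigma$. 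Second, the ``one-step inequality'' in your obstacle paragraph is stated with a single factor of $(1+\|\gamma\|_1+\|\gamma\|_\infty)$ on the right for all $j$, which is inconsistent with the $(1+\|\gamma\|_1+\|\gamma\|_\infty)^{\ell}$ you need at the end; presumably a typo, but the induction it suggests still runs into the operator-monotonicity problem above.
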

    \begin{proof}
        Recall the facts that $u_t(p)^2=1+|v_t(p)|^2=1+\gamma_t$, see \eqref{def-gamma}, and that $\gamma_t$ is even. Employing \eqref{eq-bog-rot} and Lemma \ref{lem-fg-bound}, we obtain
        \begin{align}\label{eq-nb-bog-transform-0}
            \begin{split}
            \bogd[k_0]\nb \bog[k_0] \, &= \, \int_{\lattice} \dx{p} (u_0(p)\ad_p +\overline{v}_0(p)a_{-p})(u_0(p)a_p+v_0(p)\ad_{-p}) \\
            &= \, \nb + \int_{\lattice} \dx{p} \big(2\gamma_0 (p) \ad_p a_p +(\sigma_0(p)\ad_p\ad_{-p}+\mathrm{h.c.}) + \, \vol \gamma_0(p)\big) \, .
            \end{split}
        \end{align}
        Lemma \ref{lem-fg-bound} implies
        \begin{align}
            \|P_m f[1+2\gamma_0]P_n\| \, &\leq \, \delta_{m,n} (1+2\|\gamma_0\|_\infty )m \, , \label{eq-fgamma-bd}\\
            \|P_mg[\overline{\sigma}_0]P_n\| \, & \leq \, \delta_{n,m+2}(\|\sigma_0\|_2+\|\sigma_0\|_\infty)(m+1+\vol) \, . \label{eq-gsigma-bd-pre}
        \end{align}
        Due to $|\sigma|^2=\gamma(\gamma+1)$, see \eqref{eq-gamma-sigma-rel-0}, we have
        \begin{align}
            \|\sigma\|_2 \, &\leq \, \|\gamma\|_1^{\frac12}(1+\|\gamma\|_\infty)^{\frac12} \, \leq \, \frac12(1+\|\gamma\|_1+\|\gamma\|_\infty) \, ,\\
            \|\sigma\|_\infty \, & \leq \, \|\gamma\|_\infty^{\frac12}(1+\|\gamma\|_\infty)^{\frac12} \, \leq\, \frac12(1+2\|\gamma\|_\infty) \, .
        \end{align}
        As a consequence, \eqref{eq-gsigma-bd-pre} implies
        \begin{equation}\label{eq-gsigma-bd}
            \|P_mg[\overline{\sigma}_0]P_n\| \, \leq \, \frac{\delta_{n,m+2}}2(2+\|\gamma_0\|_1+3\|\gamma_0\|_\infty)(m+1+\vol) \, .
        \end{equation}
        Using $\vol\geq1$, Lemma \ref{lem-prod} then yields
        \begin{align}
            \scp{\psi}{\bogd[k_0](\nb+\vol)^\ell \bog[k_0]\psi} \, &= \, \scp{\psi}{\big(\bogd[k_0]\nb\bog[k_0]+\vol\big)^\ell \psi}\\
            &\lesssim_\ell \,  (1+\|\gamma_0\|_1+\|\gamma_0\|_\infty)^\ell\scp{\psi}{\big(\nb+\vol+\ell\big)^\ell \psi}\\
            &\lesssim_\ell \,  (1+\|\gamma_0\|_1+\|\gamma_0\|_\infty)^\ell\scp{\psi}{(\nb+\vol)^\ell \psi} \, .
        \end{align}
        With analogous steps, we obtain
        \begin{align}
            \scp{\psi}{\bog[k_t](\nb+\vol)^\ell \bogd[k_t]\psi} \, \lesssim_\ell \, (1+\|\gamma_t\|_1+\|\gamma_t\|_\infty)^\ell\scp{\psi}{(\nb+\vol)^\ell \psi} \, .
        \end{align}
        This concludes the proof.
    \end{proof}

	\begin{lemma}[BEC fluctuation dynamics]\label{lem-weyl-fluct-dynamics}
		Assume $\vol\geq1$. Let $\phi\in L^\infty(0,t)$ for all $t>0$. Define $\whfluc(t):=\weyld[\sqrt{N\vol}\phi_t] e^{-it\cH_N}\weyl[\sqrt{N\vol}\phi_0]$. Then there are $C_\ell, K_\ell>0$ such that
		\begin{align}
		\MoveEqLeft\Big\|(\nb+\vol)^{\frac\ell2} \whfluc(t)(\nb+\vol)^{-\frac\ell2}\big(1+\frac{\nb}{N\vol}\big)^{-\frac12}\Big\| \\
        & \leq \, C_\ell \Big(1+\frac{(\|\gtr_0\|_1+\|\gtr_0\|_\infty+1)^2}N\Big)^\ell e^{K_\ell\vd\lambda\vol(1+\frac{\|\gtr_0\|_1}N)t}
		\end{align}
		for all $\ell\in\N$.
	\end{lemma}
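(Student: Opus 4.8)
The proof of Lemma~\ref{lem-weyl-fluct-dynamics} should follow the by-now standard strategy for bounding the growth of number-operator moments along a Weyl-conjugated many-body flow, going back to \cite{rosc} and used in \cite{chenhott}. The plan is to differentiate the ``moment functional''
\begin{align}
\mathcal{M}_\ell(t) \, := \, \jb{\whflucd(t)(\nb+\vol)^\ell\whfluc(t)}_\psi
\end{align}
in time for a fixed normalized state $\psi$, and to control $\partial_t\mathcal{M}_\ell(t)$ by $\mathcal{M}_\ell(t)$ itself (plus lower order terms absorbable into it) so that Gronwall's inequality closes the estimate. First I would record that $i\partial_t\whfluc(t)=\whHfluc(t)\whfluc(t)$, where $\whHfluc(t):=\weyld[\sqrt{N\vol}\phi_t]\cH_N\weyl[\sqrt{N\vol}\phi_t]+[i\partial_t\weyld[\sqrt{N\vol}\phi_t]]\weyl[\sqrt{N\vol}\phi_t]$; the explicit expansion of $\weyld[\sqrt{N\vol}\phi_t]\cH_N\weyl[\sqrt{N\vol}\phi_t]$ is \eqref{eq-weyl-HN-0} from the proof of Lemma~\ref{lem-hfluc}, and the time-derivative term is \eqref{eq-har-dyn-0}. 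The scalar terms cancel or drop out when computing the commutator $[\,(\nb+\vol)^\ell,\whHfluc(t)\,]$; moreover the purely quadratic number-conserving piece $\int\dx{p}(E(p)+\lambda\vol(\hat{v}(p)+\hat{v}(0))|\phi_t|^2)\ad_pa_p$ commutes with $\nb$ and hence with $(\nb+\vol)^\ell$, so it contributes nothing.

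The next step is to estimate the commutators of $(\nb+\vol)^\ell$ with each of the remaining monomials in $\whHfluc(t)$: the terms linear in $a^\#$ (of size $\sqrt{N\vol^3}\lambda|\phi_t|^2\hat{v}(0)$ and $\sqrt{N\vol}|i\partial_t\phi_t|$), the off-diagonal quadratic terms $\frac{\lambda\vol}2\hat{v}(p)\phi_t^2\ad_p\ad_{-p}+\mathrm{h.c.}$, the cubic terms $\frac{\lambda\sqrt{\vol}}{\sqrt N}\hat{v}(p_2)\ad_{p_1}\ad_{p_2}a_{p_{12}}\phi_t+\mathrm{h.c.}$, and the quartic interaction $\frac{\lambda}{2N}\int\dx{\bp_4}\delta(\dots)\hat{v}(p_1-p_3)\ad_{p_1}\ad_{p_2}a_{p_3}a_{p_4}$. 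For a monomial $A$ changing the particle number by $\osg(A)$ one has the algebraic identity $(\nb+\vol)^\ell A=A(\nb+\vol+\osg(A))^\ell$, so each commutator term can be written as $A\cdot P_\ell(\nb+\vol)$ with $P_\ell$ a polynomial of degree $\ell-1$; pairing against $\whflucd(t)\,\cdot\,\whfluc(t)$ and using Cauchy--Schwarz together with the operator bounds of Lemmata~\ref{lem-prod}, \ref{lem-fg-bound} and especially \ref{lem-wick-ord-bd} (which is tailored exactly to estimate such $\delta$-constrained Wick-ordered integrals in terms of powers of $\nb+\vol$), one bounds every resulting term by a constant times $\lambda\vd\vol\big(1+\tfrac{|\phi_t|^2}{1}+\dots\big)\mathcal{M}_\ell(t)$ up to half-integer powers of $N$ and $\vol$. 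Here I would use the a priori bounds from Lemma~\ref{lem-mass-transfer}: $|\phi_t|\leq C_{\|\gtr_0\|_1/(N\vol)}$, i.e. $N\vol|\phi_t|^2\leq\|\Gamma_0\|_1=\|\gtr_0\|_1+N$, which is precisely what converts the naive $\sqrt{N\vol^3}\lambda|\phi_t|^2$ prefactors into the clean rate $K_\ell\vd\lambda\vol(1+\tfrac{\|\gtr_0\|_1}N)$ appearing in the statement; the extra factor $\big(1+\tfrac{(\|\gtr_0\|_1+\|\gtr_0\|_\infty+1)^2}N\big)^\ell$ comes from the cubic and quartic terms, where one commutator lowers the degree but the $\frac1N$ and $\frac1{\sqrt N}$ prefactors, combined with the $\sqrt{N}$ produced when an $a_0^\#$ acting on a state of $\sim N$ condensate particles is estimated via $\|a_0P_M\|=\sqrt{M\vol}$, leave a residual $\tfrac1N$-power times moments. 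The companion factor $\big(1+\tfrac{\nb}{N\vol}\big)^{-1/2}$ on the right is what makes these residual-condensate terms absorbable; this is the role of the weight in \eqref{eq-BEC-fluc-dyn}.

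Putting the commutator estimates together yields a differential inequality of the schematic form $\partial_t\mathcal{M}_\ell(t)\leq C_\ell\,\vd\lambda\vol\big(1+\tfrac{\|\gtr_0\|_1}N\big)\mathcal{M}_\ell(t)+(\text{lower order, absorbable})$, and Gronwall's lemma then gives $\mathcal{M}_\ell(t)\leq e^{K_\ell\vd\lambda\vol(1+\|\gtr_0\|_1/N)t}\big(1+\tfrac{(\|\gtr_0\|_1+\|\gtr_0\|_\infty+1)^2}{N}\big)^\ell\mathcal{M}_\ell(0)$. Taking square roots and rewriting $\mathcal{M}_\ell(0)=\jb{(\nb+\vol)^\ell}_\psi$, together with the observation that the weight $(1+\tfrac{\nb}{N\vol})^{-1/2}$ is preserved (up to constants) because $\whfluc$ shifts $\nb$ by at most $O(\sqrt{N\vol}\cdot\text{stuff})$ in the relevant operator-norm sense, gives the claimed operator-norm bound on $(\nb+\vol)^{\ell/2}\whfluc(t)(\nb+\vol)^{-\ell/2}(1+\tfrac{\nb}{N\vol})^{-1/2}$ after replacing $\ell$ by $2\lceil\ell/2\rceil$ and interpolating if needed. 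The main obstacle is bookkeeping: tracking precisely how the half-integer powers of $N$ and $\vol$ distribute across the linear, cubic, and quartic terms so that (i) the cubic/quartic contributions genuinely carry an extra negative power of $N$ relative to the dominant quadratic rate, and (ii) the condensate-enhanced terms $\propto N\vol|\phi_t|^2$ are controlled only via mass conservation and not by brute force — getting this balance right is exactly what distinguishes the sharp bound here from a lossy one, and it is where one must be careful to invoke Lemma~\ref{lem-mass-transfer} rather than crude estimates on $|\phi_t|$.
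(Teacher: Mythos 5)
Your overall strategy --- differentiating the number--operator moments of $\whfluc(t)\psi$, estimating the commutators of $(\nb+\vol)^\ell$ with the Weyl-conjugated generator term by term via Lemmata \ref{lem-prod}, \ref{lem-fg-bound}, \ref{lem-wick-ord-bd}, converting the $N\vol|\phi_t|^2$ prefactors into $\|\Gamma_0\|_1=\|\gtr_0\|_1+N$ by mass conservation, and closing with Gronwall --- is the same as the paper's. But there is a genuine gap at the step where you claim the differential inequality has the form $\partial_t\mathcal{M}_\ell(t)\leq C\,\vd\lambda\vol(1+\tfrac{\|\gtr_0\|_1}{N})\mathcal{M}_\ell(t)+(\text{lower order, absorbable})$. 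The problematic contribution of the cubic term $\tfrac{\lambda\sqrt{\vol}}{\sqrt N}\,\phi_t\!\int \dx{\bp_2}\hat v(p_2)\ad_{p_1}\ad_{p_2}a_{p_{12}}+\mathrm{h.c.}$ is \emph{not} lower order: after Cauchy--Schwarz it produces $\tfrac{1}{N\vol}\scp{\whfluc(t)\psi}{(\nb+\vol)^{\ell+1}\whfluc(t)\psi}$, a moment of degree one \emph{higher} than $\mathcal{M}_\ell(t)$, so the inequality does not close at fixed $\ell$ by Gronwall alone. Making that term absorbable is the heart of the proof, and it requires two ingredients you do not supply: (i) the explicit commutation relation
\begin{align}
[\nb,\whfluc(t)] \, = \, -\sqrt{N\vol}\,(\overline{\phi}_t a_0+\phi_t\ad_0)\,\whfluc(t) \, + \, \sqrt{N}\,\whfluc(t)\,(a_0+\ad_0) \, ,
\end{align}
which transfers one factor of $\nb$ through $\whfluc(t)$ at the cost of condensate creation/annihilation operators controlled by $\|a_0^{\#}P_M\|\leq\sqrt{(M+1)\vol}$ and $N\vol|\phi_t|^2\leq\|\Gamma_0\|_1$; and (ii) an induction on $\ell$, because the resulting recursive bound on $\tfrac{1}{N\vol}\scp{\whfluc(t)\psi}{(\nb+\vol)^{j+2}\whfluc(t)\psi}$ in terms of $\scp{\whfluc(t)\psi}{(\nb+\vol)^{j+1}\whfluc(t)\psi}$ and of moments of the \emph{initial} vector $\psi$ itself invokes the already-established bounds at lower order. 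It is precisely this transfer back to the initial data that produces the weight $(1+\tfrac{\nb}{N\vol})^{-1/2}$ in the statement; you correctly guess the weight's role, but you treat it as a final bookkeeping remark ("the weight is preserved up to constants") rather than as the mechanism that closes the estimate.

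A secondary imprecision: the factor $\bigl(1+\tfrac{(\|\gtr_0\|_1+\|\gtr_0\|_\infty+1)^2}{N}\bigr)^\ell$ does not come from the cubic and quartic terms as you suggest (the quartic term commutes with $\nb$ and drops out entirely after Weyl conjugation alone); it comes from the inhomogeneous source in the linear term $\HBEC$, whose coefficient after substituting the HFB equation is $\tfrac{\lambda\sqrt{\vol}}{\sqrt N}\bigl((\gtr_t*(\hat v+\hat v(0)))(0)\phi_t+(\Sigma^T_t*\hat v)(0)\overline{\phi}_t\bigr)$ and contributes the $\psi$-dependent, $\whfluc$-independent term $\tfrac{(\|\gtr_t\|_1+\|\gtr_t\|_\infty)^2}{N}$ to the differential inequality. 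Getting this attribution right matters because it is what keeps the exponential rate $K_\ell\vd\lambda\vol(1+\tfrac{\|\gtr_0\|_1}{N})$ free of the $\|\gtr_0\|_\infty$-dependence.
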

	
	\prf
	We need to reprove this statement as, compared to \cite{chenhott}, here, $\hat{v}(0)\neq0$ and $\phi$ is not stationary. Nonetheless, we follow the steps of the proof in \cite{busasc}. We show the statement by induction on $\ell$. Let $\psi\in\cF$ be arbitrary with $\|\psi\|=1$. 
    \begin{enumerate}
        \item \underline{Induction basis:} Define
    \begin{align}
        \HBECt \, & := \, \sqrt{N\vol}\ad_0 (-i\partial_t\phi_t +\lambda\vol\hat{v}(0)|\phi_t|^2\phi_t)\, + \, \mathrm{h.c.}\\
        \HHFBodt \, & := \, \frac{\lambda\vol}2 \int \dx{p} \hat{v}(p) \big(\phi_t^2\ad_p\ad_{-p}+\overline{\phi}_t^2a_pa_{-p}\big)\\
		\Hcubt \, & := \, \frac{\lambda\sqrt{\vol}}{\sqrt{N}} \int \dx{\bp_2} \hat{v}(p_2)\big(\ad_{p_1}\ad_{p_2}a_{p_{12}} \phi_t \, + \, \ad_{p_{12}}a_{p_2}a_{p_1}\overline{\phi}_t\big) . 
    \end{align}
    Substituting \eqref{eq-HFB-system} and recalling definitions \eqref{def-gtr-str}, we find that
    \begin{align}
        \HBECt \, =& \, -\frac{\lambda\sqrt{\vol}}{\sqrt{N}} \ad_0\Big( \big(\gtr_t*(\hat{v}+\hat{v}(0))\big) (0) \phi_t + (\str_t*\hat{v}) (0) \phi_t\Big) + \mathrm{h.c.}
    \end{align}
    Then \eqref{eq-har-dyn-0} and \eqref{eq-weyl-HN-0} imply that
	\begin{align}\label{eq-dt-nbt-0}
        \begin{split}
	\MoveEqLeft i\partial_t \scp{\whfluc(t)\psi}{(\nb+\vol)\whfluc(t)\psi}\\
	&= \scp{\whfluc(t)\psi}{[\nb,\HBECt+\HHFBodt + \Hcubt]\whfluc(t)\psi} \, . 
        \end{split}
	\end{align}
	Recall from \eqref{eq-com-sign-1} that for any monomial $A\in\cP[a,\ad]$  we have
	\begin{align} 
	[\nb,A] =&  \osg(A) A \, .
	\end{align} 
	Employing Lemmata \ref{lem-fg-bound} and \ref{lem-wick-ord-bd} together with Young's inequality, \eqref{eq-dt-nbt-0} yields
	\begin{align}
	\MoveEqLeft|\partial_t\scp{\whfluc(t)\psi}{(\nb+\vol)\whfluc(t)\psi}|\\
	\leq\, & C(\|\hat{v}\|_1+\|\hat{v}\|_\infty)\lambda\Big(\frac{\vol}{\sqrt{N}}(\|\gtr_t\|_1+\|\str_t\|_2+\|\str_t\|_\infty)|\phi_t|\\
    & \scp{\whfluc(t)\psi}{(\nb+\vol)^{\frac12}\whfluc(t)\psi}\\
    & + \, |\phi_t|^2\vol\scp{\whfluc(t)\psi}{(\nb+\vol)\whfluc(t)\psi} \\
	& + \, \frac{|\phi_t|\sqrt{\vol}}{\sqrt{N}}\scp{\whfluc(t)\psi}{(\nb+\vol)^{\frac32}\whfluc(t)\psi} \Big) \\
    \leq \, & C(\|\hat{v}\|_1+\|\hat{v}\|_\infty)\lambda\vol \Big(\frac{(\|\gtr_t\|_1+\|\gtr_t\|_\infty)^2}N\\
    & + \, |\phi_t|^2\scp{\whfluc(t)\psi}{(\nb+\vol)\whfluc(t)\psi} \\
	& + \, \frac{1}{N\vol}\scp{\whfluc(t)\psi}{(\nb+\vol)^2\whfluc(t)\psi} \Big) \, , \label{eq-nb-gron-ib-0}
    \end{align}
  where in the last step, we used Cauchy-Schwarz together with Lemma \ref{lem-mass-transfer}. 
	Our goal now is to bound $\frac{1}{N}\scp{\whfluc(t)\psi}{(\nb+\vol)^2\whfluc(t)\psi}$ in terms of $\scp{\whfluc(t)\psi}{(\nb+\vol)\whfluc(t)\psi}$ and time-dependent $\psi$-independent terms, in order to close the estimate. Using
	\begin{align}
	\cW[\sqrt{N\vol}\phi_t]a_p \weyld[\sqrt{N\vol}\phi_t] \, = \, a_p \, - \, \sqrt{N\vol}\phi_t\delta(p) \, , \label{eq-weyl-0}
	\end{align}  
	we derive that
	\begin{align}
	[\nb,\weyld[\sqrt{N\vol}\phi_t]] \, = \, &   -\sqrt{N\vol}\weyld[\sqrt{N\vol}\phi_t](\overline{\phi}_ta_0+\phi_t\ad_0)\\
    &+N\vol \weyld[\sqrt{N\vol}\phi_t] \\
	= \, &  -\big(\sqrt{N\vol}(\overline{\phi}_ta_0+\phi_t \ad_0)+N\vol \big)\weyld[\sqrt{N\vol}\phi_t] \, , \label{eq-nbw-hc-com}\\
	[\nb,\cW[\sqrt{N\vol}\phi_0]] = \, &  \cW[\sqrt{N\vol}\phi_0]\big(\sqrt{N}(a_0+\ad_0)+N\vol\big) \, .\label{eq-nbw-com}
	\end{align} 
	From these identities and using $[\nb,\cH_N]=0$ and $\phi_0=\vol^{-\frac12}$, we obtain that
	\begin{align}
	[\nb,\whfluc(t)] =&  [\nb,\weyld[\sqrt{N\vol}\phi_t]]e^{-it\cH_N}\cW[\sqrt{N\vol}\phi_0] \\
	&+ \, \weyld[\sqrt{N\vol}\phi_0]e^{-it\cH_N}[\nb,\cW[\sqrt{N\vol}\phi_0]] \\
	=&  -\sqrt{N\vol}(\overline{\phi}_ta_0+\phi_t \ad_0)\whfluc(t) \,  + \, \sqrt{N}\whfluc(t)(a_0+\ad_0) \, . \label{eq-nb-tUn-com}
	\end{align}
	As a consequence, we have that
	\begin{align}
	\MoveEqLeft \scp{\whfluc(t)\psi}{\nb^2\whfluc(t)\psi}\\
	= \, &  \scp{\nb\whfluc(t)\psi}{\whfluc(t)\nb\psi} \, + \, \sqrt{N}\scp{\nb\whfluc(t)\psi}{\whfluc(t)(a_0+\ad_0)\psi}  \,\\
	& - \, \sqrt{N\vol}\scp{\nb\whfluc(t)\psi}{(\overline{\phi}_ta_0+\phi_t\ad_0)\whfluc(t)\psi}\, .
	\end{align} 
	Using Cauchy-Schwarz and employing the fact that $\whfluc(t):\cF\to\cF$ is a unitary transformation, we thus obtain
	\begin{align}\label{eq-nb2-bd-1}
    \begin{split}
	\MoveEqLeft\scp{\whfluc(t)\psi}{\nb^2\whfluc(t)\psi}\\
	\leq \, & \|\nb\whfluc(t)\psi\|\Big(\|\whfluc(t)\nb\psi\|+\sqrt{N}\big(\sqrt{\vol}\|(\overline{\phi}_ta_0+\phi_t\ad_0)\whfluc(t)\psi\| \\
    &+ \, \|\whfluc(t)(a_0+\ad_0)\psi\|\big)\Big) \\
	\leq \, & \|\nb\whfluc(t)\psi\|\Big(\|\nb\psi\|+\sqrt{N}\big(\sqrt{\vol}|\phi_t|(\|a_0\whfluc(t)\psi\|\\
	& + \, \|\ad_0\whfluc(t)\psi\|) + \, \|a_0\psi\| \, + \, \|\ad_0\psi\|\big)\Big) \, . 
	\end{split}
    \end{align}
    Lemma \ref{lem-wick-ord-bd} implies
	\begin{align}\label{eq-a0-nb-bd-1}
    \begin{split}
	\|a_0 \psi\|^2 \, =& \, \sum_{M=0}^\infty\| P_Ma_0 \psi\|^2 \, = \, \sum_{M=1}^\infty \|a_0 P_{M-1}\psi\|^2\\
	\leq & \, \sum_{M=1}^\infty (M-1)\vol \|P_{M-1}\psi\|^2 \, = \,  \vol\|\sqrt{\nb}\psi\|^2 \, . 
    \end{split}
	\end{align} 
	Similarly, we have that
	\begin{align}
	\|\ad_0 \psi\| \, \leq \, \sqrt{\vol}\|\sqrt{\nb+1}\psi\|  \label{eq-a0+-nb-bd-1}
	\end{align} 	
	Employing \eqref{eq-a0-nb-bd-1} and \eqref{eq-a0+-nb-bd-1}, \eqref{eq-nb2-bd-1} implies
	\begin{align}
	\MoveEqLeft\scp{\whfluc(t)\psi}{\nb^2\whfluc(t)\psi}\\
	\leq & \, \|\nb\whfluc(t)\psi\|\Big(\|\nb\psi\|+2\sqrt{N\vol}\big(\sqrt{\vol}|\phi_t|\|\sqrt{\nb+1}\whfluc(t)\psi\| \\
    &\, + \, \|\sqrt{\nb+1}\psi\|\big)\Big) \, .
	\end{align}
	Young's inequality then implies
	\begin{align}
	\MoveEqLeft\scp{\whfluc(t)\psi}{\nb^2\whfluc(t)\psi}\\
	\leq \, & \frac12\scp{\whfluc(t)\psi}{\nb^2\whfluc(t)\psi} + C\Big(\scp{\psi}{\big(\nb^2+N\vol(\nb+1)\big)\psi}\\
    & \, + \, N\vol^2|\phi_t|^2\scp{\whfluc(t)\psi}{(\nb+1)\whfluc(t)\psi}\Big)
    \end{align}
	Rearranging terms, we thus obtain that 
	\begin{align}\label{eq-nb2/N-0}
        \begin{split}
	\MoveEqLeft\frac1{N\vol}\scp{\whfluc(t)\psi}{\nb^2\whfluc(t)\psi} \\
	\leq & \, C\Big(\vol|\phi_t|^2\scp{\whfluc(t)\psi}{(\nb+1)\whfluc(t)\psi}+\scp{\psi}{\big(\nb +1+\frac{\nb^2}{N\vol}\big)\psi}\Big) \, . 
    \end{split}
	\end{align}
	Substituting this into \eqref{eq-nb-gron-ib-0} and using $\vol\geq1$, we obtain that
	\begin{align}
	\MoveEqLeft|\partial_t\scp{\whfluc(t)\psi}{(\nb+\vol)\whfluc(t)\psi}|\\
	\leq \, & C\vd\lambda \vol\Big(\frac{(\|\gtr_t\|_1+\|\gtr_t\|_\infty)^2}N\\
    & + \vol|\phi_t|^2\scp{\whfluc(t)\psi}{(\nb+\vol)\whfluc(t)\psi}+\scp{\psi}{\big(\nb +\vol+\frac{\nb^2}{N\vol}\big)\psi}\Big) \, .
	\end{align}
    Lemma \ref{lem-mass-transfer} then implies
    \begin{align}
	\MoveEqLeft|\partial_t\scp{\whfluc(t)\psi}{(\nb+\vol)\whfluc(t)\psi}|\\
	\leq \, & C\vd\lambda \vol\Big(e^{C\vd(1+\frac{\|\gtr_0\|_1}N)\lambda t}\frac{(\|\gtr_0\|_1+\|\gtr_0\|_\infty+1)^2}N\\
    & + \Big(1+\frac{\|\gtr_0\|_1}N\Big)\scp{\whfluc(t)\psi}{(\nb+\vol)\whfluc(t)\psi}+\scp{\psi}{\big(\nb +\vol+\frac{\nb^2}{N\vol}\big)\psi}\Big) 
	\end{align}
    Gr\"onwall's inequality then implies
    \begin{align}
        \MoveEqLeft \scp{\whfluc(t)\psi}{(\nb+\vol)\whfluc(t)\psi} \\
        & \leq \, C_1 \Big(1+\frac{(\|\gtr_0\|_1+\|\gtr_0\|_\infty+1)^2}N\Big) e^{K_1\vd(1+\frac{\|\gtr_0\|_1}N)\lambda \vol t}\\
        & \qquad \scp{\psi}{\big(\nb +\vol+\frac{\nb^2}{N\vol}\big)\psi} \, .
    \end{align}
    

	\item \underline{Induction Step:} Assume that 
	\begin{align}\label{eq-nb-IH-1}
    \begin{split}
	\MoveEqLeft\scp{\whfluc(t)\psi}{(\nb+\vol)^j\whfluc(t)\psi}\\
	\leq  \, & C_j	\Big(1+\frac{(\|\gtr_0\|_1+\|\gtr_0\|_\infty+1)^2}N\Big)^j e^{K_j\vd\lambda\vol(1+\frac{\|\gtr_0\|_1}N)t}\\
    & \, \scp{\psi}{(\nb+\vol)^j\big(1+\frac{\nb}{N\vol}\big)\psi} 
        \end{split}
	\end{align}
	for all $1\leq j\leq \ell$ and some constants $C_j, K_j$, and any $\psi\in\cF$. We compute
	\begin{align}\label{eq-nb-IH-1-pf-1}
    \begin{split}
	\MoveEqLeft i\partial_t\scp{\whfluc(t)\psi}{(\nb+\vol)^{\ell+1}\whfluc(t)\psi}\\
	=& \, \scp{\whfluc(t)\psi}{\big[(\nb+\vol)^{\ell+1},\HBECt+\HHFBodt+\Hcubt]\whfluc(t)\psi}\\
	=& \, \sum_{j=1}^{\ell+1}\big\langle\whfluc(t)\psi,(\nb+\vol)^{j-1}\big[\nb,\HBECt+\HHFBodt+\Hcubt\big]\\
    & \quad (\nb+\vol)^{\ell+1-j}\whfluc(t)\psi\big\rangle \, . 
	\end{split}
    \end{align}
	Let
	\begin{align}
	A_{\mathrm{cub}}[\hat{v}] \, : = \, \int \dx{\bp_3} \hat{v}(p_2)\delta(p_1+p_2-p_3)\ad_{p_1}\ad_{p_2}a_{p_3} \, .
	\end{align} 
	Using \eqref{eq-com-sign-1} and recalling Lemma \ref{lem-fg-bound}, \eqref{eq-nb-IH-1-pf-1} yields
	\begin{align}
    \begin{split}
	\MoveEqLeft i\partial_t\scp{\whfluc(t)\psi}{(\nb+\vol)^{\ell+1}\whfluc(t)\psi}\\
	=& \, 2\lambda\sum_{j=1}^{\ell+1} \Im\Big(-\frac{\sqrt{\vol}}{\sqrt{N}} \big( \big(\gtr_t*(\hat{v}+\hat{v}(0))\big) (0) \phi_t + (\str_t*\hat{v}) (0) \overline{\phi}_t\big)\\
    & \qquad \scp{\whfluc(t)\psi}{(\nb+\vol)^{j-1}\ad_0(\nb+\vol)^{\ell+1-j}\whfluc(t)\psi}\\
    & \qquad + \vol\overline{\phi}_t^2\scp{\whfluc(t)\psi}{(\nb+\vol)^{j-1}g[\hat{v}](\nb+\vol)^{\ell+1-j}\whfluc(t)\psi}\\
    & \qquad - \frac{\sqrt{\vol}\phi_t}{\sqrt{N}}\scp{\whfluc(t)\psi}{(\nb+\vol)^{j-1}A_{\mathrm{cub}}[\hat{v}](\nb+\vol)^{\ell+1-j}\whfluc(t)\psi}\\
	=& \, 2\lambda \sum_{j=1}^{\ell+1}\Im\sum_{m,n=0}^\infty (m+\vol)^{j-1}(n+\vol)^{\ell+1-j}\\
    & \, \Big\langle\whfluc(t)\psi,P_m\Big(-\frac{\sqrt{\vol}}{\sqrt{N}} \big( \big(\gtr_t*(\hat{v}+\hat{v}(0))\big) (0) \phi_t \\
    & \, + (\str_t*\hat{v}) (0) \overline{\phi}_t\big)\ad_0 +\vol\overline{\phi}_t^2 g[\hat{v}]-\frac{\sqrt{\vol}\phi_t}{\sqrt{N}}A_{\mathrm{cub}}[\hat{v}]\Big)P_n\whfluc(t)\psi\Big\rangle \, . 
	\end{split}
    \end{align}
	Observe that we have
	\begin{align}
	P_mg[\hat{v}]P_n =&  P_m g[\hat{v}] P_{m+2} \delta_{n,m+2} \, , \\
	P_mA_{cub}[\hat{v}]P_n =&  P_mA_{cub}[\hat{v}] P_{m-1} \delta_{n,m-1} \, .
	\end{align} 
	Lemmata \ref{lem-fg-bound} and \ref{lem-wick-ord-bd}, and \eqref{eq-nb-IH-1-pf-2} then imply
	\begin{align}
	\MoveEqLeft|\partial_t\scp{\whfluc(t)\psi}{(\nb+\vol)^{\ell+1}\whfluc(t)\psi}|\\
	\leq & C\lambda\sum_{j=1}^{\ell+1} \sum_{m,n=0}^\infty (m+\vol)^{j-1}(n+\vol)^{\ell+1-j}\|P_m\whfluc(t)\psi\|\|P_n\whfluc(t)\psi\| \\
	&\Big(\frac{\sqrt{\vol}}{\sqrt{N}}\vd(\|\gtr_t\|_1+\|\gtr_t\|_\infty)|\phi_t|\|\ad_0 P_{m-1}\|\delta_{n,m-1}\\
    & \, +\vol|\phi_t|^2 \|g[\hat{v}] P_{m+2}\| \delta_{n,m+2} \, + \, \frac{\sqrt{\vol}|\phi_t|\|A_{cub}[\hat{v}]P_{m-1}\|}{\sqrt{N}}  \delta_{n,m-1} \Big) \\
	\leq \, & C\vd \lambda\sum_{j=1}^{\ell+1} \sum_{m,n=0}^\infty (m+\vol)^{j-1}(n+\vol)^{\ell+1-j}\\
	& \big(\|P_m\whfluc(t)\psi\|^2+\|P_n\whfluc(t)\psi\|^2\big) \\
    & \Big(\frac{\vol}{\sqrt{N}}(\|\gtr_t\|_1+\|\gtr_t\|_\infty)|\phi_t|m^{\frac12}\delta_{n,m-1} \\
    & \, + \, \vol|\phi_t|^2(m+2+\vol)\delta_{n,m+2} \, + \, \frac{\sqrt{\vol}|\phi_t|m^{\frac32}}{\sqrt{N}}  \delta_{n,m-1}\Big) \\
	\leq \, & C\vd \lambda\vol\sum_{j=1}^{\ell+1}\sum_{m=0}^\infty (m+\vol)^{j-1}(m+2+\vol)^{\ell+1-j}\|P_m\whfluc(t)\psi\|^2\\
	& \Big(\frac{(\|\gtr_t\|_1+\|\gtr_t\|_\infty)^2}N+|\phi_t|^2(m+2+\vol) + \frac{(m+1)^2}{N\vol}\Big)
	\, ,
    \end{align}
    where, in the last step, we applied Cauchy-Schwarz. We can rewrite the last inequality as
	\begin{align}
    \begin{split}
	\MoveEqLeft|\partial_t\scp{\whfluc(t)\psi}{(\nb+\vol)^{\ell+1}\whfluc(t)\psi}|\\
	\leq \, & K_\ell \vd \lambda \vol\Big(\frac{(\|\gtr_t\|_1+\|\gtr_t\|_\infty)^2}N\scp{\whfluc(t)\psi}{(\nb+\vol)^{\ell} \whfluc(t)\psi} \\
    & + \, |\phi_t|^2\scp{\whfluc(t)\psi}{(\nb+\vol)^{\ell+1} \whfluc(t)\psi}
	\\
	& + \, \frac1{N\vol}\scp{\whfluc(t)\psi}{(\nb+\vol)^{\ell+2} \whfluc(t)\psi}\Big) \, .
	\end{split}
    \end{align}
    The induction hypothesis \eqref{eq-nb-IH-1} and Lemma \ref{lem-mass-transfer} then imply
    \begin{align}\label{eq-nb-IH-1-pf-2}
        \begin{split}
            \MoveEqLeft|\partial_t\scp{\whfluc(t)\psi}{(\nb+\vol)^{\ell+1}\whfluc(t)\psi}|\\
	\leq \, & C_\ell \vd \lambda \vol\Big[\Big(1+\frac{(\|\gtr_0\|_1+\|\gtr_0\|_\infty+1)^2}N\Big)^\ell e^{K_\ell\vd\lambda\vol(1+\frac{\|\gtr_0\|_1}N)t}\\
    & \, \scp{\psi}{(\nb+\vol)^\ell\big(1+\frac{\nb}{N\vol}\big)\psi} \\
    & + \, \Big(1+\frac{\|\gtr_0\|_1}N\Big)\scp{\whfluc(t)\psi}{(\nb+\vol)^{\ell+1} \whfluc(t)\psi}
	\\
	& + \, \frac1N\scp{\whfluc(t)\psi}{(\nb+\vol)^{\ell+2} \whfluc(t)\psi}\Big] \, .
        \end{split}
    \end{align}
	We claim that 
	\begin{align}\label{eq-nb-IH-2}
    \begin{split}
	\MoveEqLeft \frac1{N\vol}\scp{\whfluc(t)\psi}{(\nb+\vol)^{j+1}\whfluc(t)\psi}\\
	\leq \,& C_j \Big(\Big(1+\frac{(\|\gtr_0\|_1+\|\gtr_0\|_\infty+1)^2}N\Big)^j e^{K_j\vd\lambda\vol(1+\frac{\|\gtr_0\|_1}N)t}\\
    & \, \scp{\psi}{(\nb+\vol)^j\big(1+\frac{\nb}{N\vol}\big)\psi} \\
	& + \, \Big(1+\frac{\|\gtr_0\|_1}N\Big)\scp{\whfluc(t)\psi}{(\nb+\vol)^j\whfluc(t)\psi}\Big) 
    \end{split}
	\end{align} 
	for all $1\leq j\leq \ell+1$ and all $\psi\in\cF$. \eqref{eq-nb-IH-2} for $j=\ell+1$ together with \eqref{eq-nb-IH-1-pf-2} then implies
	\begin{align}
	\MoveEqLeft|\partial_t\scp{\whfluc(t)\psi}{(\nb+\vol)^{\ell+1}\whfluc(t)\psi}|\\
	\leq \, & C_\ell \vd \Big(1+\frac{\|\gtr_0\|_1}N\Big)\lambda \vol \\
	& \Big( \scp{\whfluc(t)\psi}{(\nb+\vol)^{\ell+1}\whfluc(t)\psi} + e^{K_{\ell+1}\vd\lambda\vol(1+\frac{\|\gtr_0\|_1}N)t}\\
    & \, \Big(1+\frac{(\|\gtr_0\|_1+\|\gtr_0\|_\infty+1)^2}N\Big)^{\ell+1}\scp{\psi}{(\nb+\vol)^{\ell+1}\big(1+\frac{\nb}{N\vol}\big)\psi} \, . 
    \end{align}
	Gr\"onwall's Lemma then implies
	\begin{align}
	\MoveEqLeft\scp{\whfluc(t)\psi}{(\nb+\vol)^{\ell+1}\whfluc(t)\psi}|\\
	\leq \, & C_{\ell+1}e^{K_{\ell+1}\vd\lambda\vol(1+\frac{\|\gtr_0\|_1}N)t}\Big(1+\frac{(\|\gtr_0\|_1+\|\gtr_0\|_\infty+1)^2}N\Big)^{\ell+1}\\
    & \scp{\psi}{(\nb+\vol)^{\ell+1}\big(1+\frac{\nb}{N\vol}\big)\psi} \, .
    \end{align}
	Thus, proving \eqref{eq-nb-IH-2} for $j=\ell+1$ concludes the proof. We have proved \eqref{eq-nb-IH-2} for $j=1$ in Step 1, \eqref{eq-nb2/N-0}. We have that \eqref{eq-nb-IH-2} also holds for $j=0$: In fact, for
    \begin{align}
        \scp{\Psi}{(\overline{\phi}_ta_0+\phi_t\ad_0)\Psi} \, &\leq \, |\phi_t|\|\Psi\|(\|a_0\Psi\|+\|\ad_0\Psi\|) \\
        &\leq \, 2|\phi_t|\sqrt{\vol}\|\Psi\|\|\sqrt{\nb+1}\Psi\|\\
        & \leq \, \scp{\Psi}{(\nb+1+\vol|\phi_t|^2)\Psi} \, ,
    \end{align}
    due to \eqref{eq-a0-nb-bd-1} and \eqref{eq-a0+-nb-bd-1} followed by Young's inequality, we find that 
	\begin{align}
	\MoveEqLeft\cW[\sqrt{N\vol}\phi_t]\nb \weyld[\sqrt{N\vol}\phi_t]\\
	= \, &\nb-\sqrt{N\vol}(\overline{\phi}_ta_0+\phi_t\ad_0)+N\vol^2|\phi_t|^2 \\
	\leq \, &  2(\nb+\vol+N\vol^2|\phi_t|^2) \, ,
    \end{align}
	which then commutes with $e^{-it\cH_N}$. 
	\par Suppose \eqref{eq-nb-IH-2} holds up to some $1\leq j\leq\ell$. Applying \eqref{eq-nb-tUn-com}, we have that 
	\begin{align}\label{eq-nb/N-ind-0}
    \begin{split}
	\MoveEqLeft\frac1{N\vol}\scp{\whfluc(t)\psi}{(\nb+\vol)^{j+2}\whfluc(t)\psi}\\
	=& \frac1{N\vol}\scp{(\nb+\vol)^{j+1} \whfluc(t)\psi}{\whfluc(t)(\nb+\vol)\psi}\\
	& -\frac1{\sqrt{N\vol}}\scp{(\nb+\vol)^{j+1} \whfluc(t)\psi}{(\overline{\phi}_ta_0+\phi_t\ad_0)\whfluc(t)\psi} \\
	& + \,\frac1{\sqrt{N}\vol}\scp{(\nb+\vol)^{j+1} \whfluc(t)\psi}{\whfluc(t)(a_0+\ad_0)\psi} \, . 
	\end{split}
    \end{align}
	Whenever $\phi_t\neq0$, we can bound the second term by
	\begin{align}\label{eq-nb-U-a0-0}
    \begin{split}
	\MoveEqLeft\frac1{\sqrt{N\vol}}\scp{(\nb+\vol)^{j+1} \whfluc(t)\psi}{(\overline{\phi}_ta_0+\phi_t\ad_0)\whfluc(t)\psi}\\
	\leq & \alpha\vol|\phi_t|^2 \scp{ \whfluc(t)\psi}{(\nb+\vol)^{j+1}\whfluc(t)\psi} \\
	& + \, \frac{\scp{\whfluc(t)\psi}{(\overline{\phi}_ta_0+\phi_t\ad_0)(\nb+\vol)^{j+1}(\overline{\phi}_ta_0+\phi_t\ad_0)\whfluc(t)\psi}}{\alpha N\vol^2|\phi_t|^2} \, . 
	\end{split}
    \end{align}
	Employing \eqref{eq-a0-nb-bd-1} and \eqref{eq-a0+-nb-bd-1}, we find that, for any $\tilde{\psi}\in\cF$ and $k\in\N_0$,
	\begin{align}
    \begin{split}
	\MoveEqLeft\scp{\tilde{\psi}}{(\overline{\phi}_ta_0+\phi_t\ad_0)(\nb+\vol)^k(\overline{\phi}_ta_0+\phi_t\ad_0)\tilde{\psi}}\\
	\leq \, & |\phi_t|^2 (\|(\nb+\vol)^{\frac{k}2}a_0\tilde{\psi}\| + \|(\nb+\vol)^{\frac{k}2}\ad_0\tilde{\psi}\|)^2\\
	= \, & |\phi_t|^2 (\|a_0(\nb+\vol-1)^{\frac{k}2}\tilde{\psi}\| + \|\ad_0(\nb+\vol+1)^{\frac{k}2}\tilde{\psi}\|)^2\\
	\leq \, & C_k\vol|\phi_t|^2 \scp{\tilde{\psi}}{(\nb+\vol)^{k+1}\tilde{\psi}} \, ,
	\end{split}
    \end{align}
	i.e., 
	\begin{align}\label{eq-a0nba0-bd-t}
	(\overline{\phi}_ta_0+\phi_t\ad_0)(\nb+\vol)^k(\overline{\phi}_ta_0+\phi_t\ad_0)\, \leq\, C_k\vol|\phi_t|^2 (\nb+\vol)^{k+1} \, . 
	\end{align} 
    Analogously, we have that
    \begin{equation}
        \label{eq-a0nba0-bd-0}
	(a_0+\ad_0)(\nb+\vol)^k(a_0+\ad_0)\, \leq\, C_k\vol (\nb+\vol)^{k+1} \, . 
    \end{equation}
	Employing \eqref{eq-a0nba0-bd-t} and choosing $\alpha>0$ sufficiently large, \eqref{eq-nb-U-a0-0} implies
	\begin{align}\label{eq-nb/N-ind-0-2}
    \begin{split}
	\MoveEqLeft\frac1{\sqrt{N\vol}}\scp{(\nb+\vol)^{j+1} \whfluc(t)\psi}{(\overline{\phi}_ta_0+\phi_t\ad_0)\whfluc(t)\psi}\\
	\leq &C_j\vol|\phi_t|^2\scp{ \whfluc(t)\psi}{(\nb+\vol)^{j+1}\whfluc(t)\psi} \\
	& + \, \frac{1}{4 N\vol}\scp{ \whfluc(t)\psi}{(\nb+\vol)^{j+2}\whfluc(t)\psi} \, . 
	\end{split}
    \end{align}
	We bound the third term in \eqref{eq-nb/N-ind-0} by 
	\begin{align}\label{eq-nb/N-ind-0-3}
	\MoveEqLeft \frac1{\sqrt{N}\vol}\big|\scp{(\nb+\vol)^{j+1} \whfluc(t)\psi}{\whfluc(t)(a_0+\ad_0)\psi}\big|\\
	\leq & \frac1\vol\scp{ \whfluc(t)(a_0+\ad_0)\psi}{(\nb+\vol)^{j}\whfluc(t)(a_0+\ad_0)\psi}\\
	& + \, \frac1{4N\vol}\scp{ \whfluc(t)\psi}{(\nb+\vol)^{j+2}\whfluc(t)\psi} \, . 
	\end{align} 
    In particular, \eqref{eq-nb/N-ind-0}, \eqref{eq-nb/N-ind-0-2}, and \eqref{eq-nb/N-ind-0-3} imply
    \begin{align} \label{eq-nb/N-ind-0-first-simpl}
        \begin{split}
        \MoveEqLeft\frac1{N\vol}\scp{\whfluc(t)\psi}{(\nb+\vol)^{j+2}\whfluc(t)\psi}\\
        \leq & \, \frac2{N\vol}|\scp{(\nb+\vol)^{j+1} \whfluc(t)\psi}{\whfluc(t)(\nb+\vol)\psi} |\\
        & \, + \, C_j\vol|\phi_t|^2\scp{ \whfluc(t)\psi}{(\nb+\vol)^{j+1}\whfluc(t)\psi} \\
        & \, + \, \frac1\vol\scp{ \whfluc(t)(a_0+\ad_0)\psi}{(\nb+\vol)^{j}\whfluc(t)(a_0+\ad_0)\psi} \, .
        \end{split}
    \end{align}
	For the first term in \eqref{eq-nb/N-ind-0-first-simpl}, we apply \eqref{eq-nb-tUn-com} to the left and obtain
	
	\begin{align}\label{eq-nb-un-nb-0}
    \begin{split}
	\MoveEqLeft\frac1{N\vol}\scp{(\nb+\vol)^{j+1} \whfluc(t)\psi}{\whfluc(t)(\nb+\vol)\psi}\\
	= \, & \frac1{N\vol}\scp{ \whfluc(t)(\nb+\vol)\psi}{(\nb+\vol)^j\whfluc(t)(\nb+\vol)\psi}\\
	& -\frac1{\sqrt{N\vol}}\scp{(\nb+\vol)^j (\overline{\phi}_ta_0+\phi_t\ad_0)\whfluc(t)\psi}{\whfluc(t)(\nb+\vol)\psi} \\
	& + \,\frac1{\sqrt{N}\vol}\scp{(\nb+\vol)^j \whfluc(t)(a_0+\ad_0)\psi}{\whfluc(t)(\nb+\vol)\psi} \, .
    \end{split}
	\end{align}
    With analogous steps to above, we bound the second term in \eqref{eq-nb-un-nb-0} by
    \begin{align}\label{eq-un-a0-nb-un-nb}
    \begin{split}
	\MoveEqLeft\frac1{\sqrt{N\vol}}\big|\scp{(\nb+\vol)^j (\overline{\phi}_ta_0+\phi_t\ad_0)\whfluc(t)\psi}{\whfluc(t)(\nb+\vol)\psi} \big|\\
	\leq \, & C_j\vol|\phi_t|^2\scp{ \whfluc(t)\psi}{(\nb+\vol)^{j+1}\whfluc(t)\psi} \\
	& + \, \frac1{N\vol} \scp{ \whfluc(t)(\nb+\vol)\psi}{(\nb+\vol)^j\whfluc(t)(\nb+\vol)\psi} \, ,
	\end{split}
    \end{align}
    and the third term by
    \begin{align}\label{eq-a0-un-nb-un-nb}
    \begin{split}
	\MoveEqLeft\frac1{\sqrt{N}\vol}\big|\scp{(\nb+\vol)^j \whfluc(t)(a_0+\ad_0)\psi}{\whfluc(t)(\nb+\vol)\psi}\big|\\
	\leq \, & \frac1\vol \scp{ \whfluc(t)(a_0+\ad_0)\psi}{(\nb+\vol)^j\whfluc(t)(a_0+\ad_0)\psi}\\
	& + \, \frac1{N\vol} \scp{ \whfluc(t)(\nb+\vol)\psi}{(\nb+\vol)^j\whfluc(t)(\nb+\vol)\psi} \, .
	\end{split}
    \end{align}
    In particular, we can bound \eqref{eq-nb-un-nb-0} by
    \begin{align}
        \MoveEqLeft\frac1{N\vol}|\scp{(\nb+\vol)^{j+1} \whfluc(t)\psi}{\whfluc(t)(\nb+\vol)\psi}|\\
        \leq \, &  \frac3{N\vol} \scp{ \whfluc(t)(\nb+\vol)\psi}{(\nb+\vol)^j\whfluc(t)(\nb+\vol)\psi} \\
        & + \, C_j\vol|\phi_t|^2\scp{ \whfluc(t)\psi}{(\nb+\vol)^{j+1}\whfluc(t)\psi}\\
        & + \, \frac1\vol \scp{ \whfluc(t)(a_0+\ad_0)\psi}{(\nb+\vol)^j\whfluc(t)(a_0+\ad_0)\psi} \, .
    \end{align}
    As a consequence, \eqref{eq-nb/N-ind-0-first-simpl} implies 
    \begin{align} \label{eq-Nbj+2/N-bd}
        \begin{split}
        \MoveEqLeft\frac1{N\vol}\scp{\whfluc(t)\psi}{(\nb+\vol)^{j+2}\whfluc(t)\psi}\\
        \leq & \, \frac6{N\vol} \scp{ \whfluc(t)(\nb+\vol)\psi}{(\nb+\vol)^j\whfluc(t)(\nb+\vol)\psi}\\
        & \, + \, C_j\vol|\phi_t|^2\scp{ \whfluc(t)\psi}{(\nb+\vol)^{j+1}\whfluc(t)\psi} \\
        & \, + \, \frac2\vol\scp{ \whfluc(t)(a_0+\ad_0)\psi}{(\nb+\vol)^{j}\whfluc(t)(a_0+\ad_0)\psi} \, .
        \end{split}
    \end{align}
    
	For the first term in \eqref{eq-nb-un-nb-0}, we use the induction hypothesis \eqref{eq-nb-IH-2}, and obtain
	
	\begin{align}\label{eq-Nbj+2/N-first}
    \begin{split}
	\MoveEqLeft\frac1{N\vol}\scp{ \whfluc(t)(\nb+\vol)\psi}{(\nb+\vol)^j\whfluc(t)(\nb+\vol)\psi}\\
    \leq \, & C_{j-1} \Big(\Big(1+\frac{(\|\gtr_0\|_1+\|\gtr_0\|_\infty+1)^2}N\Big)^{j-1} e^{K_{j-1}\vd\lambda\vol(1+\frac{\|\gtr_0\|_1}N)t}\\
    & \, \scp{\psi}{(\nb+\vol)^{j+1}\big(1+\frac{\nb}{N\vol}\big)\psi} \\
	& + \, \Big(1+\frac{\|\gtr_0\|_1}N\Big)\scp{\whfluc(t)(\nb+\vol)\psi}{(\nb+\vol)^{j-1}\whfluc(t)(\nb+\vol)\psi}\Big) \\
    \leq \, & C_j	\Big(1+\frac{(\|\gtr_0\|_1+\|\gtr_0\|_\infty+1)^2}N\Big)^j e^{K_j\vd\lambda\vol(1+\frac{\|\gtr_0\|_1}N)t}\\
    & \, \scp{\psi}{(\nb+\vol)^{j+1}\big(1+\frac{\nb}{N\vol}\big)\psi} 
	\end{split}
    \end{align}
	where in the last step, we employed the induction hypothesis \eqref{eq-nb-IH-1}. 
    \par For the third term in \eqref{eq-nb-un-nb-0}, we use the induction hypothesis \eqref{eq-nb-IH-1} to obtain
    \begin{align}\label{eq-Nbj+2/N-third}
        \begin{split}
        \MoveEqLeft\frac2\vol\scp{ \whfluc(t)(a_0+\ad_0)\psi}{(\nb+\vol)^{j}\whfluc(t)(a_0+\ad_0)\psi}\\
        &\leq \, \frac{C_j}{\vol} \Big(1+\frac{(\|\gtr_0\|_1+\|\gtr_0\|_\infty+1)^2}N\Big)^j e^{K_j\vd\lambda\vol(1+\frac{\|\gtr_0\|_1}N)t}\\
        & \, \scp{(a_0+\ad_0)\psi}{(\nb+\vol)^j\big(1+\frac{\nb}{N\vol}\big)(a_0+\ad_0)\psi} \\
        & \leq \, C_j \Big(1+\frac{(\|\gtr_0\|_1+\|\gtr_0\|_\infty+1)^2}N\Big)^j e^{K_j\vd\lambda\vol(1+\frac{\|\gtr_0\|_1}N)t}\\
        & \, \scp{\psi}{(\nb+\vol)^{j+1}\big(1+\frac{\nb}{N\vol}\big)\psi} \, ,
        \end{split}
    \end{align}
    where, in the last step, we employed \eqref{eq-a0nba0-bd-0}.
	\par Substituting \eqref{eq-Nbj+2/N-first} and \eqref{eq-Nbj+2/N-third} into \eqref{eq-Nbj+2/N-bd}, we arrive at
	\begin{align}
	\MoveEqLeft\frac1{N\vol}\scp{\whfluc(t)\psi}{(\nb+\vol)^{j+2}\whfluc(t)\psi}\\
	\leq \, & C_j \Big(1+\frac{(\|\gtr_0\|_1+\|\gtr_0\|_\infty+1)^2}N\Big)^{j+1} e^{K_j\vd\lambda\vol(1+\frac{\|\gtr_0\|_1}N)t}\\
    & \scp{\psi}{(\nb+\vol)^{j+1}\big(1+\frac{\nb}{N\vol}\big)\psi} \\
    & + \, C_j \vol|\phi_t|^2\scp{ \whfluc(t)\psi}{(\nb+\vol)^{j+1}\whfluc(t)\psi} \, .
    \end{align}
	This concludes the proof.
    \end{enumerate}
	\endprf

    \begin{lemma}[Expressions for $\HBEC$ and $\HHFB$]\label{lem-HBEC-HHFB-with-HFB-eq}
        Assume that $(\phit,\gamt,\sigt)$ satisfy \eqref{eq-HFB-ren-2} and that $\Omt$ satisfies \eqref{eq-Omt-explicit}. Then we have
        \begin{align}
          \MoveEqLeft\HBEC(t) \, = \\
          & -\frac{\lambda\sqrt{\vol}}{\sqrt{N}}\Big[u_t(0)\Big(2(\fplus\sigma_t*\hat{v})(0)\overline{\phi}_t+\big((1+2\gamma_t)\fplus\big)*(\hat{v}+\hat{v}(0))(0)\phi_t\Big)\\
          & + \, v_t(0)\Big(2(\fplus\sigma_t*\hat{v})(0)\phi_t+\big((1+2\gamma_t)\fplus\big)*(\hat{v}+\hat{v}(0))(0)\overline{\phi}_t\Big)e^{i\int_0^t \dx{s} \Omega_s(0)}\ad_0 \\
          & + \, \mathrm{h.c.} \, ,\\
          \HHFBd(t) \, =& \, -\frac{\lambda}N\int \dx{p} \Big[2\Re\Big(\big((\fplus\sigtb_t)*\hat{v}\big)(p)\sigt_t(p)\Big) \\
          & \quad + \, \big((\fplus(1+2\gamt))*(\hat{v}+\hat{v}(0))\big)(p)\big(1+2\gamt_t(p)\big)\Big] \ad_pa_p \, ,\\
          \HHFBod(t) \, = & \, -\frac{\lambda}N\int \dx{p} \Big[\Big(\big(\fplus(1+2\gamt_t)\big)*(\hat{v}+\hat{v}(0))\Big)(p)\sigt_t(p) \\
         & \quad + \, \big((\fplus\sigt_t)*\hat{v}\big)(p)(1+\gamt_t(p)) \\
         & \quad + \, \big((\fplus\sigtb_t)*\hat{v}\big)(p)\frac{\sigt_t(p)^2}{1+\gamt_t(p)}\Big]e^{2i\int_0^t \dx{\tau} \Omt_\tau(p)}\ad_p\ad_{-p} \, + \, \mathrm{h.c.} 
      \end{align}
    \end{lemma}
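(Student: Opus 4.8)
The plan is to derive the three identities by direct substitution, starting from the general expressions for $\HBEC(t)$, $\HHFBd(t)$ and $\HHFBod(t)$ furnished by Lemma~\ref{lem-hfluc}, namely \eqref{def-HBEC}, \eqref{def-HHFBd} and \eqref{def-HHFBod}, and feeding into them the renormalized HFB equations \eqref{eq-HFB-ren-2} together with the renormalized dispersion \eqref{eq-Omt-explicit}.

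First I would handle $\HBEC$. By \eqref{def-HBEC} the coefficient of $u_t(0)$ inside $\HBEC(t)$ is $\sqrt{N\vol}$ times the combination $-i\partial_t\phi_t$ plus precisely the mean-field and pair terms that appear on the right-hand side of the $\phi$-equation in \eqref{eq-phi-ren-1}, but without the $f_0$-weighted terms. Substituting that equation therefore collapses this coefficient to $-\tfrac{\lambda\sqrt{\vol}}{\sqrt N}\int\dx{p}\big((1+2\gamma_t(p))\phi_t(\hat v(p)+\hat v(0))+2\sigma_t(p)\overline\phi_t\hat v(p)\big)f_0(p)$; since every factor other than $f_0$ is even in $p$, one may replace $f_0$ by $\fplus$ and recognize $\big((1+2\gamt)\fplus\big)*(\hat v+\hat v(0))(0)$ and $(\fplus\sigt)*\hat v(0)$. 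Applying the complex conjugate of the $\phi$-equation to the coefficient of $v_t(0)$ (using realness and evenness of $\hat v$ and of $\gamma_t$) produces the $v_t(0)$-line, and, since imposing \eqref{eq-Omt-explicit} amounts to $\HHFBd(t)=0$ at leading order, the surviving phase is $e^{i\int_0^t \dx{s}\Omt_s(0)}$. This gives the asserted formula for $\HBEC$.

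For $\HHFBd$ and $\HHFBod$ I would use the equivalent but cleaner route via the reformulated renormalization conditions \eqref{eq-ren-cond-2} and \eqref{eq-omt-ren-cond}: by the computations of Section~\ref{sec-main-terms} the hypotheses of the lemma are equivalent to $\HBEC(t)=-\Hcubcon(t)$, $\HHFBod(t)=-\Hquartconod{t}$ and $\HHFBd(t)=-\Hquartcond{t}$. One then inserts the explicit evaluations of the contracted cubic and quartic vertices from Lemma~\ref{lem-con-vert} (reading off the $\ad_pa_p$-coefficient for the diagonal part and the $\ad_p\ad_{-p}$-coefficient for the off-diagonal part), and simplifies using $u_t=\sqrt{1+\gamt_t}$, $v_t=\sigt_t/\sqrt{1+\gamt_t}$, $|\sigt_t|^2=\gamt_t(1+\gamt_t)$ and the evenness of $\gamt_t$, $\sigt_t$, $\hat v$; the exponential phases come out in terms of the renormalized dispersion $\Omt$.

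The only real work is combinatorial bookkeeping — keeping track of the many convolution terms, symmetrizing $f_0\mapsto\fplus$ consistently, and correctly separating the $u_t(0)$- and $v_t(0)$-components. I expect the main obstacle to be organizational rather than conceptual: one must check that the $\partial_t\sigma_t$ and $\partial_t\gamma_t$ pieces hidden inside \eqref{def-HHFBd} and \eqref{def-HHFBod} cancel against the corresponding terms generated by \eqref{eq-HFB-ren-2} and \eqref{eq-Omt-explicit}, and this cancellation is exactly where the constraint $|\sigt_t|^2=\gamt_t(1+\gamt_t)$ and the defining formula \eqref{eq-Omt-explicit} for $\Omt$ enter.
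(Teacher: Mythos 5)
Your strategy is sound and, for $\HBEC$, coincides with the paper's own proof: the authors likewise insert the renormalized $\phi$-equation of \eqref{eq-HFB-ren-2} into \eqref{def-HBEC}, cancel the terms not weighted by $f_0$, and symmetrize $f_0\mapsto\fplus$ using evenness of the remaining factors. (One small correction: the phase $e^{i\int_0^t\dx{s}\Omega_s(0)}$ does not ``survive'' by virtue of \eqref{eq-Omt-explicit} --- it is already present in \eqref{def-HBEC} from the conjugation by $\bprop(t)$, and the dispersion condition is $\HHFBd(t)+\Hquartcond{t}=0$, not $\HHFBd(t)=0$.) For the quadratic parts you genuinely diverge from the paper: you read $\HHFBd(t)=-\Hquartcond{t}$ and $\HHFBod(t)=-\Hquartconod{t}$ off the reformulated renormalization conditions \eqref{eq-ren-cond-2} and then quote Lemma \ref{lem-con-vert}, whereas the paper substitutes the $\gamma$- and $\sigma$-equations of \eqref{eq-HFB-ren-2} and the dispersion \eqref{eq-Omt-explicit} directly into \eqref{def-HHFBd} and \eqref{def-HHFBod}, simplifying with $|\sigt_t|^2=\gamt_t(1+\gamt_t)$ and the identities $\Gamt-\gamt-N\vol|\phit|^2\delta=\fplus(1+2\gamt)$, $\Sigt-\sigt-N\vol\phit^2\delta=2\fplus\sigt$. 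Your route is shorter but outsources the computation to Lemma \ref{lem-con-vert} and requires first noting that the hypotheses are equivalent to the operator identities \eqref{eq-ren-cond-2} (which does hold: since $1+2\fplus>0$, the expectation condition \eqref{eq-aa-cond-0} is equivalent to the vanishing of the $\ad_p\ad_{-p}$-coefficient); the paper's route is self-contained and is exactly where the cancellations you correctly identify as the crux (the $\partial_t\sigma$, $\partial_t\gamma$ pieces against \eqref{eq-HFB-ren-2} and \eqref{eq-Omt-explicit}) are carried out explicitly.

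One practical warning: carrying out either route carefully, the diagonal part comes out as $-\frac{\lambda}{N}\int\dx{p}\,[\,4\Re(((\fplus\sigtb_t)*\hat v)(p)\sigt_t(p))+((\fplus(1+2\gamt_t))*(\hat v+\hat v(0)))(p)(1+2\gamt_t(p))\,]\ad_pa_p$, i.e.\ with coefficient $4$ rather than the $2$ displayed in the statement (this is forced by $\frac{1}{1+\gamt}+\frac{1+2\gamt}{1+\gamt}=2$ combined with $\Sigtb-\sigtb-N\vol\phitb^2\delta=2\fplus\sigtb$, and it matches the $\ad_pa_p$-coefficient of \eqref{eq-hquart-con}); conversely, the middle term of the off-diagonal part must carry $\sigt_t(p)^2/(1+\gamt_t(p))$ as displayed, consistent with \eqref{eq-quart-rot-NO-aa-0} but not with the unsquared $\sigma_t(p)$ appearing in the statement of Lemma \ref{lem-con-vert}. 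These are transcription discrepancies in the source rather than flaws in your method, but if you quote Lemma \ref{lem-con-vert} verbatim you will inherit one of them, so the bookkeeping you flag as the main obstacle is indeed where care is needed.
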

	\begin{proof}
	    We start by computing $\HBEC(t)$. By Lemma \ref{lem-hfluc}, we have that
         \begin{align}\label{eq-HBEC-raw}
		 	\MoveEqLeft \HBEC(t) \, = \, \\
		 	& \sqrt{N\vol}\Big[u_t(0)\Big( -i\partial_t \phi_t  +  \lambda\vol\hat{v}(0)|\phi_t|^2 \phi_t +  \frac{\lambda}{N}(\sigma_t* \hat{v})(0)\overline{\phi}_t\\
		 	& + \, \frac{\lambda}{N} \big(\gamma_t*(\hat{v}+\hat{v}(0))\big) (0) \phi_t\Big) \, + \, v_t(0)\Big( -\overline{i\partial_t \phi_t} +  \lambda\vol\hat{v}(0) |\phi_t|^2\overline{\phi}_t \\
		 	& +  \frac{\lambda}{N}(\overline{\sigma}_t* \hat{v})(0)\phi_t + \frac{\lambda}{N} \big(\gamma_t*(\hat{v}+\hat{v}(0))\big) (0) \overline{\phi}_t\Big)\Big] e^{i\int_0^t \dx{s} \Omega_s(0)}\ad_0 \, + \, \mathrm{h.c.} 
		 \end{align} 
      Recalling \eqref{eq-HFB-ren-2}, $\phit$ satisfies
      \begin{align}
          i\partial_t \phit_t  \, = & \,  \frac{\lambda}{N} \Big( \big(\Gamt_t*(\hat{v}+\hat{v}(0))\big) (0) \phit_t \, + \, (\Sigt_t*\hat{v}) (0) \phitb_t\Big) \\
	       & - \, 2\lambda\vol\hat{v}(0) |\phit_t|^2 \phit_t \, .  
      \end{align}
      With that, \eqref{eq-HBEC-raw} implies
      \begin{align}
          \MoveEqLeft\HBEC(t) \, = \\
          & -\frac{\lambda\sqrt{\vol}}{\sqrt{N}}\Big[u_t(0)\Big(2(\fplus\sigma_t*\hat{v})(0)\overline{\phi}_t+\big((1+2\gamma_t)\fplus\big)*(\hat{v}+\hat{v}(0))(0)\phi_t\Big)\\
          & + \, v_t(0)\Big(2(\fplus\sigma_t*\hat{v})(0)\phi_t+\big((1+2\gamma_t)\fplus\big)*(\hat{v}+\hat{v}(0))(0)\overline{\phi}_t\Big)e^{i\int_0^t \dx{s} \Omega_s(0)}\ad_0 \\
          & + \, \mathrm{h.c.} 
      \end{align}
      Similarly, Lemma \ref{lem-hfluc} implies
      \begin{align}\label{eq-HHFBd-raw}
		 \MoveEqLeft \HHFBd(t) \, = \, \\
		 & \int \dx{p} \Big[ -\Omega_t(p)  -  \frac{\Re\big(\overline{\sigma}_t(p)i\partial_t\sigma_t(p)\big)}{1+\gamma_t(p)}\\
		 & +\,  \Big(E(p)+\frac{\lambda}N\big((\gamma_t+N\vol|\phi_t|^2\delta)*(\hat{v}+\hat{v}(0))\big)(p)\Big)\big(1+2\gamma_t(p)\big) \\
		 & + \, \frac{2\lambda}N\Re\Big(\big((\overline{\sigma}_t+N\vol\overline{\phi}_t^2\delta)*\hat{v}\big)(p)\sigma_t(p)\Big)\Big]\ad_p a_p  
		 \end{align}
      By \eqref{eq-HFB-ren-2}, we have that
      \begin{align}
          i\partial_t \sigt_t \, = \, 2\big(E+\frac{\lambda}{N}\Gamt_t*(\hat{v}+\hat{v}(0))\big)\sigt_t \, + \, \frac{\lambda}{N} \big( \Sigt_t* \hat{v}\big) (1+2\gamt_t) \, ,
      \end{align}
      while \eqref{eq-Omt-explicit} implies
      \begin{align}
          \Omt_t \, = \, E+\frac{\lambda}N\big(\Gamt_t*(\hat{v}+\hat{v}(0))\big) \, + \, \frac{\lambda}N\frac{\Re\big(\Sigtb*\hat{v})\sigt_t\big)}{1+\gamt_t} \, .
      \end{align}
      Using the fact that $|\sigma|^2=\gamma(\gamma+1)$, \eqref{eq-HHFBd-raw} thus implies
      \begin{align}
          & \HHFBd(t) \, = \, \\
          &\quad \int \dx{p}\Big[-\frac{\lambda}N\frac{\Re\big(\Sigtb*\hat{v})(p)\sigt_t(p)\big)}{1+\gamt_t(p)} \, + \, \frac{2\lambda}N\Re\Big(\big((\sigtb_t+N\vol(\phitb_t)^2\delta)*\hat{v}\big)(p)\sigt_t(p)\Big)\\
          & \qquad - \, \frac{\lambda}{N} \Re\Big(\big( \Sigtb_t* \hat{v}\big)(p)\sigt(p)\Big) \frac{1+2\gamt_t(p)}{1+\gamt_t(p)} \, - \, \frac{\lambda}N\big(\Gamt_t*(\hat{v}+\hat{v}(0))\big)(p)(1+2\gamt_t(p))\\
          & \qquad + \, \frac{\lambda}N\big((\gamt_t+N\vol|\phit_t|^2\delta)*(\hat{v}+\hat{v}(0))\big)(p)\big(1+2\gamt_t(p)\big)\Big] \ad_pa_p \, .
      \end{align}
      Simplifying the terms, recalling definitions \eqref{def-Gamt} and \eqref{def-Sigt}, we obtain
      \begin{align}
          \HHFBd(t) \, =& \, -\frac{\lambda}N\int \dx{p} \Big[2\Re\Big(\big((\fplus\sigtb_t)*\hat{v}\big)(p)\sigt_t(p)\Big) \\
          & \quad + \, \big((\fplus(1+2\gamt))*(\hat{v}+\hat{v}(0))\big)(p)\big(1+2\gamt_t(p)\big)\Big] \ad_pa_p \, .
      \end{align}
      Finally, Lemma \ref{lem-hfluc} yields
      \begin{align}\label{eq-HFBod-raw}
		 \HHFBod(t) \, = & \, \int \dx{p} \Big[-\frac{i\partial_t \sigma_t(p)}2+\frac{\sigma_t(p)i\partial_t \gamma_t(p)}{2(1+\gamma_t(p))}\\
		 & + \,   \Big(E(p)+\frac{\lambda}N\big((\gamma_t+N\vol|\phi_t|^2\delta)*(\hat{v}+\hat{v}(0))\big)(p)\Big)\sigma_t(p)  \\
		 & + \, \frac{\lambda}{2N}\Big( \big((\sigma_t+N\vol\phi_t^2\delta)*\hat{v}\big)(p)(1+\gamma_t(p)) \\
		 & + \, \big((\overline{\sigma}_t+N\vol\overline{\phi}_t^2\delta)*\hat{v}\big)(p)\frac{\sigma_t(p)^2}{1+\gamma_t(p)}\Big) \Big] e^{2i\int_0^t \dx{\tau} \Omega_\tau(p)}\ad_p\ad_{-p} \, + \, \mathrm{h.c.} 
		 \end{align}
      By \eqref{eq-HFB-ren-2}, we have that
      \begin{align}
        \begin{split}
        \partial_t \gamt_t \, =& \,\frac{2\lambda}N \Im\big((\Sigt_t*\hat{v} ) \sigtb_t\big) \, = \, \frac{\lambda}{iN}\big((\Sigt_t*\hat{v} ) \sigtb_t-(\Sigtb_t*\hat{v} ) \sigt_t\big)\, , \\
	i\partial_t \sigt_t \, =& \, 2\big(E+\frac{\lambda}{N}\Gamt_t*(\hat{v}+\hat{v}(0))\big)\sigt_t \, + \, \frac{\lambda}{N} \big( \Sigt_t* \hat{v}\big) (1+2\gamt_t) \, .
	\end{split}
    \end{align}
    Substituting these into \eqref{eq-HFBod-raw}, yields
    \begin{align}
		 \MoveEqLeft \HHFBod(t) \, = \, \\
		 & \int \dx{p} \Big[-\Big(E(p)+\frac{\lambda}{N}\Gamt_t*(\hat{v}+\hat{v}(0))(p)\Big)\sigt_t(p) \\
      & - \, \frac{\lambda}{2N} \big( \Sigt_t* \hat{v}\big) (p)(1+2\gamt_t(p))\\
      & + \, \frac{\lambda}{2N}\Big((\Sigt_t*\hat{v})(p) \gamt_t(p)-(\Sigtb_t*\hat{v})(p) \frac{\sigt_t(p)^2}{1+\gamt_t(p)}\Big) \\
		 & + \, \Big(E(p)+\frac{\lambda}N\big((\gamt_t+N\vol|\phit_t|^2\delta)*(\hat{v}+\hat{v}(0))\big)(p)\Big)\sigt_t(p)  \\
		 & + \, \frac{\lambda}{2N}\Big( \big((\sigt_t+N\vol(\phit_t)^2\delta)*\hat{v}\big)(p)(1+\gamt_t(p)) \\
		 & + \, \big((\sigtb_t+N\vol(\phitb_t)^2\delta)*\hat{v}\big)(p)\frac{\sigt_t(p)^2}{1+\gamt_t(p)}\Big) \Big] e^{2i\int_0^t \dx{\tau} \Omt_\tau(p)}\ad_p\ad_{-p} \, + \, \mathrm{h.c.} 
		 \end{align}
      We simplify this expression as
      \begin{align}
          \HHFBod(t) \, = & \, -\frac{\lambda}N\int \dx{p} \Big[\Big(\big(\fplus(1+2\gamt_t)\big)*(\hat{v}+\hat{v}(0))\Big)(p)\sigt_t(p) \\
         & \quad + \, \big((\fplus\sigt_t)*\hat{v}\big)(p)(1+\gamt_t(p)) \\
         & \quad + \, \big((\fplus\sigtb_t)*\hat{v}\big)(p)\frac{\sigt_t(p)^2}{1+\gamt_t(p)}\Big]e^{2i\int_0^t \dx{\tau} \Omt_\tau(p)}\ad_p\ad_{-p} \, + \, \mathrm{h.c.} 
      \end{align}
	\end{proof}
	\bibliographystyle{plain}  
	\bibliography{references}     

\begin{thebibliography}{10}

\bibitem{algatr}
R.~Alonso, I.~M. Gamba, and M.~B. Tran.
\newblock The {C}auchy problem for the quantum {B}oltzmann equation for bosons
  at very low temperature.
\newblock {\em arXiv preprint}, 2018.

\bibitem{ammipa}
Ioakeim Ampatzoglou, Joseph~K. Miller, and Nata\v{s}a Pavlovi\'{c}.
\newblock A rigorous derivation of a {B}oltzmann system for a mixture of
  hard-sphere gases.
\newblock {\em SIAM J. Math. Anal.}, 54(2):2320--2372, 2022.

\bibitem{arno15}
L.~Arkeryd and A.~Nouri.
\newblock Bose condensates in interaction with excitations: a two-component
  space-dependent model close to equilibrium.
\newblock {\em J. Stat. Phys.}, 160(1):209--238, 2015.

\bibitem{arno17}
L.~Arkeryd and A.~Nouri.
\newblock On the {C}auchy problem with large data for a space-dependent
  {B}oltzmann-{N}ordheim boson equation.
\newblock {\em Commun. Math. Sci.}, 15(5):1247--1264, 2017.

\bibitem{bachfrsi}
V.~Bach, T.~Chen, J.~Fr{\"o}hlich, and I.~M. Sigal.
\newblock Smooth feshbach map and operator-theoretic renormalization group
  methods.
\newblock {\em J. Funct. Anal}, 203(1):44--92, 2003.

\bibitem{bafrsi}
V.~Bach, J.~Fr{\"o}hlich, and I.~M. Sigal.
\newblock Renormalization group analysis of spectral problems in quantumfield
  theory.
\newblock {\em Adv. Math.}, 137(2):205--298, 1998.

\bibitem{BBCFS18}
Volker Bach, S{\'e}bastien Breteaux, Thomas Chen, J{\"u}rg Fr{\"o}hlich, and
  Israel~Michael Sigal.
\newblock On the {H}artree--{F}ock--{B}ogoliubov equations.
\newblock {\em arXiv preprint arXiv:1805.04689}, 2018.

\bibitem{BBCFS22}
Volker Bach, Sebastien Breteaux, Thomas Chen, J\"{u}rg Fr\"{o}hlich, and
  Israel~Michael Sigal.
\newblock The time-dependent {H}artree-{F}ock-{B}ogoliubov equations for
  bosons.
\newblock {\em J. Evol. Equ.}, 22(2):Paper No. 46, 43, 2022.

\bibitem{basti2023second}
Giulia Basti, Serena Cenatiempo, Alessandro Olgiati, Giulio Pasqualetti, and
  Benjamin Schlein.
\newblock A second order upper bound for the ground state energy of a
  hard-sphere gas in the gross--pitaevskii regime.
\newblock {\em Comm. Math. Phys.}, 399(1):1--55, 2023.

\bibitem{BCEP06}
D.~Benedetto, F.~Castella, R.~Esposito, and M.~Pulvirenti.
\newblock Some considerations on the derivation of the nonlinear quantum
  {B}oltzmann equation ii: the low density regime.
\newblock {\em J. Stat. Phys.}, 124(2):951--996, 2006.

\bibitem{BCEP08}
D.~Benedetto, F.~Castella, R.~Esposito, and M.~Pulvirenti.
\newblock From the $n$-body {S}chr{\"o}dinger equation to the quantum
  {B}oltzmann equation: a term-by-term convergence result in the weak coupling
  regime.
\newblock {\em Comm. Math. Phys.}, 277(1):1--44, 2008.

\bibitem{BNPSS2021b}
N.~Benedikter, P.~T. Nam, M.~Porta, B.~Schlein, and R.~Seiringer.
\newblock Bosonization of fermionic many-body dynamics.
\newblock {\em Ann. H. Poincar{\'e}}, pages 1--40, 2021.

\bibitem{BNPSS2021}
N.~Benedikter, P.~T. Nam, M.~Porta, B.~Schlein, and R.~Seiringer.
\newblock Correlation energy of a weakly interacting {F}ermi gas.
\newblock {\em Invent. Math.}, 225(3):885--979, 2021.

\bibitem{BPS2}
N.~Benedikter, M.~Porta, and B.~Schlein.
\newblock {\em Effective evolution equations from quantum dynamics}, volume~7
  of {\em SpringerBriefs in Mathematical Physics}.
\newblock Springer International Publishing, Berlin, 1st edition, 2016.

\bibitem{benedikter2023correlation}
Niels Benedikter, Marcello Porta, Benjamin Schlein, and Robert Seiringer.
\newblock Correlation energy of a weakly interacting fermi gas with large
  interaction potential.
\newblock {\em Archive for Rational Mechanics and Analysis}, 247(4):65, 2023.

\bibitem{boccato2023bose}
Chiara Boccato and Robert Seiringer.
\newblock The {B}ose gas in a box with {N}eumann boundary conditions.
\newblock {\em Annales Henri Poincar{\'e}}, 24(5):1505--1560, 2023.

\bibitem{bogasrsi2023statistical}
Thierry Bodineau, Isabelle Gallagher, Laure Saint-Raymond, and Simonella
  Sergio.
\newblock Statistical dynamics of a hard sphere gas: fluctuating boltzmann
  equation and large deviations.
\newblock {\em Annals of Mathematics}, 198(3):1047--1201, 2023.

\bibitem{bogasrs21}
Thierry Bodineau, Isabelle Gallagher, Laure Saint-Raymond, and Sergio
  Simonella.
\newblock Long-time correlations for a hard-sphere gas at equilibrium.
\newblock {\em Comm. Pure Appl. Math.}, 76(12):3852--3911, 2023.

\bibitem{BohmPinesI}
David Bohm and David Pines.
\newblock A collective description of electron interactions. i. magnetic
  interactions.
\newblock {\em Physical review}, 82(5):625--634, 1951.

\bibitem{bohm1953collective}
David Bohm and David Pines.
\newblock A collective description of electron interactions: Iii. coulomb
  interactions in a degenerate electron gas.
\newblock {\em Physical Review}, 92(3):609, 1953.

\bibitem{BPaPS}
L.~Bo{\ss}mann, N.~Pavlovi{\'c}, P.~Pickl, and A.~Soffer.
\newblock Higher order corrections to the mean-field description of the
  dynamics of interacting bosons.
\newblock {\em J. Stat. Phys.}, 178(6):1362--1396, 2020.

\bibitem{BPePS}
L.~Bo{\ss}mann, S.~Petrat, P.~Pickl, and A.~Soffer.
\newblock Beyond {B}ogoliubov dynamics.
\newblock {\em Pure Appl. Anal.}, 3(4):677--726, 2022.

\bibitem{bopese}
L.~Bo{\ss}mann, S.~Petrat, and R.~Seiringer.
\newblock Asymptotic expansion of the low-energy excitation spectrum for weakly
  interacting bosons.
\newblock {\em For. Math., Sigma}, 9:e28, 2021.

\bibitem{BoßmannPetrat2023WEef}
Lea Boßmann and Sören Petrat.
\newblock Weak {E}dgeworth expansion for the mean-field {B}ose gas.
\newblock {\em Letters in mathematical physics}, 113(4), 2023.

\bibitem{BR2}
O.~Bratteli and D.~W. Robinson.
\newblock {\em Operator Algebras and quantum Statistical Mechanics 2.
  Equilibrium States. Models in quantum Statistical Mechanics}.
\newblock Text and Monographs in Physics. Springer-Verlag, Berlin, 2 edition,
  1996.

\bibitem{brcasc}
Christian Brennecke, Marco Caporaletti, and Benjamin Schlein.
\newblock Excitation spectrum of {B}ose gases beyond the {G}ross-{P}itaevskii
  regime.
\newblock {\em Rev. Math. Phys.}, 34(9):Paper No. 2250027, 61, 2022.

\bibitem{brennecke2019gross}
Christian Brennecke and Benjamin Schlein.
\newblock {G}ross--{P}itaevskii dynamics for {B}ose--{E}instein condensates.
\newblock {\em Analysis \& PDE}, 12(6):1513--1596, 2019.

\bibitem{brscsc}
Christian Brennecke, Benjamin Schlein, and Severin Schraven.
\newblock Bose-{E}instein condensation with optimal rate for trapped bosons in
  the {G}ross-{P}itaevskii regime.
\newblock {\em Math. Phys. Anal. Geom.}, 25(2):Paper No. 12, 71, 2022.

\bibitem{busasc}
S.~Buchholz, C.~Saffirio, and B.~Schlein.
\newblock Multivariate central limit theorem in quantum dynamics.
\newblock {\em J. Stat. Phys.}, 154(1):113--152, 2014.

\bibitem{caraci2023order}
Cristina Caraci, Alessandro Olgiati, Diane~Saint Aubin, and Benjamin Schlein.
\newblock Third order corrections to the ground state energy of a {B}ose gas in
  the {G}ross--{P}itaevskii regime.
\newblock {\em arXiv preprint arXiv:2311.07433}, 2023.

\bibitem{cardenas2023quantum}
Esteban C{\'a}rdenas and Thomas Chen.
\newblock Quantum {B}oltzmann dynamics and bosonized particle-hole interactions
  in fermion gases.
\newblock {\em arXiv preprint arXiv:2306.03300}, 2023.

\bibitem{cena}
S.~Cenatiempo.
\newblock {B}ogoliubov theory for dilute bose gases: The {G}ross-{P}itaevskii
  regime.
\newblock {\em J. Math. Phys.}, 60(8):081901, 2019.

\bibitem{cerc}
C.~Cercignani.
\newblock On the {B}oltzmann equation for rigid spheres.
\newblock {\em Transport Theory Statist. Phys.}, 2(3):211--225, 1972.

\bibitem{cegepe}
C.~Cercignani, V.~I. Gerasimenko, and D.~Y. Petrina.
\newblock {\em Many-particle dynamics and kinetic equations}, volume 420 of
  {\em Mathematics and its applications}.
\newblock Kluwer Academic Publishers, Dordrecht, Netherlands, 1997.
\newblock manuscript was translated from the Russian by K. Petrina and V.
  Gredzhuk.

\bibitem{cerillpul}
C.~Cercignani, R.~Illner, and M.~Pulvirenti.
\newblock {\em The Mathematical Theory of Dilute Gases}, volume 106 of {\em
  Applied mathematical sciences}.
\newblock Springer, Berlin, 1 edition, 1994.

\bibitem{CHPS-1}
T.~Chen, C.~Hainzl, N.~Pavlovi{\'c}, and R.~Seiringer.
\newblock Unconditional uniqueness for the cubic {G}ross-{P}itaevskii hierarchy
  via quantum de finetti.
\newblock {\em Commun. Pure Appl. Math}, 68(10):1845--1884, 2015.

\bibitem{chenhott}
T.~Chen and M.~Hott.
\newblock On the emergence of quantum {B}oltzmann fluctuation dynamics near a
  {B}ose--{E}instein condensate.
\newblock {\em J. Stat. Phys.}, 190(4):85, 2023.

\bibitem{CPBBGKY}
T.~Chen and N.~Pavlovi{\'c}.
\newblock Derivation of the cubic {NLS} and {G}ross-{P}itaevskii hierarchy from
  manybody dynamics in d=3 based on spacetime norms.
\newblock {\em Ann. H. Poincar{\'e}}, 15(3):543--588, 2014.

\bibitem{xchguo}
X.~Chen and Y.~Guo.
\newblock On the weak coupling limit of quantum many-body dynamics and the
  quantum {B}oltzmann equation.
\newblock {\em Kinetic $\&$ related models}, 8(3):443--465, 2015.

\bibitem{xchho5}
X.~Chen and J.~Holmer.
\newblock The derivation of the $\mathbb{T}^{3}$ energy-critical {NLS} from
  quantum many-body dynamics.
\newblock {\em Invent. Math.}, 217(2):433--547, 2019.

\bibitem{xchenholmer2023Bol-derivation}
Xuwen Chen and Justin Holmer.
\newblock The derivation of the {B}oltzmann equation from quantum many-body
  dynamics.
\newblock {\em arXiv preprint arXiv:2312.08239}, 2023.

\bibitem{christiansen2022effective}
Martin~Ravn Christiansen, Christian Hainzl, and Phan~Th{\`a}nh Nam.
\newblock On the effective quasi-bosonic {H}amiltonian of the electron gas:
  collective excitations and plasmon modes.
\newblock {\em Letters in Mathematical Physics}, 112(6):114, 2022.

\bibitem{den}
R.~Denlinger.
\newblock The propagation of chaos for a rarefied gas of hard spheres in the
  whole space.
\newblock {\em Arch. Rational Mech. Anal.}, 229(2):885--952, 2018.

\bibitem{dietzelee2023}
Charlotte Dietze and Jinyeop Lee.
\newblock Uniform in time convergence to {B}ose--{E}instein condensation for a
  weakly interacting {B}ose gas with an external potential.
\newblock In Michele Correggi and Marco Falconi, editors, {\em Quantum
  Mathematics II}, pages 267--311, Singapore, 2023. Springer Nature Singapore.

\bibitem{ESY}
L.~Erd{\"o}s, M.~Salmhofer, and H.-T. Yau.
\newblock On the quantum {B}oltzmann equation.
\newblock {\em J. Stat. Phys.}, 116(1):367--380, 2004.

\bibitem{esy1}
L.~Erd{\"o}s, B.~Schlein, and H.-T. Yau.
\newblock Derivation of the {G}ross-{P}itaevskii hierarchy for the dynamics of
  {B}ose-{E}instein condensate.
\newblock {\em Comm. Pure Appl. Math.}, 59(12):1659--1741, 2006.

\bibitem{esy2}
L.~Erd{\"o}s, B.~Schlein, and H.-T. Yau.
\newblock Derivation of the cubic non-linear {S}chr{\"o}dinger equation from
  quantum dynamics of many-body systems.
\newblock {\em Invent. Math.}, 167(3):515--614, 2007.

\bibitem{EY}
L.~Erd{\"o}s and H.-T. Yau.
\newblock Derivation of the nonlinear {S}chr{\"o}dinger equation from a many
  body coulomb system.
\newblock {\em Adv. Theor. Math. Phys.}, 5(6):1169--1205, 2001.

\bibitem{esve2014}
M.~Escobedo and J.~J.~L. Vel{\'a}zquez.
\newblock On the blow up and condensation of supercritical solutions of the
  {N}ordheim equation for bosons.
\newblock {\em Comm. Math. Phys.}, 330(1):331--365, 2014.

\bibitem{esve2015}
M.~Escobedo and J.~J.~L. Vel{\'a}zquez.
\newblock Finite time blow-up and condensation for the bosonic {N}ordheim
  equation.
\newblock {\em Invent. Math.}, 200(3):761--847, 2015.

\bibitem{fournais_energy_2023}
Søren Fournais and Jan~Philip Solovej.
\newblock The energy of dilute {Bose} gases {II}: the general case.
\newblock {\em Inventiones mathematicae}, 232(2):863--994, May 2023.

\bibitem{frgrsc}
J.~Fr{\"o}hlich, S.~Graffi, and S.~Schwarz.
\newblock Mean-field- and classical limit of many-body {S}chr{\"o}dinger
  dynamics for bosons.
\newblock {\em Comm. Math. Phys.}, 271(3):681--697, 2007.

\bibitem{gasrte}
I.~Gallagher, L.~Saint-Raymond, and B.~Texier.
\newblock {\em From Newton to {B}oltzmann: hard spheres and short- range
  potentials}.
\newblock Zurich Lectures in Advanced Mathematics. European Mathematical
  Society, Z{\"u}rich, 2013.

\bibitem{gerasimenko2022propagation}
Viktor Gerasimenko and Igor Gapyak.
\newblock Propagation of correlations in a hard-sphere system.
\newblock {\em J. Stat. Phys.}, 189(1):2, 2022.

\bibitem{GV}
J.~Ginibre and G.~Velo.
\newblock The classical field limit of scattering theory for non-relativistic
  many-boson systems. i+ii.
\newblock {\em Comm. Math. Phys.}, 66(1):37--76, 1979.

\bibitem{giuliani2009ground}
Alessandro Giuliani and Robert Seiringer.
\newblock The ground state energy of the weakly interacting bose gas at high
  density.
\newblock {\em J. Stat. Phys.}, 135:915--934, 2009.

\bibitem{GM1}
M.~Grillakis and M.~Machedon.
\newblock Beyond mean field: On the role of pair excitations in the evolution
  of condensates.
\newblock {\em J. Fix. P. Th. and App.}, 14(1):91--111, 2013.

\bibitem{GM2}
M.~Grillakis and M.~Machedon.
\newblock Pair excitations and the mean field approximation of interacting
  bosons, i.
\newblock {\em Comm. Math. Phys.}, 324(2):601--636, 2013.

\bibitem{GM3}
M.~Grillakis and M.~Machedon.
\newblock Pair excitations and the mean field approximation of interacting
  bosons, ii.
\newblock {\em Comm. Part. Diff. Eq.}, 42(1):24--67, 2017.

\bibitem{grmama}
M.~Grillakis, M.~Machedon, and A.~Margetis.
\newblock Second-order corrections to mean field evolution for weakly
  interacting bosons. i.
\newblock {\em Comm. Math. Phys.}, 294(1):273--301, 2010.

\bibitem{he}
K.~Hepp.
\newblock The classical limit for quantum mechanical correlation functions.
\newblock {\em Comm. Math. Phys.}, 35(4):265--277, 1974.

\bibitem{hilbert1902problemes}
David Hilbert.
\newblock Sur les probl{\`e}mes futurs des math{\'e}matiques.
\newblock {\em Compte rendu du Deuxi{\`e}me congr{\`e}s international des
  math{\'e}maticiens tenu {\`a} Paris du}, 6:58--114, 1902.

\bibitem{hola}
T.~G. Ho and L.~J. {L}andau.
\newblock {F}ermi gas on a lattice in the van hove limit.
\newblock {\em J. Stat. Phys.}, 87(3):821--845, 1997.

\bibitem{hu}
N.~M. Hugenholtz.
\newblock Derivation of the {B}oltzmann equation for a {F}ermi gas.
\newblock {\em J. Stat. Phys.}, 32(2):231--254, 1983.

\bibitem{kato70}
Tosio Kato.
\newblock Linear evolution equations of" hyperbolic" type.
\newblock {\em J. Fac Sci. Univ. Tokyo Sect. IA}, 17:241--258, 1970.

\bibitem{KP}
A.~Knowles and P.~Pickl.
\newblock mean-field dynamics: Singular potentials and rate of convergence.
\newblock {\em Comm. Math. Phys.}, 298(1):101--138, 2010.

\bibitem{lanford}
O.~E. Lanford.
\newblock On a derivation of the {B}oltzmann equation.
\newblock {\em Ast{\'e}risque}, 40:117--137, 1976.

\bibitem{lebowitz1993macroscopic}
Joel~L Lebowitz.
\newblock Macroscopic laws, microscopic dynamics, time's arrow and
  {B}oltzmann's entropy.
\newblock {\em Physica A: Statistical Mechanics and its Applications},
  194(1-4):1--27, 1993.

\bibitem{lenaro}
M.~Lewin, P.~T. Nam, and N.~Rougerie.
\newblock Derivation of {H}artree's theory for generic mean-field bose systems.
\newblock {\em Adv. Math.}, 254:570--621, 2014.

\bibitem{lenasc}
M.~Lewin, P.~T. Nam, and B.~Schlein.
\newblock Fluctuations around {H}artree states in the mean-field regime.
\newblock {\em Amer. J. Math.}, 137(6):1613--1650, 2015.

\bibitem{lenaseso}
M.~Lewin, P.~T. Nam, S.~Serfaty, and J.~P. Solovej.
\newblock {B}ogoliubov spectrum of interacting {B}ose gases.
\newblock {\em Comm. Pure Appl. Math.}, 68(3):413--471, 2015.

\bibitem{lewin2023positivedensity}
Mathieu Lewin and Phan~Th\`{a}nh Nam.
\newblock Positive-density ground states of the {G}ross--{P}itaevskii equation.
\newblock {\em arXiv preprint arXiv:2310.03495}, 2023.

\bibitem{Liard}
Q.~Liard.
\newblock On the mean-field approximation of many-boson dynamics.
\newblock {\em J. Funct. Anal.}, 273(4):1397--1442, 2017.

\bibitem{lisesoyn}
E.~H. Lieb, R.~Seiringer, J.~P. Solovej, and J.~Yngvason.
\newblock {\em The mathematics of the Bose gas and its condensation}, volume~34
  of {\em Oberwolfach seminars}.
\newblock Birkh\"auser, Basel, 2005.

\bibitem{liseyn}
E.~H. Lieb, R.~Seiringer, and J.~Yngvason.
\newblock Bosons in a trap: A rigorous derivation of the {G}ross-{P}itaevskii
  energy functional.
\newblock {\em Phys. Rev. A - Atomic, Molecular, and Optical Physics},
  61(4):436021--4360213, 2000.

\bibitem{lu14bos}
X.~Lu.
\newblock The {B}oltzmann equation for {B}ose-{E}instein particles: regularity
  and condensation.
\newblock {\em J. Stat. Phys.}, 156(3):493--545, 2014.

\bibitem{LuSp1}
J.~Lukkarinen and H.~Spohn.
\newblock Not to normal order-notes on the kinetic limit for weakly interacting
  quantum fluid.
\newblock {\em J. Stat. Phys.}, 134(5):1133--1172, 2009.

\bibitem{MPP}
D.~Mitrouskas, S.~Petrat, and P.~Pickl.
\newblock {B}ogoliubov corrections and trace norm convergence for the {H}artree
  dynamics.
\newblock {\em Rev. Math. Phys.}, 31(8):1950024, 2019.

\bibitem{mitrouskas2023exponential}
David Mitrouskas and Peter Pickl.
\newblock Exponential decay of the number of excitations in the weakly
  interacting {B}ose gas.
\newblock {\em J. Math. Phys.}, 64(12), 2023.

\bibitem{NN2}
P.~T. Nam and M.~Napi{\'o}rkowski.
\newblock A note on the validity of {B}ogoliubov correction to mean-field
  dynamics.
\newblock {\em J. Math. Pur. App.}, 108(5):662--688, 2017.

\bibitem{nam2023exponential}
Phan~Th{\`a}nh Nam and Simone Rademacher.
\newblock Exponential bounds of the condensation for dilute {B}ose gases.
\newblock {\em arXiv preprint arXiv:2307.10622}, 2023.

\bibitem{nam2023bogoliubov}
Phan~Th{\`a}nh Nam and Arnaud Triay.
\newblock Bogoliubov excitation spectrum of trapped {B}ose gases in the
  {G}ross--{P}itaevskii regime.
\newblock {\em Journal de Math{\'e}matiques Pures et Appliqu{\'e}es}, 2023.

\bibitem{nordheim}
L.~W. Nordheim.
\newblock On the kinetic method in the new statistics and its application inthe
  electron theory of conductivity.
\newblock {\em Proc. Roy. Soc.}, 119:689--698, 1928.

\bibitem{pepiso}
S.~Petrat, P.~Pickl, and A.~Soffer.
\newblock Derivation of the {B}ogoliubov time evolution for gases with finite
  speed of sound.
\newblock {\em Ann. H. Poincar{\'e}}, 21:461--498, 2020.

\bibitem{pickl1}
P.~Pickl.
\newblock Derivation of the time dependent {G}ross-{P}itaevskii equation
  without positivity condition on the interaction.
\newblock {\em J. Stat. Phys.}, 140(1):76--89, 2010.

\bibitem{BohmPines1952II}
David Pines and David Bohm.
\newblock A collective description of electron interactions: Ii. collective vs
  individual particle aspects of the interactions.
\newblock {\em Physical review}, 85(2):338--353, 1952.

\bibitem{potr20}
Y.~Pomeau and M.~B. Tran.
\newblock {B}oltzmann-type collision operators for {B}ogoliubov excitations of
  {B}ose-{E}instein condensates: A unified framework.
\newblock {\em Phys. Rev. E}, 101(3):032119, 2020.

\bibitem{pusi}
M.~Pulvirenti and S.~Simonella.
\newblock The {B}oltzmann-grad limit of a hard sphere system: analysis of the
  correlation error.
\newblock {\em Invent. Math.}, 207(3):1135--1237, 2017.

\bibitem{retr}
L.~E. Reichl and M.~B. Tran.
\newblock A kinetic model for very low temperature dilute bose gases.
\newblock {\em J. Phys. A: Math. and Theor.}, 52(6):063001, 2019.

\bibitem{rosc}
I.~Rodnianski and B.~Schlein.
\newblock Quantum fluctuations and rate of convergence towards mean field
  dynamics.
\newblock {\em Comm. Math. Phys.}, 291(1):31--61, 2009.

\bibitem{uc}
K.~Uchiyama.
\newblock Derivation of the {B}oltzmann equation from particle dynamics.
\newblock {\em Hiroshima Math. J.}, 18(2):245--297, 1988.

\bibitem{ZNG}
E.~Zaremba, T.~Nikuni, and A.~Griffin.
\newblock Dynamics of trapped bose gases at finite temperatures.
\newblock {\em J. Low Temp. Phys.}, 116(3):277--345, 1999.

\end{thebibliography}
	
\end{document}